\documentclass[12pt,english]{article}
\usepackage{lmodern}
\usepackage[T1]{fontenc}
\usepackage[latin9]{inputenc}
\usepackage[letterpaper]{geometry}
\usepackage{color}
\usepackage{babel}
\usepackage{amsmath}
\usepackage{amsthm}
\usepackage{amssymb}
\usepackage{stmaryrd}
\usepackage{setspace}
\usepackage[authoryear]{natbib}
\usepackage[pdfusetitle,
 bookmarks=true,bookmarksnumbered=true,bookmarksopen=false,
 breaklinks=false,pdfborder={0 0 1},backref=false,colorlinks=true]
 {hyperref}
\hypersetup{
 colorlinks,pagebackref=false,citecolor=dark-blue,urlcolor=black,linkcolor=dark-blue}

\makeatletter
\theoremstyle{plain}
\newtheorem{assumption}{\protect\assumptionname}
\theoremstyle{plain}
\newtheorem{lem}{\protect\lemmaname}
\theoremstyle{remark}
\newtheorem{rem}{\protect\remarkname}
\theoremstyle{plain}
\newtheorem{prop}{\protect\propositionname}
\theoremstyle{plain}
\newtheorem{cor}{\protect\corollaryname}
\theoremstyle{remark}
\newtheorem{claim}{\protect\claimname}

\date{}

\usepackage{datetime}
\newdateformat{mydate}{\monthname[\THEMONTH] \THEYEAR}

\usepackage{amsfonts}
\usepackage{amsthm}
\usepackage{afterpage} 
\usepackage{color}
\usepackage{epsf}

\usepackage{fullpage}
\usepackage{latexsym}
\usepackage{bm}
\usepackage{bbm}

\usepackage[multiple]{footmisc}

\usepackage{floatrow}
\usepackage[margin=10pt,labelfont=bf,tableposition=bottom ]{caption}

\usepackage{tikz}
\usepackage{pgfplots}

\definecolor{dark-red}{rgb}{0.4,0.15,0.15}
\definecolor{dark-blue}{rgb}{0.15,0.15,0.75}
\definecolor{medium-blue}{rgb}{0,0,0.5}

\makeatother

\providecommand{\assumptionname}{Assumption}
\providecommand{\claimname}{Claim}
\providecommand{\corollaryname}{Corollary}
\providecommand{\lemmaname}{Lemma}
\providecommand{\propositionname}{Proposition}
\providecommand{\remarkname}{Remark}

\begin{document}
\title{Screening with Product Mismatch}
\author{Teck Yong Tan\thanks{Simon Business School, University of Rochester. Email: t.tan@rochester.edu.}}
\date{\mydate\today\\}
\maketitle
\begin{abstract}
A monopolist sells a product line whose variants are horizontally
differentiated from the buyers' perspective but ordered by production
cost. Buyers privately know their ideal product, and willingness to
pay may be correlated with horizontal need. The seller screens buyers
through product mismatch, and what she must screen determines whether
mismatch creates or reduces information rent. When buyers differ only
in horizontal need, mismatch creates rent: the seller induces less
mismatch, assigning served buyers products closer to their ideals
than under the first best. When willingness to pay is correlated with
horizontal need, mismatch instead reduces rent: the seller induces
more mismatch, sells the basic product to buyers whose efficient products
are advanced variants while excluding buyers better matched to it,
and stronger horizontal differentiation can expand coverage and raise
profit. Because mismatch is type-specific, optimal allocations are
determined by individual rationality rather than by incentive compatibility
alone.
\end{abstract}

\textbf{Keywords:} Product mismatch; horizontal differentiation; monopoly
screening; countervailing incentives.

\textbf{JEL Classification:} D42, D82, L11.

\thispagestyle{empty}
\addtocounter{page}{-1}
\newpage

\section{Introduction}

Many firms sell product lines whose variants are designed for users
with different needs. A highly configurable cloud-computing service
gives a sophisticated user the flexibility needed for a complex deployment,
while imposing unnecessary complexity on a user with simpler needs.
Conversely, a preconfigured service is suitable for simple deployments
but may constrain users with specialized requirements. Similarly,
a lightweight AI model suffices for routine tasks but becomes inadequate
for demanding ones, while a more capable model imposes unnecessary
latency and cost on simple queries. The same pattern also appears
in tools, appliances, consumer electronics, and professional-service
plans designed for tasks of different scale or users with different
expertise. In each case, a variant well suited to one buyer can be
a poor fit for another, and buyers privately know which variants fit
their needs.

Such product lines have two common features. First, although buyers'
preferences are \emph{horizontally} differentiated, the product line
is ordered by cost from the seller's perspective: variants for more
demanding needs are costlier to supply. The seller may therefore prefer
to steer some buyers toward cheaper, more basic products rather than
supply each buyer his ideal variant. Second, buyers who need the costlier
variants also tend to value them more, because the scale or sophistication
of a demanding task raises the return from successfully completing
it. A buyer's ideal variant and his willingness to pay are thus naturally
correlated.

This paper develops a theory of product mismatch that arises from
screening in such product lines. Since the seminal analysis of \citet{Mussa_Rosen_1978(JET)}
and \citet{Maskin_Riley_1984(Rand)}, the theory of monopoly screening
has developed largely around vertical differentiation, where products
are ordered by a quality that all buyers rank alike. There, the seller
screens by degrading the quality each buyer receives, and stronger
buyer differentiation raises the cost of screening by increasing the
information rent conceded to buyers. In the settings above, however,
there is no common product ranking, so the seller can only screen
buyers by steering them toward products that fit them imperfectly,
rather than degrading quality.

The paper shows that when the screening instrument is product mismatch,
what the seller must screen determines whether that mismatch creates
or reduces information rent. Such duality has no immediate counterpart
in canonical vertical screening. When buyers differ only in their
horizontal need, steering a buyer from his ideal \emph{creates} information
rent for other buyers, so the seller induces less mismatch and provides
better product fit than the first-best allocation would. Stronger
horizontal differentiation then reduces coverage and profit. When
buyers' willingness to pay is correlated with their horizontal need,
mismatch instead \emph{reduces} information rent. The seller then
induces more mismatch, and stronger differentiation now induces the
seller to expand coverage and can even raise her profit. This case
also generates \emph{crowding out of the basic product}: the seller
always sells her most basic, lowest-cost product to some buyers whose
efficient product is a more advanced variant, while excluding buyers
better matched to the basic product. 

I study a model with a monopolist offering a product line. Each product
is indexed by $y\in\left[0,1\right]$ and ordered so that higher-indexed
products cost more to produce. There is a unit mass of buyers, each
demanding at most one unit. A buyer's private type $\theta\in\left[0,1\right]$
is his ideal product, and a type-$\theta$ buyer who consumes product
$y$ obtains gross utility $v\text{(\ensuremath{\theta}})-km\left(|y-\theta|\right)$.
The function $v$ is increasing---as in the applications, buyers
who prefer costlier-to-produce products have higher willingness to
pay. The \emph{mismatch }function $m$ is strictly increasing with
$m\left(0\right)=0$. A buyer's mismatch disutility is thus determined
by the distance $\left|y-\theta\right|$, and $k$ scales the strength
of horizontal differentiation. Section \ref{Section:Model} provides
further discussion on this formulation of horizontal differentiation
and how it relates to the applications above. Since production cost
increases in the product index, efficient (first-best) product allocation
must trade off product fit against production cost, so the efficient
allocation generally assigns each type a product below his ideal.

In a screening problem, the rent the seller concedes is governed by
how fast a buyer's equilibrium payoff $u\left(\theta\right)$ changes
with his type. Here, by the envelope theorem, this rate decomposes
into two channels:\footnote{The model assumes that $m'\left(0\right)=0$, so $m\left(\left|y-\theta\right|\right)$
is differentiable in $\theta$ everywhere.}
\begin{equation}
u'\left(\theta\right)=\underbrace{v'\left(\theta\right)}_{\text{WTP screening}}-\underbrace{k\partial_{\theta}m\left(\left|\alpha\left(\theta\right)-\theta\right|\right)}_{\text{product-fit screening}},\label{eq:Intro-u'}
\end{equation}
where $\alpha\left(\theta\right)$ is the product type $\theta$ consumes
in equilibrium. The first term is the vertical-value channel: it captures
how willingness to pay (WTP) rises with type. The second term is the
horizontal-fit channel: it captures how the buyer\textquoteright s
mismatch with the assigned product changes with type. The seller screens
both channels through the single instrument of product allocation
$\alpha$, and the interaction between the two determines whether
product mismatch creates information rent or reduces it.

\begin{figure}[h]
\caption{\protect\label{Figure:OptimalMechanism}Product Allocation \\ }

\begin{tikzpicture}[x=6.4cm, y=5.6cm, >=stealth]
  \def\bh{0.125}    
  \def\ths{0.25}    


 \fill[black!8] (1,0) rectangle (0.8,1);
  \node[align=center] at (0.9, 0.38) {\footnotesize excluded\\ \footnotesize types};

\draw[black!30, dashed] (0.8,0) -- (0.8,1);
\draw[black!30, dashed] (1,0) -- (1,1);

\node[above] at (0.5,1.2) {\underline {Pure horizontal differentiation (Section 3)}};
 
  \draw[->] (-0.02,0) -- (1.10,0) node[right] {$\theta$};
  \draw[->] (0,-0.02) -- (0,1.10) node[above] {$y$};
  \node[below left] at (0,0) {$0$};
  \draw (1,0.01) -- (1,-0.01) node[below] {$1$};
  \draw (0.01,1) -- (-0.01,1) node[left] {$1$};

  \draw[red, line width=1.7pt] (0,0) -- (\bh,0) -- (\ths,\ths) -- (0.8,0.8);
  \draw[decorate, decoration={brace, mirror, amplitude=4pt}]
    (0,-0.09) -- (\bh,-0.09);

\node[align=left] at (0.4,-0.16) {\footnotesize pooling at the  \footnotesize basic product $y=0$};
 
  
  \draw[green!60!black, densely dotted, line width=1.2 pt] (0,0) -- (1,1);

  \draw[blue, dashed, line width=1.2 pt] (0,0) -- (0.25,0) -- (1,0.75);

  \draw[green!60!black, densely dotted, line width=1.2 pt] (0.1,0.97) -- (0.21,0.97)
    node[right] {\small  ideal product $y=\theta$};
  \draw[blue, dashed, line width=1.2 pt] (0.1,0.89) -- (0.21,0.89)
    node[right] {\small  first-best product};
  \draw[red, line width=1.7pt] (0.1,0.81) -- (0.21,0.81)
 	node[right] {\small optimal mechanism};


\end{tikzpicture}$\ \ \ $\begin{tikzpicture}[x=6.4cm, y=5.6cm, >=stealth]
  \def\tauv{0.20882}   
  \def\rr{0.25}        
  \def\betat{0.42138}  
  \def\phit{0.50942}   
  \def\gam{0.91667}    
  \def\yphi{0.17609}   
  \def\ygam{0.58333}   
  \fill[black!8] (0,0) rectangle (\tauv,1);
  \node[align=center] at (0.5*\tauv, 0.58) {\footnotesize excluded\\ \footnotesize types};

  \foreach \x in {\tauv,\betat,\phit,\gam} \draw[black!30, dashed] (\x,0) -- (\x,1);

\node[above] at (0.5,1.2) {\underline {Correlated WTP and product fit (Section 4)}};
 
  \draw[->] (-0.02,0) -- (1.10,0) node[right] {$\theta$};
  \draw[->] (0,-0.02) -- (0,1.10) node[above] {$y$};
  \node[below left] at (0,0) {$0$};
  \draw (1,0.01) -- (1,-0.01) node[below] {$1$};
  \draw (0.01,1) -- (-0.01,1) node[left] {$1$};
  \draw (\tauv,0.01) -- (\tauv,-0.01) node[below] {\small $\tau$};
  \draw (\betat,0.01) -- (\betat,-0.01) node[below] {\small $\beta^{\tau}$};
  \draw (\phit,0.01) -- (\phit,-0.01) node[below] {\small $\varphi^{\tau}$};
  \draw (\gam,0.01) -- (\gam,-0.01) node[below] {\small $\gamma$};

  \draw[red, line width=1.7pt]
    (\tauv,0) -- (\betat,0) -- (\phit,\yphi) -- (\gam,\ygam) -- (1,0.75);
  \draw[decorate, decoration={brace, mirror, amplitude=4pt}]
    (\tauv,-0.09) -- (\betat,-0.09);
 
\node[align=left] at (0.55,-0.16) {\footnotesize pooling at the  \footnotesize basic product $y=0$};

  \draw[blue, dashed, line width=1.2 pt] (0,0) -- (0.25,0) -- (1,0.75);

  \draw[green!60!black, densely dotted, line width=1.2 pt] (0,0) -- (1,1);


\end{tikzpicture}
\raggedright{}\small{In this example, $\theta$ is uniformly distributed,
$m\left(z\right)=z^{2}/2$, $k=3$, and $c=3/4$. The derivations
are provided in appendix \ref{OA:Uniform-Distribution}. } 
\end{figure}

Section \ref{Section:PureHorizontal} isolates the horizontal-screening
force by assuming that $v$ is constant, so buyers differ only in
their horizontal product fit. In the envelope condition in (\ref{eq:Intro-u'}),
the WTP-screening channel is shut down, so all information rent comes
from the product-fit channel. Two distinctive features follow. First,
exclusion is \emph{from the top}, since higher types prefer products
that are costlier to produce and are therefore costlier to serve.
Second, each served type receives a product \emph{closer} to his ideal
than under the first-best allocation. The reason is that while downward
mismatch saves production cost, it also steepens the utility schedule
and increases the rent of lower types. The seller therefore treats
mismatch as more costly than the social planner does and induces less
of it. This logic also determines the comparative static with respect
to the intensity of horizontal differentiation: a higher $k$ raises
the rent that the seller concedes to the buyers, lowering her profit
and causing her to reduce coverage.

The left panel of Figure \ref{Figure:OptimalMechanism} illustrates
an example. The first-best allocation (dashed) is below the ideal
allocation (dotted) because of the production cost of any $y>0$,
and the optimal mechanism's allocation (solid) lies weakly within
the wedge between the two allocations. Served buyers therefore receive
products closer to their ideal variants than under the first best,
except for some low types receiving the basic product $y=0$. In the
example, the highest served types even receive their ideal products.

Section \ref{Section:Vertical} analyzes the case in which $v$ is
strictly increasing, so buyers who prefer the costlier-to-produce
variants also have higher willingness to pay. The WTP-screening channel
in (\ref{eq:Intro-u'}) is therefore active, and the right panel of
Figure \ref{Figure:OptimalMechanism} illustrates two reversals relative
to Section \ref{Section:PureHorizontal}. First, exclusion now starts
\emph{from the bottom} rather than from the top. Second, the optimal
mechanism's allocation now lies below the first-best allocation rather
than between the first-best and ideal allocations. Each served buyer
is therefore allocated a product \emph{further} below his first-best
allocation. The figure also shows crowding out of the basic product,
which never arises in the previous case: here, $y=0$ is sold to higher
types whose efficient product is non-basic $\left(y>0\right)$, whereas
lower types who are better matched to $y=0$ are excluded.

Both reversals are due to a change in which type earns rent because
of the active WTP-screening channel. Although a higher type still
prefers products that are costlier to produce, the seller can now
capture him with a low-cost product, since his willingness to pay
is higher. Higher types therefore crowd out the lower types better
matched to those low-cost products. This changes the role of downward
mismatch. In Section \ref{Section:PureHorizontal}, the low types
earn rent, so downward mismatch creates rent and the seller induces
less of it. Now, the high types earn rent, and downward mismatch \emph{reduces}
their rent by making lower-types' contracts less attractive to them.
Mismatch therefore turns from creating rent to reducing it, and the
seller thus induces more mismatch. The underlying logic also reverses
the comparative static relative to Section \ref{Section:PureHorizontal}:
stronger horizontal differentiation makes mismatch a stronger deterrent
against high types\textquoteright{} mimicking lower types, lowering
the rent the seller concedes, so a higher $k$ now expands coverage
and can raise the seller's profit.

More broadly, the analysis in sections \ref{Section:PureHorizontal}
and \ref{Section:Vertical} also illustrates why screening with product
mismatch is not merely vertical screening with a different allocation
variable. In vertical screening, higher types value any given quality
more than lower types, so no quality distortion can overturn the monotonicity
of utility in type. Thus, once the lowest served type\textquoteright s
individual rationality (IR) constraint is satisfied, all higher types'
IR are as well. With horizontal product mismatch, the mismatch disutility
is \emph{type-specific}, and whether a product becomes a better or
worse fit as the type changes depends on the allocation itself. The
IC-implied utility schedule is therefore not always monotone, so the
``worst type''---whose IR implies IR for the rest---cannot be
identified a priori. The analysis therefore must track every type's
IR constraint in the optimization. I solve the problem by constructing
the Lagrangian multiplier on the IR constraints and using it to adjust
the virtual surplus maximized by the seller. Economically, the multiplier
measures how much product mismatch must be reduced at each type to
maintain IR of \emph{other} types.\footnote{These required reductions are why the allocation schedule of the optimal
mechanism has kinks above $y=0$, as in Figure \ref{Figure:OptimalMechanism}.} The resulting adjustment produces pooling in utility rather than
the familiar pooling in allocation from ironing in vertical screening.
The broader takeaway is that in screening with horizontal product
mismatch, IR---rather than IC alone in vertical screening---also
shapes the seller-optimal allocations. 

\paragraph{Related Literature.}

The theory of monopoly screening has focused largely on vertically
differentiated products, whereas work on horizontal differentiation
remains sparse. \citet{Jiang2007opaque} and \citet{Fayxie2008probabilistic}
study a multiproduct monopoly facing horizontally differentiated buyers
on the Hotelling line and offering lotteries over the two variants
a buyer receives. \citet{Andersoncelik2020opaque}, and \citet{Balestrieri2021surprises}
adopt a screening approach and characterize the optimal lottery. \citet{Loertschermuir2025hotelling}
generalize the analysis to an auction environment with multiple buyers
and show that the optimal mechanism may require lottery-augmented
auctions. The current paper instead studies only posted-price mechanisms.
However, the product space is richer, which allows one to analyze
equilibrium product mismatch. Buyers here also exhibit both horizontal
and vertical heterogeneity. 

\citet{Rochet2002nonlinear} study multidimensional screening in which
buyers have a vertically differentiated taste for quality and a random,
privately known outside option. By interpreting the outside option
as an additive horizontal preference parameter, their model also features
screening under both vertical and horizontal differentiation. The
two screening problems differ in how the screening instrument interacts
with each dimension of heterogeneity. There the seller's allocation
is quality, which interacts only with the vertical taste parameter,
so screening is closer to \citet{Mussa_Rosen_1978(JET)}, with the
horizontal taste affecting participation. Here the seller's allocation
is the product variant, which enters the buyer's mismatch disutility,
so the seller directly screens the horizontal dimension. Vertical
and horizontal differentiation here are also correlated and determined
by a one-dimensional type $\theta$. Such a correlation arises naturally
in the intended applications.

The analysis also relates to the literature on ``countervailing incentives''
in mechanism design, beginning with \citet{LewisSappington1989}.
The difficulty in such problems is that participation is not pinned
down by a single extreme type. \citet{MaggiRodriguezClare1995(JET)}
and \citet{Jullien_2000(JET)} study countervailing incentives arising
from type-dependent outside options.\footnote{See also more recent work by \citet{kang2024optimal}, \citet{DworczakMuir2025}
and \citet{ValenzuelaStookey2025}, who develop ironing-based methods
for solving mechanism design problems in which type-dependent participation
constraints naturally arise.} Countervailing incentives also arise in the Hotelling setting in
\citet{Loertschermuir2025hotelling} due to the allocation space being
two-dimensional. Here, countervailing incentives arise because the
ideal allocation is type-specific due to horizontal differentiation.
This form of countervailing incentives is closer to those arising
in mechanism-design settings with costly misrepresentation, such as
\citet{Maggi_RodriguezClare_1995(Rand)} and \citet{TanManipulableObservables}.
However, whereas utility always remains weakly monotone in type under
the optimal mechanisms in those settings, it can be nonmonotone in
type under the optimal mechanism here.

\section{\protect\label{Section:Model}Model}

A monopolist offers a product line, with each product indexed by $y\in Y:=\left[0,1\right]$.
There is a unit mass of buyers.\footnote{Throughout, I use the female pronoun for the seller and the male pronoun
for buyers.} Each buyer demands at most one unit and has a private type $\theta\in\Theta:=\left[0,1\right]$,
drawn from the distribution $F$.
\begin{assumption}
\label{Assumption:F}$F$ has full support on $\Theta$ and a differentiable
density $f$. Both $F$ and $1-F$ are strictly log-concave.
\end{assumption}
If type $\theta$ buys product $y$ at price $p$, his utility is
\[
v\left(\theta\right)-kM\left(y,\theta\right)-p,\ \ \ \ \text{ where }M\left(y,\theta\right)=m\left(\left|y-\theta\right|\right).
\]
The function $v:\Theta\rightarrow\left[0,1\right]$ is nondecreasing
and captures the vertical component of buyers' values. The function
$M$ is the buyer's mismatch disutility, capturing horizontal preferences,
and $k>0$ parameterizes the intensity of horizontal differentiation. 
\begin{assumption}
\label{Assumption:m-1}$m:\mathbb{R}_{+}\rightarrow\mathbb{R}_{+}$
is thrice differentiable on $\mathbb{R}_{++}$, strictly increasing
and strictly convex, with $m\left(0\right)=m'\left(0\right)=0$. 
\end{assumption}
Type $\theta$'s ideal product is $y=\theta$, which generates no
mismatch disutility. Any product $y\ne\theta$ generates disutility
$kM\left(y,\theta\right)$, which is increasing in the distance $\left|y-\theta\right|$.
The condition $m'\left(0\right)=0$ implies that a marginal mismatch
is costless, while convexity of $m$ makes larger mismatch increasingly
costly. I impose further curvature assumptions on $m$ so that the
virtual surplus functions defined later are quasiconcave in $y$ and
have the relevant single-crossing properties.
\begin{assumption}
\label{Assumption:m-2}Both $m$ and $m'$ are strictly log-concave
on $\mathbb{R}_{++}$. Furthermore, $m''\left(0\right)>0$, and $0\le m'''\left(z\right)/m''\left(z\right)\le m''\left(0\right)/m'\left(1\right)$
for all $z\in(0,1]$.
\end{assumption}
Log-concavity of $m$ and $m'$ is readily satisfied. Examples include
$m\left(z\right)=z^{a}$ for $a\ge2$ and $m\left(z\right)=\exp\left(az\right)-az-1$
for $0<a\le\log2$. The second part of Assumption \ref{Assumption:m-2}
is more stringent but is required only in Section \ref{Section:PureHorizontal}.
Remark \ref{Remark:Horizontal-assumption} notes what happens if it
is violated. The exponential example continues to satisfy these conditions,
while the power family requires $a=2$ to satisfy $m''\left(0\right)>0$.\footnote{A simple way to generate examples is to specify $m''$ directly. For
example, let $m''\left(z\right)=\left(1+a_{1}z\right)^{a_{2}},$ with
$a_{2}>0$ and $0\le a_{1}\le\left(2+\frac{1}{a_{2}}\right)^{1/\left(a_{2}+1\right)}-1$,
which satisfies the second part of Assumption \ref{Assumption:m-2}.
Imposing $m'\left(0\right)=m\left(0\right)=0$ then yields $m\left(z\right)=\frac{\left(1+a_{1}z\right)^{a_{2}+2}-1-\left(a_{2}+2\right)a_{1}z}{a_{1}^{2}\left(a_{2}+1\right)\left(a_{2}+2\right)}$,
which satisfies the other conditions in Assumptions \ref{Assumption:m-1}
and \ref{Assumption:m-2}. Taking the limit of $a_{1}$ to zero, the
function becomes $m\left(z\right)=z^{2}/2$.}

The seller's cost of producing a unit of product $y$ is $cy$, where
$c\ge0$. Since producing $y=0$ is costless, I refer to $y=0$ as
the \emph{basic product} and to any $y>0$ as a non-basic product.
The seller's profit from selling product $y$ at price $p$ is therefore
$p-cy$. The seller publicly announces a price $p\left(y\right)$
for each $y\in Y$, after which each buyer either buys a product or
takes his outside option, normalized to zero.

\subsection{\protect\label{Subsection:ModelDiscussion}Discussion of Model}

Indexing products by $y$ and ordering them by production cost is
without loss of generality. Labeling each buyer by his ideal product
is also a convention, and $m\left(0\right)=0$ normalizes $v\left(\theta\right)$
as type $\theta$'s utility from consuming his ideal product at zero
price. 

The substantive restriction is that horizontal preferences have a
one-dimensional ordered structure, under which nearby types have nearby
ideal products, so similarity in horizontal taste is measured by distance
in type space. Relative fit is then \emph{symmetric} and\emph{ continuous}:
if type $\theta_{1}$ views type $\theta_{2}$'s ideal product as
a close substitute, then type $\theta_{2}$ must similarly view type
$\theta_{1}$'s ideal product as close; for any two products $y$
and $y'$, $M\left(y,\theta\right)-M\left(y',\theta\right)$ varies
continuously in $\theta$, so relative product fit cannot jump discontinuously
as type changes. The form $M\left(y,\theta\right)=m\left(\left|y-\theta\right|\right)$
implies that mismatch depends only on distance, not on the buyer's
type or on whether the mismatch is upward or downward. I impose this
structure for tractability; the main economic mechanisms do not rely
on it, and one could instead take $M\left(y,\theta\right)$ as the
primitive and impose analogous assumptions directly on it.

Section \ref{Section:PureHorizontal} assumes that $v$ is constant,
so buyers differ only in horizontal taste. Section \ref{Section:Vertical}
assumes that $v$ is strictly increasing, so buyers also differ in
vertical value, with each buyer's vertical value perfectly correlated
with his horizontal taste. This perfect correlation keeps the type
space one-dimensional and captures settings in which the same features
that raise willingness to pay---such as intensity of use, technical
sophistication, or scale of operations---also determine which product
specification best fits the buyer\textquoteright s needs.

On the supply side, $c=0$ means that all variants are equally costly
to produce, with the common cost normalized to zero. When $c>0$,
higher-indexed products are costlier to supply. Combined with strictly
increasing $v$, this means that the products preferred by higher-value
buyers are also more costly to produce. This captures settings in
which higher-value buyers have more demanding or specialized needs,
so their ideal products require costlier inputs, greater customization,
or more elaborate design.

Finally, the lower bound $y=0$ for the product space is substantive.
It represents the most basic product in the product line, with production
cost normalized to zero. This parallels the nonnegativity constraint
on quality in vertical screening: just as quality below zero has no
interpretation, a product below the most basic one is not well-defined.
By contrast, the upper bound $y=1$ is unimportant; the analysis is
unchanged if $Y$ extends above $1$.

\subsection{\protect\label{Subsection:Applications}Applications}

One application is pricing for digital infrastructure services such
as cloud-computing products and AI model lineups. Amazon Web Services
(AWS), for example, offers services that differ in how much control
and abstraction they provide. Elastic Compute Cloud (EC2) gives sophisticated
users fine-grained control over infrastructure, Lightsail offers simplified
preconfigured plans for users with basic deployment needs, and services
such as Elastic Beanstalk and Fargate lie in between, providing differing
degrees of abstraction over the infrastructure.\footnote{See https://docs.aws.amazon.com/whitepapers/latest/aws-overview/compute-services.html
for a description of the AWS services.} AI providers similarly offer model lineups designed for different
task requirements. A lightweight model is better suited for routine
classification, summarization, or drafting tasks, while a more capable
model is meant for complex reasoning or coding tasks.\footnote{For example, \citet{azure_choose_model} publishes guidance to help
buyers select among available AI models based on task fit, cost, context
window, and performance. } In these settings, mismatch costs run in both directions: a flexible
or high-capability variant imposes unnecessary complexity, latency,
or cost on a basic user or task, while a simplified or low-capability
variant constrains a user who requires more control or performance.\footnote{The existence of third-party platforms such as Heroku, which is built
on top of AWS and charges a markup to simplify application deployment,
illustrates the value of abstracting complexity for some users. } The model's correlation between value and cost is also natural in
these settings, since more sophisticated users tend to have higher
willingness to pay and require services that are costlier to provide.

A second application is tax-preparation services. Providers such as
TurboTax and H\&R Block offer product lines ranging from basic self-service
software, through assisted online filing, to full-service preparation
by an expert. A taxpayer\textquoteright s return complexity determines
which product provides the best fit, and mismatch runs in both directions.
Basic software may underserve a self-employed taxpayer, landlord,
or investor whose return requires additional schedules, deductions,
and judgment. Full-service may overshoot a wage earner with a single
W-2, who can avoid preparer interactions and filing delays by completing
the return independently. Value and cost also move together: expert
preparation is costlier to provide than software-based filing, and
taxpayers who prefer it have higher willingness to pay because the
stakes---tax liabilities, audit exposure, and missed deductions---rise
with complexity.

\subsection{Benchmark I: No Horizontal Differentiation}

If $k=0$, buyers are indifferent among products, so the seller offers
only a lowest-cost product: $y=0$ if $c>0$, and any product if $c=0$.
The problem reduces to a standard monopoly pricing of a single product
with buyer valuation $v\left(\theta\right)$. Thus, screening through
the product line is possible only when $k>0$.

\subsection{Benchmark II: First-best Product Allocation}

The \emph{first-best} product allocation for type $\theta$ maximizes
$v\left(\theta\right)-kM\left(y,\theta\right)-cy$. The solution lies
in $\left[0,\theta\right]$, since replacing any $y>\theta$ with
$y=\theta$ reduces both mismatch and production cost. On $\left[0,\theta\right]$,
the derivative with respect to $y$ is $km'\left(\theta-y\right)-c$.
Therefore, the first-best allocation is unique and given by
\begin{equation}
\alpha^{FB}\left(\theta\right):=\max\left\{ \theta-\delta^{FB},0\right\} \ ,\ \ \ \ \text{ where }\delta^{FB}=\left(m'\right)^{-1}\left(c/k\right).\label{eq:delta-FB}
\end{equation}
The first-best allocation does not depend on $v$, so this benchmark
applies for both sections \ref{Section:PureHorizontal} and \ref{Section:Vertical}.
Some downward mismatch is efficient because higher-indexed products
are costlier to produce, while marginal mismatch is costless by $m'\left(0\right)=0$.
Each type is mismatched downward from his ideal by the common factor
$\delta^{FB}$, subject to $y\ge0$. I rule out the case in which
the first best assigns every type to the most basic product $y=0$. 
\begin{assumption}
\label{Assumption:FB-ProductLine}$c/k<m'\left(1\right)$.
\end{assumption}

\section{\protect\label{Section:PureHorizontal}Pure Horizontal Preferences}

In this section, I assume that $v\left(\theta\right)=v_{0}$ for all
$\theta\in\Theta$, where $v_{0}\in\left(0,1\right)$ is commonly
known. Buyers then differ only in their horizontal tastes, which isolates
the role of horizontal preferences in the seller's screening problem.

If $c=0$, there is no cost variation across products, so the seller
has no reason to steer any buyer away from his ideal product. By setting
a uniform price $v_{0}$ across the product line, each type selects
his ideal product and the seller earns the entire first-best surplus.
\begin{lem}
\label{Lemma:Horizontal-c=00003D0}Suppose $v\left(\theta\right)=v_{0}$
for all $\theta$ and $c=0$. The uniform price schedule $p\left(y\right)=v_{0}$
is optimal. Under this schedule, each type $\theta$ purchases his
ideal product $y=\theta$, and the seller earns profit $v_{0}$.
\end{lem}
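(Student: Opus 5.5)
The plan is to bound the seller's profit by total surplus, and then show that the uniform price $v_{0}$ attains that bound.

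\emph{Upper bound.} Take any price schedule $p(\cdot)$ and any resulting purchase behavior. A type $\theta$ who buys product $y$ at price $p(y)$ must have $v_{0}-kM(y,\theta)-p(y)\ge0$, since otherwise he would take the outside option. With $c=0$ the seller's revenue from this buyer is therefore
\[
p(y)\le v_{0}-kM(y,\theta)\le v_{0},
\]
using $M\ge0$. A buyer who does not purchase contributes zero, which is less than $v_{0}>0$. Buyers form a unit mass, so integrating over $F$ shows that every schedule yields profit at most $v_{0}$. The argument uses only individual rationality and $c=0$; it needs no incentive-compatibility characterization.

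\emph{Attainment.} Under $p(y)\equiv v_{0}$, type $\theta$ chooses $y$ to maximize $v_{0}-kM(y,\theta)-v_{0}=-km(|y-\theta|)$. By Assumption \ref{Assumption:m-1}, $m$ is strictly increasing with $m(0)=0$, and $k>0$, so $y=\theta\in Y$ is the unique maximizer and gives utility $0$. Buying is therefore weakly optimal, and I adopt the standard tie-breaking convention that an indifferent buyer purchases. (Alternatively, the supremum $v_{0}$ is approached by prices $v_{0}-\varepsilon$, under which buying is strict.) Every type then buys his ideal product at price $v_{0}$ at zero cost, so profit equals $v_{0}$. This matches the upper bound, so the schedule is optimal.

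There is no real obstacle here. The only point to state explicitly is the tie-breaking convention at zero utility, since the optimal mechanism leaves every type exactly indifferent.
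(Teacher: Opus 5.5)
Your proposal is correct and takes essentially the paper's approach: the paper's only justification is that under the uniform price $v_0$ every type buys his ideal product at zero rent, so the seller captures the entire first-best surplus $v_0$, which by individual rationality and $c=0$ bounds the profit of any schedule. Your explicit upper-bound step and your remark that buyers break indifference toward purchasing (consistent with the paper's weak IR constraint $u(\theta)\ge 0$) make that argument rigorous.
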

The remainder of the section addresses the $c>0$ case, where higher-indexed
products are costlier to produce. This is a screening problem. By
the revelation principle, it suffices to consider direct mechanisms.
A direct mechanism is a pair $\left(\alpha,t\right)$ consisting of
an allocation rule $\alpha:\Theta\rightarrow Y\cup\left\{ \varnothing\right\} $
and a transfer rule $t:\Theta\to\mathbb{R}$, with the convention
that $\alpha\left(\theta\right)=\varnothing$ denotes exclusion of
type $\theta$ and $t\left(\theta\right)=0$ in that case. Let $\Theta_{I}:=\left\{ \theta\in\Theta:\alpha\left(\theta\right)\in Y\right\} $
denote the \emph{inclusion set}. Type $\theta$'s utility under truthful
reporting is
\begin{equation}
u\left(\theta\right)=v_{0}-kM\left(\alpha\left(\theta\right),\theta\right)-t\left(\theta\right)\ \text{ if }\theta\in\Theta_{I}\ ;\ \ \ \ u\left(\theta\right)=0\ \text{ if }\theta\notin\Theta_{I}.\label{eq:u-h}
\end{equation}
The mechanism is \emph{incentive compatible (IC)} if every type weakly
prefers reporting truthfully to any alternative report, and \emph{individually
rational (IR)} if $u\left(\theta\right)\ge0$ for all $\theta\in\Theta$,
which holds automatically for excluded types. The seller chooses an
IC and IR mechanism to maximize 
\[
\int_{\Theta_{I}}\left[t\left(\theta\right)-c\alpha\left(\theta\right)\right]f\left(\theta\right)d\theta.
\]

Screening with horizontal differentiation differs from vertical screening
problems à la \citet{Mussa_Rosen_1978(JET)} in two ways that require
distinct treatment. First, exclusion cannot be folded into the allocation
space here. In vertical screening, exclusion of \emph{any} type can
be represented by assigning the contract with the zero-quality allocation
and zero price. Here, every $y\in Y$ is an actual product that some
type may strictly prefer to no purchase, so exclusion cannot be represented
by some common allocation in $Y$. The inclusion set $\Theta_{I}$
must therefore be specified explicitly, with $\varnothing$ treated
as a distinct option.

Second, the IC-implied utility schedule can be nonmonotone. Let subscripts
on $M$ denote partial derivatives. By the envelope theorem, any IC
mechanism satisfies 
\begin{equation}
u'\left(\theta\right)=-kM_{2}\left(\alpha\left(\theta\right),\theta\right)\ \ \ \ \text{a.e. on }\Theta_{I}.\label{eq:u'-horizontal}
\end{equation}
Since $M\left(y,\theta\right)=m\left(\left|y-\theta\right|\right)$
and $m'\left(0\right)=0$, 
\begin{equation}
M_{2}\left(y,\theta\right)=\begin{cases}
-m'\left(y-\theta\right) & \text{ if }y>\theta\\
m'\left(\theta-y\right) & \text{ if }y<\theta\\
0 & \text{ if }y=\theta
\end{cases}.\label{eq:M2}
\end{equation}
The sign of $u'\left(\theta\right)$ depends on the direction of mismatch:
$u'\left(\theta\right)>0$ when $\alpha\left(\theta\right)>\theta$,
and $u'\left(\theta\right)<0$ when $\alpha\left(\theta\right)<\theta$.
In vertical screening, $u'$ has constant sign---in \citet{Mussa_Rosen_1978(JET)},
$u'\left(\theta\right)$ equals the quality allocated to type $\theta$
and is therefore nonnegative---so a worst type can be identified,
and his IR constraint implies IR for all other types. Here, the sign
of $u'$ varies with the allocation, so the worst type cannot be identified
ex ante. 
\begin{lem}
\label{lem:h-IC}Suppose $v(\theta)=v_{0}$ for all $\theta\in\Theta$
and $c>0$. In any optimal mechanism, there exists $\tau\in\Theta$
such that $\Theta_{I}=\left[0,\tau\right]$. Moreover, fixing an inclusion
set $\left[0,\tau\right]$, an IR mechanism is IC if and only if the
envelope condition (\ref{eq:u'-horizontal}) holds on $\left[0,\tau\right]$,
$\alpha$ is nondecreasing on $\left[0,\tau\right]$, and, if $\tau<1$,
the cutoff type satisfies $u\left(\tau\right)=0$. 
\end{lem}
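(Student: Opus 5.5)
The proof has two parts. First I show that an optimal mechanism's inclusion set is an interval $[0,\tau]$. Then I characterize IC for a fixed inclusion set of this form.

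\emph{Interval inclusion.} Take any IC and IR mechanism, and suppose some type $\theta'$ is served while a type $\theta<\theta'$ is excluded. I would transform the mechanism so that profit weakly rises and the excluded set moves to the top. The natural device is a shifting argument. Because $v$ is constant, a served contract $(y,t)$ is worth $v_0-km(|y-\theta|)-t$ to type $\theta$, which depends only on the distance $|y-\theta|$. Translating a menu downward by $d$ (replacing $y$ by $y-d$, truncated at $0$) therefore lowers production cost by $cd$ per unit sold. It also preserves each type's relative evaluation of the contracts, once types are relabeled by $\theta-d$.

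More directly, I would consider the set $S$ of products actually sold. Each served type $\theta$ obtains $u(\theta)=\max_{y\in S}[v_0-km(|y-\theta|)-p(y)]\ge 0$. I then compare the seller's profit with that of a menu that serves the same mass of buyers on a lower interval of types. Since $F$ is not uniform, the cleanest route is a pointwise replacement. If $\theta$ is excluded and $\theta'>\theta$ is served with $(y',t')$, then offering type $\theta$ the contract $(\max\{y'-(\theta'-\theta),0\},t')$ gives him utility at least $u(\theta')\ge0$ at lower cost.

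I expect that a cleaner statement to prove is this: the served set can be taken to be the lower set $\{\theta:u(\theta)\ge0\}$, and this set is an interval containing $0$. The argument is that each served type's utility equals $v_0$ minus $k$ times the mismatch minus the price. Under an optimal menu the seller never benefits from a product $y>\theta$ for the highest buyers of $y$, because lowering $y$ cuts cost. So each sold product is bought only by types weakly above it, and each utility curve $\theta\mapsto v_0-km(|y-\theta|)-p(y)$ is decreasing on the relevant region. This yields quasi-monotonicity of $u$ and hence that $\{u\ge0\}$ is an interval starting at $0$.

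This step is the main obstacle. I would handle it by first proving $\alpha(\theta)\le\theta$ in any optimal mechanism: any product above its buyer's ideal can be pulled down, which saves cost and does not make the product more attractive to the others. With this in hand, $u'\le 0$ by \eqref{eq:u'-horizontal} and \eqref{eq:M2}, so $u$ is nonincreasing. Exclusion then occurs exactly above some $\tau$, because the seller excludes only types for which IR binds or is unprofitable.

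\emph{IC characterization.} Necessity follows standard lines. $M(y,\theta)=m(|y-\theta|)$ has the strict single-crossing property $M_{12}=-m''<0$ wherever $y\neq\theta$, and $M_{12}\le0$ generally. Revealed preference between $\theta$ and $\hat\theta$ gives
\[
k\big[M(\alpha(\hat\theta),\theta)-M(\alpha(\theta),\theta)\big]\ge t(\theta)-t(\hat\theta)\ge k\big[M(\alpha(\hat\theta),\hat\theta)-M(\alpha(\theta),\hat\theta)\big].
\]
Monotonicity of $\alpha$ follows from the strict convexity of $m$. Absolute continuity of $u$ follows from the Lipschitz property of $M$ in $\theta$, and the envelope theorem then gives \eqref{eq:u'-horizontal}. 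If $\tau<1$ and $u(\tau)>0$, continuity of the upper envelope means that types just above $\tau$ would strictly prefer $\tau$'s contract to exclusion, which contradicts IC. Hence $u(\tau)=0$.

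For sufficiency, I would write $u(\theta)-[v_0-kM(\alpha(\hat\theta),\theta)-t(\hat\theta)]$ as
\[
\int_{\hat\theta}^{\theta}k\big[M_2(\alpha(\hat\theta),s)-M_2(\alpha(s),s)\big]ds.
\]
This integral is nonnegative because $M_2$ is nonincreasing in $y$ and $\alpha$ is monotone. That handles deviations within $[0,\tau]$. An excluded type $\theta>\tau$ deviating to the contract of some $\hat\theta\le\tau$ gets utility at most $u(\tau)$ via the same integral, using $M_2(\alpha(\hat\theta),s)\ge0$ for $s\ge\alpha(\hat\theta)$. Since $u(\tau)=0$, no such deviation is profitable.
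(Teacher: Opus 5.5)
Your IC characterization follows the paper's lines: two-sided revealed preference for monotonicity, the envelope theorem, continuity for $u(\tau)=0$, and the integral identity for deviations within $[0,\tau]$. However, the first claim of the lemma is not established, and the route you settle on rests on a false step. You argue that $\alpha(\theta)\le\theta$ in an optimal mechanism because pulling a product down toward its buyer ``saves cost and does not make the product more attractive to the others.'' It does make it more attractive: moving $y$ down toward $\theta$ makes it strictly more attractive to every type below the new location. That is exactly why upward mismatch has value, since it makes a contract a worse fit for lower types and lowers their rent. The paper notes that when $\theta^{*}$ exists, $\psi^{h}(\cdot,\theta)$ is maximized at some $y>\theta$ for $\theta>\theta^{*}$. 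Upward mismatch is ruled out only by the IR-multiplier argument in the proof of Lemma~\ref{lem:h-step-1}, which uses the curvature bound in Assumption~\ref{Assumption:m-2} and already takes $\Theta_{I}=[0,\tau]$ as given. Your step is therefore both unjustified and circular. Even granting it, monotonicity of $u$ on the served set says nothing about gaps in $\Theta_I$, because excluded types also have $u=0$. ``The seller excludes only types for which IR binds or is unprofitable'' is an assertion, not an argument. Your pointwise-replacement device also adds a contract priced at $t'$ without checking whether other types switch to it.

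The missing idea is an augmentation that is IC-neutral by construction. Take a gap $(\theta_1,\theta_2)$ with served types accumulating at $\theta_2$; continuity gives $u(\theta_2)=0$. Some served $s>\theta_2$ must have $t(s)\ge c\alpha(s)$, since otherwise dropping his contract raises profit. IR of $s$ and IC of $\theta_2$ against $s$'s contract give $m(|\alpha(s)-s|)\le m(|\alpha(s)-\theta_2|)$. Hence $\alpha(s)>\theta_2$, and $c\theta<c\alpha(s)\le t(s)\le v_0$ on the gap. Now offer each gap type his ideal product at price $v_0$. This gives him zero utility and every other type negative utility, so IC and IR are preserved. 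Profit rises by $\int_{\theta_1}^{\theta_2}(v_0-c\theta)f(\theta)\,d\theta>0$, and the same argument with $\theta_1=0$ pins the lower end at $0$.

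There is also a smaller gap in your sufficiency step for excluded types. The bound $v_0-kM(\alpha(\hat\theta),\theta)-t(\hat\theta)\le u(\tau)$ for $\theta>\tau$ requires $M_2(\alpha(\hat\theta),s)\ge0$ on $[\tau,\theta]$, i.e., $\alpha(\hat\theta)\le\tau$. You never show this, and it does not follow from the envelope condition, monotonicity, and $u(\tau)=0$ alone; it needs IR. If $\alpha(\hat\theta)>\tau$ for some $\hat\theta<\tau$, monotonicity gives $\alpha(s)>s$ on $[\hat\theta,\tau]$. Then $u'>0$ there, so $u(\hat\theta)<u(\tau)=0$, contradicting IR. The paper reaches the same conclusion via concavity of $\theta\mapsto v_0-kM(\alpha(\hat\theta),\theta)-t(\hat\theta)$ and IC of type $\tau$. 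Your argument never invokes IR in the ``if'' direction, and without it that direction is false. For example, $\alpha\equiv1$ on $[0,\tau]$ with $\tau<1$ and $u(\tau)=0$ satisfies your three conditions, yet types just above $\tau$ strictly gain from $\tau$'s contract.
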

Higher types prefer higher-indexed products, which are costlier to
produce, so higher types are costlier to serve and exclusion thus
begins from the top. Given that $\Theta_{I}$ is an interval, the
IC conditions in the second part of Lemma \ref{lem:h-IC} follow from
standard arguments. The condition $u\left(\tau\right)=0$ is the IC
requirement for excluded types: if $u\left(\tau\right)>0$, then by
continuity types just above $\tau$ would obtain positive utility
from type $\tau$'s contract, contradicting exclusion. Given Lemma
\ref{lem:h-IC}, the analysis proceeds in two steps: Step 1 fixes
$\tau$ and characterizes the optimal mechanism that serves only types
in $\Theta_{I}=\left[0,\tau\right]$. Step 2 then optimizes over $\tau$.

\paragraph{Step 1. }

Substitute in $t\left(\theta\right)=v_{0}-kM\left(\alpha\left(\theta\right),\theta\right)-u\left(\theta\right)$.
The \emph{fixed-$\tau$ problem} is
\begin{equation}
\Pi^{h}\left(\tau\right)=\underset{\alpha,u\text{ s.t. IC, IR}}{\max}\ \int_{0}^{\tau}\left[v_{0}-kM\left(\alpha\left(\theta\right),\theta\right)-c\alpha\left(\theta\right)-u\left(\theta\right)\right]f\left(\theta\right)d\theta.\tag{\ensuremath{\mathcal{P}^{h}\left(\tau\right)}}\label{eq:Program-h}
\end{equation}
Fix $\tau<1$, so $u\left(\tau\right)=0$ by Lemma \ref{lem:h-IC};
the $\tau=1$ case is handled in the proof, where $u\left(\tau\right)$
is shown to be zero endogenously at the optimum. The envelope condition
(\ref{eq:u'-horizontal}) gives $u\left(\theta\right)=\int_{\theta}^{\tau}kM_{2}\left(\alpha\left(s\right),s\right)ds$.
Substituting $u$ into the objective and integrating by parts yields
\[
\Pi^{h}(\tau)\ =\ \max_{\alpha\ \text{nondecreasing}}\int_{0}^{\tau}\psi^{h}(\alpha\left(\theta\right),\theta)f(\theta)d\theta\ \ \ \text{ s.t. }\ \ \ u\left(\theta\right)\ge0\ \ \forall\theta\in\left[0,\tau\right],
\]
where 
\[
\psi^{h}\left(y,\theta\right):=v_{0}-kM\left(y,\theta\right)-cy-kM_{2}\left(y,\theta\right)\frac{F\left(\theta\right)}{f\left(\theta\right)}
\]
is the virtual surplus from type $\theta$ at allocation $y$. Define
\begin{equation}
\alpha^{h}\left(\theta\right):=\underset{y\in\left[0,\theta\right]}{\arg\max}\psi^{h}\left(y,\theta\right)\label{eq:alpha-h}
\end{equation}
Thus $\alpha^{h}\left(\theta\right)$ maximizes $\psi^{h}\left(\cdot,\theta\right)$
over only products weakly below type $\theta$'s ideal. The proof
of Lemma \ref{lem:h-step-1} shows that $\psi^{h}\left(\cdot,\theta\right)$
is strictly concave on $\left[0,\theta\right]$, so $\alpha^{h}\left(\theta\right)$
is well-defined.
\begin{lem}
\label{lem:h-step-1}For any $\tau>0$, the unique solution to the
fixed-$\tau$ problem \ref{eq:Program-h} is $\alpha^{h}$.
\end{lem}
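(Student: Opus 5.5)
The plan is to solve a relaxed pointwise problem, show that its solution $\alpha^{h}$ is feasible for \ref{eq:Program-h}, and then argue that no feasible allocation can beat it. For $\tau<1$, I take the reduced form derived just before the lemma as given: every IC mechanism with $u(\tau)=0$ has objective $\int_{0}^{\tau}\psi^{h}(\alpha(\theta),\theta)f(\theta)d\theta$. The case $\tau=1$ needs separate handling, since $u(1)$ is not pinned; there I would add the constant $u(1)\ge0$ to the rent and observe that it enters the objective negatively, so $u(1)=0$ at the optimum. The problem is then to maximize this integral over nondecreasing $\alpha$ subject to $u(\theta)=\int_{\theta}^{\tau}kM_{2}(\alpha(s),s)ds\ge0$.

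\textbf{Step 1 (pointwise maximizer on $[0,\theta]$).} On $y\in[0,\theta]$ we have
\[
\psi^{h}(y,\theta)=v_{0}-km(\theta-y)-cy-km'(\theta-y)F(\theta)/f(\theta).
\]
Its second derivative in $y$ is $-km''(\theta-y)-km'''(\theta-y)F/f<0$, because $m''>0$ and $m'''\ge0$ by Assumption \ref{Assumption:m-2}. So $\alpha^{h}(\theta)$ is unique. It is characterized by the first-order condition
\[
m'(\delta)+m''(\delta)F(\theta)/f(\theta)=c/k,\qquad \delta=\theta-y,
\]
with a corner at $y=0$ when $m'(\theta)+m''(\theta)F/f\le c/k$ (setting $\delta=\theta$ there).

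\textbf{Step 2 (monotonicity).} I would show that $\theta-\alpha^{h}(\theta)$ grows with slope at most one, so that $\alpha^{h}$ is nondecreasing. Implicit differentiation gives
\[
d\delta/d\theta=-m''(\delta)\,(F/f)'\big/\bigl[m''(\delta)+m'''(\delta)F/f\bigr].
\]
Log-concavity of $F$ gives $(F/f)'\ge0$, so $\delta$ is weakly increasing in $\theta$. Showing $d\delta/d\theta\le1$ is where the bound $m'''/m''\le m''(0)/m'(1)$ should enter, together with log-concavity of $m'$; I expect this step to be a routine but delicate calculation.

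\textbf{Step 3 (IR holds and the relaxation is tight).} Since $\alpha^{h}(s)\le s$, we get $M_{2}(\alpha^{h}(s),s)=m'(s-\alpha^{h}(s))\ge0$. Hence $u$ is nonincreasing and $u(\theta)\ge u(\tau)=0$, so $\alpha^{h}$ is feasible. The main obstacle is showing that allocations with upward mismatch ($\alpha(\theta)>\theta$) cannot do better. On such types $-kM_{2}F/f=km'(\alpha-\theta)F/f>0$, so $\psi^{h}$ may exceed $\psi^{h}(\alpha^{h}(\theta),\theta)$ pointwise, and only IR rules this out.

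My first attempt at this step is an interval surgery. Take a maximal interval $(a,b)$ on which $\alpha(\theta)>\theta$; there $u$ is strictly increasing, so IR at $a$ forces $u>0$ throughout the interval. Returning to the untransformed objective $\int[v_{0}-kM-c\alpha-u]f$, I would replace $\alpha$ on $(a,b)$ by $\max\{\alpha(a),\theta\}$ truncated at the ideal, and lower $u$ on $[0,b]$ by using the slack created on $[0,a]$. The goal is to show that the reduction in $kM+c\alpha$ weakly dominates any induced change in rent, with strict inequality unless the interval is null. This step requires tracking both IR and monotonicity at the endpoints and is where I expect most of the work.

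If the surgery fails, the fallback is Lagrangian. Attach a nonnegative multiplier measure $\Lambda$ to the IR constraints and bound the objective of any feasible $\alpha$ by $\int\bigl[v_{0}-kM-c\alpha-kM_{2}(F-\Lambda)/f\bigr]f$. Then choose $\Lambda$, supported where $u=0$, so that this adjusted virtual surplus is maximized below the diagonal, penalizing $y>\theta$, and coincides with $\psi^{h}$ at $\alpha^{h}$. Uniqueness then follows from strict concavity in Step 1.
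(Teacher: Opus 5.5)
Your outline identifies the right difficulty, but the step that actually constitutes the lemma (ruling out upward mismatch) is not established, and the hypothesis that makes it true is attached to the wrong step.

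\emph{The surgery does not work as written.} Taken literally, it goes the wrong way. Replace $\alpha$ by the ideal product on a maximal interval $(a,b)$ of upward mismatch, keep $u(\tau)=0$, and leave the allocation above $b$ unchanged. Then $u$ becomes flat on $(a,b)$ and \emph{rises} on $[0,b)$, by $\int_a^b km'(\alpha(s)-s)\,ds$ on $[0,a]$. For $\theta>\theta^*$ the change can also lose virtual surplus, because $\psi^h(\cdot,\theta)$ is increasing just above the diagonal there. Since $u$ is pinned by the envelope condition and $u(\tau)=0$, the only way to lower rent on $[0,b]$ is to reduce the downward mismatch above $b$ that brings $u$ back to zero. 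Weighing that against the gain on $(a,b)$ is exactly where $m'''/m''\le m''(0)/m'(1)$ (Assumption \ref{Assumption:m-2}) and $c/k<m'(1)$ (Assumption \ref{Assumption:FB-ProductLine}) must enter. Your Step~3 never uses them. By Remark \ref{Remark:Horizontal-assumption}, the conclusion can fail without the curvature bound, so no argument that omits it can be complete.

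\emph{Steps 1 and 2.} You invoke the curvature bound in Step~2, but that rests on a sign slip. Your own formula gives $d\delta/d\theta=-m''(\delta)(F/f)'/[m''(\delta)+m'''(\delta)F/f]\le 0$. So mismatch falls with $\theta$, and $\alpha^h$ is trivially nondecreasing with no curvature bound needed. Step~1 also omits the upper corner $\alpha^h(\theta)=\theta$. It binds exactly when $m''(0)F(\theta)/f(\theta)\ge c/k$, which are precisely the types for which upward mismatch is tempting.

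\emph{The Lagrangian fallback.} This is the right route, and it is the paper's. Its content, however, lies in the multiplier you left unspecified and in one inequality. Set $\Lambda(\theta)=\lambda([0,\theta])$.
\begin{itemize}
\item If $m''(0)/f(1)<c/k$, take $\Lambda\equiv0$.
\item Otherwise, take $\Lambda=0$ on $[0,\theta^*]$ and $\Lambda(\theta)=F(\theta)-cf(\theta)/(km''(0))$ on $(\theta^*,\tau]$.
\end{itemize}
This caps the effective inverse hazard rate $\rho=(F-\Lambda)/f$ at $c/(km''(0))$, with equality above $\theta^*$. Hence $y=\theta$ satisfies the first-order condition above $\theta^*$. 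Write the adjusted surplus as $\tilde{\psi}^{h}=v_0-kM-cy-kM_2\rho$. For every $\theta$ and $y=\theta+d$ with $d>0$,
\[
\partial_y\tilde{\psi}^{h}\le -km'(d)+\frac{c}{m''(0)}\bigl[m''(d)-m''(0)\bigr]\le m'(d)\Bigl[\frac{c}{m'(1)}-k\Bigr]<0,
\]
using $m''(d)-m''(0)=\int_0^d m'''\le \frac{m''(0)}{m'(1)}m'(d)$. This holds in both cases, so the bound is needed even when $\Lambda\equiv 0$. It then remains to check three things:
\begin{itemize}
\item $\Lambda$ is nondecreasing, which follows from log-concavity of $F$.
\item Complementary slackness holds, because $u=0$ on $[\theta^*,\tau]$.
\item Uniqueness follows from strict pointwise maximization.
\end{itemize}

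\emph{The case $\tau=1$.} The same device repairs this step. Saying that $u(1)$ enters the objective negatively is not enough, because raising $u(1)$ relaxes every IR constraint (for instance, it permits upward mismatch near the top). What settles it is that $u(\tau)$ carries the coefficient $\Lambda(\tau)-F(\tau)<0$ in the Lagrangian.
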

Observe that $\alpha^{h}$ is independent of $\tau$. To develop intuitions
for how screening affects the product that buyers buy, I characterize
the properties of $\alpha^{h}$. Since $\alpha^{h}$ is the solution,
it must be nondecreasing, so $\beta^{h}:=\inf\left\{ \theta\in\Theta|\alpha^{h}\left(\theta\right)>0\right\} $
is well-defined. 
\begin{lem}
\label{lem:alpha-h}$\beta^{h}$ is in $\left(0,1\right)$, and $\alpha^{h}\left(\theta\right)=0$
for all $\theta\in\left[0,\beta^{h}\right]$. 
\begin{itemize}
\item Suppose $m''\left(0\right)/f\left(1\right)<c/k$.\footnote{If $f\left(1\right)=0$, define the left-hand side as $\lim_{s\uparrow1}m''\left(0\right)/f\left(s\right)$.}
Then $\alpha^{FB}\left(\theta\right)<\alpha^{h}\left(\theta\right)<\theta$
for all $\theta\in(\beta^{h},1]$.
\item Suppose $m''\left(0\right)/f\left(1\right)\ge c/k$. Then there exists
a unique $\theta^{*}\in(\beta^{h},1]$, characterized by 
\begin{equation}
m''\left(0\right)\frac{F\left(\theta^{*}\right)}{f\left(\theta^{*}\right)}=c/k,\label{eq:theta-*}
\end{equation}
 such that $\alpha^{FB}\left(\theta\right)<\alpha^{h}\left(\theta\right)<\theta$
for $\theta\in\left(\beta^{h},\theta^{*}\right)$, and $\alpha^{h}\left(\theta\right)=\theta$
for $\theta\ge\theta^{*}$. 
\end{itemize}
\end{lem}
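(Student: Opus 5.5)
On $[0,\theta]$, write $z=\theta-y\in[0,\theta]$ for the downward mismatch. By (\ref{eq:M2}), $M_2=m'(z)$ there, so with $H(\theta):=F(\theta)/f(\theta)$,
\[
\psi^{h}(\theta-z,\theta)=v_0-km(z)-c(\theta-z)-km'(z)H(\theta).
\]
The derivative in $z$ is
\[
D(z,\theta):=-km'(z)+c-km''(z)H(\theta).
\]
The plan is to read every claim of Lemma \ref{lem:alpha-h} off the sign of $D$.

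\textbf{Strict concavity.} Strict convexity of $m$ and $m'''\ge0$ from Assumption \ref{Assumption:m-2} make $D$ strictly decreasing in $z$. This is the concavity used in Lemma \ref{lem:h-step-1}. So $\alpha^h(\theta)=\theta-z^h(\theta)$, where $z^h(\theta)$ is characterized as follows:
\begin{itemize}
\item $z^h=0$ if $D(0,\theta)\le0$;
\item $z^h=\theta$ if $D(\theta,\theta)\ge0$;
\item otherwise $z^h$ is the interior root of $D=0$.
\end{itemize}

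\textbf{Comparison with first best.} $D(z,\theta)$ is $H(\theta)$ times $-km''(z)<0$, plus the first-best derivative $c-km'(z)$. Hence any interior root satisfies $z^h(\theta)<\delta^{FB}$, which gives $\alpha^h(\theta)>\alpha^{FB}$ whenever $\alpha^h(\theta)>0$. Whenever $\alpha^{FB}(\theta)=0<\alpha^h(\theta)$, the inequality is immediate. To get $\alpha^h<\theta$, I need $D(0,\theta)=c-km''(0)H(\theta)>0$, that is,
\[
m''(0)H(\theta)<c/k.
\]
Strict log-concavity of $F$ makes $H$ strictly increasing with $H(0)=0$. So the set where this holds is an initial interval:
\begin{itemize}
\item it is all of $\Theta$ if $m''(0)H(1)=m''(0)/f(1)<c/k$;
\item otherwise it is $[0,\theta^*)$, with $\theta^*$ the unique solution of (\ref{eq:theta-*});
\item for $\theta\ge\theta^*$, $D(0,\theta)\le0$ gives $\alpha^h(\theta)=\theta$.
\end{itemize}

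\textbf{The basic-product region.} Near $\theta=0$, $D(\theta,\theta)\to c>0$, so $\alpha^h=0$ on a neighborhood of $0$, and $\beta^h>0$. Next I show $\beta^h<1$. If $m''(0)/f(1)\ge c/k$, then $\alpha^h(1)=1>0$ and we are done. Otherwise I need $D(1,1)<0$, i.e.
\[
c<k\bigl[m'(1)+m''(1)/f(1)\bigr],
\]
which follows from Assumption \ref{Assumption:FB-ProductLine}. It remains to show that $\{\theta:\alpha^h(\theta)=0\}$ is an interval $[0,\beta^h]$ and that $\theta^*>\beta^h$. For this I would show $\alpha^h$ is nondecreasing, i.e. $\theta-z^h(\theta)$ is nondecreasing. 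Lemma \ref{lem:h-step-1} already asserts $\alpha^h$ solves the monotone program; alternatively, argue directly that $D(\theta,\theta)$ crosses zero once from above. Then $\theta^*>\beta^h$ holds because $\alpha^h(\theta^*)=\theta^*>0$ and $\alpha^h$ is continuous (implicit function theorem on the interior root).

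\textbf{Main obstacle.} The hard step is the single-crossing of $D(\theta,\theta)=c-km'(\theta)-km''(\theta)H(\theta)$, and monotonicity of $\alpha^h$ in the interior. Differentiating the interior condition gives
\[
\frac{dz^h}{d\theta}=\frac{-m''(z)H'(\theta)}{m''(z)+m'''(z)H(\theta)}.
\]
So I need this to be at most $1$, which I would reduce to
\[
m''(z)\bigl(1+H'\bigr)+m'''(z)H\ge0.
\]
Since $m'''\ge0$ and $H'>0$, this holds, and $\alpha^h$ is increasing on the interior region. For the corner, $\frac{d}{d\theta}D(\theta,\theta)=-k[m''(1+H')+m'''H]<0$, so $D(\theta,\theta)$ is strictly decreasing and the zero-corner region is exactly $[0,\beta^h]$. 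I would check where the upper bound $m'''/m''\le m''(0)/m'(1)$ is actually needed — possibly only for the concavity in Lemma \ref{lem:h-step-1} — and invoke it there if required.
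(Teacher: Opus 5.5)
Your proposal is correct and follows essentially the same route as the paper: use strict concavity to read $\alpha^h$ off the sign of the derivative at the two corners. The lower-corner condition is monotone in $\theta$ (your $\frac{d}{d\theta}D(\theta,\theta)<0$ is the paper's $A'(\theta)=m''+m'''R+m''R'>0$), the upper-corner condition $m''(0)F(\theta)/f(\theta)\ge c/k$ defines $\theta^*$, and the interior first-order condition is compared with $m'(\delta^{FB})=c/k$. The step you flag as the main obstacle is immediate, and the implicit-function detour on interior monotonicity is not needed here: single-crossing of $D(\theta,\theta)$, together with $D(\theta,\theta)\to c>0$ as $\theta\to 0$ and $D(1,1)<c-km'(1)<0$, already gives $\beta^h\in(0,1)$ and the interval structure in both cases, so your case split for $\beta^h<1$ can be dropped.
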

Under $\alpha_{h}$, low types $\theta\le\beta^{h}$ are pooled at
the lowest-cost basic product $y=0$. Above the pooling region, every
served type receives a product \emph{closer} to his ideal than under
the first best, and types above $\theta^{*}$, when $\theta^{*}$
exists, receive their ideal product. These properties reflect the
interaction between cost saving and limiting information rent. As
in the first-best problem, the seller wants to induce downward mismatch
because higher-indexed products are costlier to produce. Unlike the
social planner, the seller's virtual surplus must also account for
information rent. Since $u'\left(\theta\right)=-km'\left(\theta-\alpha^{h}\left(\theta\right)\right)$,
increasing the downward mismatch makes the utility schedule decline
more steeply with type and raises the rent of the lower types. Downward
mismatch thus creates information rent. The seller consequently treats
downward mismatch as costlier than the social planner does and induces
less of it.

In fact, if $m''\left(0\right)/f\left(1\right)\ge c/k$, the seller
could further lower the low types' information rent by inducing\emph{
upward} mismatch for types above $\theta^{*}$---for these types,
their virtual surplus $\psi^{h}\left(\cdot,\theta\right)$ is maximized
by some $y>\theta$. However, this is never optimal once IR constraints
are taken into account. By (\ref{eq:u'-horizontal}) and (\ref{eq:M2}),
upward mismatch makes $u'\left(\theta\right)>0$, so utility rises
in type. Since the highest-served type $\tau$ must earn zero utility,
the utility schedule must then fall back over some region of $\left(\theta^{*},\tau\right)$
to reach zero at $\tau$. This means that the rent saved generated
by some upward mismatch at some $\hat{\theta}>\theta^{*}$ can only
be collected by inducing some downward mismatch above $\hat{\theta}$,
which again creates rent for lower types. The proof of Lemma \ref{lem:h-step-1}
shows that once this tradeoff is accounted for, the seller is better
off giving types above $\theta^{*}$ their ideal product instead of
inducing upward mismatch for them. The proof establishes this result
by constructing a Lagrangian multiplier on the IR constraints and
verifying optimality by weak duality. Because the same proof strategy
is used more substantially in Section \ref{Section:Vertical}, I defer
further discussion of this proof strategy to that section.
\begin{rem}
\label{Remark:Horizontal-assumption}The previous argument relies
on the curvature assumption on $m$ stated in the second part of Assumption
\ref{Assumption:m-2}. Without it, the solution may have to further
reduce the downward mismatch for some types below $\theta^{*}$ to
maintain IR for types above $\theta^{*}$, so the solution is comparably
less clean, which in turn complicates Step 2 below. Nevertheless,
the main economic forces in this section are unaffected: when buyers
differ only horizontally, exclusion must start from the high types,
and downward mismatch creates information rent for the low types,
so the seller generally induces less of it than the social planner
would. 
\end{rem}

\paragraph{Step 2. }

Since the solution for the fixed-$\tau$ problem \ref{eq:Program-h}
is independent of $\tau$, the seller's expected profit from serving
types $\left[0,\tau\right]$ is $\Pi^{h}(\tau)=\int_{0}^{\tau}\psi(\alpha^{h}(\theta),\theta)f(\theta)d\theta$,
so $\Pi^{h'}\left(\tau\right)=\Psi^{h}\left(\tau\right)f\left(\tau\right)$,
where $\Psi^{h}\left(\tau\right)=\psi(\alpha^{h}(\tau),\tau)$ is
the virtual surplus of the marginal included type. 
\begin{lem}
\label{lem:h-step-2}$\Pi^{h}$ is strictly quasiconcave on $[0,1]$
and admits a unique maximizer $\tau^{h}$.
\begin{itemize}
\item If $\Psi^{h}(1)\ge0$, then $\tau^{h}=1$.
\item If $\Psi^{h}(1)<0$, then $\tau^{h}\in(0,1)$ is the unique solution
to $\Psi^{h}(\tau)=0$. 
\end{itemize}
\end{lem}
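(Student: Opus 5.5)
The plan is to reduce strict quasiconcavity of $\Pi^{h}$ to a single-crossing property of the marginal virtual surplus. Since $\Pi^{h\prime}(\tau)=\Psi^{h}(\tau)f(\tau)$ and $f>0$ on $\Theta$ by Assumption \ref{Assumption:F}, it suffices to show two things: $\Psi^{h}(0)>0$, and $\Psi^{h}$ is continuous and strictly decreasing on $[0,1]$. Then $\Pi^{h}$ is strictly increasing wherever $\Psi^{h}>0$ and strictly decreasing wherever $\Psi^{h}<0$, which gives strict quasiconcavity and a unique maximizer. The two bullet cases follow at once. If $\Psi^{h}(1)\ge0$, then $\Psi^{h}>0$ on $[0,1)$, so $\tau^{h}=1$. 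If $\Psi^{h}(1)<0$, the intermediate value theorem and strict monotonicity give a unique root of $\Psi^{h}$ in $(0,1)$, and that root is $\tau^{h}$. The endpoint value is immediate: at $\theta=0$ we have $\alpha^{h}(0)=0$, $M(0,0)=M_{2}(0,0)=0$ and $F(0)=0$, so $\Psi^{h}(0)=v_{0}>0$. Continuity of $\Psi^{h}(\theta)=\max_{y\in[0,\theta]}\psi^{h}(y,\theta)$ follows from Berge's maximum theorem, because $\psi^{h}$ is continuous and the constraint correspondence $\theta\mapsto[0,\theta]$ is continuous.

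For strict monotonicity I will use the structure of $\alpha^{h}$ from Lemma \ref{lem:alpha-h}. That lemma splits $\Theta$ into at most three intervals, and on each I differentiate $\Psi^{h}$ directly.

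On the pooling region $[0,\beta^{h}]$ we have $\alpha^{h}=0$, so
\[
\Psi^{h}(\theta)=v_{0}-km(\theta)-km'(\theta)\frac{F(\theta)}{f(\theta)}.
\]
This is strictly decreasing. Indeed, $m$ and $m'$ are strictly increasing by Assumption \ref{Assumption:m-1}, and $F/f$ is nonnegative and nondecreasing because $F$ is log-concave.

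On the interior region, where $0<\alpha^{h}(\theta)<\theta$, the envelope theorem applies. The maximizer is interior and unique, since $\psi^{h}(\cdot,\theta)$ is strictly concave by the proof of Lemma \ref{lem:h-step-1}. Hence
\[
\Psi^{h\prime}(\theta)=\partial_{\theta}\psi^{h}=-kM_{2}-kM_{22}\frac{F}{f}-kM_{2}\Big(\frac{F}{f}\Big)^{\prime},
\]
evaluated at $(\alpha^{h}(\theta),\theta)$. With $y<\theta$ we have $M_{2}=m'(\theta-y)>0$ and $M_{22}=m''(\theta-y)>0$, and $(F/f)'\ge0$. So this derivative is strictly negative.

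On the region $\theta\ge\theta^{*}$ (when it exists), $\alpha^{h}(\theta)=\theta$ and $M=M_{2}=0$. Thus $\Psi^{h}(\theta)=v_{0}-c\theta$, which is strictly decreasing because $c>0$.

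Continuity then glues these pieces into a globally strictly decreasing function.

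The main obstacle is the interior region, and in particular its boundaries. The envelope argument needs $\alpha^{h}$ to be interior and $\Psi^{h}$ to be absolutely continuous across the kinks at $\beta^{h}$ and $\theta^{*}$, where the constraint $y\le\theta$ starts to bind. I would handle this by showing monotonicity through a direct comparison instead of differentiation. For $\theta<\theta'$, evaluate $\psi^{h}(\cdot,\theta')$ at a feasible point built from $\alpha^{h}(\theta)$, and use $\partial_{\theta}\psi^{h}(y,\cdot)<0$ along $y\le\theta$. This piecewise-derivative argument only needs $\Psi^{h}$ continuous and a.e. differentiable with negative derivative, which the case analysis above delivers.
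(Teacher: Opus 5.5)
Your proposal is correct and follows the paper's proof essentially step for step. Both establish $\Psi^{h}(0)=v_{0}>0$ and continuity of $\Psi^{h}$. Both verify strict decrease separately on the pooling region, the interior region (via the envelope theorem), and the ideal-product region. The sign of $\Pi^{h\prime}=\Psi^{h}f$ then gives strict quasiconcavity and the two cases.

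Your final paragraph is unnecessary and slightly off. Since there are finitely many pieces and $\Psi^{h}$ is differentiable at every interior point of each, the mean value theorem on each piece plus continuity at $\beta^{h}$ and $\theta^{*}$ already suffices. By contrast, ``continuous with a.e.\ negative derivative'' would not suffice in general. The comparison you sketch through $\alpha^{h}(\theta)$ also bounds $\Psi^{h}(\theta')$ from below, which is the wrong direction.
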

The seller's optimal mechanism follows from Lemma \ref{lem:h-step-1}
and Lemma \ref{lem:h-step-2}.
\begin{prop}
\label{Proposition:H-solution}Suppose $v(\theta)=v_{0}$ for all
$\theta\in\Theta$ and $c>0$. The seller's optimal mechanism is unique: 
\begin{itemize}
\item The inclusion cutoff is $\tau^{h}$ from Lemma \ref{lem:h-step-2}.
If $v_{0}\ge c$, then $\tau^{h}=1$. 
\item The allocation on $\left[0,\tau^{h}\right]$ is $\alpha^{h}$ from
Lemma \ref{lem:h-step-1}. If $\beta^{h}<\tau^{h}$, then on $\left[\beta^{h},\tau^{h}\right]$,
product mismatch $\delta^{h}\left(\theta\right):=\theta-\alpha^{h}\left(\theta\right)$
is decreasing in $\theta$, strictly so for $\theta<\theta^{*}$ (if
$\theta^{*}$ exists).
\item The indirect utility schedule is $u\left(\theta\right)=\int_{\theta}^{\tau^{h}}km'\left(s-\alpha^{h}\left(s\right)\right)ds$.
It is decreasing in $\theta$, strictly so for $\theta<\theta^{*}$
(if $\theta^{*}$ exists). 
\item The transfer schedule is $t\left(\theta\right)=v_{0}-km\left(\theta-\alpha^{h}\left(\theta\right)\right)-u\left(\theta\right).$
$t\left(\theta\right)=v_{0}$ for $\theta\ge\theta^{*}$ (if $\theta^{*}$
exists), and $t\left(\theta\right)<v_{0}$ otherwise.
\end{itemize}
\end{prop}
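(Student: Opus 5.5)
The plan is to assemble the proposition from Lemmas \ref{lem:h-IC}, \ref{lem:h-step-1}, \ref{lem:alpha-h} and \ref{lem:h-step-2}, then verify the listed properties one at a time.

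\emph{Existence and uniqueness.} By Lemma \ref{lem:h-IC}, every optimal mechanism has inclusion set $[0,\tau]$ for some $\tau$. For fixed $\tau$, Lemma \ref{lem:h-step-1} says the unique solution is $\alpha^{h}$, which does not depend on $\tau$. The seller's problem therefore reduces to maximizing $\Pi^{h}(\tau)$. Lemma \ref{lem:h-step-2} gives a unique maximizer $\tau^{h}$. Given $\alpha^{h}$ and $\tau^{h}$, the envelope condition (\ref{eq:u'-horizontal}) together with $u(\tau^{h})=0$ pins down $u$; Lemma \ref{lem:h-IC} gives this boundary condition when $\tau^{h}<1$, and the proof of Lemma \ref{lem:h-step-1} gives it at the optimum when $\tau^{h}=1$. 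Since $\alpha^{h}(\theta)\le\theta$, we have $M_{2}=m'(\theta-\alpha^{h})$, so
\[
u(\theta)=\int_{\theta}^{\tau^{h}}km'\left(s-\alpha^{h}(s)\right)ds .
\]
The transfer then follows from (\ref{eq:u-h}). Uniqueness up to measure-zero sets is inherited from the two lemmas.

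\emph{Claim $v_{0}\ge c$ implies $\tau^{h}=1$.} By Lemma \ref{lem:h-step-2} it suffices to show $\Psi^{h}(1)\ge0$. Evaluate $\psi^{h}$ at the feasible point $y=1=\theta$: since $M=M_{2}=0$ there, $\psi^{h}(1,1)=v_{0}-c\ge0$. As $\alpha^{h}(1)$ maximizes $\psi^{h}(\cdot,1)$ over $[0,1]$, this gives $\Psi^{h}(1)\ge v_{0}-c\ge0$.

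\emph{Monotonicity of the mismatch $\delta^{h}$.} This is the main technical step. Above $\theta^{*}$ the mismatch is zero by Lemma \ref{lem:alpha-h}. On $(\beta^{h},\theta^{*})$ the maximizer is interior with $0<\delta<\theta$, so the first-order condition, written in $\delta=\theta-y$, reads
\[
km'(\delta)-c+km''(\delta)\frac{F(\theta)}{f(\theta)}=0 .
\]
The plan is to apply the implicit function theorem. The left side is strictly increasing in $\delta$ because $m'$ is increasing and $m''$ is nondecreasing, using $m'''\ge0$ from Assumption \ref{Assumption:m-2}. It is strictly increasing in $\theta$ because $F/f$ is strictly increasing by log-concavity of $F$ (Assumption \ref{Assumption:F}) and $m''(\delta)>0$. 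Hence $\delta^{h}$ is strictly decreasing on this interval. Continuity of $\alpha^{h}$ at $\theta^{*}$, which comes from strict concavity of $\psi^{h}$ and the maximum theorem, joins the two pieces. At $\beta^{h}$ itself there is no issue, since only the region above $\beta^{h}$ is claimed. I expect checking that the derivative in $\delta$ is nonzero, and handling the kink where the constraint $y\ge0$ stops binding, to be the fiddly parts.

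\emph{Utility and transfers.} The utility claim is immediate from the formula for $u$: $u'(\theta)=-km'(\delta^{h}(\theta))\le0$, with strict inequality wherever $\delta^{h}>0$. That holds on $[0,\theta^{*})$, including the pooling region, because $\theta>0=\alpha^{h}$ there except at $\theta=0$. For transfers, when $\theta\ge\theta^{*}$ we have $\delta^{h}=0$ on $[\theta,\tau^{h}]$, so $u(\theta)=0$ and $m(0)=0$, giving $t(\theta)=v_{0}$. When $\theta<\theta^{*}$, strict decrease of $u$ gives $u(\theta)>u(\tau^{h})\ge0$. Since $m\ge0$, this yields $t(\theta)<v_{0}$. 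At $\theta=\tau^{h}$ with $\tau^{h}<\theta^{*}$, the term $km(\delta^{h})>0$ gives the strict inequality instead.
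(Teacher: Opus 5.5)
Your proposal is correct and follows the paper's route: the mechanism is assembled from Lemmas \ref{lem:h-IC}, \ref{lem:h-step-1} and \ref{lem:h-step-2}, and the properties of $\delta^{h}$, $u$ and $t$ are read off from the first-order condition (\ref{eq:delta-h}) and the envelope formula, which the paper leaves largely implicit. Your proof that $v_{0}\ge c$ implies $\tau^{h}=1$ uses $\Psi^{h}(1)\ge\psi^{h}(1,1)=v_{0}-c\ge0$, which is a cleaner version of the paper's argument (the paper splits on whether $\theta^{*}$ exists); also, the claimed interval $[\beta^{h},\tau^{h}]$ does include $\beta^{h}$, but this endpoint is covered because the same first-order condition holds there with $\delta=\beta^{h}$, by the definition of $\beta^{h}$.
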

Proposition \ref{Proposition:H-solution} completes the characterization
of the seller\textquoteright s optimal mechanism. The seller includes
every type whose virtual surplus is positive. Among the types receiving
non-basic products, product fit improves with type, though utility
declines. Types $\theta\ge\theta^{*}$ receive their ideal product
priced at their full valuation $v_{0}$, and so earn zero utility.
The next proposition provides comparative statics. 
\begin{prop}
\label{Proposition:h-CompStatics}Suppose $v(\theta)=v_{0}$ for all
$\theta\in\Theta$ and $c>0$. 
\begin{enumerate}
\item (On coverage) $\tau^{h}$ is weakly decreasing in $c$, weakly decreasing
in $k$, and weakly increasing in $v_{0}$. The increase in $v_{0}$
is strict if and only if $\tau^{h}<1$. Locally, the decrease in $c$
is strict if and only if $\beta^{h}<\tau^{h}<1$; and the decrease
in $k$ is strict if and only if $\tau^{h}<1$, and $\tau^{h}<\theta^{*}$
if $\theta^{*}$ exists.
\item (On mismatch) For each fixed $\theta$, $\delta^{h}(\theta)$ is weakly
increasing in $c/k$ and independent of $v_{0}$. Locally, the increase
is strict if and only if $\beta^{h}<\theta$, and $\theta<\theta^{*}$
if $\theta^{*}$ exists.
\item (On profit) The seller's profit $\Pi^{h}\left(\tau^{h}\right)$ is
weakly decreasing in $c$, strictly decreasing in $k$, and strictly
increasing in $v_{0}$. The decrease in $c$ is strict if and only
if $\tau^{h}>\beta^{h}$.
\end{enumerate}
\end{prop}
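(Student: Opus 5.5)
We prove the three parts by working with the objects that Lemmas \ref{lem:h-step-1}--\ref{lem:h-step-2} have already separated: the allocation $\alpha^{h}$ (pointwise maximizer of $\psi^{h}$ on $[0,\theta]$, independent of $\tau$ and $v_{0}$) and the cutoff $\tau^{h}$ (root of $\Psi^{h}$ or $1$). Dividing $\psi^{h}$ by $k$ shows that, for $y\le\theta$, the maximizer depends only on $c/k$ through
\[
\psi^{h}(y,\theta)/k=v_{0}/k-m(\theta-y)-(c/k)y-m'(\theta-y)\tfrac{F(\theta)}{f(\theta)},
\]
so $v_{0}$ enters additively and does not affect $\alpha^{h}$. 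We start with part 2 (mismatch), since it feeds the other two.

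\textbf{Part 2.} Write $\delta=\theta-y\in[0,\theta]$. The objective in $\delta$ is $-m(\delta)+(c/k)\delta-m'(\delta)F/f$ plus terms independent of $\delta$. It has strictly increasing differences in $(\delta,c/k)$. Topkis, or the implicit function theorem applied to the interior first-order condition
\[
m'(\delta)+m''(\delta)F(\theta)/f(\theta)=c/k,
\]
then shows that $\delta^{h}(\theta)$ is weakly increasing in $c/k$. Strict concavity (proved in Lemma \ref{lem:h-step-1}) gives uniqueness. The increase is strict exactly when the solution is interior. If $\theta<\beta^{h}$, the corner is $\delta=\theta$, since $y=0$. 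If $\theta\ge\theta^{*}$, the corner is $\delta=0$: there $m''(0)F/f\ge c/k$, and this stays so under small changes of $c/k$ only when the inequality is strict. Equation (\ref{eq:theta-*}) handles the boundary. Independence of $v_{0}$ is immediate.

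\textbf{Part 1.} By the envelope theorem applied to the pointwise maximization,
\[
\Psi^{h}(\tau)=\max_{y\in[0,\tau]}\psi^{h}(y,\tau),
\]
so $\partial\Psi^{h}/\partial c=-\alpha^{h}(\tau)\le0$ and $\partial\Psi^{h}/\partial v_{0}=1>0$. For $k$,
\[
\partial\Psi^{h}/\partial k=-\bigl[m(\delta^{h})+m'(\delta^{h})F/f\bigr]\le0,
\]
with strict inequality iff $\delta^{h}(\tau)>0$, i.e. iff $\tau<\theta^{*}$ when $\theta^{*}$ exists. By Lemma \ref{lem:h-step-2}, $\Pi^{h}$ is strictly quasiconcave, and $\tau^{h}<1$ solves $\Psi^{h}(\tau)=0$ with $\Psi^{h}$ crossing zero from above. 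Hence shifting $\Psi^{h}$ down weakly lowers $\tau^{h}$, and strictly so iff $\tau^{h}<1$ and the derivative at $\tau^{h}$ is nonzero. The derivative in $c$ is nonzero iff $\alpha^{h}(\tau^{h})>0$, i.e. $\tau^{h}>\beta^{h}$. When $\tau^{h}=1$ with $\Psi^{h}(1)>0$, small perturbations leave $\tau^{h}=1$. For $v_{0}\ge c$ we have $\tau^{h}=1$ from Proposition \ref{Proposition:H-solution}, so the local claims are consistent.

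\textbf{Part 3.} Write $\Pi^{h}(\tau^{h})=\int_{0}^{\tau^{h}}\Psi^{h}(\theta)f\,d\theta$ and apply the envelope theorem twice. The inner maximization over $y$ is pointwise, and the outer optimality of $\tau^{h}$ kills the boundary term because $\Psi^{h}(\tau^{h})=0$ or $\tau^{h}=1$. This gives:
\begin{itemize}
\item $d\Pi/dv_{0}=F(\tau^{h})>0$;
\item $d\Pi/dc=-\int_{0}^{\tau^{h}}\alpha^{h}f\le0$, strict iff $\alpha^{h}>0$ on a positive-measure set, i.e. $\tau^{h}>\beta^{h}$;
\item $d\Pi/dk=-\int_{0}^{\tau^{h}}[m(\delta^{h})+m'(\delta^{h})F/f]f\,d\theta<0$, strict because $\delta^{h}(\theta)=\theta>0$ on $(0,\min\{\beta^{h},\tau^{h}\})$ and $\beta^{h}>0$ by Lemma \ref{lem:alpha-h}.
\end{itemize}
For global, not just local, monotonicity we integrate these derivatives; the value function is absolutely continuous in the parameter by Milgrom--Segal.

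\textbf{Main obstacle.} The delicate step is the local strictness conditions in part 1. They require $\Psi^{h}$ to cross zero transversally and require handling kinks of $\alpha^{h}$ at $\beta^{h}$ and $\theta^{*}$. For this we use the strict quasiconcavity from Lemma \ref{lem:h-step-2}, which implies a single crossing. Where the derivative of $\Psi^{h}$ is zero at the crossing, we argue via the strict sign of the parameter shift instead.
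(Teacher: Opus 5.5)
Your proposal is correct and follows essentially the paper's route. For coverage, you sign the envelope derivatives of $\Psi^{h}$ at the crossing. For mismatch, you use monotonicity of the interior condition $m'(\delta)+m''(\delta)F/f=c/k$, where you invoke Topkis and the paper writes out the three regimes in $r=c/k$ explicitly. For profit, you apply the envelope theorem to $\int_{0}^{\tau^{h}}\Psi^{h}f\,d\theta$.

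Your observation that $\Psi^{h}(\tau)=\max_{y\in[0,\tau]}\psi^{h}(y,\tau)$ shifts down pointwise in $c$ and $k$, combined with $\Psi^{h}$ being strictly decreasing, gives global weak monotonicity of $\tau^{h}$ more directly than the paper's purely local implicit-function argument. Your worry about a zero slope at the crossing is moot, because $\Psi^{h\prime}<0$ on all three regions (appendix \ref{OA:Lemma-H-Step-2}).
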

The inclusion set is $\left[0,\tau^{h}\right]$, so market coverage
is increasing in $\tau^{h}$. A higher $c$ shrinks coverage when
the marginal type receives a non-basic product, raising the cost of
serving him; if type $\tau^{h}$ is pooled at the costless basic product,
$c$ has no marginal effect. A higher $k$ shrinks coverage when the
marginal type is receives a mismatched product, reducing the surplus
from serving him; if $\tau^{h}\ge\theta^{*}$, he receives his ideal
product and incurs no mismatch cost, so $k$ has no marginal effect.
Therefore, $c$ is more likely to have effect when coverage is high,
whereas $k$ is more likely when coverage is low. A higher $v_{0}$
expands coverage simply by lifting every included type's willingness
to pay.

Product mismatch $\delta^{h}$ depends only on the ratio $c/k$, not
on $v_{0}$. This follows from the first-order condition of the virtual
surplus in (\ref{eq:delta-h}) in the appendix. Among types receiving
mismatched non-basic products, mismatch strictly increases when the
ideal product becomes more expensive to produce or when horizontal
fit becomes less important.

The profit effects of $v_{0}$ and $c$ are straightforward: $v_{0}$
raises willingness to pay and increases profit, while $c$ raises
the cost of non-basic products and lowers profit when the seller sells
some non-basic products. The effect of $k$ operates through mismatch.
When horizontal taste differentiation is stronger, any given mismatch
requires a larger price concession. The seller therefore reduces mismatch
and supplies products closer to buyers' ideal points, which are costlier
to supply. Profit thus falls with $k$. 

\section{\protect\label{Section:Vertical}Correlated Willingness to Pay and
Product Fit}

In this section, I assume $v\left(\theta\right)=\theta$, so that
a buyer's vertical value now also varies with his type. This captures
the correlation in the applications discussed in Section \ref{Subsection:Applications}.
Throughout this section, I restrict attention to $c<1$; otherwise,
every buyer's valuation of his ideal is less than its production cost. 

As in Section \ref{Section:PureHorizontal}, I work with direct mechanisms,
consisting of an allocation rule $\alpha$ and a transfer rule $t$.
Continue to let $\Theta_{I}$ denote the inclusion set. The only change
in the setup from Section \ref{Section:PureHorizontal} is that relative
to (\ref{eq:u-h}), an included type $\theta$'s utility under truth-telling
is now 
\begin{equation}
u\left(\theta\right)=\theta-kM\left(\alpha\left(\theta\right),\theta\right)-t\left(\theta\right).\label{eq:u-vertical}
\end{equation}
An IC mechanism must now satisfy
\begin{equation}
u'\left(\theta\right)=1-kM_{2}\left(\alpha\left(\theta\right),\theta\right)\ \ \ \ \text{a.e. on }\Theta_{I}.\label{eq:u'-v}
\end{equation}
In Section \ref{Section:PureHorizontal}, the envelope condition in
(\ref{eq:u'-horizontal}) implies that the sign of $u'$ is fully
determined by the sign of $M_{2}$, which depends only on whether
mismatch is upward or downward. Here, (\ref{eq:u'-v}) has an additional
``$+1$'' term, so the sign of $u'$ depends further on the magnitude
of $kM_{2}$ relative to $1$, which is the marginal vertical valuation.
\begin{lem}
\label{lem:v-CutoffMech}Suppose $v\left(\theta\right)=\theta$. In
any optimal mechanism, there exists $\tau\in\Theta$ such that $\Theta_{I}=\left[\tau,1\right]$.
Moreover, fixing an inclusion set $\left[\tau,1\right]$, an IR mechanism
is IC if and only if the envelope condition (\ref{eq:u'-v}) holds
on $\left[\tau,1\right]$, $\alpha$ is nondecreasing on $\left[\tau,1\right]$,
and if $\tau>0$, the cutoff type satisfies $u\left(\tau\right)=0$. 
\end{lem}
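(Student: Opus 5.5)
The proof has two parts: first show that the inclusion set of any optimal mechanism is an upper interval, then characterize IC given such an inclusion set.

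\emph{Inclusion set.} I would start with a single-crossing property. For any included type $\theta'$ with contract $(y',t')$ and any $\theta>\theta'$, the utility type $\theta$ gets from mimicking is
\[
\theta - kM(y',\theta) - t'.
\]
Its derivative in $\theta$ is $1-kM_2(y',\theta)$, which need not be positive, so monotonicity cannot be read off directly. The paper's remark that a higher type can be served with a low-cost product suggests a different route: compare the basic product $y=0$ across types. Type $\theta$ values $y=0$ at $\theta-km(\theta)$, and this is not monotone in general.

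Instead, I would argue by an improvement. Suppose $\theta_1<\theta_2$ with $\theta_1$ included and $\theta_2$ excluded. Because the mechanism is IC and IR, $u(\theta)$ is a maximum of functions of $\theta$ that are continuous in $\theta$, with a zero floor. Hence
\[
U(\theta):=\max\Bigl\{0,\ \sup_{(y,t)}\bigl[\theta-km(|y-\theta|)-t\bigr]\Bigr\}
\]
is continuous. The seller can then offer type $\theta_2$ the product $\alpha(\theta_1)$ shifted up by $\theta_2-\theta_1$, i.e.\ $\min\{\alpha(\theta_1)+(\theta_2-\theta_1),1\}$, at price $t(\theta_1)+(\theta_2-\theta_1)(1-c)$, or some similar construction.

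This step is delicate, so I would first try the cleaner standard argument. If the excluded set is not an upper set, there are excluded types above included ones. Take the price-schedule formulation: with posted prices $p(y)$, type $\theta$ buys iff $\max_y[\theta-km(|y-\theta|)-p(y)]\ge0$. Let $\Theta_I$ be the set where this maximum is nonnegative. Next I would show that any optimal schedule can be modified by adding the extra products $y\ge\theta$, whose cost exceeds $cy$, priced so that only currently excluded high types buy them, with profit weakly positive. This contradicts optimality only if strict, so it is cleanest to argue that the modified mechanism earns strictly more, or else to pick the optimal mechanism that is maximal.

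Concretely, for an excluded type $\hat\theta$ above an included type, offer the ideal product $y=\hat\theta$ at price $p=\hat\theta-\varepsilon$. Since $c<1$, the margin $\hat\theta(1-c)-\varepsilon$ is positive once $\hat\theta>0$. Nearby excluded types buy it. The key check is that included types do not switch to it. For an included type $\theta<\hat\theta$ its payoff is
\[
\theta-km(\hat\theta-\theta)-\hat\theta+\varepsilon<0\le u(\theta)
\]
for small $\varepsilon$. For included types $\theta>\hat\theta$, the same computation holds only if $u(\theta)$ is large enough, and this is where the argument can fail. I expect the main obstacle to be exactly here: showing that included types above $\hat\theta$ do not defect. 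To handle it, I would pick $\hat\theta$ as the supremum of an excluded interval sitting below included types. Then use the zero-utility condition at the boundary to show that the included types just above already get $u\approx0$ but strictly prefer their own contract by single crossing, which holds locally because $m'(0)=0$ makes $1-kM_2\approx1>0$ near the diagonal. So the result would be that the inclusion set is an upper set, which by closedness of $\{\theta: U(\theta)\ge0\}$ is $[\tau,1]$.

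\emph{IC characterization.} Given $\Theta_I=[\tau,1]$, I would follow the proof of Lemma \ref{lem:h-IC}.

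\emph{Necessity.} The envelope condition (\ref{eq:u'-v}) follows from Milgrom--Segal, since $\theta-km(|y-\theta|)$ is differentiable in $\theta$ with bounded derivative. Monotonicity of $\alpha$ follows from adding the IC inequalities of two types and using $M_{12}<0$, i.e.\ strict convexity of $m$, which gives supermodularity of $-M$. The condition $u(\tau)=0$ when $\tau>0$ follows because if $u(\tau)>0$, continuity would give types just below $\tau$ positive utility from $\tau$'s contract.

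\emph{Sufficiency.} Write
\[
u(\theta)-\bigl[\theta-kM(\alpha(\theta'),\theta)-t(\theta')\bigr]=\int_{\theta'}^{\theta}k\bigl[M_2(\alpha(\theta'),s)-M_2(\alpha(s),s)\bigr]ds\ge0
\]
by monotonicity of $\alpha$ and $M_{12}\le0$. Excluded types $\theta<\tau$ must not want any included contract. Their payoff from the contract of type $\theta'\ge\tau$ equals $u(\tau)$ plus the integral of $1-kM_2(\alpha(\theta'),s)$ from $\tau$ down to $\theta$. This is where I would expect to need that the payoff function $\theta\mapsto\theta-kM(y,\theta)$ is increasing for $\theta\le y$ (there $M_2\le0$), which covers offers with $y\ge\tau$. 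For $y<\tau$ I would combine this with $u(\tau)=0$ and IC of $\tau$.
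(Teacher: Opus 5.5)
Your IC characterization on $[\tau,1]$ (envelope condition, monotonicity from the paired IC inequalities, the integral identity, and $u(\tau)=0$ by a limiting argument) follows the paper's route. The first part, however, that the inclusion set is an upper interval, has a genuine gap.

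Offering an excluded $\hat\theta$ his ideal product at price $\hat\theta-\varepsilon$ works for a gap at the top of $\Theta$. It fails for an interior gap $(\theta_1,\theta_2)$ with $\theta_2$ included. IC of the excluded types just below $\theta_2$ forces $u(\theta_2)=0$. Meanwhile any $\theta>\hat\theta$ gets $(\theta-\hat\theta)-km(\theta-\hat\theta)+\varepsilon$ from the new contract. Since $m(z)=o(z)$, this is strictly positive on a right neighborhood of $\hat\theta$ whose length neither shrinks with $\varepsilon$ nor depends on $\hat\theta$. So if the gap is short, $\theta_2$ and nearby included types strictly prefer the new contract, and the profit comparison collapses.

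The local single crossing you invoke ($1-kM_2\approx1>0$ near the diagonal) is exactly why a near-ideal product priced at its target's value attracts slightly higher types. It causes the failure rather than curing it. Taking $\hat\theta$ to be the supremum of the gap just returns the included type $\theta_2$.

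The missing idea is to fill the gap with zero-rent contracts whose payoff, as a function of the buyer's type, peaks exactly at the intended type. Let $\delta^0:=(m')^{-1}(1/k)$.
\begin{enumerate}
\item First show $\theta_1\ge\delta^0$. IC of the excluded types just above the included $\theta_1$ gives $u(\theta_1)=0$ and a nonpositive right derivative, $1-kM_2(\alpha(\theta_1),\theta_1)\le0$. This rules out $\alpha(\theta_1)\ge\theta_1$ and then yields $km'(\theta_1-\alpha(\theta_1))\ge1$, i.e., $\theta_1-\alpha(\theta_1)\ge\delta^0$.
\item Then offer each $\theta\in(\theta_1,\theta_2)$ the product $\theta-\delta^0\ge0$ at price $\theta-km(\delta^0)$. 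The map $\theta''\mapsto\theta''-kM(\theta-\delta^0,\theta'')$ is concave with derivative $1-km'(\delta^0)=0$ at $\theta''=\theta$. Hence no type gets a positive payoff from any added contract, and IC and IR are preserved.
\item Each sale earns $(1-c)\theta+c\delta^0-km(\delta^0)\ge\delta^0-km(\delta^0)>0$, using $c<1$ and strict convexity.
\end{enumerate}
The same construction also handles a top gap.

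There is a smaller gap in your sufficiency step for an excluded type $\theta<\tau$ facing an included contract $(y,t)$ with $y<\tau$. Here $u(\tau)=0$ and IC of $\tau$ alone do not bound his payoff, because $\theta''\mapsto\theta''-kM(y,\theta'')-t$ can rise and then fall on $[\theta,\tau]$. What closes it is concavity of this map together with IR of the type $\theta'>\tau$ who holds the contract. If the slope at $\tau$ were negative, concavity would make the map strictly decreasing on $[\tau,1]$. Then $u(\theta')$ would lie strictly below type $\tau$'s payoff from that contract, which IC bounds by $u(\tau)=0$, contradicting IR. Hence the slope at $\tau$ is nonnegative, the map is nondecreasing on $[0,\tau]$, and the excluded type's payoff is at most $0$. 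Note also that the base term in your decomposition is type $\tau$'s payoff from $\theta'$'s contract. IC only bounds this above by $u(\tau)$; it is not equal to $u(\tau)$.
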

Lemma \ref{lem:v-CutoffMech} shows that the direction of exclusion
is reversed relative to Section \ref{Section:PureHorizontal}. Higher
types still prefer products that are costlier to produce, but they
now also have higher willingness to pay. Thus, serving a high type
no longer requires assigning him his ideal (or nearly ideal), high-cost
product. The seller can instead allocate him a lower-cost product,
let him bear some mismatch, and still profit, because his vertical
value is high. This changes the role of low-cost products. When buyers
differ only in horizontal tastes, a low-cost product is allocated
to the low type for whom it is a close match. With $v\left(\theta\right)=\theta$,
the same product may instead be more profitably allocated to a higher
type, despite the greater mismatch, because the higher type has greater
willingness to pay. Higher types therefore crowd out lower types even
at the low end of the product line. Lemma \ref{lem:v-CutoffMech}
formalizes this: if the seller excludes any buyers, she excludes low
types and serves an upper interval $\left[\tau,1\right]$.

Since the optimal inclusion set is still determined by a single cutoff,
I analyze the seller's problem in two steps, paralleling the approach
in Section \ref{Section:PureHorizontal}. Step 1 fixes $\tau$ and
characterizes the optimal mechanism for selling to types in $\left[\tau,1\right]$;
Step 2 then optimizes over $\tau$.

\subsection{Step 1: Optimal Mechanism for $\left[\tau,1\right]$}

Fix $\tau>0$, so $u\left(\tau\right)=0$ by Lemma \ref{lem:v-CutoffMech}.\footnote{\label{fn:=00005Ctau=00003D0}As in Section \ref{Section:PureHorizontal},
to simplify the exposition, I exclude the full coverage $\tau=0$,
under which the value of $u\left(\tau\right)$ is not pinned down
by Lemma \ref{lem:v-CutoffMech}. The solution characterization in
Corollary \ref{Corollary:V-Fixed-tau} includes the $\tau=0$ case.} The fixed-$\tau$ problem here is 
\begin{align}
\Pi^{v}\left(\tau\right) & =\underset{\alpha,u\text{ s.t. IC, IR}}{\max}\ \int_{\tau}^{1}\left[\theta-kM\left(\alpha\left(\theta\right),\theta\right)-c\alpha\left(\theta\right)-u\left(\theta\right)\right]f\left(\theta\right)d\theta,\tag{\ensuremath{\mathcal{P}^{v}\left(\tau\right)}}\label{eq:Program-v}\\
 & =\max_{\alpha\ \text{nondecreasing}}\int_{\tau}^{1}\psi^{v}(\alpha\left(\theta\right),\theta)f(\theta)d\theta\ \ \ \text{ s.t. }\ \ \ u\left(\theta\right)\ge0\ \ \forall\theta\in\left[\tau,1\right],
\end{align}
where $u\left(\theta\right)=\int_{\tau}^{\theta}\left[1-kM_{2}\left(\alpha\left(s\right),s\right)\right]ds$
by (\ref{eq:u'-v}), and $\psi^{v}$ is the virtual surplus function:
\begin{equation}
\psi^{v}\left(y,\theta\right):=\theta-kM\left(y,\theta\right)-cy-\left[1-kM_{2}\left(y,\theta\right)\right]\frac{1-F\left(\theta\right)}{f\left(\theta\right)}.\label{eq:V-VS-R}
\end{equation}

\subsubsection{\protect\label{Subsection:Relaxed-Program}A Relaxed Program: Ignore
IR}

To illustrate how the additional vertical valuation dimension affects
the screening forces and isolate these forces from issues related
to IR, I first consider the \emph{relaxed fixed-$\tau$ problem} that
ignores IR in \ref{eq:Program-v}. Define $d^{R}\left(\theta\right)$
by
\begin{equation}
m'\left(d^{R}\left(\theta\right)\right)-m''\left(d^{R}\left(\theta\right)\right)\frac{1-F\left(\theta\right)}{f\left(\theta\right)}=\frac{c}{k}.\label{eq:delta-R}
\end{equation}
(\ref{eq:delta-R}) is the first-order condition of $\psi^{v}$ with
respect to the mismatch $d=\theta-y$. Under Assumptions \ref{Assumption:F}
and \ref{Assumption:m-1}, (\ref{eq:delta-R}) has a unique solution
$d^{R}\left(\theta\right)$ for every $\theta$, and $d^{R}$ is strictly
decreasing in $\theta$. Let $b^{R}:=\sup\left\{ \theta\in\Theta|d^{R}\left(\theta\right)\ge\theta\right\} $.
\begin{lem}
\label{lem:Relaxed-program}The solution to the relaxed fixed-$\tau$
problem is uniquely $\alpha=a^{R}$, with
\begin{equation}
a^{R}\left(\theta\right)=\begin{cases}
0 & \text{ if }\theta\le b^{R}\\
\theta-d^{R}\left(\theta\right) & \text{ if }\theta>b^{R}
\end{cases},\label{eq:yR}
\end{equation}
Moreover, $b^{R}<1$, and $a^{R}\left(\theta\right)\le\alpha^{FB}\left(\theta\right)$
for all $\theta$, with strict inequality if $\theta<1$ and $\alpha^{FB}\left(\theta\right)>0$.
\end{lem}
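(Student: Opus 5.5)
The plan is to solve the relaxed problem by pointwise maximization. Because IR is dropped, the relaxed fixed-$\tau$ problem asks us to maximize $\int_{\tau}^{1}\psi^{v}(\alpha(\theta),\theta)f(\theta)d\theta$ over nondecreasing $\alpha$. I will first maximize $\psi^{v}(\cdot,\theta)$ over $y\in Y$ for each $\theta$ separately. I will then show that the pointwise maximizer is $a^{R}$ and that it is nondecreasing, so the monotonicity constraint does not bind. Uniqueness (up to measure zero) will follow from uniqueness of the pointwise maximizer. Throughout, write $h(\theta):=(1-F(\theta))/f(\theta)$. By Assumption \ref{Assumption:F}, $h$ is strictly decreasing, and $h(1)=0$ when $f(1)>0$; if $f(1)=0$ I will handle the endpoint as a limit.

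\emph{Step 1 (no upward mismatch).} For $y>\theta$, (\ref{eq:M2}) gives
\[
\psi^{v}(y,\theta)=\theta-km(y-\theta)-cy-[1+km'(y-\theta)]h(\theta).
\]
Every $y$-dependent term is weakly decreasing in $y$, strictly so for $km(y-\theta)$. Hence $\psi^{v}(y,\theta)<\psi^{v}(\theta,\theta)$, and it suffices to optimize over $y\in[0,\theta]$.

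\emph{Step 2 (quasiconcavity in the mismatch $d=\theta-y\in[0,\theta]$).} Here
\[
\psi^{v}=\theta-km(d)-c(\theta-d)-[1-km'(d)]h(\theta),
\]
whose $d$-derivative is $k\bigl[c/k-g_{\theta}(d)\bigr]$ with $g_{\theta}(d):=m'(d)-m''(d)h(\theta)$. I do not expect strict concavity, since no sign on $m'''$ is assumed for this section, so I will prove single crossing instead. Write $g_{\theta}(d)=m'(d)\bigl[1-\tfrac{m''(d)}{m'(d)}h(\theta)\bigr]$. Log-concavity of $m'$ makes $m''/m'$ decreasing, so the bracket is increasing in $d$. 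On the region where $g_{\theta}>0$, $g_{\theta}$ is therefore a product of two positive increasing functions and is strictly increasing. Moreover $g_{\theta}(0)=-m''(0)h\le0<c/k$.

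It follows that $g_{\theta}$ crosses the level $c/k$ at most once, from below, at $d^{R}(\theta)$. The derivative of $\psi^{v}$ is positive below $d^{R}(\theta)$ and negative above it, so $\psi^{v}(\cdot,\theta)$ is strictly quasiconcave in $d$. I will take existence of the crossing from the paper's assertion following (\ref{eq:delta-R}). The constrained maximizer is then $d=\min\{d^{R}(\theta),\theta\}$, that is, $y=\max\{\theta-d^{R}(\theta),0\}$.

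\emph{Step 3 (monotonicity and $b^{R}$).} At fixed $d$ in the relevant range, $g_{\theta}(d)$ is increasing in $\theta$, because $h$ falls and $m''>0$. The crossing point $d^{R}(\theta)$ is therefore strictly decreasing in $\theta$. Consequently $\theta-d^{R}(\theta)$ is strictly increasing, the set $\{\theta: d^{R}(\theta)\ge\theta\}$ is an initial interval with supremum $b^{R}$, and the pointwise maximizer equals $a^{R}$ in (\ref{eq:yR}) and is nondecreasing. The monotonicity constraint is thus slack, and $a^{R}$ solves the relaxed problem uniquely.

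\emph{Step 4 ($b^{R}<1$).} At $\theta=1$ we have $h=0$, so (\ref{eq:delta-R}) reduces to $m'(d)=c/k$ and $d^{R}(1)=\delta^{FB}$. Assumption \ref{Assumption:FB-ProductLine} gives $\delta^{FB}<1$, and continuity of $d^{R}$ then yields $b^{R}<1$.

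\emph{Step 5 (comparison with the first best).} For $\theta<1$ we have $h(\theta)>0$, so $g_{\theta}(d)<m'(d)$ for every $d>0$. Since $g_{\theta}$ equals $c/k$ at $d^{R}(\theta)$, this gives $m'(d^{R}(\theta))>c/k=m'(\delta^{FB})$, hence $d^{R}(\theta)>\delta^{FB}$. Therefore $a^{R}\le\alpha^{FB}$, with strict inequality whenever $\alpha^{FB}(\theta)>0$ and $\theta<1$. In that case either $a^{R}(\theta)=0<\alpha^{FB}(\theta)$, or $a^{R}(\theta)=\theta-d^{R}(\theta)<\theta-\delta^{FB}$.

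I expect Step 2 to be the main obstacle: establishing single crossing of $g_{\theta}$ without concavity of $\psi^{v}$, which is where log-concavity of $m'$ is used. The $f(1)=0$ endpoint in Steps 3 and 4 also needs care; I would handle it by taking the limit $\theta\uparrow1$.
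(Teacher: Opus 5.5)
Your proof is correct and follows the paper's structure step for step: rule out upward mismatch, reparametrize by $d=\theta-y$, establish single crossing of the first-order condition, get monotonicity of $d^{R}$ from the decreasing inverse hazard rate, evaluate at $\theta=1$, and compare with $\delta^{FB}$ through $m'(d^{R}(\theta))\ge c/k$.

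The one real difference is how you obtain single crossing. The paper proves that $G(d)=[m'(d)-c/k]/m''(d)$ is strictly increasing everywhere. On the branch $m'(d)\le c/k$ this uses $m'''\ge0$ from the second part of Assumption \ref{Assumption:m-2}. Your factorization $g_{\theta}(d)=m'(d)\bigl[1-\tfrac{m''(d)}{m'(d)}h(\theta)\bigr]$ shows that $\{d:g_{\theta}(d)>c/k\}$ is an interval unbounded above, using only log-concavity of $m'$. That is all the argument needs: where $m'\le c/k$ one has $G\le0\le h(\theta)$, so the sign of the derivative there is fixed without any monotonicity of $G$. The paper's route gives a globally monotone $G$ that it reuses later, for the corner sets in Lemma \ref{lem:p-tau-case-3} and in the comparative statics. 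Your route is more economical in assumptions.

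One small slip: Section \ref{Section:Vertical} allows $c=0$, and there your inequality $g_{\theta}(0)\le0<c/k$ fails. The same factorization still covers this case. For $d>0$ the sign of $g_{\theta}$ equals the sign of the bracket, which is strictly increasing when $h(\theta)>0$; at $\theta=1$ the maximizer is simply $d=0$.
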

In Section \ref{Section:PureHorizontal}, downward mismatch raises
the information rent conceded to lower types, so the seller induces
less mismatch than the first best, though never reversing the direction
of mismatch. With $v\left(\theta\right)=\theta$, the relevant incentive
constraint runs in the opposite direction: higher types must be deterred
from mimicking lower types, as in the argument for Lemma \ref{lem:v-CutoffMech}.
Downward mismatch now \emph{lowers} the rent conceded to higher types
because a lower-indexed product assigned to type $\theta$ is a worse
fit for types above $\theta$. The allocation $a^{R}$---hereafter
the ``\emph{relaxed allocation}''---therefore induces \emph{more}
downward mismatch than both the first-best and the Section \ref{Section:PureHorizontal}
optimum.

Since mismatch is always downward, $kM_{2}\left(a^{R}\left(\theta\right),\theta\right)$
is positive. Because (\ref{eq:u'-v}) also contains the marginal vertical-value
term ``$+1$'', the sign of $u'$ is then indeterminate. The IC-implied
utility can therefore be nonmonotone. In turn, even with $u\left(\tau\right)=0$,
IR may be violated for interior included types under $a^{R}$. I study
this issue next.

\subsubsection{\protect\label{Subsection:uR-shape}Overview of the Solution to \ref{eq:Program-v}}

I now return to the fixed-$\tau$ problem \ref{eq:Program-v} with
its IR constraints. This subsection provides a heuristic overview
of the solution, using the relaxed allocation $a^{R}$ as the building
block; the formal characterization is in the next subsection. 

If the utility schedule induced by $a^{R}$ is nonnegative for every
$\theta>\tau$, then $a^{R}$ is feasible for \ref{eq:Program-v}
and hence solves it. To see when this holds, define the relaxed mismatch
$\delta^{R}(\theta):=\theta-a^{R}(\theta).$ The \emph{relaxed utility
schedule} induced by $a^{R}$ is

\[
u^{R}(\theta;\tau):=\int_{\tau}^{\theta}\left[1-km'\left(\delta^{R}\left(s\right)\right)\right]ds,
\]
so its slope is
\begin{equation}
\partial_{\theta}u^{R}\left(\theta;\tau\right)=1-km'\left(\delta^{R}\left(\theta\right)\right)\underbrace{=}_{\text{by Lemma \ref{lem:Relaxed-program}}}\begin{cases}
1-km'\left(\theta\right) & \text{ if }\theta\le b^{R}\\
1-km'\left(d^{R}\left(\theta\right)\right) & \text{ if }\theta\ge b^{R}
\end{cases}.\label{eq:u-R-slope}
\end{equation}
Below $b^{R}$, where types are pooled at the basic product, mismatch
equals $\theta$, so the slope $\partial_{\theta}u^{R}$ is decreasing.
Above $b^{R}$, mismatch is $d^{R}\left(\theta\right)$, which is
decreasing in $\theta$, so the slope $\partial_{\theta}u^{R}$ is
instead increasing. Hence, $u^{R}$ is concave below $b^{R}$ and
convex above.

Suppose $\tau<b^{R}$, so the concave region $[\tau,b^{R})$ is nonempty.
Define the potential stationary point of this concave region by $\eta$:
\begin{equation}
1-km'\left(\eta\right)=0\ \ \ \iff\ \ \ \eta:=(m')^{-1}\left(1/k\right),\label{eq:eta}
\end{equation}
On $[\tau,b^{R})$, $\partial_{\theta}u^{R}$ is positive for $\theta<\eta$.
Hence, if $b^{R}\le\eta$, then $\partial_{\theta}u^{R}$ is positive
on the entire interval $[\tau,b^{R})$. Since $\partial_{\theta}u^{R}$
must remain increasing above $b^{R}$, $u^{R}(\tau;\tau)=0$ implies
that IR holds on $\left[\tau,1\right]$, so $a^{R}$ solves \ref{eq:Program-v}.

\bigskip

\begin{figure}[h]
\caption{\protect\label{Figure:uR}Shape of nonmonotone $u^{R}\left(\theta;\tau\right)$ }

\begin{tikzpicture}[xscale = 13,yscale=22]


\node   at (0.21,0.28) { \underline{ $\theta < b^R  $} };
\node at (0.2,0.24) { $a^R(\theta) = 0  $ ; $\delta^R(\theta) = \theta  $ };

\node   at (0.73,0.28) { \underline{ $\theta > b^R  $} };
\node at (0.75,0.24) { $a^R(\theta) > 0  $ ; $\delta^R(\theta) = d^R (\theta)  $ };

\draw [thick, ->] (-0.05,0)--(1.0,0); 
\draw [thick, ->] (-0.05,0)--(-0.05,0.32);
\node [right] at (1.0,0) {\footnotesize{$\theta$}};
\node [right] at (-0.05,0.32) {\footnotesize{$u^{R}(\theta;\tau)$}};

\draw[line width=0.5mm, blue,  domain = 0:0.5] 
plot(\x, {  \x - 1.5*(\x)^2  }); 
\node [below] at (0,0) {\footnotesize{$\tau_6$}};

\draw[line width=0.5mm, red,  domain = 0.5:1] 
plot(\x, { -2*\x+ 1.5*(\x)^2 +0.75  });

\draw [-, dotted, line width=0.2mm] (0.33,-0.1)--(0.33,0.16); 
\node [below] at (0.33,-0.1) {\footnotesize{$\eta $}};

\draw [-, dashed, line width=0.2mm] (0.5,-0.1)--(0.5,0.3); 
\node [below] at (0.5,-0.1) {\footnotesize{$b^R$}};

\draw [-, dotted, line width=0.2mm] (0.67,-0.1)--(0.67,0.09); 
\node [below] at (0.67,-0.1) {\footnotesize{$\gamma$}};



\draw[line width=0.5mm, red,   domain = 0.5:1] 
plot(\x, { -2*\x+ 1.5*(\x)^2 +0.75  - 0.082});

\draw[line width=0.5mm, blue,  domain = 0.095:0.5] 
plot(\x, {  \x - 1.5*(\x)^2  - 0.082 }); 
\node [below] at (0.1,0) {\footnotesize{$\tau_{5}$}};



\draw[line width=0.5mm, red,  domain = 0.5:1] 
plot(\x, { -2*\x+ 1.5*(\x)^2 +0.75  - 0.14});

\draw[line width=0.5mm, blue, domain = 0.2:0.5] 
plot(\x, {  \x - 1.5*(\x)^2  - 0.14 }); 

\node [below] at (0.2,0) {\footnotesize{$\tau_4$}};


\draw[line width=0.5mm, red,  domain = 0.5:1] 
plot(\x, { -2*\x+ 1.5*(\x)^2 +0.75  - 0.16});

\draw[line width=0.5mm, blue, domain = 0.4:0.5] 
plot(\x, {  \x - 1.5*(\x)^2  - 0.16 }); 
\node [below] at (0.4,0) {\footnotesize{$\tau_3$}};


\draw[line width=0.5mm, red,  domain = 0.55:1] 
plot(\x, { -2*\x+ 1.5*(\x)^2 +0.645});

\node [below] at (0.55,0) {\footnotesize{$\tau_2$}};

\draw[line width=0.5mm, red,  domain = 0.825:1] 
plot(\x, { -2*\x+ 1.5*(\x)^2 +0.63});

\node [below] at (0.82,0) {\footnotesize{$\tau_1$}};

\end{tikzpicture}
\end{figure}

The less straightforward case is $\eta<b^{R}$, depicted in Figure
\ref{Figure:uR}. Now, $\eta$ is a local maximum point on $u^{R}$.
On the convex region $(b^{R},1]$, any stationary point must be a
local minimum. Let $\gamma$ denote this point:

\begin{equation}
1-km'\left(d^{R}\left(\gamma\right)\right)=0.\label{eq:gamma}
\end{equation}
Since $\gamma<1$,\footnote{This is because $m'\left(d^{R}\left(\gamma\right)\right)=1/k>c/k=m'\left(d^{R}\left(1\right)\right)$.}
$u^{R}$ is increasing near $\theta=1$. Hence, for $\eta<b^{R}$,
$u^{R}$ increases below $\eta$, decreases on $(\eta,\gamma)$, and
increases again above $\gamma$, as shown in Figure \ref{Figure:uR}.
The blue part is the concave region in which types are pooled at the
basic product, so $\delta^{R}(\theta)=\theta$; the red part is the
convex region in which types receive non-basic products, so $\delta^{R}(\theta)=d^{R}(\theta)$. 

The figure also shows how the cutoff $\tau$ matters. Because the
slope $\partial_{\theta}u^{R}$ does not depend on $\tau$, varying
$\tau$ shifts $u^{R}$ vertically to satisfy $u^{R}=0$ at type $\tau$.
If $\tau\ge\gamma$ (as with $\tau_{1}$), the included types lie
entirely on the increasing part, so IR holds. If $\tau\in\left(\eta,\gamma\right)$
(as with $\tau_{2}$ and $\tau_{3}$), $u^{R}$ decreases immediately
from zero at $\tau$, so IR must fail for types immediately above
$\tau$. If $\tau<\eta$ (as with $\tau_{4}$, $\tau_{5}$ and $\tau_{6}$),
$u^{R}$ first rises, falls over $\left(\eta,\gamma\right)$, then
rises on $\left(\gamma,1\right)$; IR fails only if the dip falls
below zero---as for $\tau_{4}$, but not $\tau_{5}$ and $\tau_{6}$. 

Thus, when $\eta<b^{R}$, $u^{R}$ is potentially nonmonotone, but
it violates IR only if the cutoff is low enough that type $\gamma$
is included, yet not so low that the earlier rise offsets the subsequent
dip. Equivalently, 

\begin{equation}
a^{R}\text{ fails IR }\ \ \ \ \ \ \iff\ \ \ \ \ \ \gamma\in(\tau,1)\quad\text{and}\quad u^{R}(\gamma;\tau)<0.\label{eq:aR-IR-failure}
\end{equation}
Note that the IR failure for cutoff $\tau_{4}$ is due to interior
types and presents new difficulties. It arises because the relaxed
allocation $a^{R}$ uses downward mismatch to reduce the rent conceded
to higher types. When this distortion is strong enough, the IC-implied
utility schedule can fall below zero.

\bigskip

When the relaxed allocation violates IR, the solution to \ref{eq:Program-v}
(formally characterized in the next subsection) can be understood
as raising the relaxed utility schedule $u^{R}\left(\cdot;\tau\right)$
just enough to keep it nonnegative. Since its \emph{slope} corresponds
one-to-one to $a^{R}$, which maximizes the virtual surplus, the optimal
utility schedule $u^{\tau}$ departs from that slope only where needed
to keep utility nonnegative. Figure \ref{Figure:problem-tau} shows
the two relevant cases.

\begin{figure}[h]
\caption{\protect\label{Figure:problem-tau}Correction when $a^{R}$ violates
IR.}

\begin{tikzpicture}[xscale = 6.7,yscale=22]

\node at (0.6,0.27) {(i): $\tau \in [\eta,\gamma)$  };

\draw [thick, ->] (0.25,0)--(1.2,0); 
\draw [thick, ->] (0.25,0)--(0.25,0.2);


\draw[line width=0.6mm, black, dashed, domain = 0.4:0.5] 
plot(\x, {  \x - 1.5*(\x)^2  - 0.16 }); 
\node [below] at (0.4,0) {\footnotesize{$\tau$}};
\node [right] at (0.99,0.075) {\footnotesize{$u^{R}(\theta ;\tau)$}};

\node [red] at (0.8,0.14) {\footnotesize{$a^R(\theta)$}};

\draw[line width=0.6mm, black, dashed,  domain = 0.5:1] 
plot(\x, { -2*\x+ 1.5*(\x)^2 +0.75  - 0.16});

\draw [-, dotted, line width=0.2mm] (0.67,-0.1)--(0.67,0.2); 
\node [below] at (0.67,-0.1) {\footnotesize{$\gamma$}};

\draw [line width=0.5mm, red,  -] (0.4,0)--(0.67,0); 
\draw[line width=0.5mm, red,   domain = 0.67:1] 
plot(\x, { -2*\x+ 1.5*(\x)^2 +0.75  - 0.16 + 0.07665});

\node [right, red] at (1.01,0.17) {\footnotesize{$u^{\tau}(\theta)$}};

\node [red] at (0.55,0.14) {\footnotesize{$a^0(\theta)$}};

\node [white] at (0.5,-0.13) {\footnotesize{$y^{\tau}(\theta)=y^0(\theta)$}};

\end{tikzpicture}$\ \ \ \ \ \ \ \ \ \ \ \ \ \ $\begin{tikzpicture}[xscale = 7,yscale=18]

\node at (0.43,0.32) {(ii): $\tau \in (\underline{\tau},\eta)$  };

\draw [thick, ->] (-0.1,0)--(1.2,0); 
\draw [thick, ->] (-0.1,0)--(-0.1,0.25);

\draw[line width=0.6mm, black, dashed,  domain = 0:0.5] 
plot(\x, {  \x - 2*(\x)^2  }); 
\node [below] at (0,0) {\footnotesize{$\tau$}};

\draw [-, dotted, line width=0.2mm] (0.25,.125)--(0.25,-0.15); 
\node [below] at (0.25,-0.15) {\footnotesize{$\eta$}};

\draw[line width=0.6mm, black, dashed,  domain = 0.5:1.05] 
plot(\x, { -3*\x+ 2*(\x)^2 +1  });
\node [right] at (1.05,0.06) {\footnotesize{$u^{R}(\theta;\tau)$}};

\draw[line width=0.4mm, red,    domain = 0.75:1.05] 
plot(\x, { -3*\x+ 2*(\x)^2 +1+0.125  });

\node [right, red] at (1.05,0.2) {\footnotesize{$u^{\tau}(\theta)$}}; 
\draw [-, dotted, line width=0.2mm] (0.75,0.2)--(0.75,-0.15); 
\node [below] at (0.75,-0.15) {\footnotesize{$\gamma$}};
\node [red] at (0.85,0.15) {\footnotesize{$a^R(\theta)$}};

\draw[line width=0.4mm, red,    domain = 0.4268:0.6036] 
plot(\x, {  2*(\x)^2  -2.4144*\x +0.72863});
\draw [line width=0.4mm, red,  -] (0.6036,0)--(0.75,0); 
\draw [-, dotted, line width=0.2mm] (0.6036,-0.15)--(0.6036,0.2); 
\node [red, below] at (0.6036,-0.15) {\footnotesize{$\phi^{\tau}$}};
\node [red] at (0.68,0.15) {\footnotesize{$a^0(\theta)$}};

\draw[line width=0.4mm, red,   domain = 0:0.4268] 
plot(\x, {  \x - 2*(\x)^2  });
\draw [-, dotted, line width=0.2mm] (0.4268,0.06)--(0.4268,-0.15); 
\node [below, red] at (0.4268,-0.15) {\footnotesize{$\beta^{\tau}$}};

\node [red] at (0.4,0.15) {\footnotesize{$a(\theta;\phi^{\tau})$}};

\draw [-, dotted, line width=0.2mm] (0.5,0)--(0.5,-0.10);
\node [below] at (0.5,-0.10) {\footnotesize{$\zeta$}};

\end{tikzpicture}
\end{figure}

First, consider $\tau\in[\eta,\gamma)$, shown in Figure \ref{Figure:problem-tau}(i).
This case corresponds to cutoffs such as $\tau_{2}$ and $\tau_{3}$
in Figure \ref{Figure:uR}, where utility decreases immediately after
the cutoff. On $\left(\tau,\gamma\right)$, the slope of $u^{R}\left(\cdot;\tau\right)$
is negative, so the minimal correction raises it to zero: $u^{\tau}$
is flat at $0$ on $\left[\tau,\gamma\right]$. Above $\gamma$, where
the slope of $u^{R}$ is positive, retaining that slope now keeps
IR satisfied, since utility is already zero at $\gamma$. The allocation
schedule $\alpha^{\tau}$ associated with $u^{\tau}$ is pinned down
by its slope through the envelope condition in (\ref{eq:u'-v}). On
$\left[\gamma,1\right]$, $\alpha^{\tau}$ remains the relaxed allocation,
since $u^{\tau}$ has the same slope as $u^{R}\left(\cdot;\tau\right)$.
For types below $\gamma$, $\alpha^{\tau}\left(\theta\right)$ is
the allocation that makes the slope equal to zero:
\begin{equation}
1-kM_{2}\left(\alpha^{\tau}\left(\theta\right),\theta\right)=0\ \ \iff\ \ \alpha^{\tau}\left(\theta\right)=a^{0}\left(\theta\right):=\theta-(m')^{-1}\left(1/k\right).\label{eq:y0}
\end{equation}
Mismatch remains downward because $M_{2}\left(a^{0}\left(\theta\right),\theta\right)=1/k>0$,
so $a^{0}\left(\theta\right)<\theta$. Raising the slope therefore
means reducing mismatch: on $\left(\tau,\gamma\right)$, buyers receive
products closer to their ideal under $a^{0}$ than under $a^{R}$.

The second case is $\tau<\eta$ and $u^{R}\left(\gamma;\tau\right)<0$,
shown in Figure \ref{Figure:problem-tau}(ii). This case corresponds
to cutoffs such as $\tau_{4}$ in Figure \ref{Figure:uR}. The idea
is again to raise $u^{R}\left(\cdot;\tau\right)$ above zero with
the least distortion to its slope. In the figure, $\zeta$ is the
first type whose IR constraint fails under the relaxed allocation.
The correction, however, begins \emph{before} $\zeta$, at type $\beta^{\tau}$,
where utility is still positive. The reason is that $u^{R}$ is steep
near $\zeta$, so starting the correction only at $\zeta$ would require
forcing the slope to zero at exactly where $u^{R}$ is falling rapidly,
creating a large distortion. By starting at $\beta^{\tau}$, the seller
reduces downward mismatch gradually relative to $a^{R}$ until type
$\phi^{\tau}$, so utility falls more slowly than under $u^{R}$ and
reaches zero only at type $\phi^{\tau}>\zeta$. From $\phi^{\tau}$
to $\gamma$, utility is held at zero with $\alpha^{\tau}\left(\theta\right)=a^{0}\left(\theta\right)$
in (\ref{eq:y0}). Above $\gamma$, the slope of $u^{R}$ is positive
again, so the allocation returns to $a^{R}$. The types $\beta^{\tau}$
and $\phi^{\tau}$ are chosen to optimally smooth out the distortion
to the slope of $u^{R}$ over $\left(\beta^{\tau},\gamma\right)$,
and are formally derived in the next subsection.

The economic takeaway is the same in both cases. Ignoring IR, the
relaxed allocation $a^{R}$ maximizes the virtual surplus, using downward
mismatch to reduce buyers' information rent. When IR fails, the seller
must reduce that mismatch and concede more rent. However, match improvement
need not track IR failure of $a^{R}$. As Figure \ref{Figure:problem-tau}(ii)
shows, types just below $\zeta$ receive better-match products and
higher utility, even though their IR already holds under $a^{R}$.
At the same time, types just above $\gamma$---whose IR fails---receive
the same allocation as under $a^{R}$; their utility is restored through
lower prices rather than better product fit.

\bigskip

I close this subsection by noting when the nonmonotonicity problem
is more likely to arise.
\begin{lem}
\label{Lemma:-eta-gamma}For all $c$ and $k$, one of the following
three cases must hold: (i) $\eta>b^{R}>\gamma$, (ii) $\eta=b^{R}=\gamma$,
or (iii) $\eta<b^{R}<\gamma$. Moreover, $\gamma-\eta$ is strictly
increasing in both $k$ and $c$. 
\end{lem}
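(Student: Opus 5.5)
The plan is to reduce everything to the monotone function $d^{R}$ from (\ref{eq:delta-R}) and to closed-form characterizations of $\eta$, $b^{R}$ and $\gamma$. Write $R(\theta):=(1-F(\theta))/f(\theta)$. By Assumption \ref{Assumption:F} ($1-F$ strictly log-concave), $R$ is strictly decreasing. As stated after (\ref{eq:delta-R}), $d^{R}$ is strictly decreasing in $\theta$.

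\textbf{The trichotomy.} Three characterizations do the work.
\begin{itemize}
\item Since $\theta\mapsto d^{R}(\theta)-\theta$ is strictly decreasing, $b^{R}$ is the unique point where this function crosses zero, so $d^{R}(b^{R})=b^{R}$ whenever $b^{R}$ is interior.
\item The definition (\ref{eq:gamma}) of $\gamma$ says exactly $d^{R}(\gamma)=(m')^{-1}(1/k)=\eta$, that is, $\gamma=(d^{R})^{-1}(\eta)$.
\item $\eta$ itself is a fixed level of mismatch.
\end{itemize}
Now compare $\eta$ with $b^{R}$.
\begin{itemize}
\item If $\eta<b^{R}$, then $d^{R}(\gamma)=\eta<b^{R}=d^{R}(b^{R})$. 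Strict monotonicity of $d^{R}$ gives $\gamma>b^{R}$, which is case (iii).
\item If $\eta>b^{R}$, the same argument gives $\gamma<b^{R}$, which is case (i).
\item If $\eta=b^{R}$, it gives $\gamma=b^{R}$, which is case (ii).
\end{itemize}
I expect the main technical nuisance to be the boundary. The points $b^{R}$ and $\gamma$ should be read as the unique zeros of the strictly monotone maps $\theta\mapsto d^{R}(\theta)-\theta$ and $\theta\mapsto d^{R}(\theta)-\eta$. When such a zero lies outside $[0,1]$, extend by the natural convention: the sup definition of $b^{R}$ pins it to an endpoint, and $\gamma<1$ always holds by the footnote argument ($d^{R}(1)=\delta^{FB}<\eta$ because $c<1$). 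I would check that the comparison arguments go through verbatim with these conventions.

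\textbf{Comparative statics.} First, $\eta=(m')^{-1}(1/k)$ does not depend on $c$ and is strictly decreasing in $k$, since $m'$ is strictly increasing. Second, substitute $d^{R}(\gamma)=\eta$ and $m'(\eta)=1/k$ into (\ref{eq:delta-R}). This gives $1/k-m''(\eta)R(\gamma)=c/k$, i.e.
\[
R(\gamma)=\frac{1-c}{k\,m''(\eta)}=(1-c)\,\frac{m'(\eta)}{m''(\eta)}.
\]
\begin{itemize}
\item \emph{Effect of $c$.} Raising $c$ strictly lowers the right-hand side, with $\eta$ fixed. Since $R$ is strictly decreasing, $\gamma$ strictly increases. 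Hence $\gamma-\eta$ strictly increases in $c$.
\item \emph{Effect of $k$.} Strict log-concavity of $m'$ (Assumption \ref{Assumption:m-2}) means $m''/m'$ is strictly decreasing, so $m'/m''$ is strictly increasing in its argument. As $k$ rises, $\eta$ falls, so $m'(\eta)/m''(\eta)$ strictly falls. Then $R(\gamma)$ falls and $\gamma$ strictly rises. Together with $\eta$ strictly falling, $\gamma-\eta$ is strictly increasing in $k$.
\end{itemize}
Where $\gamma$ sits at a boundary, the statement should be understood for the interior solution of the identity above. I would note this alongside the convention adopted in the first step.
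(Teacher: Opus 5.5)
Your proposal is correct and follows essentially the same route as the paper. Both use $d^{R}(\gamma)=\eta$, $d^{R}(b^{R})=b^{R}$ and strict monotonicity of $d^{R}$ for the trichotomy, and both substitute into (\ref{eq:delta-R}) to obtain $[1-F(\gamma)]/f(\gamma)=(1-c)/[km''(\eta)]$ for the comparative statics. Your rewriting of the right-hand side as $(1-c)\,m'(\eta)/m''(\eta)$ is simply a derivative-free version of the paper's log-concavity argument that $km''(\eta(k))$ is increasing in $k$.
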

The nonmonotonicity problem can only rise when $\eta<b^{R}$. By Lemma
\ref{Lemma:-eta-gamma}, this is equivalent to $\eta<\gamma$, which
becomes more likely as $k$ or $c$ increases. Intuitively, the relaxed
problem satisfies IR only at $\tau$ and lets IC determine the utilities
of higher types. As in vertical screening (\citealp{Mussa_Rosen_1978(JET)}),
the seller distorts each type's allocation to reduce information rent.
There, a higher type values any given quality more, so the utility
is always monotone in type, regardless of the quality distortion.
Here, the ``quality'' when moving along the product line is horizontally
differentiated, so a higher type can bear enough extra mismatch disutility
to outweigh his vertical advantage. A higher $c$ pushes the relaxed
allocation toward lower-indexed products, increasing downward mismatch.
A higher $k$ magnifies the disutility of any given mismatch. Both
forces make it more likely that the mismatch disutility dominates
the vertical-value term, causing the IC-implied utility schedule to
decrease.

\subsubsection{Characterization of the Solution to \ref{eq:Program-v}}

Let $\lambda$ be a nonnegative multiplier measure for the participation
constraints $u(\theta)\geq0$, and define its cumulative multiplier
by $\Lambda(\theta):=\lambda\left(\left[\theta,1\right]\right)$.
The Lagrangian is 
\[
\int_{\tau}^{1}\left[\theta-kM\left(\alpha\left(\theta\right),\theta\right)-c\alpha\left(\theta\right)-u\left(\theta\right)\right]f\left(\theta\right)d\theta\ +\ \int_{\tau}^{1}u(\theta)\lambda(d\theta).
\]
Substituting in the envelope condition (\ref{eq:u'-v}) and $u\left(\tau\right)=0$,
and integrating by parts gives
\begin{equation}
\mathcal{L}\left(\alpha,\Lambda\right):=\int_{\tau}^{1}\tilde{\psi}^{v}\left(\alpha\left(\theta\right),\theta;\Lambda\left(\theta\right)\right)f\left(\theta\right)d\theta\label{eq:Lagrangian}
\end{equation}
where 
\begin{equation}
\tilde{\psi}^{v}(y,\theta;L):=\theta-kM(y,\theta)-cy-\bigl[1-kM_{2}(y,\theta)\bigr]\frac{1-F(\theta)-L}{f(\theta)}.\label{eq:V-Adjusted-VS}
\end{equation}
I call $\tilde{\psi}^{v}$ the \emph{adjusted virtual surplus}. Compared
to the relaxed virtual surplus $\psi^{v}$ in (\ref{eq:V-VS-R}),
the inverse hazard rate $\left(1-F\left(\theta\right)\right)/f\left(\theta\right)$
is replaced by the lower \emph{effective inverse hazard rate} $\left(1-F\left(\theta\right)-\Lambda\left(\theta\right)\right)/f\left(\theta\right)$,
determined by the multiplier $\Lambda\left(\theta\right)$, which
takes into account the IR of types above $\theta$. When $\Lambda\left(\theta\right)=0$,
$\tilde{\psi}^{v}$ reduces to $\psi^{v}$ at $\theta$. 

I derive the solution to \ref{eq:Program-v} by constructing a pair
$\left(\alpha,\Lambda\right)$ such that $\alpha$ is a pointwise
maximizer of $\tilde{\psi}^{v}\left(\cdot,\theta;\Lambda\left(\theta\right)\right)$,
so $\alpha$ attains $\sup_{\alpha'}\mathcal{L}\left(\alpha',\Lambda\right)$.
By weak duality, this maximized Lagrangian is an upper bound on the
seller's profit for any feasible $\Lambda$. If the same $\alpha$
is also feasible for \ref{eq:Program-v} and satisfies complementary
slackness against $\Lambda$,\footnote{Since $\lambda$ is a nonnegative measure, $\Lambda$ is nonincreasing.
Complementary slackness is equivalent to $\Lambda$ being constant
on every interval on which $u>0$.} then this upper bound is attained, so $\alpha$ solves \ref{eq:Program-v}. 

Let $\alpha^{\tau}$ denote the allocation that solves \ref{eq:Program-v},
with induced utility schedule $u^{\tau}\left(\theta\right):=\int_{\tau}^{\theta}\left[1-kM_{2}\left(\alpha^{\tau}\left(s\right),s\right)\right]ds$.
The characterization has three cases. 

\bigskip

The first case is when $a^{R}$ is already feasible. Subsection \ref{Subsection:uR-shape}
showed that this holds whenever $\eta\ge\gamma$, and, when $\eta<\gamma$,
for cutoffs at which $u^{R}$ stays nonnegative. To delineate those
cutoffs, suppose $\eta<\gamma$, as in Figure \ref{Figure:uR}, and
define
\begin{equation}
\underline{\tau}:=\inf\{\tau\in[0,\eta]:u^{R}(\gamma;\tau)<0\}.\label{eq:tau_underline}
\end{equation}
In Figure \ref{Figure:uR}, $\underline{\tau}$ is the cutoff $\tau_{5}$,
at which the dip just reaches zero at $\gamma$. With $\underline{\tau}$
as defined, (\ref{eq:aR-IR-failure}) becomes as follows: $a^{R}$
fails IR if and only if $\tau\in\left(\underline{\tau},\gamma\right)$.
\begin{lem}
\label{lem:p-tau-case-1}Suppose either $\eta\ge\gamma$, or $\eta<\gamma$
and $\tau\notin(\underline{\tau},\gamma)$. Then the relaxed allocation
$a^{R}$ solves the fixed-$\tau$ problem \ref{eq:Program-v}: $\alpha^{\tau}\left(\theta\right)=a^{R}\left(\theta\right)$
for all $\theta\in\left[\tau,1\right]$. 
\end{lem}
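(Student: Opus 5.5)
The plan is to show that $a^{R}$ is feasible for \ref{eq:Program-v} under the stated conditions. Once that is done, optimality comes free from Lemma \ref{lem:Relaxed-program}: $a^{R}$ uniquely solves the relaxed fixed-$\tau$ problem, which drops the IR constraints. A feasible solution of the relaxed problem therefore also solves, and uniquely, the constrained problem, so $\alpha^{\tau}=a^{R}$. In duality language this corresponds to taking $\Lambda\equiv0$, with complementary slackness holding trivially. The IC requirements from Lemma \ref{lem:v-CutoffMech} are nearly immediate. Monotonicity of $a^{R}$ holds because it is $0$ below $b^{R}$ and equal to $\theta-d^{R}(\theta)$ above, with $d^{R}$ strictly decreasing. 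The envelope condition holds by construction of $u^{R}(\cdot;\tau)$, and $u^{R}(\tau;\tau)=0$. What remains is to verify $u^{R}(\theta;\tau)\ge0$ for all $\theta\in[\tau,1]$.

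For IR I would use the slope formula (\ref{eq:u-R-slope}). On $[0,b^{R}]$ the slope $1-km'(\theta)$ is strictly decreasing, and on $[b^{R},1]$ the slope $1-km'(d^{R}(\theta))$ is strictly increasing. Both expressions agree at $b^{R}$, since $d^{R}(b^{R})=b^{R}$ by continuity and the definition of $b^{R}$ (if $b^{R}=0$, the concave part is empty). Lemma \ref{Lemma:-eta-gamma} then gives the sign pattern of the slope in each of its three cases.

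\emph{Case $\eta\ge\gamma$,} which covers cases (i) and (ii) of Lemma \ref{Lemma:-eta-gamma}. Here $b^{R}\le\eta$, so the slope is positive on $[0,b^{R})$, except possibly at the single point $\eta=b^{R}$. Above $b^{R}$ the slope is increasing and starts at $1-km'(b^{R})\ge0$, so it stays nonnegative. Hence $u^{R}(\cdot;\tau)$ is nondecreasing on $[\tau,1]$ and IR follows from $u^{R}(\tau;\tau)=0$.

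\emph{Case $\eta<\gamma$,} which is case (iii), so $\eta<b^{R}<\gamma$. The slope is positive on $[0,\eta)$, negative on $(\eta,\gamma)$ and positive on $(\gamma,1]$. I split into three subcases.
\begin{itemize}
\item If $\tau\ge\gamma$, then $u^{R}$ is increasing on $[\tau,1]$ and IR holds.
\item If $\tau\le\underline{\tau}$ (this subcase requires $\tau\le\eta$), then $u^{R}$ on $[\tau,1]$ attains its minimum either at $\tau$, where it equals $0$, or at $\gamma$. It therefore suffices to show $u^{R}(\gamma;\tau)\ge0$. Note that $u^{R}(\gamma;\tau)=\int_{\tau}^{\gamma}(\text{slope})$ and $\partial_{\tau}u^{R}(\gamma;\tau)=-(1-km'(\tau))<0$ for $\tau<\eta$. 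So $\tau\mapsto u^{R}(\gamma;\tau)$ is strictly decreasing on $[0,\eta]$, which makes the set $\{\tau\in[0,\eta]:u^{R}(\gamma;\tau)<0\}$ an upper interval. By continuity and the definition (\ref{eq:tau_underline}), $u^{R}(\gamma;\tau)\ge0$ for every $\tau\le\underline{\tau}$. If that set is empty, the infimum is $+\infty$ by convention, or at least no $\tau\le\eta$ violates the condition.
\item The range $\tau\in(\eta,\gamma)$ lies inside $(\underline{\tau},\gamma)$ and is excluded by hypothesis. A short check is needed that $\underline{\tau}\le\eta$ whenever the defining set is nonempty, which holds since the set lies in $[0,\eta]$. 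When the set is empty, the convention should be $\underline{\tau}=\eta$ or handled directly. In that situation, for $\tau\in(\eta,\gamma)$ the schedule $u^{R}$ dips immediately below zero, so those $\tau$ must be in the excluded set. I would confirm that the intended convention is consistent with this.
\end{itemize}

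The main obstacle is the bookkeeping at $\tau=0$ (footnote \ref{fn:=00005Ctau=00003D0}). There $u(0)$ is not pinned to zero, so one must argue that setting $u(0)=0$ is optimal: a positive constant only lowers profit and is never needed for IR once the shape argument above goes through. The edge conventions for $\underline{\tau}$ also need care. The core monotonicity and shape argument itself is routine.
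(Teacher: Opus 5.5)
Your proposal is correct and is essentially the paper's proof. Like the paper, you take $\Lambda\equiv0$ (equivalently, you observe that the unique relaxed maximizer $a^{R}$ is feasible), note that $a^{R}$ is nondecreasing, and verify IR from the sign pattern of $\partial_{\theta}u^{R}$ given by Lemma \ref{Lemma:-eta-gamma}, with $\gamma$ the only interior local minimum when $\eta<\gamma$.

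Your two loose ends close quickly. First, the set defining $\underline{\tau}$ always contains $\eta$, since $u^{R}(\gamma;\eta)=\int_{\eta}^{\gamma}[1-km'(\delta^{R}(s))]ds<0$, so $\underline{\tau}<\eta$ and no convention is needed. Second, the paper proves this lemma only for $\tau>0$ and treats $\tau=0$ in the proof of Corollary \ref{Corollary:V-Fixed-tau}, much as you sketch. Note, though, that your step ``$u^{R}(\gamma;\tau)\ge0$ for all $\tau\le\underline{\tau}$'' needs points to the left of $\underline{\tau}$. It therefore fails at $\tau=\underline{\tau}=0$ whenever $u^{R}(\gamma;0)<0$, and in that case the Lemma \ref{lem:p-tau-case-3} schedule, not $a^{R}$, solves the $\tau=0$ problem.
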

Because $a^{R}$ already satisfies IR, the supporting multiplier is
$\Lambda\left(\theta\right)=0$ for all $\theta\in\left[\tau,1\right]$. 

\bigskip

The second case is $\eta<\gamma$ and $\tau\in[\eta,\gamma)$, corresponding
to cutoffs such as $\tau_{2}$ and $\tau_{3}$ in Figure \ref{Figure:uR},
with the optimal correction in Figure \ref{Figure:problem-tau}(i). 
\begin{lem}
\label{lem:p-tau-case-2}Suppose $\eta<\gamma$ and $\tau\in[\eta,\gamma)$.
Then
\begin{equation}
\alpha^{\tau}\left(\theta\right)=\begin{cases}
a^{0}\left(\theta\right)\ \text{ in (\ref{eq:y0})} & ,\text{ if }\theta\in\left[\tau,\gamma\right]\\
a^{R}\left(\theta\right) & ,\text{ if }\theta\in\left[\gamma,1\right]
\end{cases}.\label{eq:y-tau-case2}
\end{equation}
\end{lem}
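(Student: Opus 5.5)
The plan is to follow the duality strategy described before Lemma \ref{lem:p-tau-case-1}. I will construct a cumulative multiplier $\Lambda$ with three properties: (a) $\alpha^{\tau}$ in (\ref{eq:y-tau-case2}) maximizes $\tilde{\psi}^{v}\left(\cdot,\theta;\Lambda\left(\theta\right)\right)$ pointwise; (b) $\alpha^{\tau}$ is feasible for \ref{eq:Program-v}; (c) complementary slackness holds. Weak duality then shows that $\alpha^{\tau}$ attains the upper bound $\sup_{\alpha'}\mathcal{L}\left(\alpha',\Lambda\right)$. Uniqueness will follow from the pointwise maximizer being unique almost everywhere.

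\emph{Constructing $\Lambda$.} Write $\tilde{\psi}^{v}$ in terms of the mismatch $d=\theta-y\in\left[0,\theta\right]$ and let $H_{L}(\theta):=\left(1-F(\theta)-L\right)/f(\theta)$. The first-order condition in $d$ is
\[
-km'(d)+c+km''(d)H_{L}(\theta)=0 .
\]
For $d=\eta$, i.e.\ for $\alpha=a^{0}$, this requires $H_{L}(\theta)=K:=(1-c)/\left(km''(\eta)\right)$. I therefore set
\[
\Lambda(\theta)=1-F(\theta)-Kf(\theta)\ \text{ on }[\tau,\gamma),\qquad \Lambda(\theta)=0\ \text{ on }[\gamma,1].
\]
Since $d^{R}(\gamma)=\eta$ by (\ref{eq:gamma}), the relaxed condition (\ref{eq:delta-R}) evaluated at $\gamma$ gives $(1-F(\gamma))/f(\gamma)=K$, so $\Lambda$ is continuous at $\gamma$.

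\emph{Checking that $\Lambda$ is admissible.} $\Lambda$ must be nonnegative and nonincreasing, so that it is the tail mass of a nonnegative measure $\lambda$; an atom at $\tau$ is harmless. By strict log-concavity of $1-F$ (Assumption \ref{Assumption:F}), $(1-F)/f$ is strictly decreasing and equals $K$ at $\gamma$. Hence $\Lambda=f\left[(1-F)/f-K\right]>0$ on $[\tau,\gamma)$. The same log-concavity gives $f'>-f^{2}/(1-F)$, so
\[
\Lambda'=-f-Kf'<-f\left(1-Kf/(1-F)\right)<0 .
\]
\emph{Feasibility and complementary slackness.} On $[\tau,\gamma]$ the slope of $u^{\tau}$ is $1-km'(\eta)=0$, so $u^{\tau}\equiv0$ there. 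On $(\gamma,1]$ we have $d^{R}<\eta$, so the slope is positive and $u^{\tau}>0$; there $\Lambda\equiv0$ is constant, which is complementary slackness. For monotonicity, $a^{0}$ has slope one and $a^{R}$ is nondecreasing by Lemma \ref{lem:Relaxed-program}. At $\gamma$ the two pieces meet, since $a^{0}(\gamma)=\gamma-\eta=a^{R}(\gamma)$. Also $a^{0}(\theta)\ge0$ because $\theta\ge\tau\ge\eta$, and on $[\gamma,1]$ we are in the region $\theta>b^{R}$ since $b^{R}<\gamma$ (case (iii) of Lemma \ref{Lemma:-eta-gamma}). Lemma \ref{lem:v-CutoffMech} then makes the mechanism IC and IR.

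\emph{Pointwise maximization (main obstacle).} On $[\gamma,1]$ we have $L=0$, so this is exactly the relaxed problem of Lemma \ref{lem:Relaxed-program}. On $[\tau,\gamma)$ the effective inverse hazard satisfies $H_{L}=K\in\left(0,(1-F)/f\right)$. I would reuse the argument behind Lemma \ref{lem:Relaxed-program}, with $(1-F)/f$ replaced by a constant in $[0,(1-F)/f]$. Log-concavity of $m'$ should make $m'(d)-m''(d)H$ strictly increasing in $d$, so the interior solution $d=\eta$ of the first-order condition is the unique maximizer on downward mismatches. The upward region $y>\theta$ must be ruled out separately. There $M_{2}<0$, so the rent term $-\left[1-kM_{2}\right]H_{L}$ is strictly worse than at $y=\theta$; the cost term $cy$ and the mismatch term are worse too. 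Hence $y=\theta$ dominates every $y>\theta$, and $d=0$ is in turn dominated by $d=\eta$. This step is where the curvature assumptions have to be checked with care.
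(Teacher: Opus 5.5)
Your proposal is correct and follows the paper's proof essentially step for step. It uses the same multiplier (your $1-F-Kf$ equals $L^{0}$, since $K=(1-F(\gamma))/f(\gamma)$ by (\ref{eq:delta-R}) at $\gamma$), the same first-order identification of $a^{0}$, the same IR, monotonicity and complementary-slackness checks, and the same appeal to the quasiconcavity argument of Lemma \ref{lem:Relaxed-program}; differentiating $\Lambda$ directly instead of using the paper's factorization $L^{0}=(1-F)H$ changes nothing.

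One correction in the last step concerns which function must be monotone. The single-crossing object is $G(d)=[m'(d)-c/k]/m''(d)$: $\partial_{d}\tilde{\psi}^{v}$ has the sign of $H_{L}(\theta)-G(d)$, and log-concavity of $m'$ together with $m'''\ge0$ makes $G$ increasing. Monotonicity of $m'(d)-m''(d)H$, which is what you invoke, would need $m''>Hm'''$. Log-concavity of $m'$ bounds $m'''$ only by $(m'')^{2}/m'$, so it does not deliver this as $d\to0$.
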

On $\left[\gamma,1\right]$, the supporting multiplier is $\Lambda\left(\theta\right)=0$,
since the allocation coincides with $a^{R}$. On $[\tau,\gamma)$,
it must support $a^{0}$ as the pointwise maximizer of the adjusted
virtual surplus. Differentiating $\tilde{\psi}^{v}$ with respect
to the allocation yields

\begin{equation}
\frac{\partial\tilde{\psi}^{v}(y,\theta;\Lambda\left(\theta\right))}{\partial y}=-kM_{1}(y,\theta)-c+kM_{21}(y,\theta)\frac{1-F(\theta)-\Lambda\left(\theta\right)}{f(\theta)},\label{eq:V-FOC}
\end{equation}
so $\left(a^{0}\left(\theta\right),\Lambda\left(\theta\right)\right)$
satisfies the first-order condition if and only if
\begin{equation}
\frac{\partial\tilde{\psi}^{v}(y,\theta;\Lambda\left(\theta\right))}{\partial y}\Big|_{y=a^{0}\left(\theta\right)}=0\ \ \ \iff\ \ \ \Lambda\left(\theta\right)=L^{0}\left(\theta\right):=1-F\left(\theta\right)-f\left(\theta\right)\frac{1-F\left(\gamma\right)}{f\left(\gamma\right)}.\label{eq:Lambda-0}
\end{equation}
Therefore, the supporting multiplier for the case in Lemma \ref{lem:p-tau-case-2}
is

\begin{equation}
\Lambda(\theta)=\begin{cases}
L^{0}(\theta), & \theta\in[\tau,\gamma],\\
0, & \theta\in[\gamma,1].
\end{cases}\label{eq:Lambda-Case-2}
\end{equation}
When $\Lambda\left(\theta\right)=L^{0}\left(\theta\right)$, the binding
IR constraints reduce the effective hazard rate to the constant $(1-F(\gamma))/f(\gamma)$
on $[\tau,\gamma]$. This reduction is exactly what makes the zero-slope
allocation $a^{0}$ optimal there.

\bigskip

The third case is $\eta<\gamma$ and $\tau\in(\underline{\tau},\eta)$,
corresponding to a cutoff such as $\tau_{4}$ in Figure \ref{Figure:uR},
with the optimal correction in Figure \ref{Figure:problem-tau}(ii).
To locate $\phi^{\tau}$, parameterize the cumulative multiplier by
$\phi\in\left(\tau,\gamma\right)$:
\begin{equation}
L\left(\theta;\phi\right):=\begin{cases}
0 & \text{ if }\theta\in\left[\gamma,1\right]\\
L^{0}\left(\theta\right) & \text{ if }\theta\in\left[\phi,\gamma\right]\\
L^{0}\left(\phi\right) & \text{ if }\theta\in\left[\tau,\phi\right]
\end{cases}\label{eq:L(theta;phi)}
\end{equation}
Here $\phi$ is the first type whose IR binds. For $\theta\ge\phi$,
the solution is as in Lemma \ref{lem:p-tau-case-2}: allocation $a^{0}$
and multiplier $L^{0}$ on $\left[\phi,\gamma\right]$, and allocation
$a^{R}$ and multiplier $\Lambda=0$ above $\gamma$. For $\theta<\phi$,
IR is slack, so $\Lambda$ is constant at $L^{0}\left(\phi\right)$.
Define 
\[
a\left(\theta;\phi\right):=\underset{y\in\left[0,1\right]}{\arg\max}\ \tilde{\psi}^{v}\left(y,\theta;L^{0}\left(\phi\right)\right),
\]
the pointwise maximizer of the adjusted virtual surplus for $\theta<\phi$.
The proof of Lemma \ref{lem:p-tau-case-3} shows that $\tilde{\psi}^{v}\left(\cdot,\theta;L^{0}\left(\phi\right)\right)$
is strictly quasiconcave on $Y$, so $a\left(\theta;\phi\right)$
is well-defined. Let 
\[
U\left(\phi;\tau\right)=\int_{\tau}^{\phi}\left[1-km'\left(s-a\left(s;\phi\right)\right)\right]ds
\]
be type $\phi$'s utility from allocation $a\left(\cdot;\phi\right)$.
Since $\phi$ is the first type at which IR binds, it must satisfy
$U(\phi;\tau)=0$. The proof of Lemma \ref{lem:p-tau-case-3} shows
that a unique $\phi^{\tau}\in(\eta,\gamma)$ at which $U\left(\phi^{\tau};\tau\right)=0$,
thereby identifying $\phi^{\tau}$.
\begin{lem}
\label{lem:p-tau-case-3}Suppose $\eta<\gamma$ and $\tau\in\left(\underline{\tau},\eta\right)$.
Then
\begin{equation}
\alpha^{\tau}\left(\theta\right)=\begin{cases}
a\left(\theta;\phi^{\tau}\right) & ,\text{ if }\theta\in[\tau,\phi^{\tau})\\
a^{0}\left(\theta\right) & ,\text{ if }\theta\in\left[\phi^{\tau},\gamma\right]\\
a^{R}\left(\theta\right) & ,\text{ if }\theta\in\left[\gamma,1\right]
\end{cases}.\label{eq:alpha-tau-case-3}
\end{equation}
Furthermore, there exists $\beta^{\tau}\in\left(\tau,b^{R}\right)$
such that $a\left(\theta;\phi^{\tau}\right)=0$ for $\theta\le\beta^{\tau}$,
while for $\theta>\beta^{\tau}$, $a\left(\theta;\phi^{\tau}\right)$
is determined by the first-order condition 
\[
\frac{\partial\tilde{\psi}^{v}(y,\theta;L^{0}\left(\phi^{\tau}\right))}{\partial y}\Big|_{y=a\left(\theta;\phi^{\tau}\right)}=0
\]
\end{lem}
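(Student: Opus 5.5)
The argument follows the weak-duality template already used for Lemmas \ref{lem:p-tau-case-1} and \ref{lem:p-tau-case-2}. We build the candidate pair $(\alpha^{\tau},\Lambda)$ with $\Lambda(\theta)=L(\theta;\phi)$ from (\ref{eq:L(theta;phi)}). We then pick $\phi$ so that the allocation is feasible and complementary slackness holds, and check that $\alpha^{\tau}$ maximizes $\tilde{\psi}^{v}(\cdot,\theta;\Lambda(\theta))$ pointwise.

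\textbf{Step 1: the candidate multiplier is admissible.} We need $\Lambda$ nonincreasing and nonnegative on $[\tau,1]$.
\begin{itemize}
\item $L^{0}$ has derivative $-f(\theta)-f'(\theta)(1-F(\gamma))/f(\gamma)$. Log-concavity of $1-F$ gives $f/(1-F)$ nondecreasing, so $(1-F)/f$ is nonincreasing. Together these should give $L^{0}$ decreasing with $L^{0}(\gamma)=0$, hence $L^{0}\ge0$ on $[\phi,\gamma]$.
\item Below $\phi$, $\Lambda$ is flat, and at $\gamma$ it drops to $0$ continuously.
\item So $L(\cdot;\phi)$ is admissible for every $\phi\in(\tau,\gamma)$.
\end{itemize}

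\textbf{Step 2: pointwise optimality region by region.}
\begin{itemize}
\item On $[\gamma,1]$, $\Lambda=0$ and $a^{R}$ maximizes $\psi^{v}$ by Lemma \ref{lem:Relaxed-program}.
\item On $[\phi,\gamma]$, $a^{0}$ satisfies the first-order condition by (\ref{eq:Lambda-0}). Strict quasiconcavity of $\tilde{\psi}^{v}$ in $y$ then makes it the maximizer; this was established for Lemma \ref{lem:p-tau-case-2} and can be reused.
\item On $[\tau,\phi)$, the effective inverse hazard rate is $(1-F(\theta)-L^{0}(\phi))/f(\theta)$, and $a(\theta;\phi)$ is defined as the argmax.
\end{itemize}

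\textbf{Step 2a: quasiconcavity on $[\tau,\phi)$.} Writing $d=\theta-y$, the derivative is $km'(d)-c-km''(d)H$ with $H$ the effective inverse hazard. Sign changes are controlled by log-concavity of $m'$, which makes $m'/m''$ increasing, so $m'(d)-m''(d)H$ is single-crossing in $d$. Below the root the derivative is negative in $y$, giving the corner $y=0$.

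\textbf{Step 2b: the cutoff $\beta^{\tau}$.} Set $\beta^{\tau}$ as the type where the interior root $d$ equals $\theta$. Because $1-F(\theta)-L^{0}(\phi)$ divided by $f$ is decreasing in $\theta$, the root $d$ decreases in $\theta$, so the corner set is an initial interval. It remains to check $\tau<\beta^{\tau}<b^{R}$:
\begin{itemize}
\item $\beta^{\tau}<b^{R}$ holds because $L^{0}(\phi)>0$ lowers $H$ relative to the relaxed problem.
\item $\tau<\beta^{\tau}$ uses $\tau<\eta<b^{R}$ together with the requirement that the mismatch be at least $(m')^{-1}$ of something. This step needs a separate verification.
\end{itemize}

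\textbf{Step 2c: monotonicity of $\alpha^{\tau}$.} We need $\alpha^{\tau}$ nondecreasing, including across the junctions.
\begin{itemize}
\item At $\phi$, $\Lambda$ is continuous, so $a(\phi;\phi)=a^{0}(\phi)$.
\item At $\gamma$, $a^{0}(\gamma)=a^{R}(\gamma)$ by the definition of $\gamma$ in (\ref{eq:gamma}).
\item Within each region, monotonicity follows from the decreasing effective hazard.
\end{itemize}

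\textbf{Step 3: feasibility and complementary slackness.} Complementary slackness requires $u^{\tau}>0$ only where $\Lambda$ is constant. It holds if $u^{\tau}(\phi)=0$, since then $u^{\tau}\equiv0$ on $[\phi,\gamma]$ (zero slope under $a^{0}$) and $u^{\tau}$ increases above $\gamma$. So we must produce $\phi^{\tau}\in(\eta,\gamma)$ with $U(\phi^{\tau};\tau)=0$, and also show $U(\theta')\ge0$ along the way, i.e.\ IR on $[\tau,\phi^{\tau}]$.

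\textbf{Step 3a: existence by the intermediate value theorem.} We need $U(\cdot;\tau)$ of opposite signs at the ends of $(\eta,\gamma)$.
\begin{itemize}
\item As $\phi\downarrow\eta$ (or at $\phi$ near $\tau$), $L^{0}(\phi)$ is large, mismatch is small, $u$ rises, and $U>0$.
\item As $\phi\uparrow\gamma$, $L^{0}(\phi)\to0$, so $a(\cdot;\phi)\to a^{R}$ and $U(\gamma;\tau)\to u^{R}(\gamma;\tau)<0$ because $\tau>\underline{\tau}$.
\end{itemize}

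\textbf{Step 3b: uniqueness and IR, the main obstacle.} Uniqueness should follow from strict monotonicity of $U$ in $\phi$. Raising $\phi$ lowers $L^{0}(\phi)$, which raises the effective hazard and hence the mismatch at every $s<\phi$, lowering the integrand. But the upper limit also moves, adding the term $1-km'(\phi-a(\phi;\phi))=0$ since $a(\phi;\phi)=a^{0}(\phi)$, so the sign is clean.

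For IR on $[\tau,\phi^{\tau})$, we show the utility path is concave and then decreasing to $0$ at $\phi^{\tau}$. Pooling at $0$ gives slope $1-km'(\theta)$; beyond $\beta^{\tau}$ the mismatch decreases toward $(m')^{-1}(1/k)$ at $\phi^{\tau}$, so the slope stays negative after it turns. Proving this single-peakedness rigorously, for the slope of $u$ on $(\beta^{\tau},\phi^{\tau})$, is the step I expect to be hardest. I would attack it via the implicit-function derivative of $d(\theta)$ and Assumption \ref{Assumption:F}.

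Weak duality then gives optimality, and uniqueness follows from strict quasiconcavity of the pointwise problem.
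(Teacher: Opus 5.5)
Your approach is the paper's: the multiplier $L(\cdot;\phi)$, pointwise maximization of the adjusted virtual surplus, the intermediate value theorem for $U(\cdot;\tau)$ on $(\eta,\gamma)$ with a vanishing boundary term, and weak duality. As written, however, the proof is incomplete. Two steps remain unproven: $\tau<\beta^{\tau}$, which is part of the statement, and IR on $[\tau,\phi^{\tau}]$, which you flag as the hardest step.

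\emph{The missing identity.} Let $\hat{J}(\theta;L):=(1-F(\theta)-L)/f(\theta)$ and $G(d)=[m'(d)-c/k]/m''(d)$. The definition of $L^{0}$ gives $\hat{J}(\phi;L^{0}(\phi))=(1-F(\gamma))/f(\gamma)$ for every $\phi$. Since $G(d^{R}(\theta))=(1-F(\theta))/f(\theta)$ for all $\theta$ and $d^{R}(\gamma)=\eta$, this common value equals $G(\eta)$. (In Step 2a, the single-crossing argument should be run on $G$, not on $m'-m''H$; you dropped the $c/k$ term. Monotonicity of $G$ also uses $m'''\ge0$, not only log-concavity of $m'$.)

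\emph{Closing $\tau<\beta^{\tau}$.} You already use that $\hat{J}(\cdot;L^{0}(\phi^{\tau}))$ is nonincreasing on $[\tau,\phi^{\tau}]$. Hence $\hat{J}(\tau;L^{0}(\phi^{\tau}))\ge\hat{J}(\phi^{\tau};L^{0}(\phi^{\tau}))=G(\eta)>G(\tau)$, because $\tau<\eta$. So the corner condition holds strictly at $\tau$. The paper reaches the same inequality differently, via $L^{0}(\phi^{\tau})<L^{0}(\tau)$ and $\hat{J}(\tau;L^{0}(\tau))=G(\eta)$.

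\emph{Closing IR.} No implicit-function computation is needed.
\begin{itemize}
\item At $\phi^{\tau}>\eta$, the identity shows the pointwise solution is interior with mismatch exactly $\eta$.
\item On the interior region, $G(\delta(s))=\hat{J}(s;L^{0}(\phi^{\tau}))$ makes the mismatch nonincreasing, so it is $\ge\eta$ on $(\beta^{\tau},\phi^{\tau}]$. On the pool, the mismatch equals $s$.
\item Therefore the slope $1-km'(\delta(s))$ is positive for $s<\eta$, because the mismatch is at most $s<\eta$. It is nonpositive for $s>\eta$.
\item So $u^{\tau}$ rises and then falls, vanishes at both $\tau$ and $\phi^{\tau}$, and is therefore nonnegative.
\end{itemize}
This is exactly the paper's IR argument. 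The same observation that the mismatch never exceeds $s$ is what actually gives $U(\eta;\tau)>0$ in your Step 3a, rather than $L^{0}(\phi)$ being large.
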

\bigskip

Under the convention $\phi^{\tau}=\tau$ for $\tau\in[\eta,\gamma$),
the schedule in (\ref{eq:alpha-tau-case-3}) subsumes (\ref{eq:y-tau-case2}),
so Lemma \ref{lem:p-tau-case-3} covers every case in which $a^{R}$
violates IR, while Lemma \ref{lem:p-tau-case-1} covers the rest.
I adopt this convention henceforth. The appendix also establishes
that the solution to $P^{v}(\tau)$ is unique for every $\tau$ (see
the proof of Corollary \ref{Corollary:V-Fixed-tau}). The following
collects the characterization:
\begin{cor}
\label{Corollary:V-Fixed-tau}Fix $\tau\ge0$. The fixed-$\tau$ problem
\ref{eq:Program-v} has a unique solution. 
\begin{itemize}
\item If $\eta\ge\gamma$, or $\eta<\gamma$ and $\tau\in[0,\underline{\tau}]\cup[\gamma,1)$,
then $\alpha^{\tau}=a^{R}$. 
\item If $\eta<\gamma$ and $\tau\in(\underline{\tau},\gamma)$, then $\alpha^{\tau}$
is the schedule in (\ref{eq:alpha-tau-case-3}). 
\end{itemize}
The supporting multiplier $\Lambda^{\tau}$ such that $\mathcal{L}\left(\alpha^{\tau},\Lambda^{\tau}\right)=\sup_{\alpha'}\mathcal{L}\left(\alpha',\Lambda^{\tau}\right)$
is 
\begin{equation}
\Lambda^{\tau}\left(\theta\right)=\begin{cases}
0 & \text{if }\eta\ge\gamma,\text{ or }\eta<\gamma\text{ and }\tau\notin\left(\underline{\tau},\gamma\right)\\
L\left(\theta;\phi^{\tau}\right)\text{ in (\ref{eq:L(theta;phi)})} & \text{if }\eta<\gamma\text{ and }\tau\in\left(\underline{\tau},\gamma\right)
\end{cases}.\label{eq:Lambda-tau}
\end{equation}

\end{cor}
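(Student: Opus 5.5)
The corollary collects Lemmas \ref{lem:p-tau-case-1}--\ref{lem:p-tau-case-3}, so existence and the form of $\alpha^{\tau}$ are already available for $\tau>0$. Two things remain: handle $\tau=0$, and prove uniqueness. The multiplier formula (\ref{eq:Lambda-tau}) is read off from the constructions behind those lemmas: $\Lambda=0$ in the first case, (\ref{eq:Lambda-Case-2}) in the second, and $L(\cdot;\phi^{\tau})$ in the third, which agrees with (\ref{eq:Lambda-Case-2}) under the convention $\phi^{\tau}=\tau$.

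\emph{The case $\tau=0$.} Here $u(0)$ is a free choice variable with $u(0)\ge0$. Writing $u(\theta)=u(0)+\int_{0}^{\theta}[1-kM_{2}]ds$, the objective equals the previous Lagrangian minus $u(0)$, and $u(0)$ enters the constraints only monotonically. So it is optimal to set $u(0)=0$, after which the problem is the one already solved. Since $0\le\underline{\tau}$ whenever $\eta<\gamma$, the first bullet applies and gives $a^{R}$.

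\emph{Uniqueness.} I would use the weak-duality structure. Let $\alpha'$ be any feasible solution attaining the optimum. Then
\[
\Pi(\alpha')\le\mathcal{L}(\alpha',\Lambda^{\tau})\le\mathcal{L}(\alpha^{\tau},\Lambda^{\tau})=\Pi(\alpha^{\tau}).
\]
Both inequalities must therefore hold with equality, and the second equality forces $\alpha'(\theta)$ to maximize $\tilde{\psi}^{v}(\cdot,\theta;\Lambda^{\tau}(\theta))$ for almost every $\theta$. The proofs of the lemmas show this function is strictly quasiconcave in $y$ on $Y$, using Assumption \ref{Assumption:m-2} together with log-concavity of $1-F$, which controls the sign of the effective inverse hazard rate. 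Hence the maximizer is unique and $\alpha'=\alpha^{\tau}$ almost everywhere. Monotonicity of $\alpha'$ then upgrades this to equality at all continuity points. The utility schedule $u$ is pinned down by the envelope condition and $u(\tau)=0$, so $u$, and hence $t$, are unique as well.

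\emph{Main obstacle.} The delicate step is strict quasiconcavity of $\tilde{\psi}^{v}(\cdot,\theta;L)$ at the $L$ values that actually arise, namely $L=L^{0}(\theta)$ and $L=L^{0}(\phi^{\tau})$. I also need the effective inverse hazard rate $(1-F-\Lambda)/f$ to remain nonnegative. On $[\phi,\gamma]$ it equals the constant $(1-F(\gamma))/f(\gamma)>0$. Below $\phi$ it is $(1-F(\theta)-L^{0}(\phi))/f(\theta)$, which is positive because $L^{0}(\phi)\le1-F(\phi)\le 1-F(\theta)$. With nonnegativity in hand, the curvature argument reduces to that of the relaxed problem. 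I would first try to verify this directly by checking that $\partial_{y}\tilde{\psi}^{v}$ crosses zero once. If that fails, I would fall back on the pointwise argmax characterization already established inside the proof of Lemma \ref{lem:p-tau-case-3}.
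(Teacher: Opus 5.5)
Your uniqueness argument for $\tau>0$ is the paper's: the weak-duality chain, equality throughout, the unique pointwise maximizer of $\tilde{\psi}^{v}(\cdot,\theta;\Lambda^{\tau}(\theta))$ from the lemma proofs, and then the envelope condition with $u(\tau)=0$. The $\tau=0$ step, however, has a genuine gap. When $u(0)\ge0$ is free, raising it costs $u(0)$ in the objective. But it also relaxes every participation constraint $u(0)+\int_{0}^{\theta}[1-kM_{2}(\alpha(s),s)]\,ds\ge0$. So the monotone way $u(0)$ enters the constraints creates a trade-off; it does not show that $u(0)=0$ is optimal. A larger $u(0)$ could in principle permit an allocation closer to the relaxed one, $a^{R}$, and pay for itself.

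The missing idea is to price this trade-off with the multiplier. Redo the integration by parts without imposing $u(\tau)=0$. For every feasible $(\alpha,u)$ this gives
\[
\Pi(\alpha,u)\le u(0)\bigl[\Lambda^{0}(0)-1\bigr]+\mathcal{L}(\alpha,\Lambda^{0})\le u(0)\bigl[\Lambda^{0}(0)-1\bigr]+\mathcal{L}(\alpha^{0},\Lambda^{0}).
\]
What must then be shown is that the total multiplier mass $\Lambda^{0}(0)=\lambda([0,1])$ is strictly below one. Either $\Lambda^{0}\equiv0$, or
\[
\Lambda^{0}(0)=L^{0}(\phi^{0})\le L^{0}(0)=1-f(0)\frac{1-F(\gamma)}{f(\gamma)}<1.
\]
Only then is every $u(0)>0$ strictly suboptimal, after which your uniqueness argument applies at $\tau=0$.

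Relatedly, the shortcut `$0\le\underline{\tau}$, so the first bullet applies' is not valid at $\tau=0$. Suppose $u^{R}(\gamma;0)<0$. Then $0$ itself lies in the set in (\ref{eq:tau_underline}), so $\underline{\tau}=0$ is attained, yet $a^{R}$ with $u(0)=0$ violates IR at $\gamma$. In the uniform--quadratic example this happens whenever $S>2+\sqrt{2}$. The step `$\tau\le\underline{\tau}$ implies $u^{R}(\gamma;\tau)\ge0$' in the proof of Lemma \ref{lem:p-tau-case-1} fails exactly at this boundary. There the relevant solution is the Lemma \ref{lem:p-tau-case-3} construction with some $\phi^{0}\in(\eta,\gamma)$, whose multiplier has $\Lambda^{0}(0)=L^{0}(\phi^{0})>0$. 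This is precisely the case in which the bound $\Lambda^{0}(0)<1$ has content. It is why the $\tau=0$ argument must explicitly allow $\Lambda^{0}=L(\cdot;\phi^{0})$ rather than reduce to the relaxed allocation.
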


\subsection{Optimal Mechanism}

Given the fixed-$\tau$ \ref{eq:Program-v} solution, it remains to
do step 2: optimize over the cutoff $\tau$. 
\begin{lem}
\label{Lemma:Pi-v(v)-quasiconcave}$\Pi^{v}$ is strictly quasiconcave
on $\Theta$. Its derivative is $\Pi^{v'}\left(\tau\right)=-f\left(\tau\right)\Psi^{v}\left(\tau\right)$,
where $\Psi^{v}(\tau):=\tilde{\psi}^{v}\left(\alpha^{\tau}(\tau),\tau;\Lambda^{\tau}(\tau)\right)$
is the adjusted virtual surplus of the marginal included type $\tau$,
with $\alpha^{\tau}$ and $\Lambda^{\tau}$ from Corollary \ref{Corollary:V-Fixed-tau}.
It holds that $\Pi^{v'}\left(0\right)>0>\Pi^{v'}\left(1\right)$. 
\end{lem}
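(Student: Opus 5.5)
The plan is to compute $\Pi^{v\prime}$ by an envelope argument on the saddle value of the Lagrangian, and then sign $\Psi^{v}$ at the endpoints and show it crosses zero only once from below.

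\textbf{Derivative.} By Corollary \ref{Corollary:V-Fixed-tau} and weak duality, for every $\tau$ the pair $(\alpha^{\tau},\Lambda^{\tau})$ is a saddle point. Hence $\Pi^{v}(\tau)=\mathcal{L}_{\tau}(\alpha^{\tau},\Lambda^{\tau})$, where the subscript records that the Lagrangian (\ref{eq:Lagrangian}) integrates over $[\tau,1]$ and was built using $u(\tau)=0$. Two facts justify applying an envelope theorem for saddle-point problems (Milgrom--Segal type) with respect to $\tau$:
\begin{itemize}
\item The integrand $\tilde{\psi}^{v}$ does not depend on $\tau$ except through the lower limit.
\item The boundary term produced by integrating $u'$ by parts vanishes because $u(\tau)=0$.
\end{itemize}
Differentiating the integral in its lower limit then gives $\Pi^{v\prime}(\tau)=-f(\tau)\tilde{\psi}^{v}(\alpha^{\tau}(\tau),\tau;\Lambda^{\tau}(\tau))$.

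To make this rigorous, I would bound the difference quotient from both sides. For a nearby cutoff $\tau'$:
\begin{itemize}
\item Restricting $\alpha^{\tau}$ to $[\tau',1]$ (or extending it by continuity) is feasible for $\mathcal{P}^{v}(\tau')$ after a vertical shift of $u$. This gives a lower bound on $\Pi^{v}(\tau')$.
\item $\mathcal{L}_{\tau'}(\cdot,\Lambda^{\tau})$ gives an upper bound on $\Pi^{v}(\tau')$.
\end{itemize}
Continuity of $\alpha^{\tau}(\tau)$ and $\Lambda^{\tau}(\tau)$ in $\tau$ then closes the gap. In particular, $\phi^{\tau}$ must be continuous in $\tau$, which I would take from the implicit function theorem applied to $U(\phi;\tau)=0$.

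\textbf{Endpoints.} At $\tau=0$ we have $\alpha^{0}(0)=0$ and $M_{2}(0,0)=m'(0)=0$. Therefore $\Psi^{v}(0)=-(1-\Lambda^{0}(0))/f(0)$.
\begin{itemize}
\item If $\Lambda^{0}=0$, this equals $-1/f(0)<0$.
\item Otherwise $1-L^{0}(\phi)=f(\phi)(1-F(\gamma))/f(\gamma)$, using $F(0)=0$ in (\ref{eq:Lambda-0}). This is positive, so again $\Psi^{v}(0)<0$.
\end{itemize}
In both cases $\Pi^{v\prime}(0)>0$.

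At $\tau=1$ we have $1-F(1)=0$ and $\Lambda=0$, so $\Psi^{v}(1)=\max_{y}[1-kM(y,1)-cy]$. This is at least the value at $y=1$, which is $1-c>0$ since $c<1$. Hence $\Pi^{v\prime}(1)<0$.

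\textbf{Single crossing.} Strict quasiconcavity follows if $\Psi^{v}$ crosses zero at most once, and only from below. I would show $\Psi^{v}$ is strictly increasing regime by regime, using continuity of $\Psi^{v}$ across the regime boundaries $\underline{\tau}$, $\eta$ and $\gamma$.
\begin{itemize}
\item \emph{Regimes where $\Lambda^{\tau}(\tau)=0$ and $\alpha^{\tau}(\tau)=a^{R}(\tau)$.} By the envelope theorem in $\theta$, $\frac{d}{d\tau}\psi^{v}(a^{R}(\tau),\tau)=\partial_{\theta}\psi^{v}$. This equals $1-kM_{2}$ plus terms from $\frac{d}{d\theta}\frac{1-F}{f}$. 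The latter has the right sign by log-concavity of $1-F$, and $kM_{21}$ enters multiplied by the inverse hazard rate. I expect positivity to follow from Assumptions \ref{Assumption:F}--\ref{Assumption:m-2}. In the pooling region $\alpha^{\tau}(\tau)=0$, the derivative is $1-km'(\tau)-\frac{d}{d\tau}\bigl[(1-km'(\tau))\frac{1-F}{f}\bigr]$, which needs a separate check.
\item \emph{Case $\tau\in[\eta,\gamma)$.} Here $\Lambda^{\tau}(\tau)=L^{0}(\tau)$, so the effective inverse hazard rate is the constant $(1-F(\gamma))/f(\gamma)$. Moreover the slope term $1-kM_{2}(a^{0}(\tau),\tau)$ vanishes identically. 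So $\Psi^{v}(\tau)=\tau-km(\eta)-c(\tau-\eta)$, which is strictly increasing since $c<1$.
\item \emph{Case $\tau\in(\underline{\tau},\eta)$.} This is the main obstacle, because $\Lambda^{\tau}(\tau)=L^{0}(\phi^{\tau})$ moves with $\tau$ through $\phi^{\tau}$. I would first differentiate $U(\phi^{\tau};\tau)=0$ to sign $d\phi^{\tau}/d\tau$. Then I would check that $\partial_{L}\tilde{\psi}^{v}=(1-kM_{2})/f$ times $d L^{0}(\phi^{\tau})/d\tau$ does not overturn the direct effect.
\end{itemize}
If monotonicity of $\Psi^{v}$ fails in the third case, the fallback is to show only that $\Psi^{v}(\tau)=0$ implies $\Psi^{v\prime}(\tau)>0$, which suffices for strict quasiconcavity.
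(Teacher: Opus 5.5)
Your overall architecture matches the paper's: $\Pi^{v\prime}=-f\Psi^{v}$ by an envelope argument, the signs of $\Psi^{v}$ at the endpoints, and strict monotonicity of $\Psi^{v}$ regime by regime, glued by continuity at $\underline{\tau}$, $\eta$, $\gamma$. The paper only asserts the derivative formula. The step that would fail is your rigorous sandwich for it, and it fails exactly in the Lemma~\ref{lem:p-tau-case-3} regime, where $\Lambda^{\tau}(\tau)=L^{0}(\phi^{\tau})>0$.

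For $\tau'>\tau$, restricting $\alpha^{\tau}$ to $[\tau',1]$ and shifting $u$ down by $u^{\tau}(\tau')>0$ makes utility negative on $[\phi^{\tau},\gamma]$, where $u^{\tau}=0$. That mechanism is therefore infeasible. For $\tau'<\tau$, extending by $\alpha=0$ is feasible, but without re-optimizing it only gives
$\Pi^{v}(\tau')\ge\Pi^{v}(\tau)+(\tau-\tau')f(\tau)\psi^{v}(0,\tau)+o(\tau-\tau')$.
This is the \emph{unadjusted} surplus, which falls short of $\Psi^{v}(\tau)$ by $[1-km'(\tau)]L^{0}(\phi^{\tau})/f(\tau)>0$. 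That is a gap between two constants at the same $\tau$, so no continuity argument closes it.

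The fix is to replace the primal lower bound by a second dual bound. Keep $\Pi^{v}(\tau')\le\sup_{\alpha}\mathcal{L}_{\tau'}(\alpha,\Lambda^{\tau})$, and add $\Pi^{v}(\tau)\le\sup_{\alpha}\mathcal{L}_{\tau}(\alpha,\bar{\Lambda}^{\tau'})$, where $\bar{\Lambda}^{\tau'}$ extends $\Lambda^{\tau'}$ by the constant $\Lambda^{\tau'}(\tau')$ across the gap. Each $\Pi^{v}$ equals its own maximized Lagrangian. So for $\tau<\tau'$, $\Pi^{v}(\tau')-\Pi^{v}(\tau)$ is trapped between $-\int_{\tau}^{\tau'}\max_{y}\tilde{\psi}^{v}(y,\theta;\Lambda^{\tau'}(\tau'))f\,d\theta$ and $-\int_{\tau}^{\tau'}\max_{y}\tilde{\psi}^{v}(y,\theta;\Lambda^{\tau}(\theta))f\,d\theta$. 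Now continuity of $\tau\mapsto\Lambda^{\tau}(\tau)$, which is your continuity of $\phi^{\tau}$, does finish the argument.

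In the monotonicity step, positivity in the regimes with $\alpha^{\tau}(\tau)=a^{R}(\tau)$ and $\Lambda^{\tau}(\tau)=0$ does not follow from the assumptions. There $\Psi^{v\prime}(\tau)=[1-km'(\delta^{R}(\tau))][1-J'(\tau)]+km''(\delta^{R}(\tau))J(\tau)$ with $J=(1-F)/f$. In the uniform--quadratic case this equals $2-k/2-3c/2$ just above $b^{R}$, which is negative when $k+3c>4$. What makes the paper's argument work is the regime structure: $a^{R}$ is the marginal allocation only when $\eta\ge\gamma$, $\tau\le\underline{\tau}<\eta$, or $\tau\ge\gamma$, and in each of these cases $1-km'(\delta^{R}(\tau))\ge0$. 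The pooling region is the same formula with $\delta^{R}=\tau$, not a separate check.

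In the Lemma~\ref{lem:p-tau-case-3} regime, the sign you leave open resolves favorably. Since $U_{\tau}(\phi^{\tau};\tau)=-[1-km'(\tau)]<0$ and $U_{\phi}<0$ on $(\eta,\gamma)$, we get $d\phi^{\tau}/d\tau<0$. Hence $L^{0}(\phi^{\tau})$ rises with $\tau$, and the multiplier channel, weighted by $[1-km'(\tau)]/f(\tau)>0$, reinforces the positive direct effect rather than threatening it.

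Finally, at $\tau=0$ the correct identity is $1-L^{0}(\phi)=F(\phi)+f(\phi)(1-F(\gamma))/f(\gamma)$. It is still positive, so your endpoint conclusion stands. By keeping $\Lambda^{0}(0)$, it is in fact more careful than the paper's $\Psi^{v}(0)=-1/f(0)$.
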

\begin{prop}
\label{Proposition:Mechanism-V}Suppose $v\left(\theta\right)=\theta$.
The seller's optimal mechanism is unique. 
\begin{itemize}
\item The inclusion set is $\left[\tau^{v},1\right]$, where $\tau^{v}\in\left(0,1\right)$
is the unique solution to $\Psi^{v}\left(\tau^{v}\right)=0$.
\item The allocation schedule on  $[\tau^{v},1]$ is $\alpha^{\tau^{v}}$,
as characterized in Corollary \ref{Corollary:V-Fixed-tau}.
\item The utility schedule is $u^{\tau^{v}}(\theta)=\int_{\tau^{v}}^{\theta}\left[1-kM_{2}(\alpha^{\tau^{v}}(s),s)\right]ds$,
and the transfer schedule is $t(\theta)=\theta-kM(\alpha^{\tau^{v}}(\theta),\theta)-u^{\tau^{v}}(\theta).$
\begin{itemize}
\item If $\eta\ge\gamma$, then $u^{\tau^{v}}$ is strictly increasing on
$\left[\tau^{v},1\right]$.
\item If $\eta<\gamma$, then $u^{\tau^{v}}$ is nonmonotone on $\left[\tau^{v},1\right]$. 
\end{itemize}
\end{itemize}
On $\left[\tau^{v},1\right]$, $\alpha^{\tau^{v}}\left(\theta\right)\le\alpha^{FB}\left(\theta\right)$,
with strict inequality if and only if $\theta<1$ and $\alpha^{FB}\left(\theta\right)>0$.
\end{prop}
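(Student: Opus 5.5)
The plan is to assemble the proposition from Lemma \ref{lem:v-CutoffMech}, Corollary \ref{Corollary:V-Fixed-tau} and Lemma \ref{Lemma:Pi-v(v)-quasiconcave}, and then to verify the two qualitative claims: the shape of utility and the comparison with the first best.

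\emph{Inclusion set, allocation, utility and transfers.} By Lemma \ref{lem:v-CutoffMech}, every optimal mechanism has inclusion set $[\tau,1]$ for some $\tau$. The seller's problem therefore reduces to $\max_\tau \Pi^v(\tau)$, with the fixed-$\tau$ problem solved uniquely by $\alpha^\tau$ (Corollary \ref{Corollary:V-Fixed-tau}). Lemma \ref{Lemma:Pi-v(v)-quasiconcave} gives strict quasiconcavity and $\Pi^{v\prime}(0)>0>\Pi^{v\prime}(1)$. Hence there is a unique interior maximizer $\tau^v\in(0,1)$, characterized by $\Pi^{v\prime}(\tau^v)=0$, i.e.\ $\Psi^v(\tau^v)=0$ since $f>0$. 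For uniqueness of the root, strict quasiconcavity together with the sign pattern of $\Pi^{v\prime}$ at the endpoints rules out a second zero of $\Psi^v$ that is not the maximizer; I would argue that any interior critical point of a strictly quasiconcave function is its maximizer when the derivative changes sign there. Uniqueness of the whole mechanism then follows from uniqueness of $\tau^v$ and of $\alpha^{\tau^v}$. The utility schedule comes from the envelope condition (\ref{eq:u'-v}) with $u(\tau^v)=0$ (Lemma \ref{lem:v-CutoffMech}, since $\tau^v>0$), and transfers come from (\ref{eq:u-vertical}).

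\emph{Monotonicity of utility.} If $\eta\ge\gamma$, Lemma \ref{Lemma:-eta-gamma} gives $\eta\ge b^R\ge\gamma$ and $\alpha^{\tau^v}=a^R$. By (\ref{eq:u-R-slope}):
\begin{itemize}
\item below $b^R$ the slope $1-km'(\theta)$ is positive, since $\theta<b^R\le\eta$;
\item above $b^R$ the slope $1-km'(d^R(\theta))$ is strictly increasing and vanishes only at $\gamma\le b^R$, so it is positive.
\end{itemize}
Thus $u$ is strictly increasing, apart from isolated zero-slope points, which do not affect strict monotonicity.

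If $\eta<\gamma$, I split by where $\tau^v$ lies.
\begin{itemize}
\item If $\tau^v\in(\underline\tau,\gamma)$, then $u$ is flat at zero on $[\phi^{\tau^v},\gamma]$ and strictly increasing above $\gamma$. It is therefore not strictly monotone, and in case (ii) it is genuinely nonmonotone, since it first rises and then returns to $0$.
\item If $\tau^v\le\underline\tau$, then $u^R$ rises, dips on $(\eta,\gamma)$ and rises again, so it is nonmonotone.
\item If $\tau^v\ge\gamma$, utility would be monotone. I must rule this case out by showing $\Psi^v(\gamma)<0$, and likewise handle the flat case.
\end{itemize}
\textbf{The main obstacle} is this last step: showing that at $\tau=\gamma$ the marginal adjusted virtual surplus $\psi^v(a^R(\gamma),\gamma)$ is negative, so the seller prefers to serve below $\gamma$. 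The term $\gamma-kM-cy$ minus $(1-kM_2)\frac{1-F}{f}$ has its second part vanishing at $\gamma$, because $1=km'(d^R(\gamma))$. It therefore equals the full surplus at $\gamma$, which is positive. This gives $\Pi^{v\prime}(\gamma)<0$, so $\tau^v<\gamma$. The flat segment, or a dip below the starting level, then yields the claimed nonmonotonicity; it remains to interpret ``nonmonotone'' to include the flat-then-rising case, or else show that the optimum falls in case (ii) or below $\underline\tau$.

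\emph{Comparison with the first best.} On pieces where $\alpha^{\tau^v}=a^R$, Lemma \ref{lem:Relaxed-program} gives the result directly. On pieces with $a^0$ the mismatch is $\eta=(m')^{-1}(1/k)>\delta^{FB}=(m')^{-1}(c/k)$ since $c<1$. On pieces with $a(\cdot;\phi)$, the first-order condition with positive effective inverse hazard $\frac{1-F(\gamma)}{f(\gamma)}\frac{f}{f}$-type weight gives $m'(d)>c/k$, so $d>\delta^{FB}$. Equality holds only at $\theta=1$ (zero hazard term) or where $\alpha^{FB}=0$.
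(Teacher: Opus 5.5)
\textbf{Gap: you never rule out $\tau^v\in[\eta,\gamma)$.}

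Your assembly of the inclusion set, allocation, utility and transfers matches the paper, as do the $\eta\ge\gamma$ monotonicity argument and the first-best comparison. The problem is in the $\eta<\gamma$ case. You show $\tau^v<\gamma$, but you never exclude $\tau^v\in[\eta,\gamma)$, the Lemma \ref{lem:p-tau-case-2} region. There $u^{\tau^v}$ is identically zero on $[\tau^v,\gamma]$ and strictly increasing on $[\gamma,1]$. Such a schedule is nondecreasing, i.e.\ monotone. So you cannot "interpret nonmonotone to include the flat-then-rising case": if the optimum could lie there, the proposition's claim would be false. You must prove $\tau^v<\eta$.

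\textbf{How to close it.} The fix extends the computation you already did at $\gamma$. For any $\tau\in[\eta,\gamma)$, $\alpha^\tau(\tau)=a^0(\tau)=\tau-\eta$ gives $1-kM_2(\alpha^\tau(\tau),\tau)=1-km'(\eta)=0$. The rent term therefore vanishes whatever the multiplier $\Lambda^\tau(\tau)=L^0(\tau)$ is, and $\Psi^v(\tau)=(1-c)\tau+c\eta-km(\eta)$. At $\tau=\eta$ this equals $\eta-km(\eta)>0$, because strict convexity with $m(0)=0$ gives $km(\eta)<\eta km'(\eta)=\eta$. The proof of Lemma \ref{Lemma:Pi-v(v)-quasiconcave} shows $\Psi^v$ is strictly increasing, so $\Psi^v>0$ on $[\eta,1]$ and hence $\tau^v<\eta$. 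This also makes your separate evaluation at $\gamma$ redundant.

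With $\tau^v<\eta$, only two cases remain, and both are genuinely nonmonotone:
\begin{itemize}
\item If $\tau^v\le\underline{\tau}$, then $u^{\tau^v}=u^R(\cdot;\tau^v)$ rises on $(\tau^v,\eta)$, falls on $(\eta,\gamma)$, and rises on $(\gamma,1)$.
\item If $\tau^v\in(\underline{\tau},\eta)$, then $u^{\tau^v}$ rises from $0$, returns to $0$ at $\phi^{\tau^v}$, stays flat until $\gamma$, and then rises.
\end{itemize}
This is exactly the paper's route.

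\textbf{Smaller points.}
\begin{itemize}
\item You say you need $\psi^v(a^R(\gamma),\gamma)$ to be negative. Since $\Pi^{v\prime}=-f\Psi^v$, you need it positive, which is what your computation actually shows.
\item Strict quasiconcavity of $\Pi^v$ alone does not rule out a second zero of $\Psi^v$ at which it touches zero without changing sign. Take uniqueness of the root from the strict monotonicity of $\Psi^v$ instead.
\item In the first-best comparison, the positive weight on $[\tau^v,\phi^{\tau^v})$ is $(1-F(\theta)-L^0(\phi^{\tau^v}))/f(\theta)$. It is positive because $L^0$ is strictly decreasing and $1-F-L^0=f\,(1-F(\gamma))/f(\gamma)>0$. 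With this stated correctly, your argument that $m'(d)>c/k$ on the interior part goes through.
\end{itemize}
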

Proposition \ref{Proposition:Mechanism-V} completes the characterization
of the seller's optimal mechanism. I highlight four features.

First, the seller excludes types whose virtual surplus is negative,
following familiar logic from mechanism design. The departure here
is the multiplier $\Lambda^{\tau}\left(\tau^{v}\right)$ term inside
the virtual surplus of the marginal type, which lowers its effective
hazard rate to reflect the IR constraints of higher types. Thus, the
need to maintain IR affects not only product allocation for interior
served types, but also the cutoff determining which types are served.

Second, when $\eta<\gamma$, the optimal utility schedule must be
nonmonotone. The optimal cutoff \emph{never} lies in the Lemma \ref{lem:p-tau-case-2}
region $[\eta,\gamma)$ in this case. For such cutoffs, utility is
flat at zero just above $\tau$, so including the marginal type requires
no additional rent for higher types. Lowering $\tau$ is then unambiguously
profitable---the seller captures the full trade surplus of the added
types without raising rent elsewhere---so $\tau^{v}$ lies strictly
below $\eta$.

Third, product mismatch exceeds the first-best level. To reduce information
rent, the relaxed allocation $a^{R}$ induces more downward mismatch
than the first best. When some types receive negative utility under
$a^{R}$, the seller improves product fit for some types to restore
IR. This correction softens the screening distortion but never overturns
it: each served type's allocated product remains below his first-best,
and strictly so unless he is the highest type or his first-best is
already $y=0$.

Fourth, the optimal mechanism always generates basic-product crowd-out.
\begin{cor}
\label{Corollary:Basic}Under the optimal mechanism, there is always
a positive measure of served types with $\alpha^{\tau^{v}}\left(\theta\right)=0<\alpha^{FB}\left(\theta\right)$.
\end{cor}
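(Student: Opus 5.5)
The plan is to show that the served set $[\tau^{v},1]$ always contains a nondegenerate interval on which the optimal allocation equals $0$ while $\alpha^{FB}>0$. Since $\alpha^{FB}(\theta)>0$ if and only if $\theta>\delta^{FB}$, it suffices to exhibit a served type $\hat\theta>\max\{\tau^{v},\delta^{FB}\}$ with $\alpha^{\tau^{v}}(\hat\theta)=0$. The allocation is nondecreasing, so every served $\theta$ in $(\max\{\tau^{v},\delta^{FB}\},\hat\theta]$ then gets $y=0$, which is a positive-measure set. The argument splits according to Corollary \ref{Corollary:V-Fixed-tau}.

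\emph{Case $\alpha^{\tau^{v}}=a^{R}$.} Here the basic-product region is $[\tau^{v},b^{R}]$, so I need $\tau^{v}<b^{R}$ and $\delta^{FB}<b^{R}$.

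For the second inequality, I use the fact that $d^{R}(\theta)>\delta^{FB}$ for $\theta<1$. This holds because the left side of (\ref{eq:delta-R}) at $\delta^{FB}$ is $c/k-m''(\delta^{FB})(1-F)/f<c/k$, and $m'-m''(1-F)/f$ is increasing in $d$, which I take from the uniqueness argument behind Lemma \ref{lem:Relaxed-program}. Because $d^{R}$ is continuous and decreasing with $d^{R}(1)=\delta^{FB}$, and $b^{R}$ is the point where $d^{R}(b^{R})=b^{R}$, either $b^{R}=d^{R}(b^{R})>\delta^{FB}$ or ($b^{R}=0$ and $d^{R}(0)<0$). The latter is impossible, since $d^{R}\ge\delta^{FB}\ge0$. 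Note that $c>0$ is needed so that $\delta^{FB}>0$; otherwise the claim can fail, so I assume $c>0$.

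For the first inequality, $\tau^{v}<b^{R}$, I use Lemma \ref{Lemma:Pi-v(v)-quasiconcave}: it suffices that $\Psi^{v}(\tau)>0$ at some $\tau$ close to $b^{R}$ from above. I therefore evaluate $\Psi^{v}(b^{R})=\psi^{v}(0,b^{R})$ with $\Lambda=0$. At this point the first-order condition in $y$ holds at $y=0$, so $\psi^{v}(0,b^{R})=\max_{y}\psi^{v}(y,b^{R})$, and I need this to be negative. This is the delicate point. If instead $\Psi^{v}(b^{R})\ge0$, then by quasiconcavity $\tau^{v}\ge b^{R}$ and no served type would receive $y=0$. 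I expect this to be ruled out as follows. At $\theta=b^{R}$ with $y=0$ we have $u'=1-km'(b^{R})$. In case $\eta\ge\gamma$, Lemma \ref{Lemma:-eta-gamma} gives $b^{R}\le\eta$, so this slope is nonnegative. Then $\psi^{v}(0,b^{R})=b^{R}-km(b^{R})-(1-km'(b^{R}))(1-F)/f$. Showing this is negative for every $c,k$ does not look automatic, so I will look for an alternative route in case it fails.

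The alternative route argues directly that $\tau^{v}<b^{R}$ by contradiction. If $\tau^{v}\ge b^{R}$, then all served types get the interior first-order-condition allocation. At the marginal type, the envelope/first-order condition $\partial_y\psi^{v}=0$ combined with $\Psi^{v}(\tau^{v})=0$ yields a relation. I would then compare against the deviation that assigns $y=0$ to a small interval just below $\tau^{v}$, using that $\partial_y\tilde\psi^{v}$ at $y=0$ is $\le0$ only when $\theta\le b^{R}$. This step is where I expect the main obstacle: showing the marginal type's virtual surplus turns negative only inside the pooling region.

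\emph{Case $\tau^{v}\in(\underline{\tau},\gamma)$.} By the discussion after Proposition \ref{Proposition:Mechanism-V}, $\tau^{v}<\eta$, so Lemma \ref{lem:p-tau-case-3} applies and gives $\beta^{\tau^{v}}\in(\tau^{v},b^{R})$ with $y=0$ on $[\tau^{v},\beta^{\tau^{v}}]$. It remains to show $\beta^{\tau^{v}}>\delta^{FB}$. At $\beta^{\tau^{v}}$, the first-order condition of $\tilde\psi^{v}$ at $y=0$ holds with effective inverse hazard rate $(1-F-L^{0}(\phi^{\tau^{v}}))/f$. This rate is positive because $1-F(\phi)-L^{0}(\phi)=f(\phi)(1-F(\gamma))/f(\gamma)>0$, and $1-F$ is decreasing on $[\beta^{\tau^{v}},\phi^{\tau^{v}}]$. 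The condition then reads $m'(\beta)-m''(\beta)\cdot(\text{positive})=c/k$, which forces $m'(\beta)>c/k$, i.e. $\beta^{\tau^{v}}>\delta^{FB}$. The same positivity argument should also settle the first case once $\tau^{v}<b^{R}$ is established, so the whole proof reduces to showing that the marginal served type lies strictly inside the basic-product pooling region.
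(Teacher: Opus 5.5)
\textbf{There is a genuine gap at the step you flag yourself.} In the case $\alpha^{\tau^{v}}=a^{R}$ you never establish $\tau^{v}<b^{R}$, and the route you propose points the wrong way.

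By Lemma \ref{Lemma:Pi-v(v)-quasiconcave} and Proposition \ref{Proposition:Mechanism-V}, $\Psi^{v}$ is strictly increasing with unique zero $\tau^{v}$. So $\tau^{v}<b^{R}$ is equivalent to $\Psi^{v}(b^{R})>0$. Your claim that you need $\psi^{v}(0,b^{R})<0$ is reversed, and so is your claim that $\Psi^{v}(b^{R})\ge0$ would force $\tau^{v}\ge b^{R}$. Positivity at some $\tau$ slightly \emph{above} $b^{R}$ would also only give $\tau^{v}<\tau$.

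Negativity is not merely hard to show; it is false. Write $J:=(1-F)/f$ and suppose $\eta\ge\gamma$:
\begin{itemize}
\item The corner condition $d^{R}(b^{R})=b^{R}$ gives $J(b^{R})=[m'(b^{R})-c/k]/m''(b^{R})\le m'(b^{R})/m''(b^{R})\le b^{R}$. The last step uses $m'(z)=\int_{0}^{z}m''\le zm''(z)$, which follows from $m'''\ge0$.
\item Strict convexity gives $km(b^{R})<b^{R}km'(b^{R})$.
\item Lemma \ref{Lemma:-eta-gamma} gives $b^{R}\le\eta$, hence $1-km'(b^{R})\ge0$.
\end{itemize}
Together these yield $\Psi^{v}(b^{R})=\psi^{v}(0,b^{R})>[1-km'(b^{R})][b^{R}-J(b^{R})]\ge0$. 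This is exactly the paper's argument. Your fallback "deviation" route is only a sketch and does not supply it.

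Your first case also silently includes $\eta<\gamma$ with $\tau^{v}\le\underline{\tau}$. There $b^{R}\in(\eta,\gamma)$ lies in the Lemma \ref{lem:p-tau-case-2} region, so the fixed-$b^{R}$ solution has $\alpha^{b^{R}}(b^{R})=b^{R}-\eta>0$ and a positive multiplier, and $\Psi^{v}(b^{R})\ne\psi^{v}(0,b^{R})$. It is cleaner to split on $\eta$ versus $\gamma$, as the paper does. If $\eta<\gamma$, the Lemma \ref{lem:p-tau-case-2} formula gives $\Psi^{v}(\eta)=\eta-km(\eta)>0$. Hence $\tau^{v}<\eta<b^{R}$ whichever lemma applies at $\tau^{v}$. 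This is the same fact you already invoke in your second case.

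The rest of your plan is sound, and it is more explicit than the paper about the $\alpha^{FB}>0$ requirement. Specifically, $b^{R}=d^{R}(b^{R})>\delta^{FB}$ holds, and $\beta^{\tau^{v}}>\delta^{FB}$ follows from positivity of the adjusted inverse hazard rate at $\beta^{\tau^{v}}$.

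Finally, drop the restriction to $c>0$ and the remark that the claim can fail at $c=0$. At $c=0$ we have $\delta^{FB}=0$, so $\alpha^{FB}(\theta)=\theta>0$ for every served type, since $\tau^{v}>0$. That makes the claim easiest at $c=0$, and the paper invokes this corollary precisely at $c=0$ in the proof of Proposition \ref{Proposition:Vertical-CS}.
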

Since $\tau^{v}>0$, the seller excludes some low types. At the same
time, Corollary \ref{Corollary:Basic} shows that the basic product
is sold to a positive measure of higher types whose efficient allocation
is nonbasic. Thus the product best suited to the lowest types is instead
consumed by higher types, while those lower types are excluded. Such
crowd-out does not arise in Section \ref{Section:PureHorizontal}:
there, the buyers assigned to the basic product would receive it under
the first best as well. Here, correlation between product fit and
willingness to pay leads the seller to allocate the basic product
to buyers who are less well matched to it but more valuable to serve.

\subsection{Comparative Statics and the Effects of Vertical Heterogeneity}
\begin{prop}
\label{Proposition:Vertical-CS}Suppose that $v\left(\theta\right)=\theta$.
\begin{enumerate}
\item $\tau^{v}$ is strictly decreasing in $k$. If $c=0$, the seller's
profit is strictly increasing in $k$.
\item $\tau^{v}$ is weakly decreasing in $c$, strictly so if and only
if a positive measure of included types has binding IR. The seller's
profit is strictly decreasing in $c$.
\end{enumerate}
\end{prop}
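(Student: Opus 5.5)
Throughout, write $\Pi^{v}(\tau^{v};k,c)$ for the seller's optimal profit. The plan rests on Lemma \ref{Lemma:Pi-v(v)-quasiconcave}: $\tau^{v}$ solves $\Psi^{v}(\tau^{v})=0$ and $\Psi^{v}$ crosses zero once. Hence, by the implicit function theorem (or a monotone comparative statics argument at points of non-differentiability), the sign of $\partial\tau^{v}/\partial x$ equals the sign of $-\partial\Psi^{v}(\tau;x)/\partial x$ evaluated at $\tau=\tau^{v}$, for $x\in\{k,c\}$. Here $\Psi^{v}(\tau)=\tilde{\psi}^{v}(\alpha^{\tau}(\tau),\tau;\Lambda^{\tau}(\tau))$.

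\textbf{Envelope step.} Since $\alpha^{\tau}(\tau)$ pointwise maximizes the adjusted virtual surplus given $\Lambda^{\tau}(\tau)$, the indirect effect through $\alpha^{\tau}(\tau)$ vanishes. The effect of $\Lambda^{\tau}(\tau)$ must be tracked separately, since it depends on $k$ and $c$ through $\gamma$ and $\phi^{\tau}$. I would split into the cases of Corollary \ref{Corollary:V-Fixed-tau}, using the second remark after Proposition \ref{Proposition:Mechanism-V}, which places $\tau^{v}$ outside $[\eta,\gamma)$.

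\textbf{Case $\Lambda^{\tau}=0$.} Let $y=\alpha^{\tau}(\tau)$ and $H=(1-F(\tau))/f(\tau)$. Then
\[
\partial_{k}\Psi^{v}=-M(y,\tau)+M_{2}(y,\tau)H,
\qquad
\partial_{c}\Psi^{v}=-y.
\]
For $c$: if $y=0$ the effect is zero; otherwise $\Psi^{v}$ falls in $c$, so $\tau^{v}$ rises. This contradicts ``decreasing''. So I would recheck the sign convention: $\Pi^{v\prime}=-f\Psi^{v}$ means raising $\Psi^{v}$ lowers $\tau^{v}$, and $\Psi^{v}$ falling therefore raises $\tau^{v}$.

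This suggests the claimed monotonicity in $c$ is driven by the multiplier. When IR is slack near the cutoff, the marginal type is pooled at $y=0$ (indeed $\tau^{v}<\eta<b^{R}$ in the relevant region), so $c$ has no direct effect and $\tau^{v}$ is locally constant, matching the ``weakly, strictly iff binding IR'' clause. For $k$ with $y=0$ the derivative is $-m(\tau)+m'(\tau)H$. Here I would use $\Psi^{v}(\tau^{v})=0$, i.e.\ $\tau-km(\tau)=(1-km'(\tau))H$, together with convexity $m'(\tau)\tau\ge m(\tau)$, to sign it as positive. That gives $\tau^{v}$ decreasing in $k$.

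\textbf{Binding case.} Here $\Lambda^{\tau}(\tau)=L^{0}(\phi^{\tau})$, so the effective hazard is $\big[(1-F(\tau))-(1-F(\phi^{\tau}))\big]/f(\tau)+(1-F(\gamma))f(\phi^{\tau})/\big(f(\gamma)f(\tau)\big)$. I need the comparative statics of $\gamma$ and $\phi^{\tau}$. Lemma \ref{Lemma:-eta-gamma} says $\gamma-\eta$ is increasing in $k$ and $c$; directly from (\ref{eq:gamma}) and (\ref{eq:delta-R}), $\gamma$ increases in $c$. Log-concavity of $1-F$ makes $(1-F)/f$ decreasing, so a higher $\gamma$ lowers the $(1-F(\gamma))/f(\gamma)$ factor. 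The term $1-km'(y)$ multiplying the hazard is positive at a pooled marginal type below $\eta$. Hence a lower effective hazard raises $\Psi^{v}$, so $\tau^{v}$ falls in $c$ strictly. The movement of $\phi^{\tau}$, defined by $U(\phi^{\tau};\tau)=0$, must be handled via implicit differentiation of $U$. \textbf{This is the main obstacle}, and I would attack it by showing $U$ is monotone in $\phi$ (as in the proof of Lemma \ref{lem:p-tau-case-3}) and signing $\partial U/\partial c$ and $\partial U/\partial k$.

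\textbf{Profits.} By the envelope theorem applied to the Lagrangian at the saddle point $(\alpha^{\tau^{v}},\Lambda^{\tau^{v}})$,
\[
\frac{d\Pi}{dc}=-\int_{\tau^{v}}^{1}\alpha^{\tau^{v}}f\,d\theta<0,
\]
which is strictly negative since $\alpha^{\tau^{v}}>0$ on a positive-measure top set.

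For $k$ with $c=0$: the first best is $\alpha^{FB}(\theta)=\theta$. Moreover, the profit under any fixed feasible mechanism, after re-pricing, weakly rises with $k$ once allocations below the ideal are kept, because a larger $k$ deters mimicry by higher types. To formalize this, I would take the optimal mechanism at $k$ and show that at $k'>k$ the schedule $u$ obtained by the same allocation, with slope $1-k'M_{2}$, can be replaced by the pointwise maximum of zero and a feasible schedule that is lower. The simpler route is $d\Pi/dk=\int\partial_{k}\tilde{\psi}^{v}f\,d\theta=\int\big[-M+M_{2}(1-F-\Lambda)/f\big]f\,d\theta$. I would sign this using integration by parts, rewriting it as $\int[u_{k}\text{-reduction}]$, and rely on $c=0$ to ensure mismatch is created purely for screening, so that the rent-reduction term dominates the surplus loss.
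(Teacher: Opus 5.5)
The parts you complete are sound and match the paper. The marginal type is pooled at $y=0$. In the slack regime, your sign argument for $k$ is exactly the paper's Claim \ref{clm:rent}. The profit-in-$c$ step is also fine; the paper uses the revealed-preference version: IC and IR do not involve $c$, so the $c_{2}$-optimal mechanism is feasible at $c_{1}$ and earns $(c_{2}-c_{1})\int\alpha_{2}f>0$ more there.

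\textbf{Binding regime.} The genuine gap is here. There $\Psi^{v}(\tau)=\psi^{v}(0,\tau)+[1-km'(\tau)]\ell/f(\tau)$ with $\ell=L^{0}(\phi^{\tau})$. Since $\psi^{v}(0,\cdot)$ does not involve $c$, the entire effect of $c$ on $\tau^{v}$, and the multiplier part of the effect of $k$, runs through $\ell$. You need two facts: $\ell$ strictly rises with $c$ and with $k$ (Claim \ref{clm:multiplier}), and the binding regime at the old cutoff persists (Claim \ref{clm:irmonotone}). You instead conclude that $\tau^{v}$ falls strictly in $c$ from the $\gamma$-channel alone, and defer $\phi^{\tau}$.

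Your proposed attack, signing $\partial U/\partial c$ at fixed $\phi$ to get $d\phi^{\tau}/dc$, cannot work in general. At fixed $\phi$, a higher $c$ raises $L^{0}(\phi)$, which lowers mismatch, but it also raises $c/k$ in the first-order condition, which raises mismatch. On the interior region, $d\delta/dc$ has the sign of $m''(\eta)f(s)-m''(\delta)f(\phi)$, which can be either. So $\phi^{\tau}$ can move in the direction that lowers $L^{0}(\phi^{\tau})$.

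You should sign $\ell$ directly. For $s<\phi$ the allocation depends on $\phi$ only through $\ell$. The integrand of $U$ vanishes at the upper limit, because $\delta(\phi;\ell)=\eta$. Hence $U$ is strictly increasing in $\ell$ (Claim \ref{clm:pointwise}(i)). At fixed $\ell$, $U$ is decreasing in $c$ and $k$, because $km'(\delta)$ rises pointwise (Claim \ref{clm:pointwise}(ii)--(iii), the $k$ part using log-concavity of $m'$). So $U=0$ forces $\ell$ up. The paper runs this discretely by contradiction: $\ell_{2}\le\ell_{1}$ forces $\phi_{2}>\phi_{1}$, and then $U(\phi_{2};\tau)<0$.

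Your ``locally constant'' conclusion in the slack case also misses regime switches. At $c_{1}$ the cutoff can be the zero $\tau_{0}$ of $\psi^{v}(0,\cdot)$, which is independent of $c$, with IR slack. Yet $\mathcal{P}^{v}(\tau_{0})$ can fall in the Lemma \ref{lem:p-tau-case-3} case at $c_{2}$. This is where the strict decrease and the ``if and only if'' come from.

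\textbf{Profit in $k$ at $c=0$.} The proposal contains no argument for this part. Your first idea fails as stated. Keeping the allocation fixed and raising $k$ lowers the slope $1-kM_{2}$ wherever mismatch is downward, so $u$ falls pointwise and IR can fail at interior types. Truncating $u$ at zero is not implementable by that allocation. The intuition that rent reduction outweighs a surplus loss is also not what drives the result.

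The paper holds rents fixed and raises surplus. Define $T$ by $k_{2}m'(T(z))=k_{1}m'(z)$ and set $\delta_{2}=T(\delta_{1})$.
\begin{itemize}
\item The envelope condition then reproduces the same $u$, hence IR and $u(\tau)=0$.
\item Log-concavity of $m'$ gives $T'\le1$, so $\theta-\delta_{2}(\theta)$ stays nondecreasing.
\item The inequality $mm''<(m')^{2}$, again from log-concavity of $m'$, gives $k_{2}m(\delta_{2})<k_{1}m(\delta_{1})$ whenever $\delta_{1}>0$, in particular on the positive-measure pooling set.
\end{itemize}
With $c=0$ the better-matched products are free, so profit strictly rises.

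Your envelope formula could instead be completed by substituting the first-order conditions. With $c=0$ and $\hat{J}:=(1-F-\Lambda)/f$, one has $\hat{J}=m'(\delta)/m''(\delta)$ at interior points and $\hat{J}\ge m'(\theta)/m''(\theta)$ at the corner. So $-m(\delta)+m'(\delta)\hat{J}\ge[(m')^{2}-mm'']/m''$, which is positive for $\delta>0$. You did not identify this inequality, and that route would still need a saddle-point envelope theorem for a problem whose constraint set varies with $k$.
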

Relative to Section \ref{Section:PureHorizontal}, a higher $k$ now
expands coverage and can raise profit, reversing the comparative static
there. A higher $k$ amplifies the rent effect of any given product
mismatch; what differs from Section \ref{Section:PureHorizontal}
is that mismatch here reduces the rent conceded to higher types, rather
than creating rent for lower types. Assigning a lower type a lower-indexed
product makes that contract a worse fit for higher types, so a larger
$k$ makes it a stronger deterrent, and the seller concedes less rent
to higher types. The seller therefore finds it profitable to serve
additional low types, so $\tau^{v}$ falls.

The same stronger screening role of mismatch also tends to raise profit,
with one caveat. Because mismatch screens more effectively, the seller
induces less of it while preserving the screening effect, assigning
better-matched---hence higher-indexed---products. When $c=0$, this
is costless, so profit rises with $k$. When $c>0$, however, the
better matched products are costlier to produce, which can outweigh
the rent savings. Appendix \ref{OA:Uniform-Profit} illustrates this
tradeoff with an example: when $F$ is the uniform distribution and
$m\left(z\right)=z^{2}/2$, profit rises with $k$ if $c<1/2$ but
falls if $c>1/2$.

The effect of $c$ works differently from $k$. Since the marginal
included type receives the basic product $y=0$, a higher $c$ has
no direct cost effect at the exclusion margin, unlike Section \ref{Section:PureHorizontal},
where the marginal type may receive a non-basic product. Instead,
its effect on coverage operates only through the IR-correction region.
When some included types have binding IR, the seller must improve
product fit relative to the relaxed allocation, and a higher $c$
makes these improvements more expensive. Expanding coverage by lowering
$\tau^{v}$ raises utility before the binding-IR region and reduces
the required correction. Hence $\tau^{v}$ weakly falls with $c$,
strictly so exactly when IR-correction is required. The profit effect
of $c$ is more straightforward: a higher $c$ raises the production
cost of the non-basic products in the optimal menu, reducing profit.

\section{Conclusion}

This paper develops a theory of monopoly screening for product lines
that are horizontally differentiated from the buyers' perspective
but ordered by production cost from the seller's. In such markets,
the seller cannot screen simply by degrading a commonly ranked quality,
and product mismatch then arises endogenously as a screening instrument.
Whether mismatch creates or reduces information rent depends on what
the seller must screen.

When buyers differ only in horizontal fit, mismatch creates rent,
so the seller induces less of it and assigns served buyers products
closer to their ideals than under the first best. When horizontal
fit is correlated with willingness to pay, mismatch instead reduces
rent by deterring higher-value buyers from mimicking lower-value ones,
so the seller induces more of it. Product mismatch thus worsens beyond
the first best, the basic product is sold to buyers whose efficient
product is more advanced while buyers better matched to it are excluded,
and stronger horizontal differentiation expands coverage and can even
raise the seller's profit. These outcomes are opposite to what happens
under pure horizontal differentiation.

The analysis also shows why screening with product mismatch is not
simply vertical screening with a different allocation variable. Because
mismatch disutility is type-specific, incentive compatibility need
not make utility monotone in type, and the types with binding participation
constraints need not be extreme types. The optimal allocation is therefore
shaped not only by incentive compatibility, but also by individual
rationality constraints throughout the included set, unlike in vertical
screening.

I conclude with two directions for further research. First, the model
imposes a one-dimensional type structure in which a buyer's ideal
product and willingness to pay are perfectly correlated. Allowing
these dimensions to be imperfectly correlated would clarify how the
strength of the correlation determines whether mismatch primarily
creates or reduces information rent. Second, the model studies a monopolist,
while many of the motivating applications feature competing sellers
with overlapping product lines. Extending the analysis to competition
would show how strategic interaction changes the use of product mismatch
as a screening instrument and its implications for coverage, product-line
design, and profit.

\bibliographystyle{econ}
\bibliography{ref}

\appendix

\section{\protect\label{Appendix:Proof}Proofs for Main Results}

This appendix gives the proofs of the results related to the characterization
of the optimal mechanisms. The proofs for the other results are collected
in appendix \ref{OA:Proofs-Auxiliary-Results}, and appendix \ref{OA:Supporting-Anaylsis}
provides supporting analysis.

\subsection{Proof of Lemma \ref{lem:h-IC}}
\begin{proof}
\emph{The inclusion interval is $\Theta_{I}=\left[0,\tau\right]$
for some $\tau\in\Theta_{I}$.} 

Suppose that $\mathcal{M}$ is an optimal mechanism and let $u\left(\theta\right)$
denote type $\theta$'s indirect utility. $u$ is continuous, since
$m$ is continuously differentiable on the compact interval $[0,1]$.
Suppose, for contradiction, that \emph{$\Theta_{I}$ }is not an interval.
This implies there exist $\theta_{1}<\theta_{2}$ with $(\theta_{1},\theta_{2})\cap\Theta_{I}=\emptyset$
and included types accumulating at both endpoints. Continuity of $u$
then gives $u(\theta_{2})=0$. Some included type $s>\theta_{2}$
must have $t(s)-c\alpha(s)\geq0$; otherwise, removing his contract
would raise profit and weakly relax remaining incentive constraints.
Combining $s$'s IR constraint with $u(\theta_{2})=0$, 
\begin{eqnarray*}
 & v_{0}-t(s)\geq km(|\alpha(s)-s|),\ \ \ \ \text{ and }\ \ \ \ v_{0}-t(s)\leq km(|\alpha(s)-\theta_{2}|),
\end{eqnarray*}
gives $|\alpha(s)-s|\leq|\alpha(s)-\theta_{2}|$, so $\alpha(s)\geq(s+\theta_{2})/2>\theta_{2}$.
Then $c\alpha(s)\leq t(s)\leq v_{0}$ implies $c\theta<c\alpha(s)\leq v_{0}$
for every $\theta\in(\theta_{1},\theta_{2})$.

Augment the menu by adding the contract $(\theta,v_{0})$ for each
$\theta\in(\theta_{1},\theta_{2})$. Each gap type obtains $0$ from
his own added contract, strictly negative utility from any other added
contract, and at most $0$ from the original menu (since $u=0$ on
the gap), so accepts his own added contract at zero utility. Types
in the original $\Theta_{I}$ obtain at most $v_{0}-v_{0}-km(|\theta-\theta'|)\leq0$
from any added contract and so will not deviate. Thus IC and IR are
preserved, and the added contracts contribute strictly positive aggregate
profit $\int_{\theta_{1}}^{\theta_{2}}(v_{0}-c\theta)\,f(\theta)\,d\theta>0,$
contradicting optimality. Hence $\Theta_{I}$ is an interval.

Let the lower bound of $\Theta_{I}$ be $\underline{\theta}$. If
$\underline{\theta}>0$, then repeating the argument above with $\theta_{1}=0$
and $\theta_{2}=\underline{\theta}$ leads to a contradiction. Therefore,
$\Theta_{I}$ must have lower bound $0$. 

\bigskip

\emph{Necessary and sufficient conditions for IC. }This follows from
arguments similar to those used in vertical screening. Details are
provided in appendix \ref{OA:Lemma-H-Cutoff}.
\end{proof}

\subsection{Proof of Lemmas \ref{lem:h-step-1} and \ref{lem:alpha-h}}
\begin{proof}
Write $R(\theta):=F(\theta)/f(\theta)$, with $R(0)=0$ and $R'>0$
by strict log-concavity of $F$.

\bigskip

\emph{Step 1: $\alpha^{h}$ is well-defined and nondecreasing.} On
$[0,\theta]$, $M(y,\theta)=m(\theta-y)$ and $M_{2}(y,\theta)=m'(\theta-y)$,
so
\begin{equation}
\frac{\partial\psi^{h}}{\partial y}=km'(\theta-y)+km''(\theta-y)\frac{F(\theta)}{f(\theta)}-c,\qquad\frac{\partial^{2}\psi^{h}}{\partial y^{2}}=-km''(\theta-y)-km'''(\theta-y)\frac{F(\theta)}{f(\theta)}.\label{eq:=00005Cpsi^h-derivative}
\end{equation}
For $y<\theta$, $m''(\theta-y)>0$ and $m'''(\theta-y)\geq0$ (Assumption
\ref{Assumption:m-2}), so $\partial^{2}\psi^{h}/\partial y^{2}<0$.
Thus $\psi^{h}(\cdot,\theta)$ is strictly concave on the compact
interval $[0,\theta]$, and $\alpha^{h}(\theta)=\arg\max_{y\in[0,\theta]}\psi^{h}(y,\theta)$
is uniquely attained. The maintained assumptions imply that $\psi^{h}$
is supermodular, which then implies that $\alpha^{h}$ is nondecreasing.
Details are provided in appendix \ref{OA:Lemma-H-Step-1}.

\bigskip

\emph{Step 2: Properties of $\alpha^{h}$ in Lemma }\ref{lem:alpha-h}.
By strict concavity (Step 1), $\alpha^{h}(\theta)$ is determined
by the sign of $\partial_{y}\psi^{h}$ at the two endpoints of $[0,\theta]$.
Let $A(\theta):=m'(\theta)+m''(\theta)R(\theta)$. At the lower corner
$y=0$, $\left.\partial_{y}\psi^{h}\right|_{y=0}=k\big[A(\theta)-c/k\big],$
so $\alpha^{h}(\theta)=0$ iff $A(\theta)\leq c/k$. Since $A'=m''+m'''R+m''R'>0$
on $(0,1]$ (Assumption \ref{Assumption:m-2} and $R'>0$), $A$ is
strictly increasing with $A(0)=0$ and $A(1)=m'(1)+m''(1)/f(1)>m'(1)>c/k$
by Assumption \ref{Assumption:FB-ProductLine}. Hence there is a unique
$\beta^{h}\in(0,1)$ with $A(\beta^{h})=c/k$; $\alpha^{h}=0$ on
$[0,\beta^{h}]$ and $\alpha^{h}>0$ on $(\beta^{h},1]$.

At the upper corner $y=\theta$, using $m'(0)=0$, $\left.\partial_{y}\psi^{h}\right|_{y=\theta}=k\big[m''(0)R(\theta)-c/k\big],$
so $\alpha^{h}(\theta)=\theta$ iff $m''(0)R(\theta)\geq c/k$. If
$m''(0)/f(1)<c/k$, then $m''(0)R(\theta)\leq m''(0)/f(1)<c/k$ for
all $\theta\leq1$, so the upper corner never binds and $\alpha^{h}(\theta)<\theta$
throughout $(\beta^{h},1]$. If $m''(0)/f(1)\geq c/k$, then since
$m''(0)R(\cdot)$ is continuous, strictly increasing, and ranges from
$0$ to $m''(0)/f(1)\geq c/k$, there is a unique $\theta^{*}\in(\beta^{h},1]$
solving $m''(0)R(\theta^{*})=c/k$, which is (\ref{eq:theta-*}).
To see $\theta^{*}>\beta^{h}$, subtract this equation from $A(\theta^{*})$:
\[
A(\theta^{*})-c/k=m'(\theta^{*})+\big[m''(\theta^{*})-m''(0)\big]R(\theta^{*})>0,
\]
using $m'(\theta^{*})>0$, $m''(\theta^{*})\geq m''(0)$ ($m'''\geq0$),
and $R(\theta^{*})>0$; thus $A(\theta^{*})>c/k=A(\beta^{h})$, and
monotonicity of $A$ gives $\theta^{*}>\beta^{h}$. For $\theta\geq\theta^{*}$
the upper corner binds, so $\alpha^{h}(\theta)=\theta$ and, since
$M_{2}(s,s)=0$ for all $s\geq\theta^{*}$, $t(\theta)=v_{0}-km(0)-0=v_{0}$.

The interior optimum $d^{h}(\theta):=\theta-\alpha^{h}(\theta)>0$
satisfies
\begin{equation}
m'\left(d^{h}\left(\theta\right)\right)+m''\left(d^{h}\left(\theta\right)\right)R\left(\theta\right)=c/k.\label{eq:delta-h}
\end{equation}
As $m''(d^{h})R(\theta)>0$, $m'(d^{h})<c/k=m'(\delta^{FB})$, so
$d^{h}<\delta^{FB}$ by strict monotonicity of $m'$. If $\alpha^{FB}(\theta)>0$
then $\alpha^{h}(\theta)=\theta-d^{h}>\theta-\delta^{FB}=\alpha^{FB}(\theta)$;
if $\alpha^{FB}(\theta)=0$ then $\alpha^{h}(\theta)>0=\alpha^{FB}(\theta)$.
In either case $\alpha^{FB}(\theta)<\alpha^{h}(\theta)<\theta$.

\bigskip

\emph{Step 3: }\textit{$\alpha^{h}$ is the unique solution to \ref{eq:Program-h}.}

\textit{(3a) Set up the Lagrangian.} Fix $\tau>0$ and let $u_{\tau}$
denote $u\left(\tau\right)$. Ignore the constraint that $u_{\tau}$
must be zero if $\tau<1$ and treat $u_{\tau}$ as an endogenous choice;
it will be verified later (in step 3c below) that $u_{\tau}$ is always
optimally chosen to be zero. Let 
\[
\tilde{\Pi}\left(\alpha,u_{\tau}\right)=\int_{0}^{\tau}\left[v_{0}-kM(\alpha(\theta),\theta)-c\alpha(\theta)-u(\theta)\right]f(\theta)\,d\theta;\ \ u(\theta)=u_{\tau}+\int_{\theta}^{\tau}kM_{2}(\alpha(s),s)\,ds.
\]
Let $\lambda$ be a nonnegative multiplier measure on the participation
constraints $u(\theta)\geq0$, and define the cumulative multiplier
$\Lambda(\theta):=\lambda([0,\theta]).$ The Lagrangian is 
\[
\int_{0}^{\tau}\left[v_{0}-kM(\alpha(\theta),\theta)-c\alpha(\theta)-u(\theta)\right]f(\theta)\,d\theta+\int_{0}^{\tau}u(\theta)\,\lambda(d\theta).
\]
Substituting $u$ and integrating by parts yields
\begin{equation}
\mathcal{L}(\alpha,u_{\tau};\Lambda)=\int_{0}^{\tau}\tilde{\psi}^{h}(\alpha(\theta),\theta,\Lambda(\theta))f(\theta)\,d\theta+u_{\tau}\left[\Lambda\left(\tau\right)-F\left(\tau\right)\right]\label{eq:Lagrangian-h}
\end{equation}
where
\[
\tilde{\psi}^{h}(y,\theta,L)=v_{0}-kM(y,\theta)-cy-kM_{2}(y,\theta)\frac{F(\theta)-L}{f(\theta)}.
\]
Consider the following cumulative multiplier: (\emph{case (a)}) if
$m''\left(0\right)/f\left(1\right)<c/k$, then $\Lambda\left(\theta\right)=0$
for all $\theta\in\left[0,\tau\right]$; (\emph{case (b)}) if $m''\left(0\right)/f\left(1\right)\ge c/k$,
then
\[
\Lambda\left(\theta\right)=\begin{cases}
0 & \text{ if }\theta\le\theta^{*}\\
F\left(\theta\right)-\frac{cf\left(\theta\right)}{km''\left(0\right)} & \text{ if }\theta>\theta^{*}
\end{cases},
\]
where $R(\theta^{*})=c/\left(km''\left(0\right)\right)$ from (\ref{eq:theta-*}). 

\medskip

\textit{(3b) $\alpha^{h}(\theta)$ uniquely maximizes $\tilde{\psi^{h}}(\cdot,\theta,\Lambda(\theta))$
over $Y$.} Fix $\theta$. Write the effective inverse hazard rate
$\rho(\theta):=\left[F(\theta)-\Lambda(\theta)\right]/f\left(\theta\right)$.
For the proposed multiplier, $\rho(\theta)=R(\theta)$ wherever $\Lambda=0$,
and $\rho(\theta)=R\left(\theta^{*}\right)$ wherever $\Lambda>0$.
In both cases (a) and (b),
\begin{equation}
\rho(\theta)\leq\frac{c}{km''\left(0\right)}\qquad\text{for all }\theta\in\Theta.\label{eq:rho-inequality}
\end{equation}
In case (a), $\rho(\theta)=R\left(\theta\right)\leq R(1)=1/f(1)<c/\left(km''\left(0\right)\right)$
since $m''(0)/f(1)<c/k$. In case (b), $\rho\left(\theta\right)=R\left(\theta^{*}\right)=c/\left(km''\left(0\right)\right)$
for $\theta>\theta^{*}$, and $\rho\left(\theta\right)=R\left(\theta\right)\le R\left(\theta^{*}\right)$
for $\theta\le\theta^{*}$. 

For case (a), $\Lambda\left(\theta\right)=0$, so \textit{\emph{$\tilde{\psi^{h}}(\cdot,\theta,\Lambda(\theta))=\psi^{h}\left(\cdot,\theta\right)$,
which is uniquely maximized by $\alpha^{h}\left(\theta\right)$ on
$\left[0,\theta\right]$.}} For case (b) with $\theta\le\theta^{*}$,
the previous statement applies. For $\theta>\theta^{*}$, \textit{\emph{using
the same algebra in (\ref{eq:=00005Cpsi^h-derivative}) but replacing
$F/f$ with $\rho\ge0$, $\tilde{\psi^{h}}(\cdot,\theta,\Lambda(\theta))$
is still strictly concave on $\left[0,\theta\right]$. Since $\partial_{y}\tilde{\psi^{h}}(\theta,\theta,\Lambda(\theta))=0$,
$\alpha^{h}\left(\theta\right)=\theta$ uniquely maximizes $\tilde{\psi^{h}}(\cdot,\theta,\Lambda(\theta))$
on $\left[0,\theta\right]$. It remains to show that $\tilde{\psi^{h}}(y,\theta,\Lambda(\theta))<\tilde{\psi^{h}}(\theta,\theta,\Lambda(\theta))$
for all $y\in(\theta,1]$. Consider $y=\theta+d$, with $d>0$. Here
}}$M(y,\theta)=m(d)$, $M_{2}(y,\theta)=-m'(d)$, so
\begin{equation}
\partial_{y}\tilde{\psi}^{h}(y,\theta,\Lambda)=-km'(d)-c+km''(d)\,\rho(\theta)\ \le\ -km'(d)-c+\frac{c}{m''(0)}m''(d),\label{eq:partial-tilde-psi}
\end{equation}
where inequality follows from (\ref{eq:rho-inequality}). By the fundamental
theorem of calculus,
\[
m''(d)-m''(0)=\int_{0}^{d}m'''(z)\,dz\leq\frac{m''(0)}{m'(1)}\int_{0}^{d}m''(s)\,ds=\frac{m''(0)}{m'(1)}\,m'(d).
\]
where the inequality follows from $m'''(z)\leq m''(0)m''(z)/m'\left(1\right)$
(Assumption \ref{Assumption:m-2}). Hence 
\[
\frac{c}{km''(0)}\big[m''(d)-m''(0)\big]\leq\frac{c}{km'(1)}\,m'(d)<m'(d),
\]
where the last inequality is Assumption \ref{Assumption:FB-ProductLine}
($c/k<m'(1)$). Multiplying by $k$ throughout, the last inequality
implies that $\partial_{y}\tilde{\psi}^{h}(y,\theta,\Lambda)$ in
(\ref{eq:partial-tilde-psi}) is negative. Therefore,\textit{\emph{
$\tilde{\psi^{h}}(y,\theta,\Lambda(\theta))<\tilde{\psi^{h}}(\theta,\theta,\Lambda(\theta))$
for all $y\in(\theta,1]$.}}

\medskip

\textit{(3c) $u_{\tau}=0$ is always optimal.} The coefficient on
$u_{\tau}$ in the Lagrangian in (\ref{eq:Lagrangian-h}) is $\Lambda(\tau)-F(\tau)$.
In case (a), it is $-F(\tau)<0$; in case (b), it is $-cf(\tau)/\left(km''\left(0\right)\right)<0$
(or $-F(\tau)<0$ if $\tau\leq\theta^{*}$). Since $\mathcal{L}$
is linear in $u_{\tau}$ with negative slope and $u_{\tau}\geq0$,
the maximizer is $u_{\tau}=0$.

\medskip

\textit{(3d) $\Lambda$ is a valid nonnegative measure.} In case (a),
$\Lambda\equiv0$. In case (b), $\Lambda=0$ on $[0,\theta^{*}]$
and $\Lambda(\theta)=F(\theta)-cf(\theta)/\left(km''(0)\right)$ on
$(\theta^{*},\tau]$. It is continuous at $\theta^{*}$, since $\Lambda(\theta^{*})=F(\theta^{*})-f(\theta^{*})R(\theta^{*})=0$.
On $(\theta^{*},\tau]$,
\[
\Lambda'(\theta)=f(\theta)\Big[1-\tfrac{c}{km''(0)}\tfrac{f'(\theta)}{f(\theta)}\Big]=f(\theta)\Big[1-R(\theta^{*})\tfrac{f'(\theta)}{f(\theta)}\Big]\geq0,
\]
because for $\theta\geq\theta^{*}$, $R(\theta^{*})\leq R(\theta)$,
so if $f'>0$, then $R(\theta^{*})f'/f\leq Ff'/f^{2}<1$ by strict
log-concavity of $F$; if $f'\le0$, then the inequality above is
trivial. Hence $\Lambda$ is nondecreasing, so $\lambda=d\Lambda\geq0$.

\medskip

\textit{(3e) IR and complementary slackness. }\textit{\emph{Let $u^{h}\left(\theta\right)=\int_{\theta}^{\tau}kM_{2}\left(\alpha^{h}\left(z\right),z\right)dz$
denote the indirect utility under $\alpha^{h}$. For $\theta>\theta^{*}$
in case (b),}} $\alpha^{h}\left(\theta\right)=\theta$ gives $u^{h\prime}\left(\theta\right)=-kM_{2}(\theta,\theta)=0$,
and with $u_{\tau}=0$, $u^{h}\left(\theta\right)=0$. Otherwise,
for all other $\theta$, $\alpha^{h}\left(\theta\right)<\theta$,
so $u^{h\prime}\left(\theta\right)=-km'(\theta-\alpha^{h}\left(\theta\right))<0$,
so $u^{h}\left(\theta\right)>0$ if $\theta\le\theta^{*}$ or $\theta<\tau$.
Therefore, IR holds. Furthermore, $\Lambda\left(\theta\right)>0$
only in case (b) and with $\theta>\theta^{*}$. $\int_{0}^{\theta}u^{h}\left(z\right)d\lambda\left(z\right)$
is thus zero for all $\theta$.

\medskip

\textit{(3f) $\alpha^{h}$ and $u_{\tau}=0$ uniquely solve \ref{eq:Program-h}}\textit{\emph{.
For all feasible $\alpha:\Theta\rightarrow Y$ and $u_{\tau}\ge0$,
\[
\tilde{\Pi}\left(\alpha,u_{\tau}\right)\le\mathcal{L}(\alpha,u_{\tau};\Lambda)\le\mathcal{L}(\alpha^{h},0;\Lambda)=\tilde{\Pi}\left(\alpha^{h},0\right).
\]
The first inequality is weak duality. The second inequality follows
from steps (3b) and (3c). The last equality follows from complementary
slackness. From steps (3b) and (3c), since $\alpha^{h}(\theta)$ is
the unique maximizer of $\tilde{\psi^{h}}(\cdot,\theta,\Lambda(\theta))$,
the second inequality is strict if $\alpha\ne\alpha^{h}$ or $u_{\tau}\ne0$.
This establishes solution uniqueness.}}
\end{proof}

\subsection{Proof of Lemma \ref{lem:h-step-2}}
\begin{proof}
$\Psi^{h}$ is continuous on $[0,1]$ because $\alpha^{h}$ is, with
$\Psi^{h}(0)=v_{0}>0$. By straightforward algebra, $\Psi^{h}$ is
strictly decreasing on $(0,1]$ under the maintained assumptions.
Details are provided in appendix \ref{OA:Lemma-H-Step-2}. Since $(\Pi^{h})'(\tau)=\Psi^{h}(\tau)f(\tau)$
and $f>0$, $(\Pi^{h})'$ changes sign at most once on $(0,1)$, from
positive to negative, so $\Pi^{h}$ is strictly quasiconcave on $[0,1]$.
If $\Psi^{h}(1)\geq0$, strict monotonicity gives $\Psi^{h}(\tau)>0$
for all $\tau\in(0,1)$, so $\Pi^{h}$ is strictly increasing and
$\tau^{h}=1$. If $\Psi^{h}(1)<0$, then, combined with $\Psi^{h}(0)=v_{0}>0$,
continuity and the intermediate value theorem give a solution to $\Psi^{h}(\tau)=0$,
and strict monotonicity gives its uniqueness; $\Pi^{h}$ is strictly
increasing below this solution and strictly decreasing above it, so
$\tau^{h}\in(0,1)$ is the unique maximizer.
\end{proof}

\subsection{Proof of Proposition \ref{Proposition:H-solution}}
\begin{proof}
The optimal mechanism follows from Lemma \ref{lem:h-step-1} and Lemma
\ref{lem:h-step-2}. It remains to prove that $\tau^{h}=1$ if $v_{0}\ge c$.
If $m''(0)/f(1)\geq c/k$, then $\theta^{*}\leq1$ exists and $\alpha^{h}(1)=1$,
so $\Psi^{h}(1)=v_{0}-c\ge0$ and thus $\tau^{h}=1$. If $m''(0)/f(1)<c/k$,
then $\alpha^{h}\left(\tau^{h}\right)<\tau^{h}$ and thus $\Psi^{h}\left(\tau^{h}\right)=\psi^{h}\left(\alpha^{h}\left(\tau^{h}\right),\tau^{h}\right)>\psi^{h}\left(\tau^{h},\tau^{h}\right)=v_{0}-c\tau^{h}\ge0$.
$\Psi^{h}\left(\tau^{h}\right)>0$ implies $\tau^{h}=1$. 
\end{proof}

\subsection{Proof of Lemma \ref{lem:v-CutoffMech}}
\begin{proof}
Let $\mathcal{M}=\{\Theta_{I};\alpha(\cdot),t(\cdot)\}$ be an optimal
mechanism. Suppose that there exist $\theta_{1}\in\Theta_{I}$ and
$\theta_{2}\in(\theta_{1},1]$ with $(\theta_{1},\theta_{2})\cap\Theta_{I}=\emptyset$,
where either $\theta_{2}\in\Theta_{I}$ (\emph{interior gap}) or $\theta_{2}=1$
(\emph{top gap}). I construct another feasible mechanism $\widetilde{\mathcal{M}}$
that improves the seller's profit. This is a contradiction, which
implies then that $\Theta_{I}=\left[\tau,1\right]$ for some $\tau$. 

Define $\delta^{0}:=(m')^{-1}(1/k)$, $a^{0}(\theta):=\max\{\theta-\delta^{0},0\}$,
$t^{0}(\theta):=\theta-kM(a^{0}(\theta),\theta)$, and $\pi^{0}(\theta):=t^{0}(\theta)-ca^{0}(\theta)$.
Define $\widetilde{\mathcal{M}}=\{\widetilde{\Theta}_{I};\widetilde{\alpha}(\cdot),\widetilde{t}(\cdot)\}$
by $\widetilde{\Theta}_{I}:=\Theta_{I}\cup(\theta_{1},\theta_{2})$,
$\widetilde{\alpha}=\alpha$ and $\widetilde{t}=t$ on $\Theta_{I}$,
and $\widetilde{\alpha}=a^{0}$, $\widetilde{t}=t^{0}$ on $(\theta_{1},\theta_{2})$.

\emph{$\theta_{1}\ge\delta^{0}$.} Types in $(\theta_{1},\theta_{2})$
are excluded under $\mathcal{M}$. For any $\varepsilon>0$ small
enough that $\theta_{1}+\varepsilon\in(\theta_{1},\theta_{2})$, IC
for the excluded type $\theta_{1}+\varepsilon$ requires $(\theta_{1}+\varepsilon)-kM(\alpha(\theta_{1}),\theta_{1}+\varepsilon)-t(\theta_{1})\leq0.$
Letting $\varepsilon\downarrow0$, the left-hand side converges to
$u(\theta_{1})\geq0$ (by IR), so $u(\theta_{1})=0$ and the right-derivative
at $\varepsilon=0$ is nonpositive: $1-kM_{2}(\alpha(\theta_{1}),\theta_{1})\leq0.$
Since $M_{2}(y,\theta)\leq0$ when $y\geq\theta$, this inequality
forces $\alpha(\theta_{1})<\theta_{1}$, in which case $M_{2}(\alpha(\theta_{1}),\theta_{1})=m'(\theta_{1}-\alpha(\theta_{1}))$.
The inequality above then becomes $m'(\theta_{1}-\alpha(\theta_{1}))\geq1/k$,
i.e., $\theta_{1}-\alpha(\theta_{1})\geq\delta^{0}$. Combined with
$\alpha(\theta_{1})\geq0$, this gives $\theta_{1}\geq\delta^{0}$.

\emph{$\widetilde{\mathcal{M}}$ satisfies IR. }On $\Theta_{I}$,
$\widetilde{u}(\theta)=u(\theta)\geq0$ by IR of $\mathcal{M}$. On
$(\theta_{1},\theta_{2})$, $\widetilde{u}(\theta)=0$.

\emph{$\widetilde{\mathcal{M}}$ satisfies IC. }I check separately
that (a) no type wants to mimic a contract in $\Theta_{I}$, and (b)
no type wants to mimic a contract in $(\theta_{1},\theta_{2})$. For
(a): for any type $\theta\in\Theta$ and any $\theta'\in\Theta_{I}$,
the contract $(\alpha(\theta'),t(\theta'))$ is unchanged, so IC of
$\mathcal{M}$ already ensures $\theta$ does not strictly prefer
it to her own contract under $\mathcal{M}$. If $\theta\in\Theta_{I}$,
her contract is unchanged and (a) follows directly. If $\theta\in(\theta_{1},\theta_{2})$,
her new utility is $0$, while the utility from mimicking any $\theta'\in\Theta_{I}$
is at most $0$ by IC of $\mathcal{M}$ (since IC of $\mathcal{M}$
requires excluded types to weakly prefer the outside option). If $\theta\notin\widetilde{\Theta}_{I}$,
her outside option is $0$, and IC of $\mathcal{M}$ applied at this
$\theta$ gives the bound. For (b): the contracts in $(\theta_{1},\theta_{2})$
are $(a^{0}(\theta'),t^{0}(\theta'))$ with $\theta'\geq\delta^{0}$
(since $\theta_{1}\ge\delta^{0}$ from above). For any $\theta'\geq\delta^{0}$,
the function $\theta\mapsto\theta-kM(a^{0}(\theta'),\theta)-t^{0}(\theta')$
is concave in $\theta$ with a unique maximum at $\theta=\theta'$,
where its value is zero. So no type $\theta$ strictly prefers the
contract of $\theta'$ to his own.

\emph{$\widetilde{\mathcal{M}}$ is more profitable than $\mathcal{M}$.
}The profit difference is $\int_{\theta_{1}}^{\theta_{2}}\pi^{0}(\theta)f(\theta)d\theta$.
For $\theta\in(0,\delta^{0}]$, $a^{0}(\theta)=0$ and $\pi^{0}(\theta)=\theta-km(\theta)$.
$\pi^{0}(0)=0$ and $\pi^{0'}(\theta)=1-km'(\theta)>0$ on $[0,\delta^{0})$
since $km'(\delta^{0})=1$ and $m'$ is strictly increasing. For $\theta\in[\delta^{0},1]$,
$a^{0}(\theta)=\theta-\delta^{0}$ and $\pi^{0}(\theta)=(1-c)\theta+c\delta^{0}-km(\delta^{0})$.
$\pi^{0'}(\delta^{0})=\delta^{0}-km(\delta^{0})>0$ by the previous
case. For $\theta>\delta^{0}$, $\pi^{0}$ is increasing in $\theta$
since $c<1$. Therefore, $\pi^{0}(\theta)>0$ for all $\theta\in(0,1]$.

\bigskip

\emph{Necessary and sufficient conditions for IC. }This follows from
arguments similar to those used in vertical screening. Details are
provided in appendix \ref{OA:Lemma-Vertical}.
\end{proof}

\subsection{Proof of Lemma \ref{lem:Relaxed-program}}
\begin{proof}
Fix $\theta$. I first characterize the pointwise maximizer of $\psi^{v}(\cdot,\theta)$
over $Y=[0,1]$. If $y>\theta$, then $M(y,\theta)=m(y-\theta)$ and
$M_{2}(y,\theta)=-m'(y-\theta)$. Hence $\partial_{y}\psi^{v}(y,\theta)=-km'(y-\theta)-c-km''(y-\theta)(1-F(\theta))/f(\theta)<0$.
Thus no pointwise maximizer can involve upward mismatch; it is enough
to maximize over $y\in[0,\theta]$. On this region, let $d=\theta-y\in[0,\theta]$
and $\hat{\psi}(d,\theta):=\psi^{v}(\theta-d,\theta)$. Then $\partial_{d}\hat{\psi}=c-km'(d)+km''(d)\cdot(1-F(\theta))/f(\theta)$
has the same sign as $\left(1-F\left(\theta\right)\right)/f\left(\theta\right)-G\left(d\right)$,
where $G\left(d\right):=[m'(d)-c/k]/m''(d).$ 
\begin{equation}
G'(d)=\frac{[m''(d)]^{2}-[m'(d)-c/k]m'''(d)}{[m''(d)]^{2}}.\label{eq:G'>0}
\end{equation}
$G'(d)>0$: If $m'(d)-c/k\le0$, $G'(d)>0$ because $m''>0$ and $m'''\ge0$.
If $m'(d)-c/k>0$, log-concavity of $m'$ implies $m'(d)m'''(d)<[m''(d)]^{2}$;
hence $[m'(d)-c/k]m'''(d)<[m''(d)]^{2}$. 

It follows that $\hat{\psi}(d,\theta)$ is single-peaked in $d$.
The unconstrained optimum is the unique $d$ satisfying $G(d)=\left(1-F\left(\theta\right)\right)/f\left(\theta\right)$,
equivalently, $d^{R}\left(\theta\right)$ in (\ref{eq:delta-R}).
If $d^{R}(\theta)\le\theta$, this unconstrained optimum is feasible
and the pointwise maximizer is $a^{R}(\theta)=\theta-d^{R}(\theta)$.
If $d^{R}(\theta)>\theta$, then the objective is increasing in $d$
throughout the feasible interval $[0,\theta]$, so the constrained
maximizer is $d=\theta$, or equivalently $a^{R}(\theta)=0$. This
gives the stated allocation rule. $b^{R}<1$ because at $\theta=1$,
(\ref{eq:delta-R}) gives $m'\left(d^{R}\left(1\right)\right)=c/k<m'\left(1\right)$
(Assumption \ref{Assumption:FB-ProductLine}), so $d^{R}\left(1\right)<1$. 

It remains to check that $a^{R}$ is monotonic. On $[\tau,b^{R}]$,
$a^{R}\left(\theta\right)=0$. On $(b^{R},1]$, $G(d^{R}\left(\theta\right))=\left(1-F\left(\theta\right)\right)/f\left(\theta\right)$,
where the RHS is decreasing in $\theta$ (Assumption \ref{Assumption:F}).
Since $G'>0$, $d^{R}$ is strictly decreasing on $(b^{R},1]$, implying
that $a^{R}(\theta)=\theta-d^{R}(\theta)$ is strictly increasing. 

Finally, compare $a^{R}$ with the first-best allocation. The first-best
mismatch $\delta^{FB}$ satisfies $m'(\delta^{FB})=c/k$. From (\ref{eq:delta-R}),
$m'(d^{R}(\theta))=c/k+m''(d^{R}(\theta))\cdot(1-F(\theta))/f(\theta)\geq c/k$,
with strict inequality if $\theta<1$. Strict convexity of $m$ gives
$d^{R}(\theta)\geq\delta^{FB}$, with strict inequality if $\theta<1$.
Hence for $\theta\in(b^{R},1)$, $a^{R}(\theta)=\theta-d^{R}(\theta)<\theta-\delta^{FB}\leq\alpha^{FB}(\theta)$;
at $\theta=1$, $a^{R}(1)=\alpha^{FB}(1)$. For $\theta\in[\tau,b^{R}]$,
$a^{R}(\theta)=0\leq\alpha^{FB}(\theta)$, with strict inequality
whenever $\alpha^{FB}(\theta)>0$.
\end{proof}

\subsection{Proof of Lemma \ref{lem:p-tau-case-1}}
\begin{proof}
Take $\Lambda=0$. Then $\tilde{\psi}^{v}\left(\cdot,\theta;0\right)=\psi^{v}\left(\cdot,\theta\right)$,
and Lemma \ref{lem:Relaxed-program} implies that $a^{R}$ is the
\emph{unique} pointwise maximizer of $\tilde{\psi}^{v}\left(\cdot,\theta;\Lambda\left(\theta\right)\right)$
at every $\theta$. The pair $\left(a^{R},\Lambda=0\right)$ satisfies
complementary slackness trivially. $a^{R}$ is nondecreasing (Lemma
\ref{lem:Relaxed-program}), so IC holds. It remains to verify $u^{R}\left(\theta;\tau\right)\geq0$
on $\left[\tau,1\right]$. The slope of $u^{R}\left(\cdot;\tau\right)$
is $1-km'\left(\delta^{R}\left(\theta\right)\right)$ where $\delta^{R}\left(\theta\right)=\theta$
on $\left[0,b^{R}\right]$ and $\delta^{R}\left(\theta\right)=d^{R}\left(\theta\right)$
on $\left[b^{R},1\right]$; hence $\delta^{R}$ peaks at $b^{R}$.

\emph{Case $\eta\geq\gamma$:} Then $\delta^{R}\left(\theta\right)\leq b^{R}\leq\eta$
on $\left[0,1\right]$, so the slope is non-negative throughout. Hence
$u^{R}\left(\theta;\tau\right)\geq u^{R}\left(\tau;\tau\right)=0$.

\emph{Case $\eta<\gamma$:} The slope is positive on $\left[0,\eta\right)$,
negative on $\left(\eta,\gamma\right)$, and positive on $\left(\gamma,1\right]$.
If $\tau<\gamma$, $u^{R}\left(\cdot;\tau\right)$ has its unique
local minimum on $\left[\tau,1\right]$ at $\gamma$. $\tau\notin\left(\underline{\tau},\gamma\right)$
has two cases. If $\tau\ge\gamma$, the slope is positive on $(\tau,1)$,
so $u^{R}\left(\theta;\tau\right)\geq u^{R}\left(\tau;\tau\right)=0$.
If $\tau\le\underline{\tau}$, then $u^{R}\left(\gamma;\tau\right)\ge0$.
As $\gamma$ is the only local minimum in $\left(\tau,1\right)$,
$u^{R}\left(\cdot;\tau\right)$ must be nonnegative on $\left[\tau,1\right]$.
\end{proof}

\subsection{Proof of Lemma \ref{lem:p-tau-case-2}}
\begin{proof}
Take the $\Lambda$ in (\ref{eq:Lambda-Case-2}).

\emph{$\Lambda$ is a valid cumulative multiplier. }Write $L^{0}\left(\theta\right)=\left(1-F\left(\theta\right)\right)H\left(\theta\right)$
with $H\left(\theta\right):=1-h\left(\theta\right)\left(1-F\left(\gamma\right)\right)/f\left(\gamma\right)$
and $h:=f/\left(1-F\right)$. By Assumption \ref{Assumption:F}, $h$
is nondecreasing, so $H$ is nonincreasing; since $H\left(\gamma\right)=0$,
$H\geq0$ on $\left[\tau,\gamma\right]$. Then $\left(L^{0}\right)'=-fH+\left(1-F\right)H'\leq0$
on $\left[\tau,\gamma\right]$, so $L^{0}$ is nonincreasing with
$L^{0}\left(\gamma\right)=0$. Thus $\Lambda$ is nonincreasing on
$\left[\tau,1\right]$ with $\Lambda\left(1\right)=0$.

\emph{$\alpha^{\tau}$ in (\ref{eq:y-tau-case2}) pointwise maximizes
$\tilde{\psi}^{v}\left(\cdot,\theta;\Lambda\left(\theta\right)\right)$}.
On $\left(\gamma,1\right]$, $\Lambda=0$ and $\tilde{\psi}^{v}=\psi^{v}$;
$a^{R}$ is the unique maximizer by Lemma \ref{lem:Relaxed-program}.
On $\left[\tau,\gamma\right]$, the FOC at $y=a^{0}\left(\theta\right)$
reduces to
\[
1-c-km''\left(\eta\right)\frac{1-F\left(\theta\right)-\Lambda\left(\theta\right)}{f\left(\theta\right)}=0.
\]
Equation (\ref{eq:delta-R}) evaluated at $\theta=\gamma$ gives $\left(1-F\left(\gamma\right)\right)/f\left(\gamma\right)=\left(1-c\right)/\left(km''\left(\eta\right)\right)$,
so the FOC is satisfied exactly when $\Lambda\left(\theta\right)=L^{0}\left(\theta\right)$.
The Lemma \ref{lem:Relaxed-program} argument, applied with $\left(1-F-L^{0}\right)/f$
in place of $\left(1-F\right)/f$, shows $\tilde{\psi}^{v}\left(\cdot,\theta;L^{0}\left(\theta\right)\right)$
is strictly quasiconcave in $y$, so $a^{0}\left(\theta\right)$ is
the unique maximizer on $\left[0,1\right]$ provided it lies in $\left[0,1\right]$.
The latter is immediate: $\tau\geq\eta$ gives $a^{0}\left(\tau\right)\geq0$,
and $\gamma\leq1$ gives $a^{0}\left(\gamma\right)=\gamma-\eta<1$.

\emph{(IC) $\alpha^{\tau}$ is nondecreasing}. $a^{0}\left(\theta\right)=\theta-\eta$
is strictly increasing on $\left[\tau,\gamma\right]$; $a^{R}$ is
nondecreasing on $\left[\gamma,1\right]$ (Lemma \ref{lem:Relaxed-program});
and $a^{0}\left(\gamma\right)=\gamma-\eta=\gamma-d^{R}\left(\gamma\right)=a^{R}\left(\gamma\right)$.

\emph{IR:} On $\left[\tau,\gamma\right]$, the envelope condition
(\ref{eq:u'-v}) gives $u^{\tau\prime}=1-km'\left(\eta\right)=0$,
so $u^{\tau}\equiv0$ on $\left[\tau,\gamma\right]$. On $\left[\gamma,1\right]$,
$d^{R}\left(\theta\right)<d^{R}\left(\gamma\right)=\eta$ for $\theta>\gamma$,
so $u^{\tau\prime}>0$ and $u^{\tau}\geq u^{\tau}\left(\gamma\right)=0$.

\emph{Complementary slackness:} $\Lambda$ is nonincreasing on $\left[\tau,\gamma\right]$
where $u^{\tau}=0$, and is constant ($=0$) on $\left(\gamma,1\right]$
where $u^{\tau}>0$.
\end{proof}

\subsection{Proof of Lemma \ref{lem:p-tau-case-3}}
\begin{proof}
\emph{For $\phi\in\left(\tau,\gamma\right)$, $\tilde{\psi}^{v}\left(\cdot,\theta;L^{0}\left(\phi\right)\right)$
is strictly quasiconcave on $Y$}. $L^{0}\left(\phi\right)\in\left(0,L^{0}\left(\tau\right)\right]$
by the monotonicity of $L^{0}$ established in the proof of Lemma
\ref{lem:p-tau-case-2}. The Lemma \ref{lem:Relaxed-program} argument
with $\left(1-F-L^{0}\left(\phi\right)\right)/f$ replacing $\left(1-F\right)/f$
shows $\tilde{\psi}^{v}\left(\cdot,\theta;L^{0}\left(\phi\right)\right)$
is strictly quasiconcave, so $a\left(\theta;\phi\right)$ is the unique
maximizer of $\tilde{\psi}^{v}\left(\cdot,\theta;L^{0}\left(\phi\right)\right)$.
Recall that $G\left(\delta\right):=[m'(\delta)-c/k]/m''(\delta)$,
and $G$ is strictly increasing by (\ref{eq:G'>0}). By Assumption
\ref{Assumption:F}, $(1-F-L^{0}(\phi))/f$ is nonincreasing on $[\tau,\phi]$
(see appendix \ref{OA:Adjusted-hazardrate}). Define $\beta(\phi):=\sup\{\theta:a(\theta;\phi)=0\}$.
$G(\theta)-(1-F(\theta)-L^{0}(\phi))/f(\theta)$ is then strictly
increasing in $\theta$, so $\{\theta:a(\theta;\phi)=0\}=[\tau,\beta(\phi)]$
and $a(\cdot;\phi)$ is nondecreasing.

\emph{Existence of $\beta^{\tau}$ and $\phi^{\tau}$}: Reparameterizing
by $\delta=\theta-y$, the maximizer of $\tilde{\psi}^{v}\left(\cdot,\theta;L^{0}\left(\phi\right)\right)$
satisfies the corner $a\left(\theta;\phi\right)=0$ if $G\left(\theta\right)\leq\left(1-F\left(\theta\right)-L^{0}\left(\phi\right)\right)/f\left(\theta\right)$
and the interior FOC 
\begin{equation}
G\left(\theta-a\left(\theta;\phi\right)\right)=\left(1-F\left(\theta\right)-L^{0}\left(\phi\right)\right)/f\left(\theta\right)\label{eq:a(theta,phi)-FOC}
\end{equation}
 otherwise. Define $\Phi\left(\theta\right):=1-F\left(\theta\right)-G\left(\theta\right)f\left(\theta\right)$.
By the corner condition, $\beta\left(\phi\right)$ is the largest
$\theta$ with $\Phi\left(\theta\right)\geq L^{0}\left(\phi\right)$.
On $\left[0,b^{R}\right]$, $\Phi\geq0$ since $G\left(\theta\right)\leq\left(1-F\left(\theta\right)\right)/f\left(\theta\right)$
in the pooled region (Lemma \ref{lem:Relaxed-program}), with $\Phi\left(b^{R}\right)=0$.
Continuity of $\Phi$ and the intermediate value theorem then give
$\beta\left(\phi\right)<b^{R}$ for every $\phi\in\left(\tau,\gamma\right)$. 

Define $U\left(\phi;\tau\right)$ as in the main text. Whenever $\phi>\beta(\phi)$,
$a(\phi;\phi)=a^{0}(\phi)=\phi-\eta$, so $\phi-a(\phi;\phi)=\eta$,
$m'(\phi-a(\phi;\phi))=1/k$, and the boundary term in $\partial U/\partial\phi$
vanishes. For $\phi\leq\beta\left(\phi\right)$, $a\left(s;\phi\right)=0$
on $\left[\tau,\phi\right]$, so $U\left(\phi;\tau\right)=\int_{\tau}^{\phi}\left[1-km'\left(s\right)\right]ds$,
which is independent of $\phi$ beyond its upper limit and satisfies
$\partial U\left(\phi;\tau\right)/\partial\phi=1-km'\left(\phi\right)>0$
for $\phi\in\left(\tau,\eta\right)$. For $\phi\in\left(\eta,\gamma\right)$,
$\phi>\beta\left(\phi\right)$ and the boundary term vanishes. Differentiating
under the integral and using $\partial a/\partial\phi\leq0$ (since
$L^{0}$ is nonincreasing and $a$ is nondecreasing in $L^{0}$ by
the FOC), the integrand $km''\left(s-a\right)\partial a/\partial\phi$
is negative on $\left[\beta\left(\phi\right),\phi\right]$, with strict
inequality on a set of positive measure. Hence $\partial U\left(\phi;\tau\right)/\partial\phi<0$
for $\phi\in\left(\eta,\gamma\right)$. Thus $U\left(\cdot;\tau\right)$
is strictly increasing on $\left(\tau,\eta\right)$ and strictly decreasing
on $\left(\eta,\gamma\right)$, with $U\left(\tau;\tau\right)=0$,
$U\left(\eta;\tau\right)>0$, and $U\left(\gamma;\tau\right)=u^{R}\left(\gamma;\tau\right)<0$
(the last equality holds because $L^{0}\left(\gamma\right)=0$, so
$a\left(\cdot;\gamma\right)=a^{R}$; the strict negativity is by $\tau>\underline{\tau}$).
By the intermediate value theorem, there is a unique $\phi^{\tau}\in\left(\eta,\gamma\right)$
with $U\left(\phi^{\tau};\tau\right)=0$. Set $\beta^{\tau}:=\beta\left(\phi^{\tau}\right)$.
Since $L^{0}\left(\phi^{\tau}\right)>0=\Phi\left(b^{R}\right)$, $\beta^{\tau}<b^{R}$.
Since $\tau<\eta$, $\Phi(\tau)>L^{0}(\tau)$ (equivalently $G(\tau)<G(\eta)$);
since $\phi^{\tau}>\tau$ and $L^{0}$ is strictly decreasing, $L^{0}(\phi^{\tau})<L^{0}(\tau)<\Phi(\tau)$.
Hence $\beta^{\tau}=\beta(\phi^{\tau})>\tau$, and $\beta^{\tau}\in(\tau,b^{R})$.

\emph{(IC) $\alpha^{\tau}$ is nondecreasing}. On $\left[\tau,\beta^{\tau}\right]$,
$a\left(\cdot;\phi^{\tau}\right)$ is the constant zero. On $\left(\beta^{\tau},\phi^{\tau}\right)$,
since $G$ is increasing, $a\left(\cdot;\phi^{\tau}\right)$ is strictly
increasing by the implicit function theorem applied to the interior
FOC in (\ref{eq:a(theta,phi)-FOC}). Equation (\ref{eq:a(theta,phi)-FOC})
applied at $\theta=\beta^{\tau}$ gives $a\left(\beta^{\tau};\phi^{\tau}\right)=0$
by definition, so $a\left(\cdot;\phi^{\tau}\right)$ is nondecreasing
on $[\tau,\phi^{\tau})$. Equation (\ref{eq:a(theta,phi)-FOC}) applied
at $\theta=\phi^{\tau}$ gives $a\left(\phi^{\tau};\phi^{\tau}\right)=\phi^{\tau}-\eta=a^{0}\left(\phi^{\tau}\right).$
By IC argument in the proof of Lemma \ref{lem:p-tau-case-2}, $\alpha^{\tau}$
is also nondecreasing on $\left[\phi^{\tau},1\right]$.

\emph{IR.} On $\left[\tau,\phi^{\tau}\right]$, $u^{\tau}\left(\theta\right)=\int_{\tau}^{\theta}\left[1-km'\left(s-a\left(s;\phi^{\tau}\right)\right)\right]ds$.
The integrand is positive for $s<\eta$. For $s\in\left(\eta,\phi^{\tau}\right)$,
the mismatch is at least $\eta$, so the integrand is nonpositive.
Hence $u^{\tau}$first increases and then decreases on $\left[\tau,\phi^{\tau}\right]$.
Since $u^{\tau}\left(\tau\right)=0$ and $u^{\tau}\left(\phi^{\tau}\right)=U\left(\phi^{\tau};\tau\right)=0$,
it follows that $u^{\tau}\left(\theta\right)\ge0$ on $\left[\tau,\phi^{\tau}\right]$.
On $\left[\phi^{\tau},\gamma\right]$, $u^{\tau\prime}=0$ and $u^{\tau}\left(\phi^{\tau}\right)=0$,
so $u^{\tau}=0$. On $\left[\gamma,1\right]$, $u^{\tau\prime}\geq0$
and $u^{\tau}\left(\gamma\right)=0$, so $u^{\tau}\geq0$.

Let $\Lambda^{\tau}\left(\theta\right)=L\left(\theta;\phi^{\tau}\right)$
in (\ref{eq:L(theta;phi)}).

\emph{$\alpha^{\tau}\left(\theta\right)$ in (\ref{eq:alpha-tau-case-3})
uniquely pointwise maximizes $\tilde{\psi}^{v}\left(\cdot,\theta,\Lambda^{\tau}\left(\theta\right)\right)$
on $\left[\tau,1\right]$: }On $\left[\tau,\phi^{\tau}\right]$, $\alpha^{\tau}\left(\theta\right)=a\left(\theta;\phi^{\tau}\right)$,
which uniquely maximizes $\tilde{\psi}^{v}\left(\cdot,\theta,\Lambda^{\tau}\left(\theta\right)\right)=\tilde{\psi}^{v}\left(\cdot,\theta,L^{0}\left(\phi^{\tau}\right)\right)$
by definition. On $\left(\phi^{\tau},\gamma\right)$, $\tilde{\psi}^{v}\left(y,\theta,\Lambda^{\tau}\left(\theta\right)\right)=\tilde{\psi}^{v}\left(y,\theta,L^{0}\left(\theta\right)\right)$
and $\alpha^{\tau}\left(\theta\right)=a^{0}\left(\theta\right)$.
As shown in the proof of Lemma \ref{lem:p-tau-case-2}, $a^{0}\left(\theta\right)$
uniquely maximizes $\tilde{\psi}^{v}\left(\cdot,\theta,L^{0}\left(\theta\right)\right)$.
On $\left[\gamma,1\right]$, $\tilde{\psi}^{v}\left(y,\theta,\Lambda^{\tau}\left(\theta\right)\right)=\psi^{v}\left(y,\theta\right)$
and $\alpha^{\tau}\left(\theta\right)=a^{R}\left(\theta\right)$.
As shown in the proof of Lemma \ref{lem:Relaxed-program}, $a^{R}\left(\theta\right)$
uniquely maximizes $\psi^{v}\left(\cdot,\theta\right)$.

\emph{$\Lambda^{\tau}$ is a valid cumulative measure: }On $\left[\tau,\phi^{\tau}\right]$
and $\left[\gamma,1\right]$, $\Lambda^{\tau}$ is constant. On $\left[\phi^{\tau},\gamma\right]$,
$\Lambda^{\tau}=L^{0}$, which is nonincreasing. At $\phi^{\tau}$
and $\gamma$, $\Lambda^{\tau}$ is continuous (with values $L^{0}\left(\phi^{\tau}\right)$
and $0$).

\emph{Complementary slackness:} On $\left(\tau,\phi^{\tau}\right)$,
$u^{\tau}>0$ and $\Lambda^{\tau}$ is constant. On $\left[\phi^{\tau},\gamma\right]$,
$u^{\tau}=0$, where $\Lambda^{\tau}$ may vary. On $\left(\gamma,1\right]$,
$u^{\tau}>0$ and $\Lambda^{\tau}=0$ is constant.
\end{proof}

\subsection{Proof of Corollary \ref{Corollary:V-Fixed-tau}}
\begin{proof}
Existence and optimality follow from Lemmas \ref{lem:p-tau-case-1}--\ref{lem:p-tau-case-3}.
It remains to prove uniqueness and consider the case $\tau=0$. Fix
$\tau>0$. Let $\Pi\left(\alpha,u\right):=\int_{\tau}^{1}\left[\theta-kM\left(\alpha\left(\theta\right),\theta\right)-c\alpha\left(\theta\right)-u\left(\theta\right)\right]f\left(\theta\right)d\theta$.
If $\left(\tilde{\alpha},\tilde{u}\right)$ is feasible for \ref{eq:Program-v},
then $\Pi\left(\tilde{\alpha},\tilde{u}\right)\le\mathcal{L}\left(\tilde{\alpha},\Lambda^{\tau}\right)\le\mathcal{L}\left(\alpha^{\tau},\Lambda^{\tau}\right)=\Pi^{v}\left(\tau\right),$
where the last equality follows from Lemmas \ref{lem:p-tau-case-1}--\ref{lem:p-tau-case-3}.
If $\left(\tilde{\alpha},\tilde{u}\right)$ is also a solution, then
$\Pi\left(\tilde{\alpha},\tilde{u}\right)=\Pi^{v}\left(\tau\right)$,
which means that both inequalities above must hold with equality.
In particular, $\mathcal{L}\left(\tilde{\alpha},\Lambda^{\tau}\right)=\mathcal{L}\left(\alpha^{\tau},\Lambda^{\tau}\right)$.
The proofs of Lemmas \ref{lem:p-tau-case-1}--\ref{lem:p-tau-case-3}
establish that, for every $\theta\in[\tau,1]$, $\alpha^{\tau}(\theta)$
is the unique pointwise maximizer of $\tilde{\psi}^{v}(\cdot,\theta;\Lambda^{\tau}(\theta))$
on $[0,1]$, so $\tilde{\alpha}$ must equal $\alpha^{\tau}$ almost
everywhere. By the envelope condition in (\ref{eq:u'-v}) and $u\left(\tau\right)=0$,
$\tilde{u}$ must then also equal $u^{\tau}$ almost everywhere. Therefore,
the solution to \ref{eq:Program-v} is unique and equals $\left(\alpha^{\tau},u^{\tau}\right)$.

Finally, consider $\tau=0$. Repeating the derivation of (\ref{eq:Lagrangian})
without imposing $u(\tau)=0$ yields, for any feasible $(\alpha,u)$,
$\Pi(\alpha,u)\leq u(0)\bigl[\Lambda^{0}(0)-1\bigr]+\mathcal{L}(\alpha,\Lambda^{0}),$
with $\Lambda^{0}$ being $\Lambda^{\tau}$ in (\ref{eq:Lambda-tau})
set at $\tau=0$. I claim that $\Lambda^{0}\left(0\right)<1$. If
$\Lambda^{0}\left(0\right)=0$, this is immediate. Otherwise, $\Lambda^{0}=L\left(\cdot,\phi^{0}\right)$
for some $\phi^{0}\in\left(0,\gamma\right)$, so $\Lambda^{0}(0)\leq L^{0}(0)=1-f(0)(1-F(\gamma))/f(\gamma)<1$.
Let $\alpha^{0}$ be the allocation $\alpha^{\tau}$ defined in the
corollary with $\tau$ set at $0$, and $u^{0}\left(\theta\right)=\int_{0}^{\theta}1-kM_{2}\left(\alpha^{0}\left(s\right),s\right)ds$---i.e.,
$u^{0}\left(0\right)$ is set at $0$. For any feasible $\left(\alpha,u\right)$
with $u\left(0\right)>0$, $\Pi(\alpha,u)\leq u(0)\bigl[\Lambda^{0}(0)-1\bigr]+\mathcal{L}(\alpha,\Lambda^{0})<\mathcal{L}(\alpha,\Lambda^{0})\le\mathcal{L}(\alpha^{0},\Lambda^{0})=\Pi(\alpha^{0},u^{0}),$
where the strict inequality follows from $\Lambda^{0}(0)<1$ and $u\left(0\right)>0$.
Hence, any $(\alpha,u)$ with $u(0)>0$ is strictly suboptimal. The
$\tau>0$ argument above then applies for $\tau=0$ as well. 
\end{proof}

\subsection{Proof of Lemma \ref{Lemma:Pi-v(v)-quasiconcave}}
\begin{proof}
Since $f>0$ on $\Theta$ and $\Pi^{v'}\left(\tau\right)=-f\left(\tau\right)\Psi^{v}\left(\tau\right)$,
$\Pi^{v}$ is strictly quasiconcave if $\Psi^{v}$ is strictly increasing.
This follows from differentiating $\Psi^{v}$ while taking into account
the three possible cases in Lemmas \ref{lem:p-tau-case-1}-\ref{lem:p-tau-case-3}.
Details are provided in appendix \ref{OA:Quasiconcavity-Pi-v}. It
remains to show that $\Psi^{v}(0)<0<\Psi^{v}(1)$. First,\emph{ }$\Psi^{v}(0)=-1/f(0)<0$.
Next, $\alpha^{\tau}\left(1\right)=\alpha^{FB}\left(1\right)$ for
all $\tau$. Therefore, $\Psi^{v}(1)=(1-c)+c\delta^{FB}-km(\delta^{FB})>1-c>0$
(strict convexity with $m(0)=m'(0)=0$ gives $km(\delta^{FB})<km'(\delta^{FB})\delta^{FB}=c\delta^{FB}$). 
\end{proof}

\subsection{Proof of Proposition \ref{Proposition:Mechanism-V} and Corollary
\ref{Corollary:Basic}}
\begin{proof}
By Lemma \ref{Lemma:Pi-v(v)-quasiconcave}, $\Pi^{v}$ has an interior
global maximizer characterized by the FOC, which is $\Psi^{v}\left(\tau^{v}\right)=0$.
The allocation and utility schedules follow from Corollary \ref{Corollary:V-Fixed-tau}. 

\bigskip

\emph{Case $\eta\geq\gamma$.} Corollary \ref{Corollary:V-Fixed-tau}
gives $\alpha^{\tau^{v}}=a^{R}$. By Lemma \ref{lem:Relaxed-program},
$a^{R}\left(\theta\right)\le\alpha^{FB}\left(\theta\right)$ for all
$\theta\in\left[\tau,1\right]$, with strict inequality whenever $\theta<1$
and $\alpha^{FB}\left(\theta\right)>0$. 

By the envelope condition (\ref{eq:u'-v}), $u^{\tau^{v}\prime}(\theta)=1-km'(\delta^{R}(\theta))$.
By Lemma \ref{Lemma:-eta-gamma}, $b^{R}\leq\eta$. For $\theta\in[\tau^{v},b^{R}]$,
$\delta^{R}(\theta)=\theta\leq b^{R}\leq\eta$; for $\theta\in(b^{R},1]$,
$\delta^{R}(\theta)=d^{R}(\theta)$ is strictly decreasing from $b^{R}\leq\eta$.
Hence $km'(\delta^{R}(\theta))\leq1$ with equality at most at $\theta=b^{R}=\eta$.
So $u^{\tau^{v}}$ is strictly increasing on $[\tau^{v},1]$.

Next, I show that $\tau^{v}<b^{R}$, which implies a positive measure
of types pooling at $y=0$. Let $J=(1-F)/f$, so $m'(b^{R})-m''(b^{R})J(b^{R})=c/k$.
Hence
\[
J(b^{R})=\frac{m'(b^{R})-c/k}{m''(b^{R})}\le\frac{m'(b^{R})}{m''(b^{R})}\le b^{R},
\]
where the last inequality follows from $m'(0)=0$ and $m'''\ge0$.
Since $\eta\ge b^{R}$, we also have $km'(b^{R})\le1$. Therefore,
\[
\begin{aligned}\Psi^{v}(b^{R}) & =b^{R}-km(b^{R})-\left[1-km'(b^{R})\right]J(b^{R})\\
 & >b^{R}\left[1-km'(b^{R})\right]-\left[1-km'(b^{R})\right]J(b^{R})\ =\ \left[1-km'(b^{R})\right]\left[b^{R}-J(b^{R})\right]\ge0.
\end{aligned}
\]
The strict inequality uses strict convexity of $m$ and $m(0)=0$,
which imply $m(b^{R})<b^{R}m'(b^{R})$. Thus $\Psi^{v}(b^{R})>0$,
and so $\tau^{v}<b^{R}$.

\bigskip

\emph{Case $\eta<\gamma$.} I first show that $\tau^{v}<\eta<b^{R}$.
If $\tau^{v}\in[\eta,\gamma)$, from the proof of Lemma \ref{Lemma:Pi-v(v)-quasiconcave}
case 2(ii), $\Psi^{v}(\tau^{v})=(1-c)\tau^{v}+c\eta-km(\eta).$ Since
$km'(\eta)=1$ and strict convexity with $m(0)=0$ implies $m(\eta)<\eta m'(\eta)=\eta/k$,
it implies $\Psi^{v}(\eta)=\eta-km(\eta)>0$. Therefore, $\tau^{v}<\eta$.
By Lemma \ref{Lemma:-eta-gamma}, $\eta<\gamma$ implies $\eta<b^{R}$.
Therefore, $\tau^{v}<b^{R}$ as well.

If $\tau^{v}\le\underline{\tau}$, then $\alpha^{\tau^{v}}=a^{R}$
by Corollary \ref{Corollary:V-Fixed-tau}. Since $\tau^{v}<b^{R}$,
there is a positive measure of types pooling at $y=0$. Since $\tau^{v}<\eta<\gamma$,
the slope of the induced utility schedule is positive on $(\tau^{v},\eta)$,
negative on $(\eta,\gamma)$, and positive again on $(\gamma,1)$.
Thus $u^{\tau^{v}}$ is nonmonotone. The property that $a^{R}\left(\theta\right)\le\alpha^{FB}\left(\theta\right)$
for all $\theta\in\left[\tau,1\right]$ has been established for the
case $\eta\ge\gamma$.

If $\tau^{v}\in\left(\underline{\tau},\eta\right)$, then Lemma \ref{lem:p-tau-case-3}
applies and $\alpha^{\tau^{v}}$ is characterized in (\ref{eq:alpha-tau-case-3}),
with $\tau=\tau^{v}$. In such cases, the proof of Lemma \ref{lem:p-tau-case-3}
gives $u^{\tau}(\theta)>0$ in $\left(\tau,\phi^{\tau}\right)$, $u^{\tau}\left(\theta\right)=0$
on $\left[\phi^{\tau},\gamma\right]$, and $u^{\tau}\left(\theta\right)>0$
on $(\gamma,1]$, so $u^{\tau^{v}}$ is nonmonotone. Lemma \ref{lem:p-tau-case-3}
also states that $\beta^{\tau}>\tau$. Therefore, the set $[\tau^{v},\beta^{\tau^{v}})$
is a positive measure set of types pooling at $y=0$. 

For $\theta\in\left(\beta^{\tau^{v}},\gamma\right)$, $\alpha^{\tau^{v}}\left(\theta\right)=a\left(\theta;\phi^{\tau^{v}}\right)$,
which is characterized by the FOC on the adjusted virtual surplus:
\[
m'(\theta-\alpha^{\tau^{v}}\left(\theta\right))-m''(\theta-\alpha^{\tau^{v}}\left(\theta\right))\frac{1-F(\theta)-\Lambda^{\tau}\left(\theta\right)}{f(\theta)}=\frac{c}{k}.
\]
$\left(1-F-\Lambda^{\tau}\right)/f$ is decreasing for $\tau\in\left(\underline{\tau},\eta\right)$
from the proof of Lemmas \ref{lem:p-tau-case-2}-\ref{lem:p-tau-case-3}.
Following the argument in Lemma \ref{lem:Relaxed-program} to establish
$a^{R}\left(\theta\right)<\alpha^{FB}\left(\theta\right)$, with $\left(1-F\right)/f$
replaced by $\left(1-F-\Lambda^{\tau}\right)/f$, yields $\alpha^{\tau^{v}}\left(\theta\right)<\alpha^{FB}\left(\theta\right)$. 

For $\theta\in\left[\gamma,1\right]$, $\alpha^{\tau^{v}}\left(\theta\right)=a^{R}\left(\theta\right)\le\alpha^{FB}\left(\theta\right)$
with equality only at $\theta=1$. 

Finally, $\beta^{\tau^{v}}<b^{R}$ from Lemma \ref{lem:p-tau-case-3},
so $\alpha^{\tau^{v}}\left(\theta\right)<\alpha^{FB}\left(\theta\right)$
whenever $\alpha^{FB}\left(\theta\right)>0$.
\end{proof}
\newpage 

\begin{center} 

{\Large \bf 
Online Appendix }

\bigskip

\end{center}

\section{\protect\label{OA:Proofs-Auxiliary-Results}Proofs for Auxiliary
Results }

\subsection{Proof of Proposition \ref{Proposition:h-CompStatics}}
\begin{proof}
Throughout, statements involving $\theta^{*}$ apply when $\theta^{*}$
exists. 

\textit{Part (1). }\textit{\emph{If $\Psi^{h}(1)>0$ and $\tau^{h}=1$,
local changes in $c$, $k$ or $v_{0}$ imply that $\Psi^{h}(1)$
is still strictly positive and thus $\tau^{h}$ is still 1. Henceforth,
assume that $\tau^{h}<1$; thus $\Psi^{h}(\tau^{h})=0$ and $\Psi^{h'}(\tau^{h})<0$.
By the implicit function theorem, $\operatorname{sign}\!\left(d\tau^{h}/dz\right)=\operatorname{sign}\!\left(\partial\Psi^{h}(\tau^{h})/\partial z\right)$
for $z\in\left\{ c,k,v_{0}\right\} $. }}Since $\alpha^{h}(\tau)$
pointwise maximizes $\psi^{h}(\cdot,\tau)$ on $[0,\tau]$, the envelope
theorem gives
\[
\frac{\partial\Psi^{h}}{\partial c}=-\alpha^{h}(\tau),\qquad\frac{\partial\Psi^{h}}{\partial k}=-m\!\left(\tau-\alpha^{h}(\tau)\right)-m'\!\left(\tau-\alpha^{h}(\tau)\right)\frac{F(\tau)}{f(\tau)},\qquad\frac{\partial\Psi^{h}}{\partial v_{0}}=1.
\]
The first is strictly negative if $\tau>\beta^{h}$ and is zero otherwise.
The second is strictly negative if $\tau>\alpha^{h}\left(\tau\right)$---iff
$\tau<\theta^{*}$---and zero otherwise. 

\textit{Part (2). }Fix $\theta>0$, write $R(\theta)=F(\theta)/f(\theta)$
and $r:=c/k$. By Lemma \ref{lem:alpha-h} and the characterizations
of $\beta^{h}$ and $\theta^{*}$, $\theta\geq\theta^{*}$ if and
only if $r\leq m''(0)R(\theta)$, and $\theta\leq\beta^{h}$ if and
only if $r\geq m'(\theta)+m''(\theta)R(\theta)$. Hence, as a function
of $r$,
\[
\delta^{h}(\theta)=\begin{cases}
0 & \text{if }r\leq m''(0)R(\theta),\\
d^{h}(\theta) & \text{if }m''(0)R(\theta)<r<m'(\theta)+m''(\theta)R(\theta),\\
\theta & \text{if }r\geq m'(\theta)+m''(\theta)R(\theta),
\end{cases}
\]
where $d^{h}(\theta)$ solves (\ref{eq:delta-h}). The left-hand side
of (\ref{eq:delta-h}) is strictly increasing in $d^{h}$, its derivative
being $m''(d^{h})+m'''(d^{h})R(\theta)>0$, so in the middle regime
$d^{h}(\theta)$ is strictly increasing in $r$, with $d^{h}(\theta)\downarrow0$
as $r\downarrow m''(0)R(\theta)$ and $d^{h}(\theta)\uparrow\theta$
as $r\uparrow m'(\theta)+m''(\theta)R(\theta)$. Therefore $\delta^{h}(\theta)$
is continuous and weakly increasing in $r$ on its whole range, strictly
increasing exactly in the middle regime $\beta^{h}<\theta<\theta^{*}$,
and locally constant in the two corner regimes: at $0$ when $\theta>\theta^{*}$,
since $m''(0)R(\theta)>r$ holds strictly and is preserved by small
changes in $r$, and at $\theta$ when $\theta<\beta^{h}$, by the
symmetric argument. Since neither $\beta^{h}$, $\theta^{*}$, nor
(\ref{eq:delta-h}) involves $v_{0}$, $\delta^{h}(\theta)$ is independent
of $v_{0}$.

\textit{Part (3). }$\Pi^{h}(\tau^{h})=\max_{\tau\in[0,1]}\int_{0}^{\tau}\psi^{h}(\alpha^{h}(\theta),\theta)f(\theta)d\theta.$
By the envelope theorem,
\[
\frac{\partial\Pi^{h}\left(\tau^{h}\right)}{\partial v_{0}}=F(\tau^{h})>0,\qquad\frac{\partial\Pi^{h}\left(\tau^{h}\right)}{\partial c}=-\int_{0}^{\tau^{h}}\alpha^{h}(\theta)f(\theta)\,d\theta\le0
\]
\[
\frac{\partial\Pi^{h}\left(\tau^{h}\right)}{\partial k}=-\int_{0}^{\tau^{h}}\left[m(\delta^{h}(\theta))+m'(\delta^{h}(\theta))\frac{F(\theta)}{f(\theta)}\right]f(\theta)\,d\theta<0
\]
The first inequality is trivial. For the second, the inequality is
strict if $\tau^{h}>\beta^{h}$, implying a positive measure of types
below $\tau^{h}$ with $\alpha^{h}\left(\theta\right)>0$; if $\tau^{h}\le\beta^{h}$,
then $\alpha^{h}\left(\theta\right)=0$ on $\left[0,\tau^{h}\right]$.
For the third, the integrand is nonnegative everywhere and strictly
positive on the pooling region, where $\alpha^{h}\left(\theta\right)=0$
and $\delta^{h}(\theta)=\theta>0$; this region has positive measure
since $\beta^{h}>0$. 
\end{proof}

\subsection{Proof of Lemma \ref{Lemma:-eta-gamma}}
\begin{proof}
Equations (\ref{eq:eta}) and (\ref{eq:gamma}) yield $m'(\eta)=1/k=m'(d^{R}(\gamma))$.
Since $m'$ is strictly increasing, this implies $\eta=d^{R}(\gamma)$.
By definition, $d^{R}(b^{R})=b^{R}$. Comparing $\gamma$ to $b^{R}$
via these two properties, and noting that $d^{R}$ is strictly decreasing
yields the following three mutually exclusive and exhaustive cases:
(1) if $\gamma<b^{R}$, then $\eta=d^{R}(\gamma)>d^{R}(b^{R})=b^{R}$,
yielding $\gamma<b^{R}<\eta$; (2) if $\gamma>b^{R}$, then $\eta=d^{R}(\gamma)<d^{R}(b^{R})=b^{R}$,
yielding $\eta<b^{R}<\gamma$; (3) if $\gamma=b^{R}$, then $\eta=d^{R}(b^{R})=b^{R}$,
yielding $\eta=b^{R}=\gamma$.

It remains to show the comparative statics of $\gamma-\eta$. $m'(\eta)=1/k$,
so $\eta$ is independent of $c$ and decreasing in $k$. As noted
above, $\eta=d^{R}(\gamma)$. Substituting into (\ref{eq:delta-R})
yields $\left[1-F\left(\gamma\right)\right]/f\left(\gamma\right)=(1-c)/[km''(\eta)]$.
The RHS decreases in $c$ and since the LHS is decreasing, $\gamma$
increases with $c$. Thus, $\gamma-\eta$ increases with $c$. 

For $k$, $km''(\eta(k))$ is strictly increasing. Its derivative
is $m''(\eta)-m'''(\eta)m'(\eta)/m''(\eta)>0$, where the inequality
follows from the strict log-concavity of $m'$. Hence $(1-c)/[km''(\eta(k))]$
decreases in $k$. Since $(1-F)/f$  is decreasing, $\gamma$ increases
with $k$. Together with $\eta$ decreasing with $k$, this shows
that $\gamma-\eta$ increases with $k$. 
\end{proof}

\subsection{Proof of Proposition \ref{Proposition:Vertical-CS}}

\emph{Preliminaries.} Throughout, I slightly abuse notations by adding
the parameter $c$ and $k$ into existing notations as necessary---i.e.,
$\Psi^{v}\left(\tau;c\right)$, $\gamma\left(c\right)$, $L^{0}\left(\theta;c\right)$,
$\phi^{\tau}\left(c\right)$, and so on when $c$ varies with $k$
fixed, and analogously when $k$ varies with $c$ fixed. Write $J:=(1-F)/f$.
Recall that $\eta=(m')^{-1}(1/k)$ does not depend on $c$, and define
\[
\Psi_{0}(\tau;k):=\psi^{v}(0,\tau)=\tau-km(\tau)-\left[1-km'(\tau)\right]J(\tau),
\]
which does not depend on $c$ because the production-cost term vanishes
at $y=0$.

For a constant $L\in[0,1-F(\theta))$, the proof of Lemma \ref{lem:p-tau-case-3}
shows $\tilde{\psi}^{v}(\cdot,\theta;L)$ has a unique maximizer on
$Y$, attained in $[0,\theta]$. Write $\delta(\theta;L)$ for $\theta$
minus this maximizer and $\hat{J}(\theta;L):=(1-F(\theta)-L)/f(\theta)$.
With $G(d)=[m'(d)-c/k]/m''(d)$ from the proof of Lemma \ref{lem:Relaxed-program},
which is strictly increasing by (\ref{eq:G'>0}), 
\begin{equation}
\delta(\theta;L)=\theta\text{ if }G(\theta)\le\hat{J}(\theta;L),\text{ and }G(\delta(\theta;L))=\hat{J}(\theta;L)\text{ otherwise}.\label{eq:CS-Prop-1}
\end{equation}
In this notation, $\delta(\theta;0)=\delta^{R}(\theta)$ and $\delta(\theta;L^{0}(\phi))=\theta-a(\theta;\phi)$.
I first establish five claims.
\begin{claim}
\label{clm:psi0}(i) $\Psi_{0}(\cdot;k)$ is strictly increasing on
$[0,\eta]$ with $\Psi_{0}(0;k)<0<\Psi_{0}(\eta;k)$; it has a unique
zero $\tau_{0}(k)\in(0,\eta)$; and $\tau_{0}(k)<b^{R}$ for every
admissible $c$. (ii) At the optimum, $\tau^{v}<\eta$, $\alpha^{\tau^{v}}(\tau^{v})=0$,
and 
\begin{equation}
\Psi^{v}(\tau^{v})=\Psi_{0}(\tau^{v})+\left[1-km'(\tau^{v})\right]\frac{\Lambda^{\tau^{v}}(\tau^{v})}{f(\tau^{v})}=0,\label{eq:CS-Prop-2}
\end{equation}
where $\Lambda^{\tau^{v}}(\tau^{v})=0$ if the Lemma \ref{lem:p-tau-case-1}
case applies at $\tau^{v}$, and $\Lambda^{\tau^{v}}(\tau^{v})=L^{0}(\phi^{\tau^{v}})>0$
if the Lemma \ref{lem:p-tau-case-3} case applies; the Lemma \ref{lem:p-tau-case-2}
case never applies at $\tau^{v}$. (iii) The Lemma \ref{lem:p-tau-case-3}
case applies at $\tau^{v}$ if and only if a positive measure of included
types have binding IR; in that case $\tau^{v}<\tau_{0}$, and otherwise
$\tau^{v}=\tau_{0}$.
\end{claim}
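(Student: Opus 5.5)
For part (i), I would compute $\Psi_0'(\tau)=1-km'(\tau)-\left[1-km'(\tau)\right]J'(\tau)+km''(\tau)J(\tau)$, writing $J=(1-F)/f$. On $[0,\eta]$ we have $1-km'(\tau)\ge0$, and $J'\le0$ by log-concavity of $1-F$. Hence $\Psi_0'(\tau)\ge\left[1-km'(\tau)\right](1-J')+km''J>0$, using $m''>0$ and $J>0$ for $\tau<1$. The endpoint values follow directly. First, $\Psi_0(0;k)=-J(0)=-1/f(0)<0$. Second, $\Psi_0(\eta;k)=\eta-km(\eta)>0$, because strict convexity with $m(0)=0$ gives $km(\eta)<\eta km'(\eta)=\eta$. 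Existence and uniqueness of $\tau_0(k)\in(0,\eta)$ then come from the intermediate value theorem.

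For $\tau_0<b^R$, I split into cases. If $\eta<\gamma$, Lemma \ref{Lemma:-eta-gamma} gives $\eta<b^R$, so $\tau_0<\eta<b^R$. If $\eta\ge\gamma$, so that $b^R\le\eta$, I reuse the computation in the proof of Proposition \ref{Proposition:Mechanism-V}, which shows $\Psi_0(b^R)=\Psi^v(b^R)>0$ since $a^R(b^R)=0$ and $\Lambda=0$. Monotonicity of $\Psi_0$ on $[0,\eta]$ then gives $\tau_0<b^R$. The case $b^R=0$ cannot occur, because $d^R(0)>0$.

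For part (ii), I invoke the proof of Proposition \ref{Proposition:Mechanism-V}, which shows $\tau^v<\eta$ in the case $\eta<\gamma$ and $\tau^v<b^R\le\eta$ in the case $\eta\ge\gamma$. In the first case, that proof also rules out $\tau^v\in[\eta,\gamma)$, so the Lemma \ref{lem:p-tau-case-2} case never applies at $\tau^v$. The allocation at the cutoff is basic: under $a^R$ this is because $\tau^v<b^R$, and under Lemma \ref{lem:p-tau-case-3} it is because $\beta^{\tau}>\tau$. With $\alpha^{\tau^v}(\tau^v)=0$, I expand $\tilde\psi^v(0,\tau;L)$ and obtain exactly $\Psi_0(\tau)+\left[1-km'(\tau)\right]L/f(\tau)$, since $M_2(0,\tau)=m'(\tau)$. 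Substituting $L=\Lambda^{\tau^v}(\tau^v)$ from Corollary \ref{Corollary:V-Fixed-tau}, together with $\Psi^v(\tau^v)=0$ from Proposition \ref{Proposition:Mechanism-V}, yields (\ref{eq:CS-Prop-2}). The value of $\Lambda$ is $0$ in the Lemma \ref{lem:p-tau-case-1} case and $L^0(\phi^{\tau^v})$ in the Lemma \ref{lem:p-tau-case-3} case. This value is positive because $L^0$ is strictly decreasing with $L^0(\gamma)=0$ and $\phi^{\tau}<\gamma$.

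For part (iii), I first note what each case implies for binding IR. In the Lemma \ref{lem:p-tau-case-3} case, $u^{\tau}=0$ on $[\phi^\tau,\gamma]$, which has positive measure because $\phi^\tau<\gamma$. In the Lemma \ref{lem:p-tau-case-1} case, $u^R$ vanishes only on a null set. This holds because its slope $1-km'(\delta^R)$ is nonzero almost everywhere: it vanishes only at $\eta$ and $\gamma$. Next I read off the cutoff from (\ref{eq:CS-Prop-2}). If $\Lambda=0$, the equation reads $\Psi_0(\tau^v)=0$, so $\tau^v=\tau_0$ by uniqueness on $[0,\eta]$. If $\Lambda>0$, then $1-km'(\tau^v)>0$ because $\tau^v<\eta$, so the added term is positive. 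This forces $\Psi_0(\tau^v)<0$, and monotonicity gives $\tau^v<\tau_0$.

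The main obstacle is the claim that $\Psi_0$ is strictly increasing on $[0,\eta]$. The whole comparison of $\tau^v$ with $\tau_0$ relies on it, and the sign of $J'$ must be handled carefully near $\tau=1$ when $f(1)=0$. Since $\eta<1$ whenever it matters, I would restrict the argument to $[0,\eta]$ to sidestep this.
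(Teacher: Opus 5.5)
Your proof is correct and follows the paper's argument essentially step for step. You use the same derivative computation for $\Psi_0$, and you appeal to the proof of Proposition \ref{Proposition:Mechanism-V} for $\tau^v<\eta$, $\tau^v<b^R$ and $\Psi_0(b^R)>0$ when $b^R\le\eta$, as the paper does. You also use the same evaluation of $\tilde\psi^v$ at $y=0$ and the same positive-measure versus null-set dichotomy for binding IR. The differences are cosmetic: you get the null zero set of $u^R$ from its slope vanishing at most at $\eta$ and $\gamma$ rather than from the paper's accumulation-point argument, and you get $L^0(\phi^{\tau^v})>0$ from strict monotonicity of $L^0$ rather than from the factorization $L^0=(1-F)\left[1-h/h(\gamma)\right]$.
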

\begin{proof}
(i) Differentiating, $\Psi_{0}'(\tau)=[1-km'(\tau)][1-J'(\tau)]+km''(\tau)J(\tau)$.
On $[0,\eta]$, $1-km'(\tau)\ge0$ and $J'<0$ (Assumption \ref{Assumption:F}),
so the first term is nonnegative, while $km''(\tau)J(\tau)>0$ for
$\tau<1$. Hence $\Psi_{0}$ is strictly increasing on $[0,\eta]$.
$\Psi_{0}(0)=-J(0)<0$. Strict convexity with $m(0)=0$ gives $km(\eta)<\eta km'(\eta)=\eta$,
so $\Psi_{0}(\eta)=\eta-km(\eta)>0$; hence $\tau_{0}\in(0,\eta)$
exists and is unique. For $\tau_{0}<b^{R}$: if $b^{R}>\eta$, this
is immediate from $\tau_{0}<\eta$. If $b^{R}\le\eta$, then $d^{R}(b^{R})=b^{R}$
gives $J(b^{R})=G(b^{R})\le m'(b^{R})/m''(b^{R})\le b^{R}$, where
the last step uses $m'(z)=\int_{0}^{z}m''\le zm''(z)$ (since $m'''\ge0$).
With $km(b^{R})<b^{R}km'(b^{R})$ and $km'(b^{R})\le1$, 
\[
\Psi_{0}(b^{R})=b^{R}-km(b^{R})-\left[1-km'(b^{R})\right]J(b^{R})>\left[1-km'(b^{R})\right]\left[b^{R}-J(b^{R})\right]\ge0,
\]
so $\tau_{0}<b^{R}$. 

(ii) The proof of Proposition \ref{Proposition:Mechanism-V} establishes
$\tau^{v}<\eta$ and $\tau^{v}<b^{R}$; the Lemma \ref{lem:p-tau-case-2}
case requires $\tau\ge\eta$ and so cannot apply at $\tau^{v}$. In
the Lemma \ref{lem:p-tau-case-1} case, $\alpha^{\tau^{v}}(\tau^{v})=a^{R}(\tau^{v})=0$
(since $\tau^{v}<b^{R}$) and $\Lambda^{\tau^{v}}=0$, so $\Psi^{v}(\tau^{v})=\psi^{v}(0,\tau^{v})=\Psi_{0}(\tau^{v})$.
In the Lemma \ref{lem:p-tau-case-3} case, $\alpha^{\tau^{v}}(\tau^{v})=0$
(since $\beta^{\tau}>\tau$) and $\Lambda^{\tau^{v}}(\tau^{v})=L^{0}(\phi^{\tau^{v}})$
by (\ref{eq:L(theta;phi)}); evaluating $\tilde{\psi}^{v}$ at $y=0$
gives (\ref{eq:CS-Prop-2}). For positivity: write $L^{0}(\theta)=(1-F(\theta))\left[1-h(\theta)/h(\gamma)\right]$
with $h:=f/(1-F)$, as in the proof of Lemma \ref{lem:p-tau-case-2};
strict log-concavity of $1-F$ makes $h$ strictly increasing, so
$L^{0}(\theta)>0$ for $\theta<\gamma$, and $\phi^{\tau^{v}}<\gamma$.

(iii) In the Lemma \ref{lem:p-tau-case-3} case, $u^{\tau^{v}}=0$
on $[\phi^{\tau^{v}},\gamma]$, which is a positive-measure set. In
the Lemma \ref{lem:p-tau-case-1} case, $u^{\tau^{v}}=u^{R}(\cdot;\tau^{v})$,
which is continuously differentiable. If $u^{R}=0$ on a set $E$
of positive measure, then almost every point of $E$ is an accumulation
point of $E$, at which $(u^{R})'=0$, i.e., $\delta^{R}(\theta)=\eta$.
But $\delta^{R}(\theta)=\theta$ is strictly increasing on $[0,b^{R}]$
and $\delta^{R}(\theta)=d^{R}(\theta)$ is strictly decreasing on
$(b^{R},1]$ (proof of Lemma \ref{lem:Relaxed-program}), so $\{\delta^{R}=\eta\}$
contains at most two points---a contradiction. Hence IR binds on
a null set in the Lemma \ref{lem:p-tau-case-1} case. Finally, by
(\ref{eq:CS-Prop-2}), $\Psi_{0}(\tau^{v})=0$ in the Lemma \ref{lem:p-tau-case-1}
case, so $\tau^{v}=\tau_{0}$ by (i); and $\Psi_{0}(\tau^{v})=-\left[1-km'(\tau^{v})\right]L^{0}(\phi^{\tau^{v}})/f(\tau^{v})<0$
in the Lemma \ref{lem:p-tau-case-3} case, so $\tau^{v}<\tau_{0}$.
\end{proof}
\begin{claim}
\label{clm:pointwise}Fix $\theta\in(0,1)$ and $L,L'\in[0,1-F(\theta))$
with $\hat{J}(\theta;L)>0$. (i) If $L'\ge L$, then $\delta(\theta;L')\le\delta(\theta;L)$.
(ii) At fixed $k$, if $c_{2}>c_{1}$, then $\delta(\theta;L;c_{2})\ge\delta(\theta;L;c_{1})$,
with strict inequality if $\delta(\theta;L;c_{1})<\theta$. (iii)
At fixed $c$, if $k_{2}>k_{1}$, then $k_{2}m'(\delta(\theta;L;k_{2}))\ge k_{1}m'(\delta(\theta;L;k_{1}))$,
with strict inequality if $\delta(\theta;L;k_{1})>0$. 
\end{claim}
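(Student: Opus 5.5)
The plan is to work directly from the characterization (\ref{eq:CS-Prop-1}): $\delta(\theta;L)=\theta$ when $G(\theta)\le\hat{J}(\theta;L)$, and otherwise $\delta(\theta;L)=G^{-1}(\hat{J}(\theta;L))$. Here $G$ is strictly increasing by (\ref{eq:G'>0}) and $\hat{J}>0$. Equivalently, $\delta(\theta;L)=\min\{\theta,\,G^{-1}(\hat{J}(\theta;L))\}$ when $\hat J$ lies in the range of $G$, with $\delta=\theta$ otherwise. Each part then reduces to tracking how $G$ or $\hat J$ moves with the parameter and applying monotonicity of the map $x\mapsto\min\{\theta,G^{-1}(x)\}$.

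For (i), $\hat{J}(\theta;L)$ is strictly decreasing in $L$, so $\hat J(\theta;L')\le\hat J(\theta;L)$. Since $G$ is increasing, both the corner condition and the interior solution $G^{-1}(\hat J)$ weakly decrease, and taking the minimum with $\theta$ preserves the ordering. One subtlety: the condition $\hat J(\theta;L')>0$ might fail. In that case I will use $G(0)=-c/(km''(0))\le 0$ to note that the interior solution is still well defined, possibly at $\delta=0$, via the single-peakedness argument of Lemma \ref{lem:Relaxed-program}. If $\hat J$ falls below $G(0)$, the maximizer is $\delta=0$, which still satisfies the inequality.

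For (ii), note that $G(d;c)=[m'(d)-c/k]/m''(d)$ is strictly decreasing in $c$ for every $d$, while $\hat J$ does not depend on $c$. Raising $c$ therefore shifts $G$ down pointwise, so $G^{-1}(\hat J)$ strictly increases. The corner set $\{G(\theta;c)\le\hat J\}$ weakly expands, so $\delta$ weakly increases. For strictness, suppose $\delta(\theta;L;c_1)<\theta$, which places us at an interior solution with $G(\delta_1;c_1)=\hat J$. Then $G(\delta_1;c_2)<\hat J$, so the new solution is either the corner $\theta>\delta_1$ or satisfies $G(\delta_2;c_2)=\hat J$ with $\delta_2>\delta_1$.

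For (iii), I rewrite the interior first-order condition as $km'(\delta)-km''(\delta)\hat J=c$, treating $x:=km'(\delta)$ as the unknown. With $\delta=(m')^{-1}(x/k)$, the condition becomes $x-\hat J\,k\,m''\big((m')^{-1}(x/k)\big)=c$. The claim is that the solution $x$ increases with $k$. The main obstacle is signing the derivative of $k\,m''((m')^{-1}(x/k))$ with respect to $k$ at fixed $x$. This derivative equals $m''(\delta)-m'''(\delta)m'(\delta)/m''(\delta)$, which is strictly positive by strict log-concavity of $m'$, exactly as in the proof of Lemma \ref{Lemma:-eta-gamma}. Hence at fixed $x$, raising $k$ lowers the left-hand side. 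The left-hand side is strictly increasing in $x$, because it is increasing in $\delta$ via $G'>0$ times $m''>0$. The solution $x$ must therefore rise, strictly whenever the interior condition holds. In the corner regime with $\delta=\theta$, the quantity $km'(\theta)$ is trivially strictly increasing in $k$. Switching between regimes is handled by continuity of $\delta$ in $k$ together with monotonicity within each regime. The excluded case $\delta(\theta;L;k_1)=0$ gives $x=0$, and weak monotonicity then suffices.
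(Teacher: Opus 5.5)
Your proposal is correct and is essentially the paper's proof. Parts (i) and (ii) use the same corner/interior case analysis, driven by $G$ strictly increasing together with $\hat J$ decreasing in $L$ (resp.\ $G$ decreasing in $c$). In (iii), your fixed-$x$ comparison is the differential form of the paper's argument. The paper sets $\hat\delta:=(m')^{-1}(x/k_2)$ with $x=k_1m'(\delta_1)$ and uses that $m''/m'$ is decreasing to get $k_2m''(\hat\delta)>k_1m''(\delta_1)$. That inequality is your derivative $m''-m'''m'/m''>0$ integrated from $k_1$ to $k_2$.

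One step in (iii) should be restated. The facts $G'>0$ and $m''>0$ do not show that $km'(\delta)-km''(\delta)\hat J$ is increasing in $x$; its $\delta$-derivative is $k(m''-m'''\hat J)$, which $G'>0$ does not sign. What those facts do give is the identity $km'(\delta)-km''(\delta)\hat J-c=km''(\delta)\,[G(\delta)-\hat J]$, i.e.\ single crossing. Single crossing is all your comparative-statics step needs.

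Used this way, it also removes the need for your continuity/regime-switching argument:
\begin{itemize}
\item At $\delta_1$ the old left-hand side is $\le c$, whether $\delta_1$ is interior or the corner $\theta$.
\item At $k_2$ and $\hat\delta$ it is therefore $<c$, since $\hat J>0$. This gives $G(\hat\delta;k_2)<\hat J$.
\item Hence $\delta_2>\hat\delta$ in either regime: by monotonicity of $G$ if $\delta_2$ is interior, and from $\delta_2=\theta\ge\delta_1>\hat\delta$ if it is the corner.
\end{itemize}
Because you multiply through by $km''>0$ rather than divide, you also avoid the paper's split into $x\le c$ and $x>c$.

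Finally, the subtlety you raise in (i) does not arise: the hypothesis $L'<1-F(\theta)$ already gives $\hat J(\theta;L')>0$.
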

\begin{proof}
(i) $\hat{J}(\theta;L')\le\hat{J}(\theta;L)$. If $G(\theta)\le\hat{J}(\theta;L')$,
both are at the corner $\theta$. If only $\delta(\theta;L)$ is at
the corner, $\delta(\theta;L')<\theta=\delta(\theta;L)$. If both
are interior, $G(\delta(\theta;L'))=\hat{J}(\theta;L')\le\hat{J}(\theta;L)=G(\delta(\theta;L))$
and $G$ strictly increasing give the result.

(ii) $G(d;c)$ is strictly decreasing in $c$ at every $d$, while
$\hat{J}(\theta;L)$ does not vary with $c$. If $\delta(\theta;L;c_{1})=\theta$,
then $G(\theta;c_{2})<G(\theta;c_{1})\le\hat{J}(\theta;L)$, so $\delta(\theta;L;c_{2})=\theta$.
If $\delta(\theta;L;c_{1})$ is interior, then either $\delta(\theta;L;c_{2})=\theta>\delta(\theta;L;c_{1})$,
or both are interior and $G(\delta(\theta;L;c_{2});c_{2})=\hat{J}(\theta;L)=G(\delta(\theta;L;c_{1});c_{1})>G(\delta(\theta;L;c_{1});c_{2})$
forces $\delta(\theta;L;c_{2})>\delta(\theta;L;c_{1})$.

(iii) Write $\delta_{i}:=\delta(\theta;L;k_{i})$ and $x:=k_{1}m'(\delta_{1})$.
If $\delta_{1}=0$ the claim is trivial. Suppose $\delta_{1}>0$ and
define $\hat{\delta}:=(m')^{-1}(x/k_{2})\in(0,\delta_{1})$, so $k_{2}m'(\hat{\delta})=k_{1}m'(\delta_{1})$.
Strict log-concavity of $m'$ means $m''/m'$ is strictly decreasing,
so 
\begin{equation}
k_{2}m''(\hat{\delta})=k_{1}m'(\delta_{1})\cdot\frac{m''(\hat{\delta})}{m'(\hat{\delta})}>k_{1}m'(\delta_{1})\cdot\frac{m''(\delta_{1})}{m'(\delta_{1})}=k_{1}m''(\delta_{1}).\label{eq:CS-Prop-3}
\end{equation}
I show $\delta_{2}>\hat{\delta}$, which yields $k_{2}m'(\delta_{2})>k_{2}m'(\hat{\delta})=k_{1}m'(\delta_{1})$.
If $\delta_{2}=\theta$, then $\delta_{2}\ge\delta_{1}>\hat{\delta}$.
Otherwise $\delta_{2}$ is interior, so $G(\delta_{2};k_{2})=\hat{J}(\theta;L)>0$,
and it suffices to show $G(\hat{\delta};k_{2})<\hat{J}(\theta;L)$.
If $x\le c$, then $G(\hat{\delta};k_{2})=(x-c)/(k_{2}m''(\hat{\delta}))\le0<\hat{J}(\theta;L)$.
If $x>c$, then by (\ref{eq:CS-Prop-3}), 
\[
G(\hat{\delta};k_{2})=\frac{x-c}{k_{2}m''(\hat{\delta})}<\frac{x-c}{k_{1}m''(\delta_{1})}=G(\delta_{1};k_{1})\le\hat{J}(\theta;L),
\]
where the last inequality is the first-order condition if $\delta_{1}$
is interior and the corner condition if $\delta_{1}=\theta$. 
\end{proof}
\begin{claim}
\label{clm:irmonotone}Suppose the Lemma \ref{lem:p-tau-case-3} case
applies to \ref{eq:Program-v} at parameters $p_{1}=(c_{1},k_{1})$,
and let $p_{2}$ raise exactly one of $c$ or $k$. Then $\eta(k_{2})<\gamma(p_{2})$
and $u^{R}(\gamma(p_{2});\tau;p_{2})<0$. In particular, if $\tau<\eta(k_{2})$,
the Lemma \ref{lem:p-tau-case-3} case applies to $P^{v}(\tau)$ at
$p_{2}$. 
\end{claim}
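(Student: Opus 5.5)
The plan is to prove the three assertions in turn, using Lemma \ref{Lemma:-eta-gamma} for the first, a pointwise comparison of relaxed utility slopes (built on Claim \ref{clm:pointwise}) for the second, and the shape of $u^{R}$ for the third. Write $p_{1}=(c_{1},k_{1})$ and $p_{2}$ for the parameters with exactly one of $c$ or $k$ raised. Write $\eta_{i},\gamma_{i}$ for $\eta(k_{i})$ and $\gamma(p_{i})$. That the Lemma \ref{lem:p-tau-case-3} case applies at $p_{1}$ means $\eta_{1}<\gamma_{1}$, $\tau\in(\underline{\tau}(p_{1}),\eta_{1})$, and $u^{R}(\gamma_{1};\tau;p_{1})<0$.

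\emph{First assertion.} Lemma \ref{Lemma:-eta-gamma} states that $\gamma-\eta$ is strictly increasing in both $c$ and $k$. Hence $\gamma_{2}-\eta_{2}>\gamma_{1}-\eta_{1}>0$, which gives $\eta_{2}<\gamma_{2}$. Its proof also shows that $\gamma$ is increasing in $c$ and in $k$, and that $\eta$ is independent of $c$ and decreasing in $k$. Therefore $\gamma_{2}\ge\gamma_{1}$ and $\eta_{2}\le\eta_{1}$, and so $\tau<\eta_{1}<\gamma_{1}\le\gamma_{2}$.

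\emph{Second assertion: pointwise slope comparison.} The key step is to show that, for every $s\in[0,1)$,
\[
1-k_{2}m'\left(\delta^{R}(s;p_{2})\right)\le1-k_{1}m'\left(\delta^{R}(s;p_{1})\right).
\]
I would use $\delta^{R}(s)=\delta(s;0)$ and apply Claim \ref{clm:pointwise} with $L=0$; the hypothesis $\hat{J}(s;0)=J(s)>0$ holds for $s<1$. If $c$ is raised, part (ii) gives $\delta^{R}(s;p_{2})\ge\delta^{R}(s;p_{1})$, and monotonicity of $m'$ gives the inequality. If $k$ is raised, part (iii) gives exactly $k_{2}m'(\delta^{R}(s;p_{2}))\ge k_{1}m'(\delta^{R}(s;p_{1}))$. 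Integrating from $\tau$ then yields $u^{R}(\theta;\tau;p_{2})\le u^{R}(\theta;\tau;p_{1})$ for all $\theta\ge\tau$. In particular, $u^{R}(\gamma_{1};\tau;p_{2})\le u^{R}(\gamma_{1};\tau;p_{1})<0$.

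\emph{Transfer from $\gamma_{1}$ to $\gamma_{2}$.} By the slope analysis in the proof of Lemma \ref{lem:p-tau-case-1}, applied at $p_{2}$, the slope of $u^{R}(\cdot;\tau;p_{2})$ is negative on $(\eta_{2},\gamma_{2})$ and positive on $(\gamma_{2},1]$. Hence $\gamma_{2}$ minimizes $u^{R}(\cdot;\tau;p_{2})$ over $[\max\{\tau,\eta_{2}\},1]$. This interval contains $\gamma_{1}$, because $\gamma_{1}>\tau$ and $\gamma_{1}>\eta_{1}\ge\eta_{2}$. It follows that $u^{R}(\gamma_{2};\tau;p_{2})\le u^{R}(\gamma_{1};\tau;p_{2})<0$.

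\emph{Third assertion.} Suppose in addition that $\tau<\eta_{2}$. On $[0,\eta_{2}]$, the map $\tau'\mapsto u^{R}(\gamma_{2};\tau';p_{2})=\int_{\tau'}^{\gamma_{2}}[1-k_{2}m'(\delta^{R})]$ is decreasing, since raising $\tau'$ removes a region where the integrand is positive. So the set in the definition of $\underline{\tau}(p_{2})$ is an upper interval of $[0,\eta_{2}]$. Since $\tau$ belongs to it, $\tau>\underline{\tau}(p_{2})$ follows after checking strictness at the infimum by continuity. Thus $\tau\in(\underline{\tau}(p_{2}),\eta_{2})$ with $\eta_{2}<\gamma_{2}$, and the Lemma \ref{lem:p-tau-case-3} case applies at $p_{2}$.

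I expect the main obstacle to be the pointwise slope comparison when $k$ is raised. There a larger $k$ reduces mismatch, so the comparison rests on the log-concavity argument in Claim \ref{clm:pointwise}(iii) rather than on simple monotonicity of $\delta^{R}$. The comparison must also hold uniformly across the pooling region and the interior region; Claim \ref{clm:pointwise} covers both, via its corner and interior cases.
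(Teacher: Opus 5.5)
Your proposal is correct and follows essentially the same route as the paper. The steps match: the pointwise comparison $k_2 m'(\delta^R(s;p_2))\ge k_1 m'(\delta^R(s;p_1))$ via Claim \ref{clm:pointwise}(ii)--(iii) with $L=0$; then the passage from $\gamma_1$ to $\gamma_2$ because the slope of $u^R(\cdot;\tau;p_2)$ is nonpositive on $(\gamma_1,\gamma_2]$; and finally strict monotonicity of $\tau'\mapsto u^R(\gamma_2;\tau';p_2)$ on $[0,\eta_2]$, which places $\tau$ above $\underline{\tau}(p_2)$. The differences are cosmetic: you cite the slope pattern from the proof of Lemma \ref{lem:p-tau-case-1} where the paper re-derives $\delta^R(s;p_2)\ge\eta_2$ on $(\gamma_1,\gamma_2]$ directly, and you take $u^R(\gamma_1;\tau;p_1)<0$ as part of the case hypothesis, whereas the paper verifies it with the monotonicity-in-$\tau'$ argument you use in your last step.
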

\begin{proof}
Write $\eta_{i}:=\eta(k_{i})$ and $\gamma_{i}:=\gamma(p_{i})$. From
the proof of Lemma \ref{Lemma:-eta-gamma}, $J(\gamma)=(1-c)/(km''(\eta(k)))$
is strictly decreasing in $c$ and in $k$ (the latter because $km''(\eta(k))$
is strictly increasing in $k$), so $\gamma_{2}>\gamma_{1}$; since
$\eta_{2}\le\eta_{1}<\gamma_{1}<\gamma_{2}$, $\eta_{2}<\gamma_{2}$.

Since $\delta^{R}(\tau')\le\tau'<\eta_{1}$, $\partial u^{R}(\gamma_{1};\tau';p_{1})/\partial\tau'=-\left[1-k_{1}m'(\delta^{R}(\tau'))\right]<0$
for $\tau'<\eta_{1}$, so $u^{R}(\gamma_{1};\cdot;p_{1})$ is strictly
decreasing on $[0,\eta_{1}]$; with $u^{R}(\gamma_{1};\underline{\tau}(p_{1});p_{1})\le0$
by the definition of $\underline{\tau}$ in (\ref{eq:tau_underline})
and continuity, $\tau\in(\underline{\tau}(p_{1}),\eta_{1})$ gives
$u^{R}(\gamma_{1};\tau;p_{1})<0$.

For every $s$, $k_{2}m'(\delta^{R}(s;p_{2}))\ge k_{1}m'(\delta^{R}(s;p_{1}))$:
in the $c$-direction this follows from Claim \ref{clm:pointwise}(ii)
with $L=0$ and $m'$ increasing; in the $k$-direction from Claim
\ref{clm:pointwise}(iii) with $L=0$ (note $\hat{J}(s;0)=J(s)>0$
for $s<1$). Hence $u^{R}(\theta;\tau;p_{2})\le u^{R}(\theta;\tau;p_{1})$
for every $\theta$. Moreover, for $s\in(\gamma_{1},\gamma_{2}]$,
$\delta^{R}(s;p_{2})\ge\eta_{2}$: if $\delta^{R}(s;p_{2})=s$ this
follows from $s>\gamma_{1}>\eta_{1}\ge\eta_{2}$; if interior, $d^{R}(\cdot;p_{2})$
is strictly decreasing (proof of Lemma \ref{lem:Relaxed-program})
with $d^{R}(\gamma_{2};p_{2})=\eta_{2}$ (proof of Lemma \ref{Lemma:-eta-gamma}).
Hence $1-k_{2}m'(\delta^{R}(s;p_{2}))\le0$ on $(\gamma_{1},\gamma_{2}]$,
and 
\[
u^{R}(\gamma_{2};\tau;p_{2})\le u^{R}(\gamma_{1};\tau;p_{2})\le u^{R}(\gamma_{1};\tau;p_{1})<0.
\]
The last statement of the claim follows because $\{\tau'\in[0,\eta_{2}]:u^{R}(\gamma_{2};\tau';p_{2})<0\}=(\underline{\tau}(p_{2}),\eta_{2}]$
by the same monotonicity-in-$\tau'$ argument applied at $p_{2}$. 
\end{proof}
\begin{claim}
\label{clm:multiplier}Under the hypotheses of Claim \ref{clm:irmonotone},
and supposing additionally that the Lemma \ref{lem:p-tau-case-3}
case applies to \ref{eq:Program-v} at $p_{2}$, $L^{0}(\phi^{\tau}(p_{2});p_{2})>L^{0}(\phi^{\tau}(p_{1});p_{1})$. 
\end{claim}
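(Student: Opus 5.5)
Write $\ell_{i}:=L^{0}(\phi^{\tau}(p_{i});p_{i})$ for $i=1,2$. The plan is a proof by contradiction using the defining equation $U(\phi^{\tau};\tau)=0$ of Lemma \ref{lem:p-tau-case-3}. That equation can be read as the condition that the utility schedule generated by the constant effective multiplier $\ell$ below $\phi$ returns to zero exactly at $\phi$, where $\phi$ is the point with $\delta(\phi;\ell)=\eta$. So suppose $\ell_{2}\le\ell_{1}$.

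\textbf{Step 1 (locating $\phi$).} At parameters $p$ and constant multiplier $\ell$, the marginal slope $1-km'(\delta(s;\ell))$ equals zero exactly where $\delta(s;\ell)=\eta$. By the FOC in Lemma \ref{lem:p-tau-case-3}, $\phi^{\tau}$ is the unique $s\in(\eta,\gamma)$ with $\delta(s;\ell)=\eta$. This is equivalent to $\hat J(s;\ell)=G(\eta)=(1-c)/(km''(\eta))=J(\gamma)$, which is just $\ell=L^{0}(s)$.

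\textbf{Step 2 (pointwise comparison).} For every $s\in[\tau,\phi^{\tau}(p_{1})]$, I will show
\[
k_{2}m'(\delta(s;\ell_{2};p_{2}))\ge k_{1}m'(\delta(s;\ell_{1};p_{1})).
\]
First, Claim \ref{clm:pointwise}(i) with $\ell_{2}\le\ell_{1}$ gives $\delta(s;\ell_{2};p_{2})\ge\delta(s;\ell_{1};p_{2})$. Then Claim \ref{clm:pointwise}(ii) or (iii), applied at $L=\ell_{1}$ (where $\hat J>0$ for $s<\phi$), handles the change in $c$ or in $k$. The inequality is strict on the positive-measure set where $\delta(s;\ell_{1};p_{1})\in(0,s)$, namely $s\in(\beta^{\tau}(p_{1}),\phi^{\tau}(p_{1}))$. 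Here I use that $\beta^{\tau}<\phi^{\tau}$, which holds because $\phi^{\tau}>\eta>\beta$ region-wise, as in Lemma \ref{lem:p-tau-case-3}.

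Consequently, the utility $\int_{\tau}^{\theta}[1-k_{2}m'(\delta(s;\ell_{2};p_{2}))]ds$ lies strictly below $U(\theta;\tau;p_{1})$ for $\theta$ near and at $\phi_{1}:=\phi^{\tau}(p_{1})$. In particular, at $p_{2}$ with multiplier $\ell_{2}$, utility at $\phi_{1}$ is strictly negative.

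\textbf{Step 3 (contradiction).} At $p_{2}$, the schedule with constant multiplier $\ell_{2}$ has slope zero at $\phi_{2}:=\phi^{\tau}(p_{2})$, where $\ell_{2}=L^{0}(\phi_{2};p_{2})$, and utility there equals zero. The difficulty is that $\phi_{2}$ may lie above or below $\phi_{1}$. To handle this, I use the single-dip shape: with fixed $\ell$, the slope is positive until $\delta(s;\ell)$ reaches $\eta$ and negative afterward until $\phi$. This follows from the $\delta$ structure, since $\delta(s;\ell)$ equals $s$ and then decreases.

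If $\phi_{2}\ge\phi_{1}$, utility is nonincreasing on $(\eta_{2},\phi_{2})$, which contains the region after the peak. Either $\phi_{1}$ lies past the peak, in which case $0=u(\phi_{2})\le u(\phi_{1})<0$, a contradiction. Or $\phi_{1}$ lies before the peak, in which case utility is positive there, again a contradiction, since the schedule starts at $0$ and rises.

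If $\phi_{2}<\phi_{1}$, I compare instead in the other direction. Since $L^{0}(\cdot;p)$ is decreasing, and since $L^{0}(\theta;p)=1-F(\theta)-f(\theta)J(\gamma(p))$ increases in $p$ because $\gamma$ rises (Claim \ref{clm:irmonotone}), we get $\ell_{2}=L^{0}(\phi_{2};p_{2})>L^{0}(\phi_{1};p_{2})\ge L^{0}(\phi_{1};p_{1})=\ell_{1}$, a contradiction immediately. This observation may in fact shortcut the whole argument: it suffices to show $\phi_{2}\le\phi_{1}$, or to handle the remaining case via Step 2.

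\textbf{Main obstacle.} I expect the main obstacle to be the case $\phi_{2}>\phi_{1}$, that is, making rigorous the sign pattern of the slope of the $p_{2}$ schedule on $[\phi_{1},\phi_{2}]$. I would try to establish it through the monotonicity of $\delta(\cdot;\ell)$ past $\beta$, namely Claim \ref{clm:pointwise} combined with the decreasing property of $\hat J$ from appendix \ref{OA:Adjusted-hazardrate}.
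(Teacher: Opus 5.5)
Your proposal is correct and is essentially the paper's proof. The paper also assumes $\ell_2\le\ell_1$ and uses that $L^0(\theta;p)$ is strictly decreasing in $\theta$ and strictly increasing in $p$ to force $\phi_2>\phi_1$. Your case split reaches the same point, and your ``before the peak'' subcase is vacuous since $\phi_1>\eta_1\ge\eta_2$. The paper then combines Claim \ref{clm:pointwise}(i) with (ii)/(iii) to get a strictly negative integral on $(\tau,\phi_1]$, and uses $\delta_2(s;\ell_2)\ge\eta_2$ on $(\phi_1,\phi_2]$ to contradict $U(\phi_2;\tau)=0$ at $p_2$, exactly as you do.

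One small fix: $\beta^{\tau}<\phi^{\tau}$ does not follow from ``$\eta>\beta$''; in fact $\beta^{\tau}\ge\eta$. It follows instead from $\delta(\phi^{\tau};\ell)=\eta<\phi^{\tau}$, which places $\phi^{\tau}$ in the interior (non-corner) region.
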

\begin{proof}
By Lemma \ref{lem:p-tau-case-3}, $\phi_{i}:=\phi^{\tau}(p_{i})\in(\eta_{i},\gamma_{i})$
is well defined, and with $\ell_{i}:=L^{0}(\phi_{i};p_{i})$ and $\delta_{i}(s;L):=\delta(s;L;p_{i})$,
\begin{equation}
U(\phi_{i};\tau)=\int_{\tau}^{\phi_{i}}\left[1-k_{i}m'(\delta_{i}(s;\ell_{i}))\right]ds=0.\label{eq:CS-Prop-4}
\end{equation}
Also $\delta_{i}(\phi_{i};\ell_{i})=\eta_{i}$: $\hat{J}(\phi_{i};\ell_{i})=J(\gamma_{i})=G(\eta_{i};p_{i})$
by the definitions of $L^{0}$ and $\gamma$, and $\phi_{i}>\eta_{i}$
rules out the corner in (\ref{eq:CS-Prop-1}). Since $\hat{J}(\cdot;\ell_{i})$
is nonincreasing on $[\tau,\phi_{i}]$ (appendix \ref{OA:Adjusted-hazardrate}),
$\hat{J}(s;\ell_{i})\ge J(\gamma_{i})>0$ there.

Suppose toward a contradiction that $\ell_{2}\le\ell_{1}$. Since
$L^{0}(\theta;p)=1-F(\theta)-f(\theta)J(\gamma(p))$ is strictly increasing
in $p$ pointwise ($J(\gamma(p))$ is strictly decreasing, as in Claim
\ref{clm:irmonotone}) and strictly decreasing in $\theta$ on $[0,\gamma(p))$
(proof of Lemma \ref{lem:p-tau-case-2}, with $h$ strictly increasing),
$L^{0}(\phi_{2};p_{2})=\ell_{2}\le\ell_{1}=L^{0}(\phi_{1};p_{1})<L^{0}(\phi_{1};p_{2})$
implies $\phi_{2}>\phi_{1}$.

For $s\in(\tau,\phi_{1}]$, combining Claim \ref{clm:pointwise}(i)
($\delta_{2}(s;\ell_{2})\ge\delta_{2}(s;\ell_{1})$, as $\ell_{2}\le\ell_{1}$)
with Claim \ref{clm:pointwise}(ii)--(iii), 
\[
k_{2}m'(\delta_{2}(s;\ell_{2}))\ge k_{2}m'(\delta_{2}(s;\ell_{1}))\ge k_{1}m'(\delta_{1}(s;\ell_{1})).
\]
The second inequality is strict on $\{s\in(\tau,\phi_{1}]:\delta_{1}(s;\ell_{1})<s\}$,
which is an interval of positive length ending at $\phi_{1}$: the
corner condition $G(s;p_{1})\le\hat{J}(s;\ell_{1})$ crosses once
($G$ strictly increasing, $\hat{J}(\cdot;\ell_{1})$ nonincreasing),
and $\delta_{1}(\phi_{1};\ell_{1})=\eta_{1}<\phi_{1}$ shows $\delta_{1}$
is interior at $\phi_{1}$. (In the $k$-direction the inequality
is strict at every $s>0$.) Hence 
\[
\int_{\tau}^{\phi_{1}}\left[1-k_{2}m'(\delta_{2}(s;\ell_{2}))\right]ds<\int_{\tau}^{\phi_{1}}\left[1-k_{1}m'(\delta_{1}(s;\ell_{1}))\right]ds=0.
\]
For $s\in(\phi_{1},\phi_{2}]$, $\delta_{2}(s;\ell_{2})\ge\eta_{2}$:
if $\delta_{2}(s;\ell_{2})=s$, this follows from $s>\phi_{1}>\eta_{1}\ge\eta_{2}$;
if interior, $\delta_{2}(\cdot;\ell_{2})$ is nonincreasing on its
interior region ($G$ strictly increasing, $\hat{J}(\cdot;\ell_{2})$
nonincreasing on $[\tau,\phi_{2}]$ by appendix \ref{OA:Adjusted-hazardrate})
with $\delta_{2}(\phi_{2};\ell_{2})=\eta_{2}$. Hence $1-k_{2}m'(\delta_{2}(s;\ell_{2}))\le0$
on $(\phi_{1},\phi_{2}]$. Summing the two pieces gives $U(\phi_{2};\tau)<0$
at $p_{2}$, contradicting (\ref{eq:CS-Prop-4}). Hence $\ell_{2}>\ell_{1}$. 
\end{proof}
\begin{claim}
\label{clm:rent}If $\tau\in(0,\eta)$ and $X$ satisfies $\tau-km(\tau)=\left[1-km'(\tau)\right]X$,
then $m'(\tau)X-m(\tau)=\left[\tau m'(\tau)-m(\tau)\right]/\left[1-km'(\tau)\right]>0$. 
\end{claim}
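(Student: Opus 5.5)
Since $\tau\in(0,\eta)$ and $m'$ is strictly increasing with $km'(\eta)=1$, we have $km'(\tau)<1$, so $1-km'(\tau)>0$. The hypothesis therefore determines $X$ uniquely:
\[
X=\frac{\tau-km(\tau)}{1-km'(\tau)}.
\]
The plan is simply to substitute this expression into $m'(\tau)X-m(\tau)$ and simplify over the common denominator $1-km'(\tau)$.

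For the substitution, the numerator of the combined fraction is
\[
m'(\tau)\bigl[\tau-km(\tau)\bigr]-m(\tau)\bigl[1-km'(\tau)\bigr].
\]
Expanding, the two cross terms $-km'(\tau)m(\tau)$ and $+km(\tau)m'(\tau)$ cancel. What remains is $\tau m'(\tau)-m(\tau)$, which gives the claimed identity.

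For the sign, the denominator is positive by the argument above. The numerator $\tau m'(\tau)-m(\tau)$ is positive for $\tau>0$. This follows from strict convexity of $m$ together with $m(0)=0$: indeed $m(\tau)=\int_0^\tau m'(s)\,ds<\tau m'(\tau)$, since $m'$ is strictly increasing. The same inequality was already used in the proofs of Lemma \ref{Lemma:Pi-v(v)-quasiconcave} and Claim \ref{clm:psi0}.

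There is no real obstacle here. The only point needing care is to confirm $1-km'(\tau)\neq0$, so that dividing by it to solve for $X$ is legitimate; this is exactly what the restriction $\tau<\eta$ guarantees.
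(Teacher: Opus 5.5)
Your proof is correct and follows the paper's own argument: substitute $X=(\tau-km(\tau))/(1-km'(\tau))$, note that the cross terms cancel, and sign the result using $m(\tau)<\tau m'(\tau)$ (from strict convexity and $m(0)=0$) and $1-km'(\tau)>0$ (from $\tau<\eta$). Your explicit remark that $\tau<\eta$ is what makes solving for $X$ legitimate is a small clarification that the paper leaves implicit.
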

\begin{proof}
Substitute $X$ and simplify; strict convexity with $m(0)=0$ gives
$m(\tau)<\tau m'(\tau)$, and $1-km'(\tau)>0$ since $\tau<\eta$. 
\end{proof}
\emph{Part 1: Proof for the effect of $c$ on $\tau^{v}$}
\begin{proof}
Fix $k$ and let $c_{1}<c_{2}$ lie in the maintained parameter region;
write $\tau_{i}:=\tau^{v}(c_{i})$ and $\Lambda_{i}:=\Lambda^{\tau_{i}}(\tau_{i};c_{i})$.
Note $\Psi_{0}$ does not depend on $c$.

Suppose first $\Lambda_{1}=0$. By Claim \ref{clm:psi0}, $\tau_{1}=\tau_{0}$,
and since $\tau_{0}<\eta$, either the Lemma \ref{lem:p-tau-case-1}
or the Lemma \ref{lem:p-tau-case-3} case applies to $\mathcal{P}^{v}(\tau_{0})$
at $c_{2}$. In the Lemma \ref{lem:p-tau-case-1} case, $\tau_{0}<b^{R}(c_{2})$
(Claim \ref{clm:psi0}(i)) gives $\Psi^{v}(\tau_{0};c_{2})=\psi^{v}(a^{R}(\tau_{0};c_{2}),\tau_{0})=\psi^{v}(0,\tau_{0})=\Psi_{0}(\tau_{0})=0$,
so $\tau_{2}=\tau_{0}=\tau_{1}$ by Lemma \ref{Lemma:Pi-v(v)-quasiconcave}.
In the Lemma \ref{lem:p-tau-case-3} case, $\Psi^{v}(\tau_{0};c_{2})=\Psi_{0}(\tau_{0})+\left[1-km'(\tau_{0})\right]L^{0}(\phi^{\tau_{0}}(c_{2});c_{2})/f(\tau_{0})>0$,
since $L^{0}(\phi;c_{2})>0$ for $\phi<\gamma(c_{2})$ and $km'(\tau_{0})<1$;
hence $\tau_{2}<\tau_{1}$.

Suppose instead $\Lambda_{1}>0$, so the Lemma \ref{lem:p-tau-case-3}
case applies at $(\tau_{1},c_{1})$ and $\Lambda_{1}=L^{0}(\phi^{\tau_{1}}(c_{1});c_{1})$.
By Claim \ref{clm:irmonotone} ($c$-direction, with $\tau_{1}<\eta$),
the Lemma \ref{lem:p-tau-case-3} case applies at $(\tau_{1},c_{2})$;
by Claim \ref{clm:multiplier}, $\ell_{2}:=L^{0}(\phi^{\tau_{1}}(c_{2});c_{2})>\Lambda_{1}$.
Hence 
\[
\Psi^{v}(\tau_{1};c_{2})=\Psi_{0}(\tau_{1})+\left[1-km'(\tau_{1})\right]\frac{\ell_{2}}{f(\tau_{1})}>\Psi_{0}(\tau_{1})+\left[1-km'(\tau_{1})\right]\frac{\Lambda_{1}}{f(\tau_{1})}=\Psi^{v}(\tau_{1};c_{1})=0,
\]
so $\tau_{2}<\tau_{1}$. In all cases $\tau_{2}\le\tau_{1}$: $\tau^{v}$
is weakly decreasing in $c$.

For strictness, if the optimal mechanism at $c_{2}$ has a positive
measure of included types with binding IR, then $\Lambda_{2}>0$ and
$\tau_{2}<\tau_{0}$ (Claim \ref{clm:psi0}(iii)); if $\Lambda_{1}=0$
then $\tau_{1}=\tau_{0}>\tau_{2}$, and if $\Lambda_{1}>0$ the display
above gives $\tau_{2}<\tau_{1}$---the decrease is strict either
way. Conversely, if no such positive-measure set exists at $c_{2}$,
then $\tau_{2}=\tau_{0}$ (Claim \ref{clm:psi0}(iii)); $\Lambda_{1}>0$
would give $\tau_{2}<\tau_{1}<\tau_{0}$, a contradiction, so $\Lambda_{1}=0$
and $\tau_{2}=\tau_{1}=\tau_{0}$.
\end{proof}
\emph{Part 2: Proof for the effect of $c$ on the seller's profit}
\begin{proof}
Let $(\alpha_{2},u_{2})$ with cutoff $\tau_{2}$ be optimal at $c_{2}$.
IC and IR do not involve $c$, so the mechanism is feasible at $c_{1}$,
and 
\begin{align*}
\Pi^{v}(\tau^{v}(c_{1});c_{1}) & \ge\int_{\tau_{2}}^{1}\left[\theta-kM(\alpha_{2}(\theta),\theta)-c_{1}\alpha_{2}(\theta)-u_{2}(\theta)\right]f(\theta)d\theta\\
 & =\Pi^{v}(\tau_{2};c_{2})+(c_{2}-c_{1})\int_{\tau_{2}}^{1}\alpha_{2}(\theta)f(\theta)d\theta.
\end{align*}
By Corollary \ref{Corollary:V-Fixed-tau}, $\alpha_{2}=a^{R}(\cdot;c_{2})$
on $[\max\{\tau_{2},\gamma(c_{2})\},1]$; since $b^{R}<1$ (Lemma
\ref{lem:Relaxed-program}), $b^{R}<\gamma$ whenever $\eta<\gamma$
(Lemma \ref{Lemma:-eta-gamma}), and $a^{R}>0$ on $(b^{R},1]$, the
integral is strictly positive. Hence $\Pi^{v}(\tau^{v}(c_{2});c_{2})<\Pi^{v}(\tau^{v}(c_{1});c_{1})$. 
\end{proof}
\emph{Part 3: Proof for the effect of $k$ on $\tau^{v}$}
\begin{proof}
Fix $c$ and let $k_{1}<k_{2}$ lie in the maintained region; write
$\tau_{1}:=\tau^{v}(k_{1})$, $\eta_{i}:=\eta(k_{i})$, $\gamma_{i}:=\gamma(k_{i})$,
and $\Lambda_{1}:=\Lambda^{\tau_{1}}(\tau_{1};k_{1})$. By Claim \ref{clm:psi0},
$\tau_{1}\in(0,\eta_{1})$ and 
\begin{equation}
0=\Psi^{v}(\tau_{1};k_{1})=\Psi_{0}(\tau_{1};k_{1})+\left[1-k_{1}m'(\tau_{1})\right]\frac{\Lambda_{1}}{f(\tau_{1})},\label{eq:CS-Prop-5}
\end{equation}
and, for any $\tau$, 
\begin{equation}
\Psi_{0}(\tau;k_{2})-\Psi_{0}(\tau;k_{1})=(k_{2}-k_{1})\left[m'(\tau)J(\tau)-m(\tau)\right].\label{eq:CS-Prop-6}
\end{equation}
Recall $\gamma_{2}>\gamma_{1}$ and $\eta_{2}<\eta_{1}$ (Claim \ref{clm:irmonotone}
and the proof of Lemma \ref{Lemma:-eta-gamma}). By Corollary \ref{Corollary:V-Fixed-tau},
exactly one of three cases applies to $\mathcal{P}^{v}(\tau_{1})$
at $k_{2}$.

\emph{Case A: the Lemma }\ref{lem:p-tau-case-1}\emph{ case applies
at $(\tau_{1},k_{2})$.} Then $\Lambda_{1}=0$. Indeed, suppose $\Lambda_{1}>0$,
so the Lemma \ref{lem:p-tau-case-3} case applies at $(\tau_{1},k_{1})$
and $\eta_{1}<\gamma_{1}$. If $\eta_{2}\ge\gamma_{2}$, then $\gamma_{2}-\eta_{2}\le0<\gamma_{1}-\eta_{1}$
contradicts $\gamma-\eta$ strictly increasing in $k$ (Lemma \ref{Lemma:-eta-gamma}).
So $\eta_{2}<\gamma_{2}$, and the Lemma \ref{lem:p-tau-case-1} case
requires $\tau_{1}\in[0,\underline{\tau}(k_{2})]\cup[\gamma_{2},1)$;
but $\tau_{1}<\eta_{1}<\gamma_{1}<\gamma_{2}$ rules out the second
interval, while Claim \ref{clm:irmonotone} gives $u^{R}(\gamma_{2};\tau_{1};k_{2})<0$
and hence $\tau_{1}>\underline{\tau}(k_{2})$, ruling out the first.
Thus $\Lambda_{1}=0$ and, by (\ref{eq:CS-Prop-5}), $\Psi_{0}(\tau_{1};k_{1})=0$,
i.e., $\tau_{1}-k_{1}m(\tau_{1})=\left[1-k_{1}m'(\tau_{1})\right]J(\tau_{1})$.
Claim \ref{clm:rent} with $X=J(\tau_{1})$ and (\ref{eq:CS-Prop-6})
give $\Psi_{0}(\tau_{1};k_{2})>0$. Since $a^{R}(\tau_{1};k_{2})$
maximizes $\psi^{v}(\cdot,\tau_{1};k_{2})$ on $Y$ (Lemma \ref{lem:Relaxed-program}),
\[
\Psi^{v}(\tau_{1};k_{2})=\psi^{v}(a^{R}(\tau_{1};k_{2}),\tau_{1};k_{2})\ge\psi^{v}(0,\tau_{1};k_{2})=\Psi_{0}(\tau_{1};k_{2})>0.
\]

\emph{Case B: the Lemma }\ref{lem:p-tau-case-2}\emph{ case applies
at $(\tau_{1},k_{2})$, i.e., $\eta_{2}<\gamma_{2}$ and $\tau_{1}\in[\eta_{2},\gamma_{2})$.}
By the Case 2(ii) computation in the proof of Lemma \ref{Lemma:Pi-v(v)-quasiconcave},
$\Psi^{v}(\tau_{1};k_{2})=(1-c)\tau_{1}+c\eta_{2}-k_{2}m(\eta_{2})$.
Since $k_{2}m(\eta_{2})<\eta_{2}k_{2}m'(\eta_{2})=\eta_{2}$ and $c<1$,
\[
\Psi^{v}(\tau_{1};k_{2})>(1-c)\tau_{1}+c\eta_{2}-\eta_{2}=(1-c)(\tau_{1}-\eta_{2})\ge0.
\]

\emph{Case C: the Lemma }\ref{lem:p-tau-case-3}\emph{ case applies
at $(\tau_{1},k_{2})$,} so $\tau_{1}<\eta_{2}$ and 
\begin{equation}
\Psi^{v}(\tau_{1};k_{2})=\Psi_{0}(\tau_{1};k_{2})+\left[1-k_{2}m'(\tau_{1})\right]\frac{\ell_{2}}{f(\tau_{1})},\qquad\ell_{2}:=L^{0}(\phi^{\tau_{1}}(k_{2});k_{2})>0.\label{eq:CS-Prop-7}
\end{equation}
If $\Lambda_{1}=0$, then $\Psi_{0}(\tau_{1};k_{1})=0$, Claim \ref{clm:rent}
and (\ref{eq:CS-Prop-6}) give $\Psi_{0}(\tau_{1};k_{2})>0$, and
both terms in (\ref{eq:CS-Prop-7}) are positive ($\tau_{1}<\eta_{2}$),
so $\Psi^{v}(\tau_{1};k_{2})>0$. If $\Lambda_{1}>0$, the Lemma \ref{lem:p-tau-case-3}
case applies at $(\tau_{1},k_{1})$ by Claim \ref{clm:psi0}(ii),
so Claim \ref{clm:multiplier} ($k$-direction) gives $\ell_{2}>\Lambda_{1}$,
and by (\ref{eq:CS-Prop-6}) and (\ref{eq:CS-Prop-5}), 
\[
\Psi^{v}(\tau_{1};k_{2})>\Psi_{0}(\tau_{1};k_{2})+\left[1-k_{2}m'(\tau_{1})\right]\frac{\Lambda_{1}}{f(\tau_{1})}=\Psi^{v}(\tau_{1};k_{1})+(k_{2}-k_{1})\,m'(\tau_{1})\left[\tilde{X}-\frac{m(\tau_{1})}{m'(\tau_{1})}\right],
\]
where $\tilde{X}:=J(\tau_{1})-\Lambda_{1}/f(\tau_{1})$. By (\ref{eq:CS-Prop-5}),
$\tau_{1}-k_{1}m(\tau_{1})=\left[1-k_{1}m'(\tau_{1})\right]\tilde{X}$,
so Claim \ref{clm:rent} gives $m'(\tau_{1})\tilde{X}-m(\tau_{1})>0$;
hence $\Psi^{v}(\tau_{1};k_{2})>0$.

In every case $\Psi^{v}(\tau_{1};k_{2})>0$. Since $\Psi^{v}(\cdot;k_{2})$
is continuous and strictly increasing with unique zero $\tau^{v}(k_{2})$
(Lemma \ref{Lemma:Pi-v(v)-quasiconcave}), $\tau^{v}(k_{2})<\tau^{v}(k_{1})$.
\end{proof}
\emph{Part 4: Proof for the effect of $k$ on the seller's profit}
\begin{proof}
Let $c=0$ and $0<k_{1}<k_{2}$. Let $(\alpha_{1},u_{1})$ with cutoff
$\tau_{1}:=\tau^{v}(k_{1})$ be optimal at $k_{1}$, with $\delta_{1}(\theta):=\theta-\alpha_{1}(\theta)\in[0,\theta]$
and $u_{1}(\theta)=\int_{\tau_{1}}^{\theta}\left[1-k_{1}m'(\delta_{1}(s))\right]ds$.
Define $T(z):=(m')^{-1}((k_{1}/k_{2})m'(z))$ for $z>0$ and $T(0):=0$;
$T$ is continuous and strictly increasing with $T(z)<z$ for $z>0$.
Construct a mechanism at $k_{2}$ with inclusion set $[\tau_{1},1]$,
mismatch $\delta_{2}:=T\circ\delta_{1}$, allocation $\alpha_{2}(\theta):=\theta-\delta_{2}(\theta)\in[\alpha_{1}(\theta),\theta]$,
utility $u_{2}(\theta):=\int_{\tau_{1}}^{\theta}\left[1-k_{2}m'(\delta_{2}(s))\right]ds$,
and transfer $t_{2}(\theta):=\theta-k_{2}m(\delta_{2}(\theta))-u_{2}(\theta)$.

First, $k_{2}m'(\delta_{2})=k_{1}m'(\delta_{1})$ pointwise by construction,
so $u_{2}=u_{1}\ge0$: IR holds and $u_{2}(\tau_{1})=0$.

Second, $\alpha_{2}$ is nondecreasing. For $z>0$, $T'(z)=k_{1}m''(z)/\left[k_{2}m''(T(z))\right]\le1$,
because $T(z)<z$ and log-concavity of $m'$ give 
\[
k_{2}m''(T(z))=k_{1}m'(z)\cdot\frac{m''(T(z))}{m'(T(z))}\ge k_{1}m'(z)\cdot\frac{m''(z)}{m'(z)}=k_{1}m''(z).
\]
Hence $z-T(z)$ is nondecreasing. For $\theta<\theta'$: if $\delta_{1}(\theta')\ge\delta_{1}(\theta)$,
then $\delta_{2}(\theta')-\delta_{2}(\theta)\le\delta_{1}(\theta')-\delta_{1}(\theta)\le\theta'-\theta$,
the last step because $\alpha_{1}$ is nondecreasing; if $\delta_{1}(\theta')<\delta_{1}(\theta)$,
then $\delta_{2}(\theta')-\delta_{2}(\theta)<0\le\theta'-\theta$.
Either way $\alpha_{2}(\theta')\ge\alpha_{2}(\theta)$.

Third, since $\delta_{2}\ge0$, $u_{2}$ satisfies the envelope condition
(\ref{eq:u'-v}); with $\alpha_{2}$ nondecreasing and $u_{2}(\tau_{1})=0$,
IC holds.

Fourth, since $c=0$, the seller's payoff $\Pi_{2}$ from the constructed
mechanism satisfies 
\[
\Pi_{2}-\Pi^{v}(\tau_{1};k_{1})=\int_{\tau_{1}}^{1}\left[k_{1}m(\delta_{1}(\theta))-k_{2}m(\delta_{2}(\theta))\right]f(\theta)d\theta.
\]
Fix $\theta$ with $\delta_{1}(\theta)>0$, set $x:=k_{1}m'(\delta_{1}(\theta))>0$,
and for $k\in[k_{1},k_{2}]$ let $\delta(k):=(m')^{-1}(x/k)$ and
$\phi(k):=k\,m(\delta(k))$, so $\phi(k_{1})=k_{1}m(\delta_{1}(\theta))$
and $\phi(k_{2})=k_{2}m(\delta_{2}(\theta))$. Then 
\[
\phi'(k)=m(\delta(k))-\frac{m'(\delta(k))^{2}}{m''(\delta(k))}<0,
\]
because strict log-concavity of $m'$ implies $m(z)m''(z)<m'(z)^{2}$
for $z>0$: $m''/m'$ strictly decreasing gives $m''(z)m'(s)<m''(s)m'(z)$
for $0<s<z$, and integrating over $s\in(0,z)$ yields $m''(z)m(z)<m'(z)\left[m'(z)-m'(0)\right]=m'(z)^{2}$.
Hence $k_{2}m(\delta_{2}(\theta))<k_{1}m(\delta_{1}(\theta))$ whenever
$\delta_{1}(\theta)>0$, with equality when $\delta_{1}(\theta)=0$.
By Corollary \ref{Corollary:Basic}, $\delta_{1}(\theta)=\theta\ge\tau_{1}>0$
on a positive-measure set, so $\Pi_{2}>\Pi^{v}(\tau_{1};k_{1})$.
Since the constructed mechanism is feasible at $k_{2}$, $\Pi^{v}(\tau^{v}(k_{2});k_{2})\ge\Pi_{2}>\Pi^{v}(\tau^{v}(k_{1});k_{1})$.
As $k_{1}<k_{2}$ were arbitrary, profit is strictly increasing in
$k$ when $c=0$.
\end{proof}
\newpage

\section{\protect\label{OA:Supporting-Anaylsis}Supporting Analysis}

\subsection{\protect\label{OA:Lemma-H-Cutoff}For Lemma \ref{lem:h-IC}}

\emph{Necessary conditions for IC.}
\begin{proof}
The envelope condition (\ref{eq:u'-horizontal})\emph{ }follows from
the envelope theorem.

\emph{Monotonicity.} Fix $\theta_{1}<\theta_{2}$ in $[0,\tau]$.
The two IC constraints
\[
u(\theta_{1})\ge v_{0}-kM(\alpha(\theta_{2}),\theta_{1})-t(\theta_{2}),\qquad u(\theta_{2})\ge v_{0}-kM(\alpha(\theta_{1}),\theta_{2})-t(\theta_{1})
\]
add, after substituting $t(\theta_{i})=v_{0}-kM(\alpha(\theta_{i}),\theta_{i})-u(\theta_{i})$,
to
\[
M(\alpha(\theta_{2}),\theta_{2})-M(\alpha(\theta_{2}),\theta_{1})\le M(\alpha(\theta_{1}),\theta_{2})-M(\alpha(\theta_{1}),\theta_{1}),
\]
i.e. $\int_{\theta_{1}}^{\theta_{2}}M_{2}(\alpha(\theta_{2}),\theta)\,d\theta\le\int_{\theta_{1}}^{\theta_{2}}M_{2}(\alpha(\theta_{1}),\theta)\,d\theta$.
If $\alpha(\theta_{2})<\alpha(\theta_{1})$, then $M_{2}(\alpha(\theta_{2}),\theta)>M_{2}(\alpha(\theta_{1}),\theta)$
for every $\theta$, reversing the inequality; hence $\alpha(\theta_{1})\le\alpha(\theta_{2})$,
so $\alpha$ is nondecreasing.

$u\left(\tau\right)=0$\emph{ if $\tau<1$}. Since $u$ is continuous,
if $u\left(\tau\right)>0$ and $\tau<1$, then excluded types just
above $\tau$ would obtain positive utility from $\tau$'s contract,
which violates their IC to be excluded.

\bigskip
\end{proof}
\emph{Sufficient conditions for IC.}
\begin{proof}
\emph{IC within $[0,\tau]$.} For $\theta,\hat{\theta}\in[0,\tau]$,
the payoff to $\theta$ from reporting $\hat{\theta}$ is
\[
U(\theta,\hat{\theta})=v_{0}-kM(\alpha(\hat{\theta}),\theta)-t(\hat{\theta})=u(\hat{\theta})+k\bigl[M(\alpha(\hat{\theta}),\hat{\theta})-M(\alpha(\hat{\theta}),\theta)\bigr].
\]
Using $u(\theta)-u(\hat{\theta})=-\int_{\hat{\theta}}^{\theta}kM_{2}(\alpha(s),s)\,ds$
and $M(\alpha(\hat{\theta}),\hat{\theta})-M(\alpha(\hat{\theta}),\theta)=-\int_{\hat{\theta}}^{\theta}M_{2}(\alpha(\hat{\theta}),s)\,ds$,
\[
u(\theta)-U(\theta,\hat{\theta})=k\int_{\hat{\theta}}^{\theta}\bigl[M_{2}(\alpha(\hat{\theta}),s)-M_{2}(\alpha(s),s)\bigr]\,ds.
\]
If $\theta>\hat{\theta}$, then $\alpha(s)\ge\alpha(\hat{\theta})$
on $[\hat{\theta},\theta]$, so the integrand is nonnegative; if $\theta<\hat{\theta}$,
then $\alpha(s)\le\alpha(\hat{\theta})$ on $[\theta,\hat{\theta}]$,
the integrand is nonpositive, and reversing the limits restores a
nonnegative value. Either way $u(\theta)\ge U(\theta,\hat{\theta})$,
so no included type gains by misreporting. By IR, $u(\theta)\ge0$,
so no included type prefers an excluded type's contract.

\emph{IC of excluded types.} Suppose $\tau<1$. For $\hat{\theta}\in[0,\tau]$,
let $\Delta(\hat{\theta},\theta):=v_{0}-kM(\alpha(\hat{\theta}),\theta)-t(\hat{\theta})$
be the payoff that type $\theta$ obtains from the included contract
$\hat{\theta}$; since $M(\alpha(\hat{\theta}),\cdot)$ is convex,
$\Delta(\hat{\theta},\cdot)$ is concave. I claim that $\partial_{\theta}\Delta(\hat{\theta},\tau)\le0$
for every $\hat{\theta}\in[0,\tau]$. 

At $\hat{\theta}=\tau$: $\partial_{\theta}\Delta(\tau,\tau)=-kM_{2}(\alpha(\tau),\tau)=u'(\tau^{-})\le0$,
since IR and $u(\tau)=0$ give $u(\theta)\ge0=u(\tau)$ for $\theta<\tau$,
hence a nonpositive left derivative at $\tau$. 

At $\hat{\theta}<\tau$: suppose toward a contradiction that $\partial_{\theta}\Delta(\hat{\theta},\tau)>0$.
By concavity, $\Delta(\hat{\theta},\cdot)$ is then strictly increasing
on $[0,\tau]$, so $\Delta(\hat{\theta},\hat{\theta})<\Delta(\hat{\theta},\tau)$.
But $\Delta(\hat{\theta},\hat{\theta})=u(\hat{\theta})$, and by the
IC within $[0,\tau]$ just established (applied at type $\tau$),
$\Delta(\hat{\theta},\tau)=U(\tau,\hat{\theta})\le u(\tau)=0$. Hence
$u(\hat{\theta})<0$, contradicting IR. Therefore $\partial_{\theta}\Delta(\hat{\theta},\tau)\le0$.

Concavity then makes $\Delta(\hat{\theta},\cdot)$ nonincreasing on
$[\tau,1]$, so for every excluded $\theta>\tau$ and every $\hat{\theta}\in[0,\tau]$,
$\Delta(\hat{\theta},\theta)\le\Delta(\hat{\theta},\tau)\le u(\tau)=0.$
Thus no excluded type gains from an included contract, and exclusion
(utility $0$) is optimal for them. This establishes IC.
\end{proof}

\subsection{\protect\label{OA:Lemma-H-Step-1}For Lemma \ref{lem:h-step-1}}

\emph{$\alpha^{h}$ is nondecreasing.}
\begin{proof}
Differentiate $\partial\psi^{h}/\partial y$ in (\ref{eq:=00005Cpsi^h-derivative})
with respect to $\theta$:
\[
\frac{\partial^{2}\psi^{h}}{\partial y\,\partial\theta}=km''(\theta-y)\left[1+(F/f)'(\theta)\right]+km'''(\theta-y)\frac{F(\theta)}{f(\theta)}.
\]
Strict log-concavity of $F$ gives $(F/f)'>0$; with $m''>0$ and
$m'''\geq0$, the cross-partial is strictly positive for $y<\theta$,
so $\psi^{h}$ has strictly increasing differences in $(y,\theta)$.
Take $\theta_{1}<\theta_{2}$ with $y_{i}:=\alpha^{h}(\theta_{i})$,
and suppose $y_{1}>y_{2}$. Then $y_{2}<y_{1}\leq\theta_{1}<\theta_{2}$,
so both allocations are feasible at both types; optimality yields
\[
\psi^{h}(y_{1},\theta_{1})\geq\psi^{h}(y_{2},\theta_{1})\quad\text{and}\quad\psi^{h}(y_{2},\theta_{2})\geq\psi^{h}(y_{1},\theta_{2}),
\]
whose sum gives $g(y_{2})\geq g(y_{1})$ for $g(y):=\psi^{h}(y,\theta_{2})-\psi^{h}(y,\theta_{1})=\int_{\theta_{1}}^{\theta_{2}}\partial_{\theta}\psi^{h}(y,\theta)\,d\theta$.
But $g'(y)=\int_{\theta_{1}}^{\theta_{2}}\partial_{y\theta}^{2}\psi^{h}(y,\theta)\,d\theta>0$
on $[y_{2},y_{1}]$ (where $y\leq\theta_{1}\leq\theta$ throughout),
so $g(y_{1})>g(y_{2})$, a contradiction. Hence $\alpha^{h}(\theta_{1})\leq\alpha^{h}(\theta_{2})$.
\end{proof}

\subsection{\protect\label{OA:Lemma-H-Step-2}For Lemma \ref{lem:h-step-2}}

\emph{$\Psi^{h}$ is strictly decreasing on $(0,1]$.}
\begin{proof}
$\Psi^{h}$ is continuous on $[0,1]$ because $\alpha^{h}$ is, so
it suffices to separately show that $\Psi^{h}$ is strictly decreasing
in each of the three regions. 

In the pooling region $(0,\beta^{h}]$, $\alpha^{h}(\tau)=0$, so
$\Psi^{h}(\tau)=\psi^{h}(0,\tau)=v_{0}-km(\tau)-km'(\tau)F(\tau)/f(\tau)$
and
\[
(\Psi^{h})'(\tau)=-km'(\tau)-km''(\tau)\frac{F(\tau)}{f(\tau)}-km'(\tau)\left(\frac{F}{f}\right)'(\tau)<0,
\]
where all three terms are strictly negative, using $m',m''>0$ on
$\mathbb{R}_{++}$ and $(F/f)'>0$ from strict log-concavity of $F$. 

In the interior region $(\beta^{h},\theta^{*})$ (with $\theta^{*}$
replaced by $1$ when it does not exist), $\alpha^{h}(\tau)\in(0,\tau)$,
so neither constraint of the inner program binds and the envelope
theorem gives
\begin{align*}
(\Psi^{h})'(\tau)=\frac{\partial\psi^{h}(\alpha^{h}(\tau),\tau)}{\partial\theta} & =-km'(d^{h}(\tau))-km''(d^{h}(\tau))\frac{F(\tau)}{f(\tau)}-km'(d^{h}(\tau))\left(\frac{F}{f}\right)'(\tau).\\
 & =-c-km'(d^{h}(\tau))\left(\frac{F}{f}\right)'(\tau)<0.
\end{align*}
where the second line follows from (\ref{eq:delta-h}).

In the ideal-product region $[\theta^{*},1]$, which is nonempty if
and only if $m''(0)/f(1)\geq c/k$, the constraint $y\leq\tau$ binds.
Since $\alpha^{h}(\tau)=\tau$, using $m(0)=m'(0)=0$,
\[
\Psi^{h}(\tau)=\psi^{h}(\tau,\tau)=v_{0}-c\tau,\qquad(\Psi^{h})'(\tau)=-c<0.
\]

Therefore $\Psi^{h}$ is strictly decreasing on $(0,1]$ in all three
regions.
\end{proof}

\subsection{\protect\label{OA:Lemma-Vertical}For Lemma \ref{lem:v-CutoffMech}}

\emph{Monotonicity of $\alpha$ and the envelope condition }(\ref{eq:u'-v})\emph{
are jointly necessary and sufficient for IC for $\left[\tau,1\right]$.}
\begin{proof}
For $\theta_{1}<\theta_{2}$ in $[\tau,1]$, IC requires
\[
u(\theta_{1})\geq\theta_{1}-kM(\alpha(\theta_{2}),\theta_{1})-t(\theta_{2}),\qquad u(\theta_{2})\geq\theta_{2}-kM(\alpha(\theta_{1}),\theta_{2})-t(\theta_{1}).
\]
Adding and rearranging (the vertical terms cancel),
\[
kM(\alpha(\theta_{2}),\theta_{1})+kM(\alpha(\theta_{1}),\theta_{2})\geq kM(\alpha(\theta_{1}),\theta_{1})+kM(\alpha(\theta_{2}),\theta_{2}).
\]
Strict convexity of $m$ then forces $\alpha(\theta_{1})\leq\alpha(\theta_{2})$;
otherwise, the inequality is reversed. 

To show sufficiency, let the payoff to type $\theta\in[\tau,1]$ from
reporting $\hat{\theta}\in[\tau,1]$ be
\[
U(\theta,\hat{\theta})=\theta-kM(\alpha(\hat{\theta}),\theta)-t(\hat{\theta})=u(\hat{\theta})+(\theta-\hat{\theta})-k\left[M(\alpha(\hat{\theta}),\theta)-M(\alpha(\hat{\theta}),\hat{\theta})\right].
\]
Since $m'(0)=0$, $M(y,\cdot)$ is continuously differentiable with
$M_{2}(y,s)=m'(s-y)$ for $s\geq y$ and $M_{2}(y,s)=-m'(y-s)$ for
$s\leq y$, the two branches agreeing at $s=y$. Hence
\[
M(\alpha(\hat{\theta}),\theta)-M(\alpha(\hat{\theta}),\hat{\theta})=\int_{\hat{\theta}}^{\theta}M_{2}(\alpha(\hat{\theta}),s)\,ds,
\]
and by\emph{ }(\ref{eq:u'-v}), $u(\theta)-u(\hat{\theta})=\int_{\hat{\theta}}^{\theta}\left[1-kM_{2}(\alpha(s),s)\right]ds$.
Combining,
\[
u(\theta)-U(\theta,\hat{\theta})=\int_{\hat{\theta}}^{\theta}k\left[M_{2}(\alpha(\hat{\theta}),s)-M_{2}(\alpha(s),s)\right]ds.
\]
For each fixed $s$, $M_{2}(y,s)$ is nonincreasing in $y$: $m'(s-y)$
decreases in $y$ on $\{y\leq s\}$, $-m'(y-s)$ decreases in $y$
on $\{y\geq s\}$, and both equal zero at $y=s$. If $\hat{\theta}<\theta$,
then $\alpha(\hat{\theta})\leq\alpha(s)$ for $s\in[\hat{\theta},\theta]$,
so the integrand is nonnegative and the integral is nonnegative. If
$\hat{\theta}>\theta$, then $\alpha(\hat{\theta})\geq\alpha(s)$
for $s\in[\theta,\hat{\theta}]$, so the integrand is nonpositive
on $[\theta,\hat{\theta}]$ and the integral from $\hat{\theta}$
to $\theta$ is again nonnegative. In both cases $u(\theta)\geq U(\theta,\hat{\theta})$,
establishing IC within $[\tau,1]$.
\end{proof}
\bigskip

\emph{$u\left(\tau\right)=0$ is necessary and sufficient for IC for
$\left[0,\tau\right]$.}
\begin{proof}
This is vacuous if $\tau=0$; thus, assume $\tau>0$. 

IC for excluded type $\tau-\varepsilon$ gives $(\tau-\varepsilon)-kM(\alpha(\tau),\tau-\varepsilon)-t(\tau)\leq0$;
taking $\varepsilon\downarrow0$ yields $u(\tau)\leq0$, which combined
with IR forces $u(\tau)=0$.

Define $\Delta(\theta',\theta):=\theta-kM(\alpha(\theta'),\theta)-t(\theta')$,
the utility type $\theta$ obtains from mimicking $\theta'$. Since
$M(y,\cdot)$ is convex, $\Delta(\theta',\cdot)$ is concave. I claim
that $\Delta_{2}(\theta',\tau)\geq0$ for all $\theta'\in[\tau,1]$.

At $\theta'=\tau$: $\Delta_{2}(\tau,\tau)=u'(\tau^{+})\geq0$, since
$u(\tau)=0$ and $u\left(\theta\right)\geq0$ on $[\tau,1]$ .

At $\theta'>\tau$: Suppose toward a contradiction that $\Delta_{2}(\theta',\tau)<0$.
By concavity, $\Delta(\theta',\theta)<\Delta(\theta',\tau)$ for all
$\theta>\tau$; setting $\theta=\theta'$ gives 
\[
u(\theta')=\Delta(\theta',\theta')<\Delta(\theta',\tau)\leq u(\tau)=0,
\]
where the second inequality is IC for $\tau$. This contradicts IR
of $\theta'$. Hence $\Delta(\theta',\cdot)$ is nondecreasing at
$\tau$. By concavity, for any $\theta_{e}<\tau$, 
\[
\Delta(\theta',\theta_{e})\leq\Delta(\theta',\tau)\leq u(\tau)=0,
\]
where the second inequality is IC for $\tau$. This is exactly IC
for the excluded type $\theta_{e}<\tau$.
\end{proof}

\subsection{\protect\label{OA:Adjusted-hazardrate}For Lemma \ref{lem:p-tau-case-3}}

\emph{$\left(1-F-L^{0}\left(\phi\right)\right)/f$ is nonincreasing
on $\left[\tau,\phi\right]$. }
\begin{proof}
Differentiating,
\[
\frac{d}{d\theta}\left[\frac{1-F\left(\theta\right)-L^{0}\left(\phi\right)}{f\left(\theta\right)}\right]=-\frac{f\left(\theta\right)^{2}+\left[1-F\left(\theta\right)-L^{0}\left(\phi\right)\right]f'\left(\theta\right)}{f\left(\theta\right)^{2}}.
\]
The ratio is nonincreasing if and only if $f\left(\theta\right)^{2}+\left[1-F\left(\theta\right)-L^{0}\left(\phi\right)\right]f'\left(\theta\right)\geq0$.
$\left(1-F\right)/f$ is decreasing (Assumption \ref{Assumption:F}),
which is equivalent to $f^{2}+\left(1-F\right)f'>0$. Write
\[
f^{2}+\left(1-F-L^{0}\left(\phi\right)\right)f'=\underbrace{\left[f^{2}+\left(1-F\right)f'\right]}_{>0}-L^{0}\left(\phi\right)f'.
\]
If $f'\left(\theta\right)\leq0$, the subtracted term is negative,
so the right-hand side is positive. If $f'\left(\theta\right)>0$,
divide through by $f'\left(\theta\right)$: the inequality $f^{2}+\left(1-F-L^{0}\left(\phi\right)\right)f'\geq0$
becomes $L^{0}\left(\phi\right)\leq f^{2}/f'+\left(1-F\right)$. Since
$f^{2}/f'>0$ and $L^{0}\left(\phi\right)\leq1-F\left(\phi\right)\leq1-F\left(\theta\right)$
for $\theta\leq\phi$, this also holds.
\end{proof}

\subsection{\protect\label{OA:Quasiconcavity-Pi-v}For Lemma \ref{Lemma:Pi-v(v)-quasiconcave}}

\emph{$\Psi^{v}$ is continuous and strictly increasing.}
\begin{proof}
\emph{Case 1: $\eta\ge\gamma$.} In this case, $\Psi^{v}\left(\tau\right)=\psi^{v}\left(a^{R}\left(\tau\right),\tau\right)=:\Psi^{R}\left(\tau\right)$.
Let $J=\left(1-F\right)/f$. By the envelope theorem, $\Psi^{R'}(\tau)=[1-km'(\delta^{R}(\tau))][1-J'(\tau)]+km''(\delta^{R}(\tau))J(\tau).$
Since $1-km'(\delta^{R}(\tau))>0$ almost everywhere and $J'<0$,
$\Psi^{R}$ is strictly increasing.

\emph{Case 2: $\eta<\gamma$. }I split this case into three subcases
spanning Lemmas \ref{lem:p-tau-case-1}-\ref{lem:p-tau-case-3}: (i)
$\tau\notin(\underline{\tau},\gamma)$, (ii) $\tau\in[\eta,\gamma)$,
and (iii) $\tau\in\left(\underline{\tau},\eta\right)$. I show that
$\Psi^{v}$ is continuous and strictly increasing in each segment
and also continuous at the boundary, which then implies that $\Psi^{v}$
is continuous and strictly increasing on $\left[0,1\right]$.

\emph{Case 2(i):} $\tau\notin(\underline{\tau},\gamma)$\emph{. }By
Corollary \ref{Corollary:V-Fixed-tau}, $\alpha^{\tau}(\tau)=a^{R}\left(\tau\right)$.\emph{
}Following the argument for Case 1, $\Psi^{v}$ is strictly increasing.

\emph{Case 2(ii):} $\tau\in[\eta,\gamma)$\emph{. }By Corollary \ref{Corollary:V-Fixed-tau},\emph{
}$\alpha^{\tau}(\tau)=\tau-\eta$ and $kM_{2}(\tau-\eta,\tau)=km'(\eta)=1$,
so the information-rent term vanishes: $\Psi^{v}(\tau)=(1-c)\tau+c\eta-km(\eta)$,
which implies that $\Psi^{v'}(\tau)=1-c>0$. 

\emph{Case 2(iii):} $\tau\in(\underline{\tau},\eta)$\emph{. }By Corollary
\ref{Corollary:V-Fixed-tau},\emph{ }$\alpha^{\tau}(\tau)=a(\tau;\phi^{\tau})=0$
because $\beta^{\tau}>\tau$, and $\Lambda^{\tau}(\tau)=L^{0}(\phi^{\tau})$.
Let $\tilde{J}\left(\tau\right)=\left(1-F\left(\tau\right)-L^{0}\left(\phi^{\tau}\right)\right)/f\left(\tau\right)$.
Therefore, $\Psi^{v}(\tau)=\tau-km(\tau)-[1-km'(\tau)]\tilde{J}\left(\tau\right)$.
For fixed $\phi$, the expression $[1-F(\tau)-L^{0}(\phi)]/f(\tau)$
is decreasing in $\tau$. Differentiating $U(\phi^{\tau};\tau)=0$
gives $d\phi^{\tau}/d\tau=-U_{\tau}(\phi^{\tau};\tau)/U_{\phi}(\phi^{\tau};\tau)<0$
because $U_{\tau}<0$ and $U_{\phi}<0$ on $\left(\eta,\gamma\right)$.
Since $L^{0}$ is decreasing, this implies that $\tilde{J}'\left(\tau\right)<0$.
Hence, $\Psi^{v'}(\tau)=[1-km'(\tau)]+km''(\tau)\widetilde{J}(\tau)-[1-km'(\tau)]\widetilde{J}'(\tau)>0$
because $\tau<\eta$ implies $1-km'(\tau)>0$.

\emph{Continuity at $\underline{\tau},\eta,\gamma$: }At $\tau=\underline{\tau}$,
evaluating at case 2(iii), $\phi^{\underline{\tau}}=\gamma$ by $U(\gamma;\underline{\tau})=u^{R}(\gamma;\underline{\tau})=0$,
so $L^{0}(\phi^{\underline{\tau}})=0$, which recovers case 2(i).
At $\tau=\eta$, evaluating at case 2(ii), $a^{0}(\eta)=0=a(\eta;\eta)=a\left(\eta;\phi^{\eta}\right)$,
matching case 2(iii). At $\tau=\gamma$, evaluating at case 2(i),
$\alpha^{\tau}\left(\gamma\right)=a^{R}(\gamma)=\gamma-\eta=a^{0}(\gamma)$;
since $L^{0}(\gamma)=0$, it matches case 2(ii). 
\end{proof}
\newpage

\section{\protect\label{OA:Uniform-Distribution}Uniform Distribution and
Quadratic Mismatch Cost}

In this appendix section, I provide an example of the solution characterization
using the ``\emph{uniform-quadratic}'' parameterization, where 
\[
F(\theta)=\theta,\qquad f(\theta)=1,\qquad m(z)=\frac{z^{2}}{2}.
\]
Then $M(y,\theta)=(\theta-y)^{2}/2$, $M_{1}=-(\theta-y)$, $M_{2}=\theta-y$,
$M_{12}=-1$, and $(1-F(\theta))/f(\theta)=1-\theta$. Assumptions
\ref{Assumption:F}, \ref{Assumption:m-1}, and \ref{Assumption:FB-ProductLine}
reduce to $c<k$. For brevity, write
\[
r:=c/k\in[0,1),\qquad S:=k+c.
\]
The optimal mechanisms are reported in Subsections \ref{OA:Uniform-Horizontal}
and \ref{OA:Uniform-Vertical}. Their derivations are provided in
Subsections \ref{OA:Uniform-Derivation-Horizontal} and \ref{OA:Uniform-Derivation-Vertical}.

\subsection{\protect\label{OA:Uniform-Horizontal}Optimal Mechanism for Section
\ref{Section:PureHorizontal}}

Assume first that $c>0$. Otherwise, Lemma \ref{Lemma:Horizontal-c=00003D0}
applies.

\emph{Coverage cutoff.} The optimal cutoff is 

\[
\tau^{h}=\begin{cases}
1, & v_{0}\ge c,\\[4pt]
\dfrac{v_{0}}{c}, & \dfrac{c^{2}}{k}\le v_{0}<c,\\[8pt]
\dfrac{2c-\sqrt{3c^{2}-2kv_{0}}}{k}, & \dfrac{3c^{2}}{8k}\le v_{0}<\dfrac{c^{2}}{k},\\[10pt]
\sqrt{\dfrac{2v_{0}}{3k}}, & 0<v_{0}<\dfrac{3c^{2}}{8k}.
\end{cases}
\]
\emph{Allocation and mismatch.} The pointwise rule is unchanged; only
the inclusion set $[0,\tau^{h}]$ shrinks. With $\delta^{h}(\theta):=\theta-\alpha^{h}(\theta)$,
for $\theta\in[0,\tau^{h}]$ the allocation is
\[
\alpha^{h}(\theta)=\max\bigl\{0,\,\min\{\theta,\,2\theta-r\}\bigr\}=\begin{cases}
0, & \theta\in[0,\,r/2],\\
2\theta-r, & \theta\in(r/2,\,r),\\
\theta, & \theta\in[r,\,1],
\end{cases}
\]
and the induced mismatch is
\[
\delta^{h}(\theta)=\min\bigl\{\theta,\,(r-\theta)_{+}\bigr\}=\begin{cases}
\theta, & \theta\in[0,\,r/2],\\
r-\theta, & \theta\in(r/2,\,r),\\
0, & \theta\in[r,\,1],
\end{cases}
\]
where $(x)_{+}:=\max\{x,0\}$.

\emph{Utility. }A compact way to report the utility schedule is to
define

\begin{equation}
H(x):=\int_{0}^{x}\delta^{h}(s)\,ds=\begin{cases}
\dfrac{x^{2}}{2}, & x\in[0,r/2],\\[6pt]
rx-\dfrac{x^{2}}{2}-\dfrac{r^{2}}{4}, & x\in(r/2,r),\\[8pt]
\dfrac{r^{2}}{4}, & x\in[r,1].
\end{cases}\label{eq:H-UniformQuadratic}
\end{equation}
Then

\[
u^{h}(\theta)=k\left[H(\tau^{h})-H(\theta)\right],\qquad\theta\in[0,\tau^{h}].
\]

\emph{Transfer.} The transfer is

\[
t^{h}(\theta)=v_{0}-\frac{k}{2}\left(\delta^{h}(\theta)\right)^{2}-u^{h}(\theta).
\]

\emph{Profit.} The seller's profit is $\Pi^{h}(\tau^{h})=\int_{0}^{\tau^{h}}\Psi^{h}(\theta)d\theta$,
which is 
\[
\Pi^{h}=\begin{cases}
v_{0}-\dfrac{c}{2}+\dfrac{c^{3}}{12k^{2}}, & v_{0}\geq c,\\[3mm]
\dfrac{v_{0}^{2}}{2c}+\dfrac{c^{3}}{12k^{2}}, & c^{2}/k\leq v_{0}<c,\\[3mm]
\dfrac{v_{0}r}{2}-\dfrac{kr^{3}}{16}+a\Bigl(\tau^{h}-\dfrac{r}{2}\Bigr)-c\Bigl((\tau^{h})^{2}-\dfrac{r^{2}}{4}\Bigr)+\dfrac{k}{6}\Bigl((\tau^{h})^{3}-\dfrac{r^{3}}{8}\Bigr), & 3c^{2}/(8k)\leq v_{0}<c^{2}/k,\\[3mm]
\dfrac{2v_{0}}{3}\,\tau^{h}=\dfrac{2v_{0}}{3}\sqrt{\dfrac{2v_{0}}{3k}}, & 0<v_{0}<3c^{2}/(8k),
\end{cases}
\]
where $a:=v_{0}+c^{2}/(2k)$ and $\tau^{h}$ is as above. The first
two values agree at $v_{0}=c$, and the ideal-region value $v_{0}^{2}/(2c)+c^{3}/(12k^{2})$
specializes to the full-coverage profit when coverage becomes complete.

\subsection{\protect\label{OA:Uniform-Vertical}Optimal Mechanism for Section
\ref{Section:Vertical}}

The following objects are determined by the parameters alone and follow
from straightforward algebra:
\begin{align*}
\eta & :=1/k,\qquad\gamma:=(S-1)/k=1-(1-c)/k,\qquad b^{R}:=(1+r)/2=S/(2k),\\
L^{0}(\theta) & :=\gamma-\theta\quad\text{on }[0,\gamma],\\
a^{R}(\theta) & :=\max\{0,\;2\theta-1-r\},\qquad d^{R}(\theta):=1+r-\theta\text{ for }\theta\ge b^{R}.
\end{align*}
Under this parameterization, $b^{R}$ can also be expressed as $b^{R}=(\eta+\gamma)/2$.

The condition $\eta\ge\gamma$ from Lemma \ref{Lemma:-eta-gamma}
is equivalent to $S\le2$; equivalently, $\eta<\gamma$ holds for
$S>2$.

For $S>2$ (equivalently, $\eta<\gamma$), define the IR-binding threshold
\[
\underline{\tau}:=\frac{2-\sqrt{2}(S-2)}{2k}=\eta-\frac{\sqrt{2}(S-2)}{2k}.
\]
Define the case-boundary curve
\[
c^{*}(k):=2-k+\frac{\sqrt{2}}{3}\bigl[\sqrt{k^{2}-2k+4}-(k-1)\bigr],
\]
and the constants
\[
q(c):=\frac{1}{(1-c)+\sqrt{(1-c)^{2}+3+2\sqrt{2}}},\qquad\tau_{I}^{v}(k):=\frac{(2+k)-\sqrt{k^{2}-2k+4}}{3k}.
\]
Split the parameter region $\{(c,k):0\le c<k,\,k>0\}$ into two cases
\begin{align*}
\text{Case (I):} & \quad S\le2,\;\;\text{or}\;\;S>2\text{ and }c\le c^{*}(k);\\
\text{Case (II):} & \quad S>2\text{ and }c>c^{*}(k).
\end{align*}
Case (I) corresponds to when the relaxed solution is feasible for
\ref{eq:Program-v} at the optimal cutoff (so Lemma \ref{lem:p-tau-case-1}
applies), whereas Case (II) is the region in which the optimal cutoff
lies in the IR-correction region. Case (II) requires $k>1$.

The optimal mechanism has the following closed form.

\emph{Case (I). }
\[
\tau^{v}=\tau_{I}^{v}(k)
\]
\[
\alpha^{\tau}(\theta)=a^{R}(\theta)\ \ \ \text{ on }\left[\tau^{v},1\right]
\]
\[
u^{\tau}(\theta)=\begin{cases}
\dfrac{k}{2}\bigl[(\eta-\tau^{v})^{2}-(\eta-\theta)^{2}\bigr], & \theta\in[\tau^{v},b^{R}],\\[4pt]
\dfrac{k}{2}\bigl[(\eta-\tau^{v})^{2}+(\theta-\gamma)^{2}-\tfrac{1}{2}(\eta-\gamma)^{2}\bigr], & \theta\in[b^{R},1].
\end{cases}
\]

\emph{Case (II).}
\[
\tau^{v}=(1-q(c))/k
\]
\[
\alpha^{\tau}(\theta)=\begin{cases}
0, & \theta\in[\tau^{v},\beta^{\tau}],\\
2\theta-\eta-\phi^{\tau}, & \theta\in[\beta^{\tau},\phi^{\tau}],\\
\theta-\eta, & \theta\in[\phi^{\tau},\gamma],\\
2\theta-1-r, & \theta\in[\gamma,1],
\end{cases}\qquad u^{\tau}(\theta)=\begin{cases}
\dfrac{k}{2}\bigl[(\eta-\tau^{v})^{2}-(\eta-\theta)^{2}\bigr], & \theta\in[\tau^{v},\beta^{\tau}],\\[4pt]
\dfrac{k}{2}(\theta-\phi^{\tau})^{2}, & \theta\in[\beta^{\tau},\phi^{\tau}],\\[4pt]
0, & \theta\in[\phi^{\tau},\gamma],\\[4pt]
\dfrac{k}{2}(\theta-\gamma)^{2}, & \theta\in[\gamma,1].
\end{cases}
\]
where 
\[
\phi^{\tau}=\eta+\sqrt{2}(\eta-\tau^{v})\ \ \ \ ;\ \ \ \ \beta^{\tau}=(\eta+\phi^{\tau})/2
\]

In both cases the transfer is $t(\theta)=\theta-kM(\alpha^{\tau}(\theta),\theta)-u^{\tau}(\theta)$.

\subsection{\protect\label{OA:Uniform-Profit}Profit and $c$}

Proposition \ref{Proposition:Vertical-CS} provides comparative statics
on the seller's profit with respect to $k$ when $c=0$. The following
proposition provides the same comparative static allowing $c$ to
take any value under the uniform-quadratic parameterization.
\begin{prop}
Suppose that $v\left(\theta\right)=\theta$. Under the uniform-quadratic
parameterization:
\begin{itemize}
\item If $c<1/2$, there exists $\underline{k}\left(c\right)>c$ such that
$\Pi^{v}\left(\tau^{v}\right)$ is strictly increasing in $k$ for
all $k>\underline{k}\left(c\right)$. 
\item If $c>1/2$, there exists $\underline{k}\left(c\right)>c$ such that
$\Pi^{v}\left(\tau^{v}\right)$ is strictly decreasing in $k$ for
all $k>\underline{k}\left(c\right)$.
\end{itemize}
\end{prop}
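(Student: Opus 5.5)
The plan is to work entirely with the closed form in Subsection \ref{OA:Uniform-Vertical}. I will show that, for fixed $c\in[0,1)$ and all sufficiently large $k$, the optimum lies in Case (II). There the profit is a polynomial in $1/k$ whose $1/k$ coefficient is $c-\tfrac12$, so the sign of $d\Pi^{v}/dk$ for large $k$ is the sign of $\tfrac12-c$.

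\emph{Step 1: large $k$ puts us in Case (II).} Since $S=k+c>2$ once $k>2$, we have $\eta<\gamma$. Expanding $\sqrt{k^{2}-2k+4}=(k-1)+\tfrac{3}{2k}+O(k^{-2})$ gives
\[
c^{*}(k)=2-k+\tfrac{\sqrt2}{2k}+O(k^{-2})\to-\infty .
\]
Hence there is $\underline{k}(c)>\max\{c,2\}$ such that $c>c^{*}(k)$ for all $k>\underline{k}(c)$, and for those $k$ the Case (II) formulas apply. In that case every breakpoint has the form $a+b/k$ with $a,b$ depending only on $c$:
\[
\tau^{v}=\frac{1-q}{k},\quad \phi^{\tau}=\frac{1+\sqrt2\,q}{k},\quad \beta^{\tau}=\frac{\eta+\phi^{\tau}}{2},\quad \gamma=1-\frac{1-c}{k}.
\]
Here $q=q(c)$ is independent of $k$. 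I would also check that the ordering $\tau^{v}<\beta^{\tau}<\phi^{\tau}<\gamma<1$ holds for large $k$; it does, since the first three points are $O(1/k)$ and $\gamma\to1$.

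\emph{Step 2: compute $\Pi^{v}$ as a function of $k$.} Write
\[
\Pi^{v}=\int_{\tau^{v}}^{1}\Bigl[\theta-\tfrac k2(\theta-\alpha)^{2}-c\alpha-u\Bigr]d\theta
\]
and split it over the four pieces of Case (II).

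On $[\tau^{v},\beta^{\tau}]$ and $[\beta^{\tau},\phi^{\tau}]$, substitute $\theta=x/k$. Each integrand becomes $(1/k)\times$(a polynomial in $x$ with coefficients depending on $c$), and the interval has length $O(1/k)$, so these two pieces contribute $C_{1}(c)/k^{2}$ exactly.

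On $[\phi^{\tau},\gamma]$ the mismatch is $\eta=1/k$ and $u=0$, so the integrand is $(1-c)\theta+(c-\tfrac12)/k$. This piece equals
\[
\tfrac{1-c}{2}\bigl(\gamma^{2}-(\phi^{\tau})^{2}\bigr)+\tfrac{c-1/2}{k}\bigl(\gamma-\phi^{\tau}\bigr)=\tfrac{1-c}{2}-\tfrac{(1-c)^{2}}{k}+\tfrac{c-1/2}{k}+O(k^{-2}).
\]

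On $[\gamma,1]$, substitute $\theta=1-s/k$ with $s\in[0,1-c]$. The integrand is $1-c+O(1/k)$; here the terms $k r^{2}$ and $k(\theta-\gamma)^{2}$ are both $O(1/k)$. So this piece is $(1-c)^{2}/k+O(k^{-2})$.

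Because everything is polynomial after these substitutions, the total is exactly
\[
\Pi^{v}(k)=\tfrac{1-c}{2}+\tfrac{c-1/2}{k}+\tfrac{A_{2}(c)}{k^{2}}+\tfrac{A_{3}(c)}{k^{3}}
\]
for some coefficient functions $A_{2},A_{3}$, and possibly further finite terms in $1/k$. The $(1-c)^{2}/k$ terms cancel between the third and fourth pieces.

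\emph{Step 3: sign of the derivative.} Differentiating,
\[
k^{2}\,\frac{d\Pi^{v}}{dk}=\bigl(\tfrac12-c\bigr)-\tfrac{2A_{2}}{k}-\tfrac{3A_{3}}{k^{2}}-\cdots,
\]
which converges to $\tfrac12-c$ as $k\to\infty$. Enlarging $\underline{k}(c)$ if necessary, the derivative is strictly positive for $k>\underline{k}(c)$ when $c<\tfrac12$, and strictly negative when $c>\tfrac12$.

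The main obstacle is bookkeeping in Step 2. I must confirm that the two lowest pieces are genuinely $O(k^{-2})$ with no hidden $1/k$ term, and that $u$ on $[\gamma,1]$ contributes only $O(k^{-2})$. I would do this first by the explicit rescalings above and only then assemble the coefficients. The exact values of $A_{2}$ and $A_{3}$ are not needed; all that matters is that they are finite functions of $c$.
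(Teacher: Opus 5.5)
Your proposal is correct and takes the paper's route. You establish Case (II) for all large $k$, which you justify more explicitly than the paper does via $c^{*}(k)\to-\infty$, then expand $\Pi^{v}=\tfrac{1-c}{2}+\tfrac{c-1/2}{k}+O(k^{-2})$ and read off the sign of $d\Pi^{v}/dk$ from $\tfrac12-c$. The paper instead computes the exact expression $\Pi^{v}_{II}=\{3(1-c)k^{2}+3(2c-1)k+Q(c)\}/(6k^{2})$, which confirms your leading coefficients and shows there are no terms beyond $k^{-2}$.
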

\begin{proof}
For every fixed $c$, sufficiently large $k$ places the optimum in
Case II. Substituting the Case II allocation and utility schedule
into the seller's objective gives $\Pi_{II}^{v}(c,k)=\{3(1-c)k^{2}+3(2c-1)k+Q(c)\}/(6k^{2})$,
where $Q(c):=-c^{3}+3c^{2}-6c+1-(1-c)q(c)^{2}+2q(c)$. Hence the exact
derivative is
\[
\frac{\partial\Pi_{II}^{v}(c,k)}{\partial k}=\frac{3(1-2c)k-2Q(c)}{6k^{3}}=\frac{1-2c}{2k^{2}}-\frac{Q(c)}{3k^{3}}.
\]
Since $Q(c)$ is finite for each fixed $c$, the sign of this derivative
for sufficiently large $k$ is governed by $1-2c$. If $c<1/2$, there
exists $\underline{k}(c)>c$ such that $\partial\Pi^{v}(\tau^{v})/\partial k>0$
for all $k>\underline{k}(c)$. If $c>1/2$, there exists $\underline{k}(c)>c$
such that $\partial\Pi^{v}(\tau^{v})/\partial k<0$ for all $k>\underline{k}(c)$.
\end{proof}

\subsection{\protect\label{OA:Uniform-Derivation-Horizontal}Derivation of Optimal
Mechanism for Section \ref{Section:PureHorizontal}}

For $y\in[0,\theta]$, define the downward mismatch $d:=\theta-y$.
Under the uniform-{}-quadratic parameterization, $M(y,\theta)=d^{2}/2$
and $M_{2}(y,\theta)=d$. The virtual surplus in Section \ref{Section:PureHorizontal}
therefore becomes

\[
\begin{aligned}\psi^{h}(y,\theta) & =v_{0}-\frac{k}{2}(\theta-y)^{2}-cy-k\theta(\theta-y)\\
 & =v_{0}-c\theta+cd-k\theta d-\frac{k}{2}d^{2}.
\end{aligned}
\]
Since $d\in[0,\theta]$, pointwise maximization is equivalent to maximizing
the last expression over $d\in[0,\theta]$. Its derivative with respect
to $d$ is $c-k\theta-kd=k(r-\theta-d)$. Hence $\delta^{h}(\theta)=\min\{\theta,(r-\theta)_{+}\},$
which yields the allocation rule reported in Subsection \ref{OA:Uniform-Horizontal}.
In particular, the pooling threshold is $\beta^{h}=r/2$, and the
allocation reaches the ideal product at $\theta^{*}=r$.

To characterize the cutoff, evaluate the marginal virtual surplus
at the pointwise allocation.

\[
\Psi^{h}(\theta):=\psi^{h}\bigl(\alpha^{h}(\theta),\theta\bigr)=\begin{cases}
v_{0}-\dfrac{3k}{2}\theta^{2}, & \theta\in[0,r/2],\\[5pt]
v_{0}+\dfrac{c^{2}}{2k}-2c\theta+\dfrac{k}{2}\theta^{2}, & \theta\in(r/2,r),\\[5pt]
v_{0}-c\theta, & \theta\in[r,1].
\end{cases}
\]
The first expression is decreasing on $[0,r/2]$, the second is decreasing
on $(r/2,r)$, and the third is decreasing on $[r,1]$.  Assumption
\ref{Assumption:FB-ProductLine} implies $c<k$. Consequently,

\[
\Psi^{h}(r/2)=v_{0}-\frac{3c^{2}}{8k},\qquad\Psi^{h}(r)=v_{0}-\frac{c^{2}}{k},\qquad\Psi^{h}(1)=v_{0}-c.
\]
Therefore, if $v_{0}\ge c$, then $\Psi^{h}(1)\ge0$ and $\tau^{h}=1$.
If $v_{0}<c$, the cutoff is the unique zero of $\Psi^{h}$, which
yields

\[
\tau^{h}=\begin{cases}
\dfrac{v_{0}}{c}, & \dfrac{c^{2}}{k}\le v_{0}<c,\\[8pt]
\dfrac{2c-\sqrt{3c^{2}-2kv_{0}}}{k}, & \dfrac{3c^{2}}{8k}\le v_{0}<\dfrac{c^{2}}{k},\\[10pt]
\sqrt{\dfrac{2v_{0}}{3k}}, & 0<v_{0}<\dfrac{3c^{2}}{8k}.
\end{cases}
\]

The envelope condition gives $u^{h'}(\theta)=-k\delta^{h}(\theta).$
Using $u^{h}(\tau^{h})=0$, 
\[
u^{h}(\theta)=\int_{\theta}^{\tau^{h}}k\delta^{h}(s)\,ds=k\left[H(\tau^{h})-H(\theta)\right],
\]
where $H$ is the antiderivative reported in (\ref{eq:H-UniformQuadratic}).
The transfer schedule then follows from $t^{h}(\theta)=v_{0}-\frac{k}{2}\bigl(\delta^{h}(\theta)\bigr)^{2}-u^{h}(\theta)$. 

Integrating the virtual surplus $\Psi^{h}$ over $\left[0,\tau^{h}\right]$
gives the profit expressions reported. In particular, 
\[
\Pi^{h}=\int_{0}^{\tau^{h}}\Psi^{h}(\theta)\,d\theta.
\]
When $v_{0}\geq c$ the cutoff is $\tau^{h}=1$ and 
\[
\Pi^{h}=\int_{0}^{r/2}\!\Bigl(v_{0}-\tfrac{3k}{2}\theta^{2}\Bigr)d\theta+\int_{r/2}^{r}\!\Bigl(v_{0}+\tfrac{c^{2}}{2k}-2c\theta+\tfrac{k}{2}\theta^{2}\Bigr)d\theta+\int_{r}^{1}\!\bigl(v_{0}-c\theta\bigr)d\theta=v_{0}-\frac{c}{2}+\frac{c^{3}}{12k^{2}}.
\]
For $v_{0}<c$ the upper limit is the interior cutoff $\tau^{h}<1$.
Integrating the same pieces up to $\tau^{h}$ yields, in the ideal
region $c^{2}/k\leq v_{0}<c$, 
\[
\Pi^{h}=\int_{0}^{r/2}\!\Psi^{h}+\int_{r/2}^{r}\!\Psi^{h}+\int_{r}^{\tau^{h}}\!\bigl(v_{0}-c\theta\bigr)d\theta=\frac{v_{0}^{2}}{2c}+\frac{c^{3}}{12k^{2}};
\]
in the interior region $3c^{2}/(8k)\leq v_{0}<c^{2}/k$, 
\[
\Pi^{h}=\int_{0}^{r/2}\!\Psi^{h}+\int_{r/2}^{\tau^{h}}\!\Psi^{h}=\frac{v_{0}r}{2}-\frac{kr^{3}}{16}+a\Bigl(\tau^{h}-\frac{r}{2}\Bigr)-c\Bigl((\tau^{h})^{2}-\frac{r^{2}}{4}\Bigr)+\frac{k}{6}\Bigl((\tau^{h})^{3}-\frac{r^{3}}{8}\Bigr),
\]
with $a:=v_{0}+c^{2}/(2k)$; and in the pooling region $0<v_{0}<3c^{2}/(8k)$,
\[
\Pi^{h}=\int_{0}^{\tau^{h}}\!\Bigl(v_{0}-\tfrac{3k}{2}\theta^{2}\Bigr)d\theta=v_{0}\tau^{h}-\frac{k}{2}(\tau^{h})^{3}=\frac{2v_{0}}{3}\tau^{h}.
\]

\subsection{\protect\label{OA:Uniform-Derivation-Vertical}Derivation of Optimal
Mechanism for Section \ref{Section:Vertical}}

\subsubsection*{Relaxed Allocation and Threshold Objects}

The relaxed mismatch satisfies $m'(d^{R})=c/k+m''(d^{R})(1-F)/f$,
which under quadratic-uniform reduces to $d^{R}(\theta)=r+1-\theta$.
The corner $d^{R}(b^{R})=b^{R}$ gives $b^{R}=(1+r)/2$. Hence $a^{R}(\theta)=0$
for $\theta\le b^{R}$ and $a^{R}(\theta)=2\theta-1-r$ for $\theta\ge b^{R}$.

The stationary points $\eta$ and $\gamma$ satisfy $m'(\eta)=1/k$
and $(1-F(\gamma))/f(\gamma)=(1-c)/[km''(\eta)]$. The first gives
$\eta=1/k$; the second gives $1-\gamma=(1-c)/k$, i.e., $\gamma=(S-1)/k$.
Both lie in $(0,1)$ when $k>1$ and $c<1$. Direct computation gives
\[
\eta-\gamma=\frac{2-S}{k},\qquad b^{R}=\frac{\eta+\gamma}{2},
\]
so $b^{R}$ is the midpoint of $\eta$ and $\gamma$ under this parameterization.

The cumulative-multiplier function $L^{0}(\theta)=(1-F(\theta))H(\theta)$
with $H(\theta)=1-h(\theta)(1-F(\gamma))/f(\gamma)$, $h:=f/(1-F)$,
becomes
\[
L^{0}(\theta)=(1-\theta)\Bigl[1-\tfrac{1-\gamma}{1-\theta}\Bigr]=(1-\theta)-(1-\gamma)=\gamma-\theta.
\]
The adjusted inverse-hazard rate is therefore $(1-F(\theta)-L^{0}(\theta))/f(\theta)=1-\gamma=(1-c)/k$,
constant in $\theta$.

\subsubsection*{The Threshold $\underline{\tau}$}

Suppose $S>2$, so $\eta<\gamma$. The relaxed utility schedule has
slope $u^{R\prime}(s)=1-ks$ on $[0,b^{R}]$ and $1-k(1+r-s)$ on
$[b^{R},1]$. Direct integration gives
\[
u^{R}(\gamma;\tau)=-\tau+\frac{k\tau^{2}}{2}-\frac{S^{2}-4S+2}{4k}.
\]
Setting this to zero yields $2k^{2}\tau^{2}-4k\tau-(S^{2}-4S+2)=0$,
with discriminant $8k^{2}(S-2)^{2}$. The relevant root (less than
$\eta$) is
\[
\underline{\tau}=\frac{2-\sqrt{2}(S-2)}{2k}.
\]
$\underline{\tau}>0$ iff $S<2+\sqrt{2}$; for $S\ge2+\sqrt{2}$ the
relaxed allocation violates IR for every $\tau\in(0,\eta)$. At $S=2$,
$\underline{\tau}=\eta$.

\subsubsection*{The Lemma \ref{lem:p-tau-case-3} Construction}

For $\phi\in(\tau,\gamma)$, the maximizer $a(\theta;\phi)$ of $\tilde{\psi}^{v}(\cdot,\theta;L^{0}(\phi))$
satisfies, by the FOC,
\[
\theta-a(\theta;\phi)=\frac{c}{k}+(1-\theta)-L^{0}(\phi)=\frac{c}{k}+(1-\gamma)+(\phi-\theta)=\eta+(\phi-\theta),
\]
hence $a(\theta;\phi)=2\theta-\eta-\phi$ whenever positive. The corner
threshold is $\beta(\phi)=(\eta+\phi)/2$. For $\theta\in[\beta(\phi),\phi]$,
$\theta-a(\theta;\phi)=\eta+\phi-\theta$, so $1-km'(\theta-a)=k(\theta-\phi)$.

The utility integral becomes
\[
U(\phi;\tau)=\int_{\tau}^{\beta(\phi)}(1-ks)\,ds+\int_{\beta(\phi)}^{\phi}k(s-\phi)\,ds.
\]
Carrying out both integrals and simplifying using $\beta(\phi)=(\eta+\phi)/2$,
\[
U(\phi;\tau)=\frac{\eta+\phi}{2}-\tau+\frac{k\tau^{2}}{2}-\frac{k(\eta^{2}+\phi^{2})}{4}.
\]
Multiplying by $4k$ and substituting $X:=k\phi-1$ (so $k\phi=X+1$,
$k^{2}\eta^{2}=1$), the equation $U(\phi^{\tau};\tau)=0$ reduces
to
\[
X^{2}=2(1-k\tau)^{2}.
\]
Taking $X>0$ (so $\phi^{\tau}>\eta$),
\[
\phi^{\tau}=\eta+\sqrt{2}(\eta-\tau),\qquad\beta^{\tau}=\frac{\eta+\phi^{\tau}}{2}=\eta+\frac{\sqrt{2}}{2}(\eta-\tau).
\]
At $\tau=\underline{\tau}$, $\phi^{\tau}=\gamma$ and $\beta^{\tau}=b^{R}$;
the Lemma \ref{lem:p-tau-case-3} region degenerates and the construction
merges into Lemma \ref{lem:p-tau-case-1}. At $\tau=\eta$, $\phi^{\tau}=\beta^{\tau}=\eta$.

\subsubsection*{Optimal Cutoff $\tau^{v}$}

\paragraph{Case (I).}

When Lemma \ref{lem:p-tau-case-1} applies at the marginal type, $\alpha^{\tau}(\tau)=0$
(since $\tau^{v}<b^{R}$ holds always, verified below) and $\Lambda^{\tau}(\tau)=0$.
The virtual surplus is
\[
\Psi^{v}(\tau)=\psi^{v}(0,\tau;0)=-\tfrac{3k}{2}\tau^{2}+(2+k)\tau-1.
\]
Setting $\Psi^{v}(\tau^{v})=0$ and selecting the smaller root,
\[
\tau_{I}^{v}=\frac{(2+k)-\sqrt{k^{2}-2k+4}}{3k}.
\]
This is independent of $c$. Direct calculation shows $\tau_{I}^{v}\le1/2\le b^{R}$
for all $k>0$.

\paragraph{Case (II).}

When Lemma \ref{lem:p-tau-case-3} applies at $\tau^{v}$, $\alpha^{\tau}(\tau)=0$
(since $\tau<\beta^{\tau}$) and $\Lambda^{\tau}(\tau)=L^{0}(\phi^{\tau})=\gamma-\phi^{\tau}$.
The virtual surplus is
\[
\Psi^{v}(\tau)=\tau-\frac{k\tau^{2}}{2}-\bigl[1-\tau-\gamma+\phi^{\tau}\bigr](1-k\tau).
\]
Substituting $q:=1-k\tau$, $\tau=(1-q)/k$, and $\phi^{\tau}=\eta+\sqrt{2}(\eta-\tau)$
gives $1-\tau-\gamma+\phi^{\tau}=[(1+\sqrt{2})q+(1-c)]/k$. The equation
$\Psi^{v}(\tau^{v})=0$ reduces to
\[
(3+2\sqrt{2})q^{2}+2(1-c)q-1=0.
\]
The positive root is
\[
q(c)=\frac{1}{(1-c)+\sqrt{(1-c)^{2}+3+2\sqrt{2}}},\qquad\tau_{II}^{v}=\frac{1-q(c)}{k}.
\]
Notably, $q(c)$ depends only on $c$: in this regime, $k\tau^{v}=1-q\left(c\right)$
is determined by $c$ alone, and coverage cutoff scales linearly in
$1/k$.

\subsubsection*{Case Boundary $c^{*}(k)$}

Strict quasiconcavity of $\Pi^{v}$ (Lemma \ref{Lemma:Pi-v(v)-quasiconcave})
implies that Case (I) applies iff $\tau_{I}^{v}\le\underline{\tau}$.
Setting equality,
\[
\frac{(2+k)-\sqrt{k^{2}-2k+4}}{3k}=\frac{2-\sqrt{2}(S-2)}{2k},
\]
and solving for $S$,
\[
S^{*}(k)=2+\frac{\sqrt{2}}{3}\bigl[\sqrt{k^{2}-2k+4}-(k-1)\bigr],
\]
which gives $c^{*}(k)=S^{*}(k)-k$ as in the statement. For $k\le1$,
$c^{*}(k)\ge1$, so Case (II) is vacuous (since $c<1$). For $k$
large, $c^{*}(k)<0$, so every admissible $c$ falls in Case (II).

\subsubsection*{Utility Schedules}

The utility formulas in the theorem follow from integrating the slope
$u^{*\prime}(\theta)=1-kM_{2}(\alpha^{\tau}(\theta),\theta)$ and
the boundary conditions $u^{\tau}(\tau^{v})=0$, $u^{\tau}(\phi^{\tau})=u^{\tau}(\gamma)=0$
(the latter two in Case (II) only).

\paragraph{Pooled region ($\alpha^{\tau}=0$).}

Slope is $1-k\theta=-k(\theta-\eta)$, so $u^{\tau}(\theta)+(k/2)(\theta-\eta)^{2}$
is constant. Using $u^{\tau}(\tau^{v})=0$ gives
\[
u^{\tau}(\theta)=\frac{k}{2}\bigl[(\eta-\tau^{v})^{2}-(\eta-\theta)^{2}\bigr].
\]
This applies on $[\tau^{v},b^{R}]$ in Case (I) and $[\tau^{v},\beta^{\tau}]$
in Case (II).

\paragraph{Screening region $[b^{R},1]$ in Case (I).}

Slope is $1-k(1+r-\theta)=k(\theta-\gamma)$, so $u^{\tau}(\theta)-(k/2)(\theta-\gamma)^{2}$
is constant. Using the value at $b^{R}$ from the pooled region and
the identity $(\eta-b^{R})^{2}=(b^{R}-\gamma)^{2}=(\eta-\gamma)^{2}/4$
gives the second formula in the theorem.

\paragraph{Case (II), Lemma \ref{lem:p-tau-case-3} region $[\beta^{\tau},\phi^{\tau}]$.}

Slope is $k(\theta-\phi^{\tau})$, so $u^{\tau}(\theta)-(k/2)(\theta-\phi^{\tau})^{2}$
is constant. The endpoint condition $u^{\tau}(\phi^{\tau})=0$ gives
$u^{\tau}(\theta)=(k/2)(\theta-\phi^{\tau})^{2}$. Continuity at $\beta^{\tau}$
is verified by $u^{\tau}(\beta^{\tau})=(k/2)(\beta^{\tau}-\phi^{\tau})^{2}=k(\eta-\tau^{v})^{2}/4$,
matching the pooled formula evaluated at $\theta=\beta^{\tau}$ since
$\eta-\beta^{\tau}=-(\eta-\tau^{v})/\sqrt{2}$.

\paragraph{Case (II), regions $[\phi^{\tau},\gamma]$ and $[\gamma,1]$.}

On $[\phi^{\tau},\gamma]$, $\alpha^{\tau}=\theta-\eta$ gives $\theta-\alpha^{\tau}=\eta$,
so the slope $1-k\eta=0$ and $u^{\tau}\equiv0$. On $[\gamma,1]$,
$\alpha^{\tau}=2\theta-1-r$ gives the same slope $k(\theta-\gamma)$
as in Case (I), and the endpoint $u^{\tau}(\gamma)=0$ gives $u^{\tau}(\theta)=(k/2)(\theta-\gamma)^{2}$.
\end{document}